\documentclass[11pt]{article}
\pdfoutput=1
\usepackage[UKenglish]{babel}
\usepackage[
letterpaper,
top=1in,
bottom=1in,
left=1in,
right=1in]{geometry}
\linespread{1.08}


\usepackage[noadjust]{cite}
\usepackage{amsmath}
\usepackage{comment}
\usepackage{amsfonts}
\usepackage{anyfontsize}
\usepackage{empheq}
\usepackage{centernot}
\usepackage{ifxetex}
\ifxetex
\usepackage[bigdelims,vvarbb]{newpxmath}
\usepackage{mathspec}
\setallmainfonts(Digits,Latin){Noto Serif}
\else
\usepackage{mathpazo}
\fi

\usepackage{amsthm}
\usepackage{amssymb}
\usepackage{mathtools}
\usepackage{microtype}
\usepackage{bm}
\usepackage{enumitem}
\usepackage{complexity}
\usepackage{mathrsfs}
\usepackage{caption}
\usepackage{subcaption}
\usepackage{makecell}
\usepackage{diagbox}

\usepackage{graphicx} 
\usepackage{scalefnt}
\usepackage{stmaryrd}

\let\orignot\not 
\usepackage{mathabx}  
\let\not\orignot

\usepackage{faktor}
\usepackage[allcommands, old-arrows]{overarrows}
\usepackage[colorlinks=true, linkcolor=blue,citecolor=blue,urlcolor=blue]{hyperref}
\usepackage{cleveref}

\DeclareMathOperator{\pcsp}{PCSP}
\DeclareMathOperator{\spcsp}{sPCSP}

\DeclareMathOperator{\csp}{CSP}
\DeclareMathOperator{\scsp}{sCSP}
\DeclareMathOperator{\pol}{Pol}
\DeclareMathOperator{\Hom}{Hom}

\DeclareMathOperator{\id}{id}

\DeclareMathOperator{\partialmap}{\rightharpoonup}

\newcommand{\overpartialmap}[1]{\mathrel{\raisebox{-1pt}{
$\xrightharpoonup{~#1~}$}}
}

\newcommand{\pj}{\mathrm{pj}}
\newcommand{\eg}{\mathrm{eg}}
\newcommand{\io}{\mathrm{io}}

\newcommand{\qclass}[1]{\langle#1 \rangle}

\renewcommand{\A}{{\bm{A}}}
\newcommand{\B}{{\bm{B}}}
\newcommand{\bF}{\bm{F}}
\newcommand{\bG}{\bm{G}}
\newcommand{\bT}{\bm{T}}

\newcommand{\bS}{\bm{S}}

\newcommand{\Rfrak}{\mathfrak{R}}

\newcommand{\GS}{S}
\newcommand{\bGS}{\bm{S}}

\newcommand{\LS}{\mathrm{S}}
\newcommand{\bLS}{\bm{\LS}}

\renewcommand{\C}{{\bm{C}}}
\newcommand{\I}{{\bm{I}}}

\newcommand{\bK}{\bm{K}}

\newcommand{\Dcal}{\mathcal{D}}
\newcommand{\Ccal}{\mathcal{C}}
\newcommand{\Qcal}{\mathcal{Q}}
\newcommand{\Mcal}{\mathcal{M}}

\newcommand{\Jcal}{\mathcal{J}}
\newcommand{\Lcal}{\mathcal{L}}
\newcommand{\Ical}{\mathcal{I}}
\newcommand{\Mscr}{\mathscr{M}}

\newcommand{\Nscr}{\mathscr{N}}
\newcommand{\Cscr}{\mathscr{C}}
\newcommand{\NN}{\mathbb{N}}
\newcommand{\ZZ}{\mathbb{Z}}
\newcommand{\QQ}{\mathbb{Q}}
\newcommand{\arty}{\mathrm{ar}}
\newcommand{\sPMC}{\mathrm{sPMC}}
\newcommand{\MC}{\mathrm{MC}}
\newcommand{\PMC}{\mathrm{PMC}}
\newcommand{\Wscr}{\mathscr{W}}

\newcommand\ScaleExists[1]{\vcenter{\hbox{\scalefont{#1}$\exists$}}}

\DeclareMathOperator*\bigexists{%
  \vphantom\sum
  \mathchoice{\ScaleExists{2}}{\ScaleExists{1.4}}{\ScaleExists{1}}{\ScaleExists{0.75}}}

\newcommand\Crestrict[2]{{
  \left.\kern-\nulldelimiterspace 
  #1 
  \right|_{#2} 
  }}
\makeatletter
\newcommand{\oset}[3][0ex]{%
  \mathrel{\mathop{#3}\limits^{
    \vbox to#1{\kern-2\ex@
    \hbox{$\scriptstyle#2$}\vss}}}}
\makeatother

\makeatletter
\DeclareRobustCommand{\Div}{\mathrel{\Div@}}
\DeclareRobustCommand{\nDiv}{\mathrel{\nDiv@}}

\newcommand{\Div@}{%
  \mkern2mu\nonscript\mkern-2mu 
  \mathpalette\Div@@\relax
  \mkern2mu\nonscript\mkern-2mu
}
\newcommand{\Div@@}[2]{%
  \hbox{%
    \sbox\z@{$#1T$}%
    \vbox to \ht\z@{%
      \offinterlineskip\m@th
      \hbox{$#1.$}\vfil
      \hbox{$#1.$}\vfil
      \hbox{$#1.$}%
    }%
  }%
}

\newcommand{\nDiv@}{\centernot\Div@}
\makeatother

\DeclareMathAlphabet{\mathbbmsl}{U}{bbm}{m}{sl}
\newcommand{\Ebb}{\mathbbmsl{E}}

\newcommand{\AIP}{\text{AIP}}
\newcommand{\Ip}{\text{IP}}
\newcommand{\BLP}{\text{BLP}}

\usepackage{thm-restate}

\newtheorem{theorem}{Theorem}[section]
\makeatletter

\makeatother

\newtheorem{proposition}[theorem]{Proposition}
\makeatletter

\makeatother

\newtheorem{lemma}[theorem]{Lemma}
\makeatletter

\makeatother

\newtheorem{corollary}[theorem]{Corollary}
\makeatletter

\makeatother

\makeatletter

\makeatother

\newtheorem{fact}[theorem]{Fact}
\makeatletter

\makeatother

\newtheorem{observation}{Observation}

\theoremstyle{definition}

\newtheorem{ppty}{Property}
\newtheorem{desc}{Description}
\newtheorem{interpret}{Interpretation}

\graphicspath{ {./figures/} }
\interfootnotelinepenalty=10000

\title{Ineffectiveness for Search and Undecidability of PCSP Meta-Problems}
\author{Alberto Larrauri \thanks{This work was supported by the UKRI grant
EP/X024431/1, and the UK EPSRC grant EP/X033201/1. The research leading to these results was conducted at the University of Oxford between 2024 and 2025, and continued during the summer of 2025 at Durham University.}}
\date{ \small University of Zaragoza}
\begin{document}

\maketitle
\thispagestyle{empty}
\begin{abstract}

It is an open question whether the search and decision versions of promise CSPs are equivalent. 
Most known algorithms for PCSPs solve only their \emph{decision} variant, and it is unknown whether they can be adapted to solve \emph{search} as well. 
The main approaches, called $\BLP, \AIP$ and $\BLP+\AIP$, handle a PCSP by finding a solution to a relaxation of some integer program. We prove that rounding those solutions to a proper search certificate can be as hard as any problem in the class TFNP. In other words, these algorithms are ineffective for search. Building on the algebraic approach to PCSPs, we find sufficient conditions that imply ineffectiveness for search. Our tools are tailored to algorithms that are characterized by minions in a suitable way, and can also be used to prove undecidability results for meta-problems. This way, we show that the families of templates solvable via $\BLP, \AIP$, and $\BLP+\AIP$ are undecidable. \par

Using the same techniques we also analyze several algebraic conditions that are known to guarantee the tractability of finite-template CSPs. We prove that several meta-problems related to cyclic polymorphims and WNUs are undecidable for PCSPs. In particular, there is no algorithm deciding whether a finite PCSP template (1) admits cyclic a polymorphism, (2) admits a WNU. 

\end{abstract}

\clearpage
\pagenumbering{arabic} 

{
\small
\tableofcontents
}
\newpage
\section{Introduction}

The Dichotomy Theorem \cite{Zhuk20:jacm,Bulatov17:focs} and, more generally, the algebraic approach to constraint satisfaction \cite{krokhin2005complexity,barto2017polymorphisms} show that finite-template constraint satisfaction problems (CSPs) form a particularly well-behaved class of NP problems. This class includes a wide range of natural problems relevant across many domains, such as variants of the Boolean satisfiability problem, graph coloring problems, and systems of equations over finite algebraic structures. Unconditionally, the search and decision variant of each finite-template CSPs are polynomial-time equivalent. Each problem in this class is either NP-complete or in P, and there is an explicit procedure that correctly assigns one of these two cases to each given CSP.
This stands in contrast to the well-known fact that there are NP-intermediate problems if P$\neq$ NP
\cite{ladner1975structure}, and there are problems in FNP (the search analogue of NP) which we believe to be hard but whose decision variants are trivial, captured by the TFNP class \cite{goldberg2017tfnp_update}. Therefore, it is natural to ask how far (and in which directions) can finite-template CSPs be generalized while still keeping their nice properties. 
 \par

\emph{Promise} CSPs (PCSPs) are qualitative relaxations of CSPs generalizing the task of coloring a $c$-colorable graph using $d$ colors for fixed integers $d\geq c$, called the \emph{approximate graph coloring} (AGC) problem. A PCSP is given by a pair of relational structures $(\A, \B)$, the \emph{template}, where the first maps homomorphically into the second, denoted $\A \rightarrow \B$. In the decision variant of $\pcsp(\A,\B)$ the task is to distinguish input structures $\I$ satisfying $\I \rightarrow \A$ from those satisfying $\I \not\rightarrow \B$. In the search variant, the goal is to find an explicit homomorphism from $\I$ to $\B$ given the promise that $\I \rightarrow \A$. The algebraic approach, instrumental in much of the CSP theory and both proofs of the Dichotomy Theorem, has recently been extended to the PCSP framework \cite{BBKO21}. Roughly, this approach studies the complexity of $\pcsp(\A, \B)$ by analyzing the algebraic properties of the set $\pol(\A, \B)$ of 
homomorphisms $f: \A^n \rightarrow \B$ from some direct power of $\A$ to $\B$, called \emph{polymorphisms}. This extension has motivated a surge of activity in the area that has yielded partial complexity classifications \cite{BG21:sicomp,Barto21:stacs,ficak2019dichotomy,NZ23:lics,brakensiek2023conditional,LZ24solving,NZ24symmetric_functional}, efficient algorithms \cite{BBKO21,BGWZ20,cz23sicomp:clap,conghaile2022cohomology}, and hardness conditions \cite{brandts2021complexity,Barto22:soda,KOWZ23,banakh2024injective}. However, many basic questions remain unanswered. For example, the complexity classifications of Boolean and graph PCSPs (including the complexity of AGC) are still open, despite relevant progress in those directions \cite{Ficak19:icalp,KOWZ23}. Crucially, the relationship between search and decision PCSPs is not well understood, and it is unknown whether there is always an efficient way of solving a finite-template search PCSP whose decision variant is tractable. 
\par

In this work we analyze several PCSP algorithms, as well as some algebraic conditions that guarantee tractability in the CSP setting. It is known that, unless P=NP, there are tractable finite-template PCSPs which cannot be reduced via gadgets to a tractable finite-template CSP, as exemplified by  ``$1$-in-$3$-SAT vs not-all-equal-SAT'' \cite{barto2019promises,asimi2021finitely,mottet2025algebraic,pinsker2025three}. Therefore, there is a need to develop algorithms that go beyond the finite CSP case. Solving the decision variant of $\pcsp(\A, \B)$ involves solving a tractable relaxation of $\csp(\A)$ which should be a restriction of $\csp(\B)$. The main relaxations used for this purpose are direct relaxations of the basic integer programming formulation of $\csp(\A)$ \cite{BBKO21,BGWZ20,bgs_robust23stoc}, the local consistency algorithm \cite{barto2009boundedwidth,Atserias22:soda}, and algorithmic hierarchies built on top of these two previous approaches \cite{cz23soda:minions,cz23sicomp:clap,conghaile2022cohomology}. Virtually all these algorithms present two inherent limitations. The first is that they only solve the decision version of $\pcsp(\A, \B)$: when an instance $\I$ is accepted by the relaxation, we do not directly obtain a homomorphism $\I \rightarrow \B$. To obtain such a homomorphism we must \emph{round} the solution to the relaxation, and there is no obvious efficient way of doing so except for in a few known cases \cite{BG21:sicomp,brakensiek2019algorithmic}. The second limitation of these algorithms is that, despite some of them admitting nice algebraic characterizations, we do not know of a way to recognize the problems $\pcsp(\A, \B)$ that they solve. This is known as the \emph{meta-problem} related to these algorithms. \par

In our analysis of algorithms we focus on the \emph{affine integer programming} (AIP) and the \emph{basic linear programming} (BLP) relaxations for PCSPs, as well as on the $\BLP+\AIP$ relaxation, which combines the power of the previous two. We present results of three kinds: hardness, undecidability, and non-computability. We show that rounding the output of these algorithms to search certificates can be as hard as any problem in the \emph{total} FNP (TFNP) class, and that all these algorithms have undecidable meta-problems. 

\begin{theorem}[Main algorithmic result, informal]
Let $\Qcal\in \{ \AIP, \BLP, \BLP+\AIP\}$, and let $\mathcal{S}^\Qcal$ be the family of finite templates $(\A, \B)$ such that $\Qcal$ solves $\pcsp(\A, \B)$. Then, given any problem $\Lambda$ in the TFNP class, there is a finite template $(\A_\Lambda, \B_\Lambda)\in \mathcal{S}^\Qcal$ such that $\Lambda$ is many-one reducible in polynomial time to the problem of finding a homomorphism $\I \rightarrow \B_\Lambda$ for an input structure $\I$ accepted by $\Qcal$. Furthermore, $\mathcal{S}^\Qcal$ is undecidable.    
\end{theorem}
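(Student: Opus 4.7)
The plan is to leverage an algebraic characterization of solvability: for each $\Qcal \in \{\AIP, \BLP, \BLP+\AIP\}$ there is a canonical minion $\Mscr_\Qcal$ (essentially $\Zaff$, $\Qconf$, and a suitable amalgam of the two) such that $\Qcal$ solves $\pcsp(\A, \B)$ if and only if there is a minion homomorphism $\Mscr_\Qcal \to \pol(\A, \B)$. Both assertions in the theorem can then be reformulated as statements about the existence and the computational content of such minion homomorphisms, rather than about the algorithms themselves.

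My first step is to encode computation into polymorphism minions. Fixing a Turing machine $M$, I would build for each input $x$ a finite template $(\A_{M,x}, \B_{M,x})$, polynomial in $|M,x|$, whose polymorphism minion records, through minor-identities, the partial computation history of $M$ on $x$. The construction must be tuned to each $\Qcal$: the minor equations that define $\Mscr_\Qcal$ (integer-affine identities for $\AIP$, convex symmetric averages for $\BLP$, and their conjunction for $\BLP+\AIP$) have to match the local consistency conditions that persist across rewrite steps, so that a minion homomorphism $\Mscr_\Qcal \to \pol(\A_{M,x}, \B_{M,x})$ exists exactly when $M$ does not halt on $x$.

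Undecidability of $\mathcal{S}^\Qcal$ then follows immediately from the halting problem. For TFNP-hardness of search, I would fix a polynomial-time universal verifier $V$ and, given a TFNP relation $\Lambda$ captured by $V$, use a variant of the same encoding to produce a \emph{single} template $(\A_\Lambda, \B_\Lambda) \in \mathcal{S}^\Qcal$ together with a polynomial-time map $x \mapsto \I_x$ such that: (i) $\Qcal$ accepts $\I_x$ automatically, using totality of $\Lambda$ to exhibit the required minion homomorphism into $\pol(\A_\Lambda, \B_\Lambda)$ abstractly; and (ii) every actual homomorphism $\I_x \to \B_\Lambda$ must embed a valid $V$-certificate for $x$, so producing one solves $\Lambda$ on $x$.

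The main obstacle I anticipate is designing $(\A_\Lambda, \B_\Lambda)$ so that $\Mscr_\Qcal \to \pol(\A_\Lambda, \B_\Lambda)$ holds non-constructively while no single concrete polymorphism is efficiently extractable. The tension is that the identities defining $\Zaff$ and $\Qconf$ are algebraically rigid, so naively adding polymorphisms to satisfy them tends to produce operations amenable to rounding. I expect to resolve this by a layered construction in which polymorphisms exist only as ``global'' choices, witnessed by a non-uniform compactness argument over the totality (or non-halting) hypothesis, whereas the local minor-identities checked per template never pin down an explicit operation. Verifying that this template meets the minion condition for each of $\AIP$, $\BLP$, and $\BLP+\AIP$ simultaneously will be the most delicate step and will rely on the specific closure properties of $\Zaff$, $\Qconf$, and their combined minion.
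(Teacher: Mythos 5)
Your high-level framing — use the minion characterization $\Mscr_\Qcal \rightarrow \pol(\A,\B)$ and encode Turing machine computation — is consistent with the paper's starting point, but the proposal is missing the central technical mechanism and, in places, suggests steps that would not go through. The paper does not try to design a template $(\A_{M,x},\B_{M,x})$ whose minor equations ``match local consistency of rewrite steps''; instead it interposes tiling problems: it first establishes (\Cref{prop:grid}, \Cref{prop:3-grid}) that finding homomorphisms into a tile set $\bT$ from (truncations of) the grid $\bm\Gamma$ or super-grid $\bm\Gamma^+$ is TFNP-hard and undecidable, and then transports this to PCSP by two genuinely new constructions you do not anticipate. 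First, an \emph{interpretation} $\Ical$ of the relational signature of the grid inside $\Mscr_\Qcal$, plus a \emph{description} $\Dcal$ of properties, yields a locally finite quotient $\Mscr_\Qcal/\Dcal$ with explicitly pp-definable points. Second — and this is the missing idea — the \emph{exponential minion} $\bT^{\Ical/\Dcal}$, whose $n$-ary elements are pairs $(f,\chi)$ with $\chi$ a partial homomorphism (``section'') from the local structure covered by $f$ into $\bT$, is what makes a homomorphism $\Mscr_\Qcal \to \bT^{\Jcal}$ exist iff $\bm\Gamma \to \bT$, and what makes satisfying assignments of the pattern $\Psi_\I$ decode to tilings. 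Without this sheaf-like object there is no way to make ``every homomorphism $\I_x \to \B_\Lambda$ embed a certificate'' while still having $\Qcal$ accept $\I_x$; your ``layered construction'' and ``non-uniform compactness'' are placeholders for precisely this gap.

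A second concrete problem: you expect to design one template that ``meets the minion condition for each of $\AIP$, $\BLP$, and $\BLP+\AIP$ simultaneously.'' The paper explicitly disclaims this — minion homomorphisms between the three characterizing minions do not transfer the results, and the proof constructs separate interpretations over $\Mscr_\AIP$, $\Mscr_\BLP$, and $\Mscr_{\BLP+\AIP}$ with different arithmetic encodings (integers, dyadic rationals, and pairs thereof). Finally, even granting an encoding in a locally finite minion, you need a step your proposal omits: showing the result is a genuine polymorphism minion of a small finite template. The paper does this via rank and dimension closures, realizing $\Nscr^{(h)}$ as $\pol(\bK_r^h, \bF_\Nscr(\bK_r^h))$ (\Cref{th:minion_to_template}), a finitization step with no analogue in your sketch.
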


This is the first \emph{hardness-of-rounding} result in the PCSPs literature. 
If TFNP contains some problem that has no polynomial-time solution (which is widely conjectured \cite{megiddo1991total,hubacek2017journey_tfnp}), then this gives a negative answer to the question of whether the output of these algorithms can always be rounded to a search solution efficiently, posed in \cite{krokhin2022invitation}
\footnote{A previous version of this work also claimed the results presented here yielded negative answers to the questions in \cite{brakensiek2019algorithmic,BGWZ20}, without further clarification. This was incorrect and misleading.}. 
Furthermore, for $\AIP$ and $\BLP$ we show that there are finite-template PCSPs solved by those algorithms where no computable function is a valid rounding rule. This remains true even for \emph{left-Boolean} templates $(\A, \B)$ (i.e., those where $\A$'s domain contains only two elements). This means that directly rounding the output of these algorithms through consistent families of well-behaved polymorphisms such as the ones defined in \cite{brakensiek2019algorithmic} is not always possible. However, changing the domain of our relaxations to other \textit{efficiently computable rings}, as proposed in \cite{brakensiek2019algorithmic}, could be a way to bypass the issues presented in this paper. This is expanded upon in the discussion section (\Cref{sec:discussion}). 
We also point out that while our results include small bounds on $\A$'s domain size, the domain of the right structure $\B$ can grow quite large. Hence, our results do not rule out the possibility that $\BLP, \AIP$, and $\BLP+\AIP$ could always be directly adapted for search in the Boolean setting. \par

Our techniques leverage the characterizations of the algorithms $\AIP$, $\BLP$, and $\BLP+ \AIP$ by means of objects called \emph{minions} (or minor-closed classes \cite{pippenger2002galois}) that lie at the heart of the algebraic method. On a high level, our results follow from reducing tiling problems (e.g., \cite{hanf1974nonrecursive}) to \emph{promise minor condition ($\PMC$) problems} \cite{BBKO21}.
We remark that, although the connection between the TFNP class and tiling problems is not difficult to prove, to our knowledge it has not been previously used to obtain hardness results within TFNP. 
To achieve our reductions we develop a new way of encoding PCSPs in $\PMC$ problems by means of a sheaf-like \cite{bredon2012sheaf} construction on minions where elements represent partial homomorphisms, and minoring operations ensure consistency between the corresponding local homomorphisms. \par

We also study some well-known algebraic conditions following the same approach as in our analysis of algorithms. A polymorphism $f:\A^n \rightarrow \B$ is called \emph{cyclic} if it is invariant under any cyclic permutation of its arguments, and \emph{weak near-unanimity} (WNU) if the value of $f(x,\dots, x , y, x,\dots, x)$ is independent of the position of the lone $y$. The existence of a WNU and the existence of a cyclic polymorphism are equivalent conditions that characterize the tractability of finite-template CSPs if P$\neq$ NP \cite{Bulatov17:focs, Zhuk20:jacm}.
There is a rich network of equivalences between algebraic conditions in this setting (e.g.,  \cite[Theorems 41 and 47]{barto2017polymorphisms}) which were established using tools from universal algebra. These equivalences are a powerful aid in proving the decidability of various \emph{meta-questions}. For instance, it is known that a finite structure $\A$ admits a cyclic polymorphism if and only if it admits a cyclic polymorphism of every prime arity larger than the domain size of $\A$ \cite{barto2012absorbing}, and hence one can easily decide the existence of a WNU or a cyclic polymorphism. Another important algebraic condition is the presence of WNUs of all arities $k\geq 3$, which characterizes bounded width CSPs in the finite-template setting \cite{barto2009boundedwidth,kozik2015characterizations}. This condition is equivalent to the presence of WNUs of arities $3$ and $4$ satisfying a particular relation \cite{kozik2015characterizations} and is, hence, decidable. The complexity of these and other meta-questions has been examined in \cite{chen2017metaquestions}. \par
For PCSPs the picture is significantly less structured. We know that, unless P$\neq$ NP, no finite family of polymorphisms can guarantee tractability  \cite{BBKO21}. Hence, all algebraic tractability conditions must involve an infinite family of polymorphisms. Even then, admitting WNUs of all arities $k\geq 3$ is no longer a sufficient condition for bounded width (but remains a necessary one) \cite{Atserias22:soda}, and does not even guarantee tractability \cite{ciardo2024periodic}. Since many of the tools from universal algebra no longer apply to the PCSP setting, few non-trivial implications between algebraic conditions have been shown (with some exceptions, e.g. \cite[Theorem 4]{BGWZ20}). In particular, the decidability of most meta-questions related to natural algebraic conditions remains open for PCSPs. Our main algebraic result is the following. 

\begin{restatable}{theorem}{minors}
\label{th:minor_identities_main}
The following problems are undecidable. Given an input finite template $(\A, \B)$ with $|A|\leq 3$, determine whether $\pol(\A, \B)$ contains: 
\begin{enumerate}[label=(\arabic*)]
    \item cyclic polymorphisms 
    \label{item:main_cyclic}
    \begin{enumerate}[label=(\roman*)]
        \item for every prime arity $p$,
         \label{item:main_cyclic_all}
        \item for at least one arity $k$,
        \label{item:main_cyclic_one}
        \item for all but finitely many prime arities $p$, 
        \label{item:main_cyclic_eg}
        \item for infinitely many prime arities $p$.
        \label{item:main_cyclic_io}
    \end{enumerate}
    \item  weak near‑unanimity polymorphisms (WNUs)
    \label{item:main_wnu}
    \begin{enumerate}[label=(\roman*)]
        \item for every arity $k\geq 3$,
        \label{item:main_wnu_all}
        \item for at least one arity $k\geq 3$,
        \label{item:main_wnu_one}
        \item for all but finitely many arities $k\geq 3$,
        \label{item:main_wnu_eg}
        \item for infinitely many arities $k\geq 3$.
        \label{item:main_io}
    \end{enumerate}
\end{enumerate}  
\end{restatable}

\section{Preliminaries}
\label{sec:prelims}
The set of natural numbers $\NN$ starts at $1$, and $[k]$ is the set $\{1,2,\ldots,k\}$. Given sets $S, T$, the set of maps $f:S \rightarrow T$ is denoted by $T^S$.
A \emph{partial function} from $S$ to $T$ is denoted as $f: S\partialmap T$. 
We write $\id_X$ for the identity map on a set $X$. We identify tuples $\bm t=(t_1,\dots, t_n)\in T^n$ with functions in $T^{[n]}$ in the natural way (i.e., $\bm t(i)=t_i$).
Disjoint unions are denoted using $\sqcup$. We write $U^*$ for the set of finite strings over a finite alphabet $U$, i.e. $U^* = \bigsqcup_{n\geq 0} U^n$. A partial map $f: U^* \partialmap U^*$ is said to be \emph{computable} if there is a Turing machine on an alphabet containing $U$ that computes $f(\bm{x})$ for any input $\bm{x}\in U^*$ where $f$ is defined, and runs forever on $\bm{x}\in U^*$ where $f$ is undefined. We informally say that a partial map $f: S \partialmap T$ between countable sets $S,T$ is computable if $f$ is computable under some implicit encoding. That is, we implicitly refer to a finite alphabet $U$, and to injective maps $\alpha:S\rightarrow U^*$ and $\beta: T\rightarrow U^*$, and mean that the partial map $\beta \circ f \circ \alpha^{-1}$ is computable \cite{cutland1980computability}.

\paragraph{Search Problems}

We refer to \cite[Section 10.3]{papadimitriou1994complexity} for an introduction to search complexity and the classes FP, FNP, and TFNP.
Let $\mathfrak{R}\subseteq U^* \times V^*$ be a binary relation on
finite words, and $\bot$ a special ``reject'' symbol outside the alphabets $U, V$. The \emph{search problem} $\Lambda_\Rfrak$,
is defined as: given a string $\bm x\in U^*$,
find some $\bm y\in V^*$ satisfying $(\bm x,\bm y)\in \mathfrak{R}$, or reject $\bm x$ and return $\bot$ if no such $\bm y$ exists. The class of \emph{total functional NP} (TFNP) problems consists of all problems $\Lambda_\Rfrak$ where $\Rfrak$ is (1) a total relation, meaning that for all input strings $\bm x$ there is some $\bm y$ satisfying $(\bm x,\bm y)\in \Rfrak$, (2) polynomially bounded, meaning that there is some polynomial $p(n)$ such that $(\bm x, \bm y)\in \Rfrak$ implies $|\bm y| \leq p(|\bm x|)$, and (3) $\Rfrak$ is recognizable in polynomial time. The class of \emph{tally TFNP} problems, denoted TFNP$_1$, consists of all problems $\Lambda_\Rfrak \in \text{TFNP}$ such that $\Rfrak \subseteq U^* \times V^*$ is a relation where $U$ is the unary alphabet $\{ 1\}$. \par

The class TFNP has been studied extensively (e.g., \cite{goldberg2017tfnp_update,hubacek2017journey_tfnp,goos2024separations_tfnp}) 
and contains several problems for which no efficient solution is expected such as integer factoring, or the problem of computing a Nash equilibrium~\cite{daskalakis2009complexity}. 
There is compelling reason to believe that $\text{TFNP} \not\subseteq \text{FP}$ \cite{hubacek2017journey_tfnp} (FP being the class of search problems solvable in polynomial time), which would imply that $\text{P}\neq \text{NP}$.
On the other hand, there are oracles with respect to which $\text{TFNP} \subseteq \text{FP}$ but the polynomial hierarchy is infinite \cite{buhrman2010oracle}.  
To our knowledge, the class TFNP$_1$ has not been studied explicitly, but also contains problems that have no known polynomial-time solution, such as \textsc{Prime}, where $1^n$ is given as input and the task is to find a prime number on $n$ bits \cite{gat2011probabilistic}. \par

A (polynomial-time) \emph{many-one reduction} from $\Lambda_{\Rfrak_1}$ to $\Lambda_{\Rfrak_2}$ consists of a pair $(\alpha,\beta)$ of polynomial-time computable functions satisfying that for all $\bm x$ (1)
if $(\alpha(\bm x),\bm z)\in \mathfrak{R}_2$, then $(\bm x, \beta(\bm x, \bm z))\in \mathfrak{R}_1$,
and (2) if $(\alpha(\bm x),\bm z)\not\in \mathfrak{R}_2$ for all $\bm z$, then
$(\bm x, \bm y)\notin\in \mathfrak{R}_1$ for all $\bm y$. Similarly, a \emph{generalized many-one reduction} from $\Lambda_{\Rfrak_1}$ to $\Lambda_{\Rfrak_2}$ is a pair of polynomial-time computable functions $(\alpha,\beta)$ satisfying that whenever $\bm z$ is a valid answer to $\alpha(\bm x)$ in the problem $\Lambda_{\Rfrak_2}$
(including the case where $\bm z = \bot$), then $\beta(\bm x,\bm z)$ is a valid answer to $\bm x$ in the problem $\Rfrak_1$ (including again the case where $\beta(\bm x,\bm z)= \bot$). Given a class $\mathcal{C}$ of search problems, and a set $\mathcal{F}$ of search problems, we say that $\mathcal{F}$ is $\mathcal{C}$-hard if for any $\mathfrak{R}\in \mathcal{C}$ there is some $\mathfrak{R}^\prime \in \mathcal{F}$ such that $\Lambda_\Rfrak$ has a many-one reduction to $\Lambda_{\Rfrak^\prime}$. A problem $\Lambda_\Rfrak$ is called $\Ccal$-hard if the family $\{ \Lambda_\Rfrak \}$ is $\Ccal$-hard. The class TFNP is conjectured to contain no TFNP-hard problems \cite{megiddo1991total}, so our results show TFNP-hardness of families. 
\par

We also consider \emph{promise} search problems, which can be seen as ``partial'' search problems. The notions in this section extend to promise problems in the natural way, along the lines of e.g., \cite{even1984crypto_promise,goldreich2006promisesurvey}.

\paragraph{Promise Constraint Satisfaction}

A \emph{relational signature} $\Sigma$ 
is a finite set of symbols where each $R\in \Sigma$ has some \emph{arity} $\arty(R)\in \NN$. A $\Sigma$-\emph{structure} $\A$ consists of: a set $A$ called its \emph{universe}, and a relation $R^A\subseteq A^{\arty(R)}$ for each $R\in \Sigma$. Given two \emph{similar} (i.e., with the same signature) structures $\A, \B$, a homomorphism $h:\A \rightarrow \B$ is a map from $A$ to $B$ satisfying $(h(e(1)), \dots, h(e(\arty(R))))\in R^B$     
for each $R\in \Sigma$ and each tuple $e\in R^A$. We write $\A \rightarrow \B$ to denote there is a homomorphism from $\A$ to $\B$. 
The \emph{$n$-th power} of $\A$, denoted $\A^n$, is a structure similar to $\A$ whose universe is $A^n$, and where 
$R^{A^n}$ consists of the tuples $(\bm{a}_1, \dots , \bm{a}_{\arty(R)})$, satisfying $(a_{1,j},\dots a_{\arty(R),j})\in R^A$ for all $j\in [n]$. \par
\par

\emph{Templates} are pairs of structures $(\A, \B)$ satisfying $\A \rightarrow \B$. The template is
\emph{finite} if both $\A, \B$ are finite. 
The \emph{decision promise constraint satisfaction problem} (PCSP) defined by a template $(\A, \B)$, denoted
$\pcsp(\A,\B)$, is the problem of, given an input finite structure $\I$, to accept it
if $\I \rightarrow \A$, and to reject it if $\I\not\rightarrow \B$. Similarly, in the \emph{search PCSP} defined by $(\A, \B)$, denoted $\spcsp(\A, \B)$, the goal is to find a homomorphism $F: \I \rightarrow \B$, if $\I\rightarrow \A$, or to reject $\I$ if $\I \not\rightarrow \B$. Observe that this only makes sense if $\B$ is finite or a suitable encoding is fixed. We define the problems $\csp(\A)$ and $\scsp(\A)$ as $\pcsp(\A, \A)$ and $\spcsp(\A, \A)$ respectively.\par

\paragraph{Efficient Algorithms and Rounding Problems}
Let $\A$, $\I$ be finite $\Sigma$-structures. The following system of equations over the integers $\{0,1\}$, denoted $\Ip_{\A}(\I)$, is satisfiable if and only if $\I \rightarrow \A$: 
\begin{align}
\nonumber
   \text{Variables: }\, \, 
   &
\{x_{v,a}  \, \vert \, v\in I, a\in A\} 
\quad \sqcup \\ \nonumber 
  &  \{
   x_{r_A,r_I} \, \vert \, R\in \Sigma, r_I\in R^I,  r_A\in R^A \}.  \\ \nonumber &
\\
   \label{eq:IP}
   \text{Equations: }
   & \begin{array}{l r}
    \sum_{a\in A} x_{v,a} = 1,  \\
    \sum_{ r_A \in R^A} x_{r_I, r_A} = 1, \\
    \sum_{r_A \in R^A, r_A(i)=\alpha}
    x_{r_I, r_A} = x_{r_I(i), \alpha}, 
\end{array} \\ 
\nonumber
 \text{ for each } & \text{$v\in I$, $R\in \Sigma$, $r_I\in R^I$, $i\in [\arty(R)]$, $\alpha\in A$.}
\end{align}

Solving this system is as difficult as solving $\csp(\A)$, but if we allow the variables to take arbitrary values in $\ZZ$ or in the rational interval $[0,1]$ then the task can be carried out in polynomial time. We write $\AIP_\A(\I)$ and $\BLP_\A(\I)$ for the system 
$\Ip_\A(\I)$ when the domain of the variables is $\ZZ$ or $[0,1]\subseteq \QQ$ respectively. \par
Given an input structure $\I$ for $\pcsp(\A, \B)$, where $(\A, \B)$ is a finite template, the $\AIP$  algorithm (resp., $\BLP$) \cite{BBKO21} solves $\AIP_\A(\I)$ (resp., $\BLP_\A(\I)$) and accepts $\I$ if and only if this system is satisfiable. The algorithm $\BLP+\AIP$ \cite{BGWZ20} combines the power of the previous two, and checks in polynomial time whether $\AIP_\A(\I)$ and $\BLP_\A(\I)$ have compatible solutions, and accepts $\I$ when this is the case. Compatibility here means that whenever a variable is assigned to $0$ in $\BLP_\A(\I)$, it is also assigned to $0$ in $\AIP_\A(\I)$. An algorithm $\Qcal\in \{ \AIP, \BLP, \BLP+\AIP\}$ solves $\pcsp(\A, \B)$ if it always outputs a correct answer in this problem, meaning that the following two implications hold: (1) if there is a homomorphism $\I \rightarrow \A$ then $\Qcal$ accepts $\I$, and (2) if $\Qcal$ accepts $\I$ then there is a homomorphism $\I\rightarrow \B$. In this situation we define the \emph{rounding problem} $\spcsp_\Qcal(\A, \B)$ as the problem of, given an instance $\I$ accepted by $\Qcal$, to find a homomorphism from $\I$ to $\B$.

\paragraph{Minions}

 A \emph{minion} $\Mscr$ consists of a collection of disjoint sets $\Mscr(n)$ indexed by the natural numbers $n\in \NN$, and a map $\pi^\Mscr:
    \Mscr(n) \rightarrow \Mscr(m)$ 
    for each
    pair of numbers $n,m\in \NN$ and each function $\pi \in [m]^{[n]}$, satisfying that (1) 
     $\pi^{\Mscr}=\id_{\Mscr(n)}$ if $\pi=\id_{[n]}$, and (2) $\pi^\Mscr = \pi_1^\Mscr \circ 
    \pi_2^\Mscr$
    whenever $\pi= \pi_1 \circ \pi_2$. The elements $f\in \Mscr(n)$ are called \emph{$n$-ary}, the functions $\pi^\Mscr$ are called \emph{minoring operations}, and an element $\pi^\Mscr(f)$ is called a \emph{minor} of $f$. When $\Mscr$ is clear from the context, we write $f^\pi$ instead of $\pi^\Mscr(f)$. The minion $\Mscr$ is called \emph{locally finite} if $\Mscr(n)$ is a finite set for all $n\in \NN$.  An element $p\in \Mscr(n)$ is called \emph{cyclic}
    if $p=p^{(n,1,2,\dots,n-1)}$ (we remind the reader that we represent maps $\pi\in [n]^{[n]}$ as tuples). A
    \emph{weak near-unanimity} (WNU) element is some $p\in \Mscr(n)$ satisfying
    that $p^{\sigma_i}=p^{\sigma_j}$  for all $i,j\in [n]$, where $\sigma_i\in [2]^{[n]}$ sends $i$ to $1$ and all other $o\in [n]$ to $2$. Given a template $(\A, \B)$, the \emph{polymorphism minion} $\pol(\A, \B)$ is a minion whose $n$-ary elements are the homomorphisms $f: \A^n \rightarrow \B$, which are called \emph{polymorphisms}. Given an $n$-ary polymorphism $f$, and a map $\pi:[n]\rightarrow [m]$, the minor $f^\pi$ is defined by $f^\pi(\bm a)=f(\bm a\circ \pi)$
for every $f\in A^n$. Finally, a \emph{minion homomorphism} 
    $F:\Mscr \rightarrow \Nscr$
    is a map from elements of $\Mscr$ to elements of $\Nscr$ that preserves arities and minoring operations, i.e., satisfying that
    $F(f)^\pi= F(f^\pi)$
    for each suitable $f,\pi$. Similarly, given $h\in \NN$, a \emph{partial homomorphism}
    $F:\Mscr \overpartialmap{h} \Nscr$ \emph{up to arity $h$} is a partial map defined on all elements $f\in \Mscr$ of arity at most $h$ that preserves arities and minoring operations.     
    \par

We define three minions that characterize the power of the algorithms $\AIP, \BLP, \BLP+\AIP$.
In the minion $\Mscr_\AIP$ the $n$-ary elements are the tuples $f\in \ZZ^n$ 
satisfying $\sum_{i\in [n]} f(i) = 1$. 
Given $f\in \Mscr_\AIP(n)$, and $\pi\in [m]^{[n]}$, minoring is defined by the identity  $(f^\pi)(i) = \sum_{j\in \pi^{-1}(i)} f(j)$ for each $i\in [m]$.
In the minion $\Mscr_\BLP$ the $n$-ary elements are the the tuples $f\in [0,1]^n$ of rational numbers for which $\sum_{i\in [n]} f(i) = 1$. Minoring is defined as for $\Mscr_\AIP$, i.e., by the identity $(f^\pi)(i) = \sum_{j\in \pi^{-1}(i)} f(j)$. Finally, in the minion $\Mscr_{\BLP+\AIP}$ the $n$-ary elements are pairs $(f,g)$, where $f\in \Mscr_\BLP(n), g\in \Mscr_\AIP(n)$, and $f(i)=0$ implies $g(i)=0$ for each $i\in [n]$.
Minoring in $\Mscr_{\BLP+\AIP}$ is defined component wise. That is, $(f, g)^\pi = (f^\pi, g^\pi)$, where $f^\pi= \pi^{\Mscr_\BLP}(f)$, and $g^\pi= \pi^{\Mscr_\AIP}(g)$. The following theorem has been shown in \cite{BBKO21} for the algorithms $\AIP, \BLP$ and in \cite{BGWZ20} for $\BLP+\AIP$.
\begin{theorem}
\label{th:algorithm_characterization}
Let $\Qcal\in \{\AIP, \BLP, \BLP+\AIP\}$, and let $(\A, \B)$ be a finite template. Then the algorithm $\Qcal$ solves $\pcsp(\A, \B)$ if and only if $\Mscr_\Qcal \rightarrow \pol(\A, \B)$.
\end{theorem}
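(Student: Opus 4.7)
The plan is to prove both directions of the theorem simultaneously via a ``free structure'' construction that mediates between the relaxation $\Qcal_\A(\cdot)$ and the polymorphism minion. Write $\Mscr:=\Mscr_\Qcal$ and fix, once and for all, enumerations $A\cong[|A|]$ and $R^A\cong[|R^A|]$ for each $R\in\Sigma$. The first step is to repackage the equations \eqref{eq:IP} in minion-theoretic terms: for each $v\in I$ the tuple $(x_{v,a})_{a\in A}$, constrained by $\sum_a x_{v,a}=1$, is exactly an element $f_v\in\Mscr(|A|)$; analogously, each $(x_{r_I,r_A})_{r_A\in R^A}$ is an element $g_{r_I}\in\Mscr(|R^A|)$; and the remaining linking equations become precisely the minor identities $g_{r_I}^{\pi_i}=f_{r_I(i)}$, where $\pi_i:R^A\to A$ sends $r_A\mapsto r_A(i)$. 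Hence solutions of $\Qcal_\A(\I)$ are in natural bijection with such ``coherent'' assignments of minion elements. For $\Qcal=\BLP+\AIP$, the extra compatibility condition of the system is precisely the one built into the definition of $\Mscr_{\BLP+\AIP}$.

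Next I would introduce the free structure $\mathcal{F}_\Qcal(\A)$ with universe $\Mscr(|A|)$ and, for each $R\in\Sigma$ of arity $k$,
\[
R^{\mathcal{F}_\Qcal(\A)} := \{(g^{\pi_1},\dots,g^{\pi_k}) : g\in\Mscr(|R^A|)\}.
\]
By the previous reformulation, $\Qcal$ accepts $\I$ if and only if $\I\to\mathcal{F}_\Qcal(\A)$. Crucially, $\mathcal{F}_\Qcal(\A)$ is itself accepted by $\Qcal$: setting $f_v:=v$ for each element $v\in\Mscr(|A|)$ of its universe, and picking for every $r\in R^{\mathcal{F}_\Qcal(\A)}$ a witness $g_r\in\Mscr(|R^A|)$ with $r=(g_r^{\pi_1},\dots,g_r^{\pi_k})$, yields a valid solution to $\Qcal_\A(\mathcal{F}_\Qcal(\A))$. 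Consequently ``$\Qcal$ solves $\pcsp(\A,\B)$'' is equivalent to ``$\mathcal{F}_\Qcal(\A)\to\B$'': one direction applies algorithmic correctness to the specific accepted instance $\mathcal{F}_\Qcal(\A)$; the other composes the homomorphism $\I\to\mathcal{F}_\Qcal(\A)$ witnessing acceptance with the promised $\mathcal{F}_\Qcal(\A)\to\B$.

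It then remains to exhibit a bijection between homomorphisms $h:\mathcal{F}_\Qcal(\A)\to\B$ and minion homomorphisms $\xi:\Mscr\to\pol(\A,\B)$. Given $\xi$, set $h(f):=\xi(f)(\id_A)$, where $\id_A\in A^{|A|}$ is the enumeration tuple. Given $h$, for $f\in\Mscr(n)$ define $\xi(f):\A^n\to\B$ by $\xi(f)(\bm a):=h(f^{\bm a})$, where $\bm a\in A^n$ is viewed as a map $[n]\to[|A|]$. Preservation of minors by $\xi$ is immediate from the composition axiom of $\Mscr$, and the two constructions are mutually inverse once the identity-minor axiom is invoked.

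The main obstacle is checking that each $\xi(f)$ is genuinely a polymorphism, i.e.\ a homomorphism $\A^n\to\B$. Fix $R\in\Sigma$ of arity $k$ and a tuple $(\bm a_1,\dots,\bm a_k)\in R^{\A^n}$; by definition, for every $j\in[n]$ the column $(a_{1,j},\dots,a_{k,j})$ lies in $R^A$, which packages into a single map $\tau:[n]\to R^A$ satisfying $\pi_i\circ\tau=\bm a_i$ for each $i$. Functoriality of minoring gives $\xi(f)(\bm a_i)=h(f^{\pi_i\circ\tau})=h((f^\tau)^{\pi_i})$, so the image tuple $(\xi(f)(\bm a_1),\dots,\xi(f)(\bm a_k))$ lies in $R^{\mathcal{F}_\Qcal(\A)}$ with witness $g:=f^\tau$, and is therefore sent by $h$ into $R^\B$. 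The remaining checks---well-definedness across the fixed enumerations, the identity-minor axiom, and the mutual-inverse computation---are routine once this core relational argument is in place.
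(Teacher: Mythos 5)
Your overall route — repackaging $\Qcal_\A(\I)$ as a coherent family of minion elements, introducing the free structure $\bF_{\Mscr_\Qcal}(\A)$, and exhibiting a bijection between $\Hom(\bF_{\Mscr_\Qcal}(\A),\B)$ and minion homomorphisms $\Mscr_\Qcal\to\pol(\A,\B)$ — is exactly the one followed in \cite{BBKO21} (Theorem 7.9) and, for $\BLP+\AIP$, in \cite{BGWZ20}, which is what the present paper cites rather than reproving. The minion reformulation of the integer/linear program is correct, and so is the bijection: the two maps $h\mapsto\xi$ and $\xi\mapsto h$ are mutually inverse, the minor-preservation check is right, and the relational verification via the map $\tau:[n]\to R^A$ packaging a tuple of $R^{\A^n}$ is the standard and correct argument.

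There is, however, a genuine gap in the forward direction. You assert that $\bF_\Qcal(\A)$ is ``itself accepted by $\Qcal$'' and then apply algorithmic correctness to this ``accepted instance.'' But $\bF_\Qcal(\A)$ has universe $\Mscr_\Qcal(|A|)$, which is infinite for each of $\AIP$, $\BLP$, $\BLP+\AIP$ as soon as $|A|\geq 2$, while $\Qcal$ and the predicate ``$\Qcal$ solves $\pcsp(\A,\B)$'' are defined only over finite input structures. So the statement ``$\Qcal$ accepts $\bF_\Qcal(\A)$'' is not well-formed, and correctness of $\Qcal$ cannot be invoked on it. The repair is a compactness argument: if $\Qcal$ solves $\pcsp(\A,\B)$, then every \emph{finite} substructure $\bm G\subseteq \bF_\Qcal(\A)$ satisfies $\bm G\to\bF_\Qcal(\A)$, hence is accepted by $\Qcal$ by your equivalence (which is legitimate on finite instances), hence $\bm G\to\B$. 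Since $\B$ is finite, this implies $\bF_\Qcal(\A)\to\B$ by the standard compactness fact, used in this paper as \Cref{le:finitely_equivalent} and proved in \cite[Remark 7.13]{BBKO21}. That step is where the finiteness of $\B$ actually enters the forward direction, and it cannot be elided by treating $\bF_\Qcal(\A)$ as an algorithm input.
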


In (non-)computability results we consider the plain encoding of $\Mscr_\BLP, \Mscr_\AIP, \Mscr_{\BLP+\AIP}$, where tuples are represented as comma separated lists, delimited with parentheses, integers are represented with their decimal representations, and rational numbers are represented as irreducible fractions, written as two numbers separated by a forward slash (i.e., $n/m$).

\paragraph{Minor Conditions}
    Another useful way of looking at minions is to consider them as multi-sorted structures \cite{dalmau2024local}, where the sort of each element $p\in \Mscr$ is its arity $\arty(p)$, and $\cdot^\pi$ is a function from $m$-ary elements to $n$-ary elements for each map $\pi\in [n]^{[m]}$. We consider the multi-sorted first-order (FO) language $\Lcal_{\MC}$
    of minions. For some background of multi-sorted (or many-sorted) FO logic we refer to \cite{gallier2015logic}, but we require only the very basics. Formulas in $\Lcal_\MC$ are built using variables, each of which has an arity (i.e., its sort), Boolean connectives, and function symbols $\cdot^\pi$ for each $n,m\in \NN$ and each map $\pi \in [m]^{[n]}$. Variables represent elements of minions, and each symbol $\cdot^\pi$ represents the corresponding minoring operation. We write $\exists^n x$ rather than $\exists x$ to make explicit that $x$ is a $n$-ary variable, and $\phi(x_1^{n_1},\dots,x_k^{n_k})$ to express that the free variables $x_1,\dots, x_k$ of the formula $\phi$ have arities $n_1,\dots, n_k$ respectively. For example, the formula
    $\exists^3 x \left(  x = x^{(2,3,1)} 
    \right)$ expresses the existence of a $3$-ary cyclic element.  The \emph{maximum arity} of a formula $\phi\in \Lcal_\MC$ is the largest maximum arity of any of its sub-terms.  A \emph{primitive positive formula} is one that does not include disjunction, negation, or universal quantification, and a \emph{sentence} (or a \emph{closed formula}) is a formula with no free variables. \emph{Minor conditions} are closed pp-formulas in $\Lcal_\MC$. We remark that more commonly the notion of minor condition is introduced using bipartite Label Cover instances (e.g., \cite{BBKO21}). \par
    Given a minion $\Mscr$, a formula $\phi(x_1,\dots,x_k)\in \Lcal_\MC$, and elements $f_1,\dots, f_k\in \Mscr$ such that
    $f_i$ has the same arity as $x_i$ for all $i\in [k]$, we write $\Mscr\models \phi(f_1, \dots, f_k)$ to express that
    $\Mscr$ satisfies $\phi$ when substituting each $x_i$ with the element $f_i$. A \emph{pp-definition} of a set $Q\subseteq \Mscr$ is a pp-formula $\Phi(x)$ such that $\Mscr\models \Phi(f)$ if and only if $f\in Q$. Suppose that $\Mscr \models \phi$ for a minor condition $\phi$. A \emph{satisfying assignment} of $\phi$ in $\Mscr$ maps each occurrence of the existential quantifier $\exists x \varphi(x)$ in $\phi$ to an element $f$ of $\Mscr$ of the same arity as $x$ in such a way that $\phi$ is satisfied after substituting $\exists x \varphi(x)$ with $\varphi(f)$ in each sub-formula. When there is no ambiguity, we will simply treat assignments as maps from variables of $\phi$ to elements of $\Mscr$.  Given two minions $\Mscr \rightarrow \Nscr$ and a number $h\in \NN$, in the \emph{promise minor condition problem} $\PMC_h(\Mscr, \Nscr)$ we consider an input minor condition $\phi$ whose maximum arity is at most $h$, and the task is to accept it if $\Mscr\models \phi$ and reject it if $\Nscr \not\models \phi$. Additionally, if $\Nscr$ is locally finite, we define the \emph{search promise minor condition problem} $\sPMC_h(\Mscr, \Nscr)$  as the problem of either finding a satisfying assignment of $\phi$ in $\Nscr$ or to reject $\phi$ if $\Mscr\not\models \phi$. \par

\section{Main Results}

We are ready to state our main results about the algorithms $\AIP, \BLP,$ and $\BLP+\AIP$. These are summarized in \Cref{fig:results_algs}. Given $\Qcal\in \{\AIP, \BLP, \BLP+\AIP\}$ and $k\in \NN$,
$\mathcal{S}^\Qcal_k$ denotes the family of finite templates $(\A, \B)$ with $|A|\leq k$ such that $\Qcal$ solves $\pcsp(\A, \B)$.

\begin{theorem}[Main result for AIP]
\label{th:AIP_main}
The following hold:
\begin{enumerate}[label=(\arabic*)]
    \item the family of rounding problems $\spcsp_{\AIP}(\A, \B)$
    for $(\A, \B)\in \mathcal{S}^\AIP_4$ is TFNP-hard,
    \item the family $\spcsp_{\AIP}(\A, \B)$
    for $(\A, \B)\in \mathcal{S}^\AIP_2$ is TFNP$_1$-hard,
    \item  the family $\mathcal{S}^\AIP_2$ is undecidable, and
    \item there is a template $(\A, \B)\in \mathcal{S}^\AIP_2$
    for which there is no computable minion homomorphism from $\Mscr_\AIP$ to $\pol(\A, \B)$.  
\end{enumerate}
\end{theorem}

\begin{theorem}[Main result for BLP]
\label{th:BLP_main}
The following hold:
\begin{enumerate}[label=(\arabic*)]
    \item the family of rounding problems $\spcsp_{\BLP}(\A, \B)$
    for $(\A, \B)\in \mathcal{S}^\BLP_5$ is TFNP-hard, 
    \item the family $\spcsp_{\BLP}(\A, \B)$
    for $(\A, \B)\in \mathcal{S}^\BLP_2$ is TFNP$_1$-hard,
    \item the family $\mathcal{S}^\BLP_2$ is undecidable, and
    \item  there is a template $(\A, \B)\in \mathcal{S}^\BLP_2$
    for which there is no computable minion homomorphism from $\Mscr_\BLP$ to $\pol(\A, \B)$.  
\end{enumerate} 
\end{theorem}

\begin{theorem}[Main result for $\BLP + \AIP$]
\label{th:BLP+AIP_main}
The following hold:
\begin{enumerate}[label=(\arabic*)]
    \item the family of rounding problems $\spcsp_{\BLP+\AIP}(\A, \B)$
    for $(\A, \B)\in \mathcal{S}^{\BLP+\AIP}_5$ is TFNP-hard, and 
    \item the family $\mathcal{S}^{\BLP+\AIP}_5$ is undecidable.
\end{enumerate}    
\end{theorem}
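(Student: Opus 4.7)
The plan is to mirror the two-step strategy underlying \Cref{th:AIP_main,th:BLP_main}: first reduce any TFNP problem (respectively, an undecidable tiling problem) to a polynomially-bounded tiling question, and then encode that tiling as a search promise minor condition problem $\sPMC_h(\Mscr_{\BLP+\AIP}, \pol(\A, \B))$ using the sheaf-like minion construction alluded to in the introduction. By \Cref{th:algorithm_characterization}, $(\A, \B)\in \mathcal{S}^{\BLP+\AIP}_5$ is equivalent to the existence of a minion homomorphism $\Mscr_{\BLP+\AIP} \to \pol(\A, \B)$, so the encoding must be crafted so that such a homomorphism exists exactly when the underlying tiling admits a global solution, while partial homomorphisms up to arity $h$ correspond to consistent partial tilings of a bounded window.

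For part (1), given $\Lambda\in$ TFNP I would compose the TFNP-to-tiling reduction (common to all three algorithms in the paper) with a template construction $\Lambda \mapsto (\A_\Lambda, \B_\Lambda)$ satisfying $|A_\Lambda|\leq 5$ and $\Mscr_{\BLP+\AIP} \to \pol(\A_\Lambda, \B_\Lambda)$. The polymorphisms of $(\A_\Lambda, \B_\Lambda)$ would be designed so that any minion homomorphism $F: \Mscr_{\BLP+\AIP} \to \pol(\A_\Lambda, \B_\Lambda)$ can be polynomially decoded into a valid tiling, and hence into a TFNP witness, yielding a many-one reduction from $\Lambda$ to $\spcsp_{\BLP+\AIP}(\A_\Lambda, \B_\Lambda)$. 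For part (2) I would instantiate exactly the same pipeline starting from an undecidable tiling problem of Hanf type, so that deciding membership in $\mathcal{S}^{\BLP+\AIP}_5$ reduces to deciding whether a given tile set admits a valid tiling.

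The main technical obstacle is the compatibility condition built into $\Mscr_{\BLP+\AIP}$: an element is a pair $(f,g)\in \Mscr_\BLP(n)\times \Mscr_\AIP(n)$ tied by $f(i)=0 \Rightarrow g(i)=0$. A minion homomorphism must therefore realise simultaneously the convex (barycentric) structure of the $\BLP$ component and the affine integer structure of the $\AIP$ component on the same polymorphism, respecting the zero-pattern linkage. This rules out the cleaner algebraic tricks available for $\BLP$ or $\AIP$ separately (rational averaging and integer linear combinations, respectively), which is precisely why the domain bound loosens from $|A|\leq 2$ to $|A|\leq 5$ and why the tally-TFNP and non-computability refinements of \Cref{th:AIP_main,th:BLP_main} are not claimed here. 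Concretely, the sheaf construction must give $\B_\Lambda$ enough internal ``states'' so that one polymorphism can simultaneously record a rational weighting and an integer combination of local tile values, without collapsing distinct tiles or introducing spurious polymorphisms that would short-circuit the tiling consistency checks enforced by the minoring operations.

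Once the template is in place, verifying $\Mscr_{\BLP+\AIP}\to \pol(\A_\Lambda,\B_\Lambda)$ — and hence, via \Cref{th:algorithm_characterization}, that $\BLP+\AIP$ solves $\pcsp(\A_\Lambda,\B_\Lambda)$ — reduces to checking that the intended pair-valued assignment sends every $(f,g)$ to a genuine homomorphism $\A_\Lambda^n\to\B_\Lambda$, a routine calculation parallel to the ones used for $\AIP$ and $\BLP$ alone. The backward direction, in which an arbitrary minion homomorphism is decoded into a tiling, is where the carefully chosen sheaf structure does the real work; it is also the step that yields the polynomial-time rounding reduction needed for (1) and closes the reduction from the undecidable tiling problem needed for (2).
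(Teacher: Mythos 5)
Your high-level plan matches the paper's: reduce a TFNP (resp.\ undecidable tiling) problem to a tiling of $\bm{\Gamma}^+$ via \Cref{prop:3-grid}, interpret the super-grid in $\Mscr_{\BLP+\AIP}$, take a quotient by a finite description, build the exponential minion, and push through to a finite template via the minion-closure machinery of \Cref{sec:minion_closures} and the reductions of \Cref{sec:reductions}. You also correctly identify \Cref{th:algorithm_characterization} as the bridge from minion homomorphisms to the algorithmic statement.

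However, the proposal stops at the blueprint stage: the entire technical content of the paper's proof is absent. The paper's work is in exhibiting a concrete $\Sigma_{\Gamma^+}$-interpretation over $\Mscr_{\BLP+\AIP}$ — for instance $U^\Ical$ consisting of $(f,g)\in\Mscr(5)$ with $f(4)<1/3$, $g(4)\nDiv 2$, and $(f(i),g(i))\Div(f(4),g(4))$ for $i\in[3]$ — together with the description $\Dcal=\{U^\Ical, D_<, D_{\nDiv 2}\}$, and then proving a chain of claims establishing that every property of $\Ical$ is internal at arity $6$ w.r.t.\ $\Dcal$, via inductively built pp-internal references. The lexicographic order on pairs $(s,m)\in[0,1]\times\ZZ$ is the device the paper uses to make the compatibility condition between the $\BLP$ and $\AIP$ coordinates \emph{help} rather than hinder: it lets the $D_<$ and $D_{\nDiv 2}$ properties cut out exactly the intended integer-multiple structure. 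You flag the compatibility condition as the obstacle but give no construction that actually negotiates it, so there is no proof here.

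Also, your explanation of why $|A|\leq 5$ ("the compatibility condition rules out the cleaner algebraic tricks, which is precisely why the domain bound loosens") is a heuristic, not a derivation. In the paper the bound is mechanical: it is the maximum of $\arty(U^\Ical)$ and the arities of the properties in $\Dcal$, here all equal to $5$; the rank bound of \Cref{le:rank_exponential_minion} and \Cref{th:minion_to_template} then produce a template of the form $(\bK_5^6,\B)$. Similarly, attributing the absence of the tally-TFNP and non-computability claims to the compatibility condition is speculation — the actual reason is simply that no $\Sigma_\Gamma$-interpretation (analogous to those in \Cref{sec:grid_AIP,sec:grid_BLP}) is constructed over $\Mscr_{\BLP+\AIP}$ in the paper.
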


We also recall here our main result about cyclic polymorphisms and WNUs. We remark that $\pol(\A, \B)$ admitting
some cyclic polymorphism is equivalent to it admitting one of some prime arity. 

\minors*

\section{Overview of the Proofs}
\label{sec:warm-up}
We sketch the proof of the following theorem while outlining the ideas used in our main results. 
\begin{theorem}
\label{th:warm-up}
    Let $\mathcal{S}$ be the set of finite templates $(\A, \B)$ with $|A|\leq 3$ such that $\pcsp(\A, \B)$ is solved by $\AIP$. Then (1) the family of problems $\spcsp_{\AIP}(\A, \B)$ where $(\A, \B)\in \mathcal{S}$ is TFNP$_1$-hard, (2) $\mathcal{S}$ is undecidable, and (3) there is a template
    $(\A, \B)\in \mathcal{S}$ for which there is no computable homomorphism $F: \Mscr_\AIP \rightarrow \pol(\A, \B)$.
\end{theorem}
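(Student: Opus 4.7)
The plan is to reduce from \textbf{Wang tiling problems} to promise minor condition problems over $\Mscr_\AIP$, and then transport the result to PCSPs via \Cref{th:algorithm_characterization}. All three parts of the theorem follow from a single bridge construction, combined with three distinct classical facts about tilings.

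\textbf{Step 1 (sheaf-like minion from a tile set).} Given a finite Wang tile set $T$, I construct a locally finite minion $\Nscr_T$ whose elements are ``partial tilings'' of finite fragments of $\ZZ^2$. Concretely, $n$-ary elements of $\Nscr_T$ are certain $T$-valued maps on combinatorial data indexed by $[n]$, carrying enough bookkeeping that, for every $\pi\colon[n]\rightarrow[m]$, the identity $f^\pi = g$ enforces that two partial tilings agree after the merge/extension prescribed by $\pi$. This is the sheaf-like part of the argument: elements are local sections, minors play the role of restriction/gluing maps, and the axioms of a minion force consistency on overlaps. The intended equivalence is
\[
\Mscr_\AIP \rightarrow \Nscr_T \iff T \text{ admits a tiling of } \ZZ^2.
\]
The reverse direction uses a global tiling to define the image of the ``unit'' $n$-ary elements $e_i\in\Mscr_\AIP(n)$; the forward direction uses that the $e_i$ generate $\Mscr_\AIP$ under minoring, so their images pin down a single coherent global assignment, with the minion identities propagating horizontal and vertical matching.

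\textbf{Step 2 (realising $\Nscr_T$ as $\pol(\A_T,\B_T)$ with $|A_T|\le 3$).} Using the sheaf-like presentation above, I construct a finite template $(\A_T,\B_T)$ with left-domain size at most $3$ whose polymorphism minion is (or canonically corresponds to) $\Nscr_T$. The three elements of $A_T$ are used to encode the two grid-axis coordinates together with a sentinel carrying the partial-tiling data, while the tile-matching constraints of $T$ are pushed into the relations of $\A_T$ and $\B_T$. By \Cref{th:algorithm_characterization} this yields
\[
\text{AIP solves } \pcsp(\A_T,\B_T) \iff \Mscr_\AIP \rightarrow \pol(\A_T,\B_T) \iff T \text{ tiles } \ZZ^2,
\]
and in particular $(\A_T,\B_T)\in\mathcal{S}$ precisely when $T$ tiles the plane.

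\textbf{Step 3 (the three conclusions).} For part (2), I invoke Berger's theorem that the domino problem is undecidable. Since the map $T\mapsto(\A_T,\B_T)$ is computable, a decision procedure for $\mathcal{S}$ would decide whether $T$ tiles $\ZZ^2$, a contradiction. For part (3), I take $T^\star$ to be a computable tile set that tiles $\ZZ^2$ but admits \emph{no} computable tiling (Hanf's theorem); any computable minion homomorphism $F\colon\Mscr_\AIP\rightarrow \pol(\A_{T^\star},\B_{T^\star})$ would, by evaluating $F$ on the generators $e_i$ and reading off their images via Step 1, give a computable global tiling of $T^\star$, a contradiction. For part (1), given a tally TFNP problem $\Lambda_\Rfrak$ with polynomial-time verifier $M$, I instead use \emph{bounded} tilings: there is a fixed tile set $T_\Lambda$ such that valid tilings of an $n\times p(n)$ rectangle encode witnesses $\bm y$ with $(1^n,\bm y)\in\Rfrak$. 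The reduction sends the input $1^n$ to the structure $\I_n$ encoding that rectangle; totality of $\Rfrak$ gives $\I_n\rightarrow\A_{T_\Lambda}$, so AIP accepts $\I_n$, and any homomorphism $\I_n\rightarrow \B_{T_\Lambda}$ decodes in polynomial time to a valid $\bm y$.

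\textbf{Main obstacle.} The crux is Step 1 together with the $|A_T|\le 3$ bound in Step 2: I need the minor operations in $\Nscr_T$ to be rigid enough to enforce the non-local consistency of $\ZZ^2$-tilings, while remaining flexible enough to be realised as a polymorphism minion with only three left-domain elements. What makes this feasible is the rigidity of $\Mscr_\AIP$ itself --- its elements are integer tuples summing to $1$, and they behave as formal signed multisets under minoring, which forces any homomorphism $\Mscr_\AIP \to \Nscr_T$ to concentrate on a genuine tiling rather than on a diffuse mixture. Once the sheaf-like bridge is in place, the three conclusions are essentially direct appeals to Berger, Hanf, and computation-as-tiling.
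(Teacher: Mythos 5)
Your high-level strategy --- reduce from tilings via a sheaf-like minion of partial tilings, then realize that minion as a polymorphism minion with a small left domain --- matches the paper's, and the three classical inputs (undecidability of the domino problem, Hanf's non-computable tilings, bounded-time computation encoded as a tiling) are exactly the ones used. However, Steps 1 and 2 have concrete gaps. The first is that your mechanism for pinning down a homomorphism $\Mscr_\AIP\rightarrow\Nscr_T$ does not work: ``the $e_i$ generate $\Mscr_\AIP$ under minoring'' is false, since the minors of the unit $(1)\in\Mscr_\AIP(1)$ are precisely the $0/1$-valued tuples, i.e.\ the copy of the projection minion $\mathscr{P}\subsetneq\Mscr_\AIP$, and elements with negative entries such as $(-1,2)\in\Mscr_\AIP(2)$ are not among them. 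The paper instead places the grid $\bm{\Gamma}$ inside $\Mscr_\AIP$ via an explicit interpretation $\Ical$, sending a grid point $(m_1,m_2)$ to $(m_1,m_2,1-m_1-m_2)\in\Mscr_\AIP(3)$, and builds an exponential minion $\bT^\Ical$ whose $n$-ary elements are pairs $(f,\chi)$ with $f\in\Mscr_\AIP(n)$ and $\chi$ a local section; the left projection of any homomorphism $\Mscr_\AIP\rightarrow\bT^\Ical$ is an endomorphism of $\Mscr_\AIP$, and since every element of $\Mscr_\AIP$ is pp-definable, the identity is its only endomorphism. That coupling between $\Nscr_T$ and $\Mscr_\AIP$ --- not generation by the $e_i$ --- is what concentrates a homomorphism onto a genuine tiling. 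Without it, there is no reason for $\Mscr_\AIP\rightarrow\Nscr_T$ to imply anything about $T$ tiling the plane.

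The second gap concerns local finiteness and realizability. If $\Nscr_T$ is to carry the left component $f\in\Mscr_\AIP(n)$ (or anything that remembers it), it cannot be locally finite, since $\Mscr_\AIP(n)$ is infinite. The paper first quotients $\Mscr_\AIP$ by a finite \emph{description} $\Dcal$ of suitable properties, and must then verify that the interpretation is \emph{internal} with respect to $\Dcal$ so the quotiented exponential minion still encodes the grid --- this is where most of the actual work lies. Moreover, even that locally finite quotient is not itself a polymorphism minion; the paper only obtains a \emph{partial} isomorphism up to a bounded arity $h$, via the rank and dimension machinery of \Cref{sec:minion_closures}, and the bound $|A|\le 3$ comes from the rank of the quotiented exponential minion being $3$ (two elements agree iff all their $3$-ary minors agree), not from the ``two coordinates plus a sentinel'' encoding you describe. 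So the skeleton and the three appeals in Step~3 are exactly right, but the interpretation, the stable quotient, and the rank/dimension finitization step are all missing, and without them the bridge in Steps~1--2 does not go through.
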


The proof of this result is a reduction from tiling problems. The signature $\Sigma_\Gamma$  consists of a unary symbol $O$ and binary symbols $E_1,E_2$, and $\bm \Gamma$ is the $\Sigma_\Gamma$-structure whose universe is the upper-right quadrant $\NN^2$, and where $O^\Gamma=\{ (1,1) \}$ marks the origin, and $E_1^\Gamma=\{ ((m,n),(m+1,n)) \, \vert \, (m,n)\in \NN^2 \}$,  $E_2^\Gamma=\{ ((m,n),(m,n+1)) \, \vert \, (m,n)\in \NN^2 \}$ 
are the horizontal and vertical adjacency relations. We write $\Hom(\bm \Gamma, \cdot)$ for the set of finite structures $\bT$ satisfying $\bm \Gamma \rightarrow \bT$. A $\Sigma_\Gamma$-structure $\bT$ can be seen as a set of tiles, equipped with horizontal and vertical constraints, and a set of distinguished tiles that can occupy the initial position. This way, a homomorphism $F: \bm \Gamma \rightarrow \bT$ represents a tiling of the upper-right quadrant of the plane that satisfies all the imposed restrictions.

\begin{restatable}{proposition}{gridprop}
\label{prop:grid}
The following hold:
\begin{enumerate}[label=(\arabic*)]
    \item the family of problems $\spcsp(\bm \Gamma, \bT)$ where $\bT\in \Hom(\bm \Gamma, \cdot)$ is TFNP$_1$-hard, 
    \label{item:grid_hardness}
    \item $\Hom(\bm \Gamma, \cdot)$ is undecidable, and
    \label{item:grid_undec}
    \item there exists $\bT\in \Hom(\bm \Gamma, \cdot)$ for which there is no computable homomorphism $F: \bm \Gamma \rightarrow \bT$.
    \label{item:grid_noncomp}
\end{enumerate}
\end{restatable}

At a high-level, this is follows from the fact that, given a non-deterministic Turing machine $M$, we can construct a finite structure $\bT_M$ for which the homomorphisms $F:\bm \Gamma \rightarrow \bT_M$ represent non-halting runs of $\bT_M$. \par
In order to prove \Cref{th:warm-up} using this proposition we first represent $\bm \Gamma$ inside the minion $\Mscr_\AIP$, and then we find a way of encoding finite $\Sigma_\Gamma$-structures in finite templates $(\A, \B)$ in a suitable way. The first task is the simpler one. We can identify each pair $\bm m= (m_1,m_2)$ in $\NN^2$ with the $3$-ary element $ f_{\bm m} = (m_1,m_2, 1-m_1 -m_2)\in \Mscr_\AIP$. Similarly, we can represent each pair $(\bm m, \bm m^\prime)\in E_i^\Gamma$ with the $4$-ary element $g_{E_i(\bm m, \bm m^\prime)}=(m_1,m_2,1, - m_1 - m_2)$ with the idea that $f_{\bm m}= g_{E_i(\bm m, \bm m^\prime)}^{(1,2,3,3)}$ and 
$f_{\bm m^\prime}= g_{E_i(\bm m, \bm m^\prime)}^{(1,2,i,3)}$ in mind. We call this way of representing a relational structure inside a minion an \emph{interpretation} (\Cref{sec:interpretations}) and we denote it by $\Ical$. We define $U^\Ical= \{ f_{\bm m} \, \vert \, \bm m\in \NN^2\}$, $O^\Ical=\{ f_{(1,1)} \}$, and $E_i^\Ical = \{ g_{E_i(\bm m, \bm m^\prime)} \, \vert \, (\bm m, \bm m^\prime)\in E_i^\Gamma \}$ for $i = 1,2$.
\par

The interpretation $\Ical$ is \emph{almost} pp-definable in $\Mscr_\AIP$. Indeed, the element $(1,0)\in \Mscr_\AIP(2)$ is the only witness of the pp-formula $\phi_1(x^2)\equiv x=x^{(1,1)}$, so the $4$-ary elements of the form $(m_1,m_2,1, - m_1 - m_2)\in \Mscr_\AIP$ are precisely the witnesses of $\phi_E(y^4)\equiv \phi_1(y^{(2,2,1,2)})$. However, this does not take into account the restriction that $m_1, m_2\in \NN$. Under closer inspection it is not hard to see that the set of elements $(m,n)\in \Mscr_\AIP$ with $m>0$ is not pp-definable, so, in fact, the interpretation $\Ical$ itself is not pp-definable either \footnote{This could be circumvented by having defined $\bm \Gamma$ on the whole integer plane $\ZZ^2$ instead of $\NN^2$, but we want to illustrate the point that our reductions are not uniform pp-definitions. This will be useful in the more involved proofs.}. However, we get something almost as good: for any element $\bm m\in \NN^2$ we can construct in polynomial time a pp-definition $\psi_{\bm m}(x^3)$ of $f_{\bm m}$ in $\Mscr_\AIP$. For $\bm m=(1,1)$, we simply define $\psi_{\bm m}(x^3)\equiv \phi_1(x^{(1,2,2)}) \wedge \phi_1(x^{(2,1,2)})$. 
Given $\bm m\in \NN^2$, for $\bm m^\prime=(m_1+1,m_2)$, we define
$\psi_{\bm m^\prime}(x^3)$ as the formula
\begin{equation}
\bigexists^{4} y \bigexists^3 z \left(\psi_{\bm m}(z) \wedge \phi_{E}(y) \wedge z = y^{(1,2,3,3)} \wedge
x = y^{(1,2,1,3)}\right), \label{eq:reference_warm-up}
\end{equation}
and so on. This way, given a finite substructure $\bm G \subset \bm \Gamma$, we can construct in polynomial time a minor condition $\Psi_{\bm G}$ containing an existentially-quantified variable $x_{\bm m}$ for each vertex $\bm m\in G$ an existentially-quantified variable $y_{E_i(\bm m, \bm m^\prime)}$ for each edge $(\bm m, \bm m^\prime)\in E_i^G$, such that the satisfying assignments of $\Psi_{\bm G}$ on $\Mscr_\AIP$ must map 
each variable to the minion element that encodes the corresponding vertex or edge. That is, 
$x_{\bm m}\mapsto f_{\bm m}$ and
$y_{E_i(\bm m, \bm m^\prime)} \mapsto g_{E_i(\bm m, \bm m^\prime)}$. The minor condition $\Psi_{\bm G}$ can be defined as
\begin{multline}
\label{eq:pattern_warm-up}
\bigexists_{\bm m \in G}^3 x_{\bm m}
\bigexists_{i\in [2], (\bm m, \bm m^\prime)\in E_i^G}^4
y_{E_i(\bm m, \bm m^\prime)} 
\left( \bigwedge_{\bm m\in G}
\psi_{\bm m}(x_{\bm m})
\right)  \bigwedge \\
 \Biggl( \bigwedge_{i\in [2], (\bm m, \bm m^\prime)\in E_i^G}
\psi_{E}(y_{E_i(\bm m,\bm m^\prime)}) \wedge  x_{\bm m}= y_{E_i(\bm m,\bm m^\prime)}^{(1,2,3,3)}
\wedge  x_{\bm m^\prime}= y_{E_i(\bm m,\bm m^\prime)}^{(1,2,i,3)} 
\Biggr).
\end{multline}
This construction is called a \emph{pattern} (\Cref{sec:patterns}). Having found a nice way to represent $\bm \Gamma$ inside $\Mscr_\AIP$, the next step is to develop a construction that, given a finite $\Sigma_\Gamma$-structure $\bT$, builds a suitable finite template $(\A_{\bT}, \B_{\bT})$. We would like $\Mscr_{\bT}= \pol(\A_{\bT}, \B_{\bT})$ to satisfy the following.
 \begin{ppty}[Conditions that enable our reductions] ~ \\
 \vspace{-1em}
    \label{ppty:warm-up_conditions}
    \begin{enumerate}[label=({\Roman*})]
        \item
        \label{ppty:warm-up_conditions_item1}
        $\Mscr_\AIP\rightarrow \Mscr_{\bT}$ if and only if $\bm \Gamma \rightarrow \bT$.
        \item
        \label{ppty:warm-up_conditions_item2}
        A homomorphism $F: \bm \Gamma \rightarrow \bT$ can be computed given oracle access to a homomorphism $H: \Mscr_\AIP\rightarrow \Mscr_{\bT}$.
        \item
        \label{ppty:warm-up_conditions_item3}
        Given a finite substructure $\bm G \subset \bm \Gamma$, there is a 
        polynomial-time reduction from the task of finding  a homomorphism $F: \bm G \rightarrow \bT$ 
        to the task of finding a satisfying assignment  of $\Psi_{\bm G}$ in $\Mscr_{\bT}$.
    \end{enumerate}
    \end{ppty}
  It is easy enough to see that the conditions \ref{ppty:warm-up_conditions_item1} and \ref{ppty:warm-up_conditions_item2} allow us to respectively reduce the undecidability and non-computability parts of \Cref{th:warm-up} to those of \Cref{prop:grid}. In to bridge the gap between the TFNP$_1$-hardness statements in those results, we use condition \ref{ppty:warm-up_conditions_item3} together with the following additional result, which is a consequence of \Cref{th:spcsp_to_spmc} (\Cref{sec:reductions}). We note that this is a standard modification of an analogous result, \cite[Theorem 3.12]{BBKO21}, that has been applied widely in the PCSP literature. 
\begin{proposition}
\label{prop:PMC_reduction_warm-up}
    Suppose that $\AIP$ solves $\pcsp(\A, \B)$ for a finite template $(\A, \B)$. Then $\spcsp_\AIP(\A,\B)$ is log-space equivalent to $\sPMC_N(\Mscr_\AIP, \pol(\A, \B))$, where $N$ is at least as large as $|A|$ and $|R^A|$ for all relation symbols $R$.
\end{proposition}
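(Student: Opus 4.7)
The plan is to give two log-space many-one reductions, both driven by the coordinate-wise dictionary between $\AIP_\A(\cdot)$-solutions and assignments in $\Mscr_\AIP$ that underlies \Cref{th:algorithm_characterization}.

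For the reduction $\spcsp_\AIP(\A, \B) \rightarrow \sPMC_N(\Mscr_\AIP, \pol(\A, \B))$, given an instance $\I$ accepted by $\AIP$ I would construct the minor condition $\phi_\I$ with an existential variable $x_v$ of arity $|A|$ for each $v \in I$, a variable $y_r$ of arity $|R^A|$ for each constraint $r \in R^I$, and clauses $x_{r(i)} = y_r^{\pi_{R, i}}$ for every relation symbol $R$ and each $i \in [\arty(R)]$, where $\pi_{R, i}: [|R^A|] \rightarrow [|A|]$ sends the index of a tuple $r_A \in R^A$ to the index of $r_A(i)$ under a fixed enumeration of $A$. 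Every sub-term of $\phi_\I$ then has arity at most $\max(|A|, \max_R |R^A|) \leq N$, and the formula is log-space constructible. A coordinate-wise inspection reveals that assignments $x_v \mapsto \alpha_v$ and $y_r \mapsto \beta_r$ in $\Mscr_\AIP$ are literally $\AIP_\A(\I)$-solutions under the identifications $\alpha_v(a) = x_{v,a}$ and $\beta_r(r_A) = x_{r,r_A}$, since the minor equations unfold exactly to the linear equations of $\AIP_\A(\I)$ and the normalisations $\sum_j \alpha(j) = 1$ match the equations $\sum_a x_{v,a} = 1$ and $\sum_{r_A} x_{r, r_A} = 1$. In particular, $\AIP$ accepting $\I$ implies $\Mscr_\AIP \models \phi_\I$. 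From a satisfying assignment $x_v \mapsto f_v$, $y_r \mapsto g_r$ in $\pol(\A, \B)$ I would read off $F(v) := f_v(a_1, \ldots, a_{|A|})$ after fixing an enumeration $A = \{a_1, \ldots, a_{|A|}\}$; using that each $g_r$ is a polymorphism together with the minor equations, a one-line calculation shows that $F$ preserves every relation of $\I$.

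For the reverse direction $\sPMC_N(\Mscr_\AIP, \pol(\A, \B)) \rightarrow \spcsp_\AIP(\A, \B)$ I would mirror this construction on the structural side. Given a minor condition $\phi$ of arity at most $N$ with $\Mscr_\AIP \models \phi$, build $\I_\phi$ as the disjoint union $\bigsqcup_x \A^{\arty(x)}$ indexed by the variables $x$ of $\phi$, quotiented by the equivalence relation generated by identifying $\bm a \in A^{\arty(x)}$ with $\bm a \circ \pi \in A^{\arty(y)}$ for every equation $x = y^\pi$ of $\phi$. Since all arities are bounded by the constant $N$, $\I_\phi$ has polynomial size and is log-space constructible. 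A homomorphism $\I_\phi \rightarrow \B$ restricts to polymorphisms $f_x: \A^{\arty(x)} \rightarrow \B$ which, by the quotient, satisfy every equation of $\phi$, yielding a satisfying assignment in $\pol(\A, \B)$. Conversely, given a witness $\{\alpha_x\}$ of $\Mscr_\AIP \models \phi$, I would push it forward to a valid $\AIP_\A(\I_\phi)$-solution by setting $x_{\bm a, a} := \sum_{j: \bm a(j) = a} \alpha_x(j)$, and analogously for the constraint variables; compatibility across identifications $\bm a \sim \bm a \circ \pi$ is exactly the content of $\alpha_x = \alpha_y^\pi$, while the other $\AIP$ equations collapse to $\sum_j \alpha_x(j) = 1$. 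Hence $\AIP$ accepts $\I_\phi$, so $\I_\phi$ is a valid input to $\spcsp_\AIP(\A, \B)$.

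The hard part is purely bookkeeping: matching indices so that the maps $\pi_{R, i}$ and the quotient identifications in $\I_\phi$ produce on the nose the right minor equations and the right $\AIP$ linear constraints. No deeper structural argument is needed, since both reductions only use the one-sided implications ``$\AIP$ accepts $\I \Rightarrow \Mscr_\AIP \models \phi_\I$'' and ``$\Mscr_\AIP \models \phi \Rightarrow \AIP$ accepts $\I_\phi$'', each of which is immediate once the coordinate dictionary between $\Mscr_\AIP$ and $\AIP_\A(\cdot)$ is in place.
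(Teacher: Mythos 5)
Your proof is correct and uses the same two constructions---the minor condition $\phi_\I$ built from an instance $\I$ and the structure $\I_\phi$ built from a minor condition $\phi$---that appear in the paper's \Cref{th:spcsp_to_spmc}, from which the proposition is derived. The only stylistic difference is that the paper reaches the proposition by composing \Cref{th:spcsp_to_spmc} (stated for an arbitrary minion $\Mscr$, via the free structure $\bF_\Mscr(\A)$) with \Cref{fact:rounding_to_pcsp}, whereas you specialise to $\Mscr_\AIP$ from the start and unwind the equivalence ``$\AIP$ accepts $\I$'' $\Leftrightarrow$ ``$\Mscr_\AIP \models \phi_\I$'' directly through the coordinate-wise match between $\AIP_\A(\I)$-solutions and $\Mscr_\AIP$-assignments, which is an equivalent piece of bookkeeping.
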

We remark that we do not always achieve the conditions from \Cref{ppty:warm-up_conditions} in tandem: for example, sometimes we obtain templates that satisfy both \ref{ppty:warm-up_conditions_item1} and 
\ref{ppty:warm-up_conditions_item3}, but not \ref{ppty:warm-up_conditions_item2}, meaning that we are able to obtain undecidability, and hardness-of-rounding results for a given algorithm, but not a related non-computability result. This is the case, for example, of $\BLP+\AIP$. With our methods, the condition \ref{ppty:warm-up_conditions_item1} used to prove undecidability has the weakest requirements, while the conditions \ref{ppty:warm-up_conditions_item2} and \ref{ppty:warm-up_conditions_item3} used to show non-computability and hardness have non-comparable requirements. \par

Rather than constructing the template $(\A_{\bT}, \B_{\bT})$ directly, we focus on its polymorphism minion instead.
We start by defining a minion $\Nscr_{\bT}$ that satisfies \Cref{ppty:warm-up_conditions} of $\pol(\A_{\bT}, \B_{\bT})$. The $n$-ary elements of $\Nscr_{\bT}$ are pairs $(f,\chi)$, where $f\in \Mscr_{\AIP}(n)$, and $\chi: [3]^{[n]}\partialmap T$ is a partial map
defined on the elements $\gamma\in [3]^{[n]}$ such that $f^\gamma\in U^\Ical$ that satisfies the following properties:  if $f_{(1,1)}=f^\gamma$, then $\chi(\gamma)\in O^T$, and if $g_{E_i(\bm m, \bm m^\prime)}=f^\pi$, then 
$(\chi(\gamma), \chi(\gamma^\prime))\in E_i^T$, where $\gamma=(1,2,3,3)\circ \pi$, and $\gamma^\prime=(1,2,i,3)\circ \pi$. Intuitively, the element $f\in \Mscr_{\AIP}$ \emph{covers} some part of $\bm \Gamma$ (as interpreted by $\Ical$), which consists of the elements $f_{\bm m}\in U^\Ical$ that can be obtained by minoring $f$. Similarly, we also think of $f$ as covering the edges $(\bm m, \bm m^\prime)\in E_i^\Gamma$ where $g_{E_i(\bm m, \bm m^\prime)}\in E_i^\Ical$ is a minor of $f$. Then, the map $\chi$ represents a homomorphism from the substructure $\bLS_f\subseteq \bm \Gamma$ covered by $f$ to $\bT$. Given a map $\pi\in [m]^{[n]}$, we define the minor $(f,\chi)^{\pi}$ as the pair $(f^\pi, \chi^\pi)$, where $\chi^\pi: [3]^{[m]} \partialmap T$ is the partial map defined by $\gamma \mapsto \chi(\gamma \circ \pi)$. The rationale behind this construction is that the structure $\bLS_{f^\pi}$ covered by $f^\pi$ is a substructure of $\bLS_{f}$, so $\chi^\pi$ is defined so that it represents the restriction of $\chi$ to $\bLS_{f^\pi}$.
If the substructures $\bLS_f$ formed a topology over $\bm \Gamma$ (which, we remark, is not the case), then the minion $\Nscr_{\bT}$ would encode a \emph{sheaf} \cite{bredon2012sheaf} over $\bm \Gamma$ whose \emph{sections} correspond to partial homomorphisms to $\bT$. We keep this topological intuition in mind throughout the proof. We call $\Nscr_{\bT}$ the \emph{manifold minion} \footnote{In previous versions of this work, this was called the exponential minion. We changed the name to appease a  category-theoretically inclined reader who would point out that these minions are not, in fact, exponential objects in the category of minions.} given by the interpretation $\Ical$ and the structure $\bT$ (\Cref{sec:manifold_minion}). \par
\begin{proposition}
    The minion $\Nscr_{\bT}$ satisfies \Cref{ppty:warm-up_conditions}.
\end{proposition}
Let us sketch the proof of this fact. It is not difficult to show that every element $f\in \Mscr_\AIP$ is pp-definable, so the only minion homomorphism $F: \Mscr_\AIP \rightarrow \Mscr_\AIP$ is the identity.  Hence, any homomorphism $F:\Mscr_\AIP \rightarrow \Nscr_{\bT}$ composed with the left projection must yield the identity over $\Mscr_\AIP$. In other words, $F$ must be of the form $f \mapsto (f,\chi_f)$. If such homomorphism $F$ exists, then the charts $\chi_f$ must be compatible with each other, yielding a global homomorphism from $\bm \Gamma$ to $\bT$. Conversely, if there is a homomorphism $H:\bm \Gamma \rightarrow \bT$, one can restrict it to each local structure $\bLS_f$ to obtain compatible charts $\chi_f$ and define a minion homomorphism $F$. This proves $\Nscr_{\bT}$ satisfies \ref{ppty:warm-up_conditions_item1}. To show \ref{ppty:warm-up_conditions_item2}, suppose that we have oracle access to a homomorphism $F:\Mscr_\AIP \rightarrow \Nscr_{\bT}$. Then, by the previous reasoning, in order to compute a homomorphism $H:\bm \Gamma \rightarrow \bT$, given an element $\bm m\in \NN^2$ we just need to query $F(f_{\bm m})=(f_{\bm m}, \chi_{f_{\bm m}})$ and consider the chart $\chi_{f_{\bm m}}$. Finally, to see that \ref{ppty:warm-up_conditions_item3} holds, consider a finite substructure $\bG \subset \bm \Gamma$, and a satisfying assignment $x\mapsto (f_x,\chi_x)$ of $\Psi_{\bG}$ in $\Nscr_{\bT}$. The map $x\mapsto f_x$
is a satisfying assignment of $\Psi_{\bG}$ in $\Mscr_\AIP$, so by construction of $\Psi_{\bG}$ it must hold that $f_{x}=f_{\bm m}$ when $x=x_{\bm m}$ for each $\bm m\in G$, and $f_{x}=g_{E_i(\bm m, \bm m^\prime)}$ when $x= y_{E_i(\bm m, \bm m^\prime)}$ for each $i\in [2]$, $(\bm m, \bm m^\prime)\in E_i^G$. Hence, when $x$ ranges over all variables $\{ x_{\bm m}\}_{\bm m\in G} \cup \{ y_{E_i(\bm m, \bm m^\prime)}\}_{i\in [2],(\bm m, \bm m^\prime)\in E_i}$, the elements $f_x$ cover the whole structure $\bG$ (as induced by the interpretation $\Ical$), and the charts $\chi_x$ must 
piece together a homomorphism from $\bG$ to $\bT$. \par

The main issue at this point is that, despite having the desired properties, the minion $\Nscr_{\bT}$ is not (isomorphic to) the polymorphism minion of any finite template $(\A_{\bT}, \B_{\bT})$. In fact, it is not even locally finite. The first step towards constructing $(\A_{\bT}, \B_{\bT})$ from the minion $\Nscr_{\bT}$ is to find a locally finite quotient $\Nscr^\prime_{\bT}$ of $\Nscr_{\bT}$ that still satisfies \Cref{ppty:warm-up_conditions}. More precisely, $\Nscr^\prime_{\bT}$ will be obtained by performing the manifold construction on a locally finite quotient of $\Mscr_\AIP$ by some equivalence relation $\sim$. Let us spell this out. The equivalence $\sim$ defines a quotient minion $\Mscr_\AIP/\sim$ where each element $\qclass{f}$ is the $\sim$-class of an element $f$ in the original minion $\Mscr_\AIP$. Similarly, given a subset $S\subseteq \Mscr_\AIP$, we write $\qclass{S}$
for the subset of $\Mscr_\AIP/\sim$ consisting of the elements $\qclass{f}$ for each $f\in S$.
The relation $\sim$ also induces an equivalence on $\bm \Gamma$ through the interpretation $\Ical$ defined as $\bm m \sim \bm m^\prime$ whenever $f_{\bm m}\sim f_{\bm m^\prime}$. Finally, we can also speak of the quotient interpretation $\Ical/\sim$ over the quotient minion $\Mscr_\AIP/\sim$, which is given by $U^{\Ical/\sim} = \qclass{U^\Ical}$, $O^{\Ical/\sim}= \qclass{O^\Ical}$, and $E_i^{\Ical/\sim} = \qclass{E_i^\Ical}$ for $i = 1,2$. It can be shown that $\Ical/\sim$ induces the quotient structure $\Gamma/\sim$ on the quotient minion $\Mscr_\AIP/\sim$. 
Then, we define $\Nscr^\prime_{\bT}$ as the manifold minion given by the interpretation $\Ical/\sim$
and the structure $\bT$, similarly to $\Nscr_{\bT}$. \par
The following conditions ensure that $\Nscr^\prime_{\bT}$ still satisfies \Cref{ppty:warm-up_conditions} and that this property can be further preserved when transforming $\Nscr^\prime_{\bT}$ into a polymorphism minion in the final step. \par

    \begin{ppty}[Requirements for a good quotient]~\\
    \vspace{-1em}
    \label{ppty:warm-up_conditions2}
    \begin{enumerate}[label=({\Alph*})]
    \item \emph{The interpretation $\Ical$ is $\sim$-stable.} It holds that  $\qclass{f}\in \qclass{U^\Ical}$ if and only if $f\in U^\Ical$, 
    $\qclass{f}\in \qclass{O^\Ical}$ if and only if $f\in O^\Ical$, and 
    $\qclass{f}\in \qclass{E_i^\Ical}$ if and only if $f\in E_i^\Ical$ for all $i\in [2]$.
    \label{ppty:warm-up_conditions2_item0}
    \item \emph{The interpretation $\Ical$ is internal with respect to $\sim$ at arity $4$.}
        For each partial homomorphism
        $F: \Mscr_\AIP \partialmap \Mscr_\AIP/\sim$ defined up to arity $4$, 
        it holds that
        $F(U^\Ical)\subseteq \qclass{U^\Ical}$, 
        $F(O^\Ical)\subseteq \qclass{O^\Ical}$,
        and
        $F(E_i^\Ical)\subseteq \qclass{E_i^\Ical}$ for each $i\in [2]$.
        \label{ppty:warm-up_conditions2_item1}
        \item \emph{The map $\bm G \mapsto \Psi_{\bm G}$ defined \eqref{eq:pattern_warm-up}
        is an internal pattern with respect to $\sim$.
        } Given a finite substructure $\bm G \subset \bm \Gamma$, any satisfying assignment of $\Psi_{\bm G}$ in $\Mscr_{\AIP}/\sim$ must correspond to a homomorphism from $\bm G$ into $\Gamma/\sim$\footnote{We warn the reader that, for the sake of exposition in this introductory section, the statement in \ref{ppty:warm-up_conditions2_item2} is not fully precise.}.
        \label{ppty:warm-up_conditions2_item2}
    \end{enumerate}
    \end{ppty}

Let us briefly go over these conditions. Condition \ref{ppty:warm-up_conditions2_item0} ensures that if there is a homomorphism $F:\Gamma \rightarrow \bT$ then there is a minion homomorphism
$H:\Mscr_{\AIP} \rightarrow \Nscr^\prime$. Indeed, in this case we can define $H$ as the map
$f\mapsto (\qclass{f}, \chi_f)$, where $\chi_f$ is obtained by looking at the restriction of $F$ to the local structure induced by $\Ical$ on $f$, and then factoring the local homomorphism through the quotient. The reason this construction works is that the quotient map $\bm\Gamma \rightarrow \bm \Gamma/\sim$ preserves non-relations (i.e., sends non-edges to non-edges and sends non-origin elements to non-origin elements) so partial homomorphisms on local structures can be factored through the quotient \footnote{Making this statement formal requires being more careful in our definition of local structures. The direct approach, which we follow in this proof sketch, of defining the local structure $\bS_f$ as the substructure of $\bm \Gamma$ induced on the minors of $f$ does not work. The issue is that the quotient map $f\mapsto \qclass{f}$ might identify minors of $f$, which prevents us from factoring partial homomorphisms through the quotient.}. Condition \ref{ppty:warm-up_conditions2_item1} implies that every minion homomorphism $H:\Mscr_\AIP \rightarrow \Nscr^\prime_{\bT}$ corresponds to a homomorphism $F:\bm\Gamma \rightarrow \bT$. Indeed, intuitively, this condition means that every homomorphism $\Mscr_\AIP \rightarrow \Mscr_\AIP/\sim$ must induce a homomorphism between the structures $\bm \Gamma$ and $\bm \Gamma/\sim$ interpreted on those minions. So, if the homomorphism $H$
maps $f\in \Mscr_\AIP$ to some pair $(\qclass{g}, \chi)\in \Nscr^\prime_{\bT}$, then it must hold that the local structure 
$\bGS_{\qclass{g}}$ given by $\qclass{g}$ in $\Mscr_\AIP/\sim$ is a homomorphic image of the local structure $\bGS_f$ of $f$ in $\Mscr_\AIP$, meaning that the local homomorphism $\chi: \bGS_{\qclass{g}}\rightarrow \bT$ can be lifted to another one $\tilde{\chi}: \bGS_f \rightarrow \bT$. It can be seen that these lifted local homomorphisms form a consistent family, so they define a global homomorphism from $\bm \Gamma$ to $\bT$. Finally, condition \ref{ppty:warm-up_conditions2_item2} requires that any assignment of $\Psi_{\bm G}$ in $\Mscr_{\AIP}/\sim$ maps, for each $\bm m\in G$, the variable $x_{\bm m}$ to some element $\qclass{f_{\bm m^\prime}}$ in such a way that the map $\bm m \mapsto \qclass{\bm m^\prime}$ is a homomorphism from $\bG$ to $\bm \Gamma/\sim$. Arguing again that you can factor local homomorphism through the quotient, this condition allows us to transfer \ref{ppty:warm-up_conditions_item3} in \Cref{ppty:warm-up_conditions} from the minion $\Nscr_{\bT}$ to $\Nscr^\prime_{\bT}$.
\par

Another important observation is that \ref{ppty:warm-up_conditions2_item1} would be weaker if we considered non-partial homomorphisms instead. Indeed, every homomorphism can be restricted to a partial homomorphism, but not every partial homomorphism can, in principle, be extended to a fully-defined homomorphism. 
We consider partial homomorphisms planning for a future step where, in order to obtain a polymorphism minion, we will have to let go of high-arity terms in a certain way. The choice of arity $4$ corresponds
to the fact that the minor conditions $\Psi_{\bm G}$ have maximum arity $4$. \par

In order to construct the relation $\sim$ we select some relevant \emph{predicates} (related to $\Ical$) containing the information that we would like to preserve in the quotient. These predicates will be sets of binary elements $D_1, D_{\NN}\subseteq \Mscr_\AIP(2)$ defined as $D_1=\{(1,0)\}$, and $D_\NN=\{(m,1-m) \, \vert \, m\in \NN \}$. We call the set of predicates $\Dcal=\{ D_1, D_\NN\}$ a \emph{description} (\Cref{sec:descriptions}). Then, we write $f\sim_\Dcal g$ for two elements $f, g\in \Mscr_\AIP$ of the same arity $n$ whenever for all $\pi\in [2]^{[n]}$ and all $P\in \Dcal$ the inclusion $f^\pi\in P$ holds if and only if $g^\pi\in P$ as well. 
We shorten expressions of the form $\, \cdot \, /\sim_{\Dcal}$ to $\, \cdot\, / \Dcal$. \Cref{fig:quotient_warm-up}
displays the quotient $\bm \Gamma/\Dcal$.
\begin{fact}
    The equivalence relation $\sim_\Dcal$ satisfies \Cref{ppty:warm-up_conditions2}.
\end{fact}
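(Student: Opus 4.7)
My plan is to verify both conditions by (i) isolating a closure property that says the ``types'' $D_1$, $D_\NN$, $U^\Ical$, and $E_1^\Ical = E_2^\Ical$ are unions of $\sim_\Dcal$-classes, (ii) observing that $\qclass{(1,0)}$ is a singleton and the unique class of $\Mscr_\AIP/\sim_\Dcal$ fixed by the $(1,1)$-minor, and then (iii) running an induction along the construction of the pp-formulas $\psi_{\bm m}$. The closure property follows by substituting $\pi = \id$ into the definition of $\sim_\Dcal$: this immediately gives $D_1$ and $D_\NN$ as unions of classes and collapses $\qclass{(1,0)}$ to $\{(1,0)\}$. For $U^\Ical$, if $x \sim_\Dcal f_{\bm m}$ then $x^{(1,2,2)} \sim_\Dcal (m_1, 1-m_1) \in D_\NN$ forces $x^{(1,2,2)} \in D_\NN$, hence $x(1) \in \NN$; analogously $x(2) \in \NN$, and since the components of $x$ sum to $1$, $x = f_{(x(1), x(2))} \in U^\Ical$. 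The argument for $E_1^\Ical$ adds only that $x^{(2,2,1,2)} \in D_1$ pins down $x(3) = 1$.

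For property (1), let $F$ be a partial homomorphism up to arity $4$. Since $\qclass{x}^{(1,1)} = \qclass{(1,0)}$ for every $2$-ary class (the components of any representative sum to $1$), the only class fixed by the $(1,1)$-minor is $\qclass{(1,0)}$, and preservation of minor identities forces $F((1,0)) = \qclass{(1,0)}$. Applied to the two conjuncts of $\psi_{(1,1)}$, this makes the $(1,2,2)$- and $(2,1,2)$-minors of any representative of $F(f_{(1,1)})$ equal $(1,0)$; solving these equations in a $3$-ary tuple gives $(1,1,-1) = f_{(1,1)}$, so $F(f_{(1,1)}) = \qclass{f_{(1,1)}}$. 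The inclusions $F(U^\Ical) \subseteq U^\Ical/\sim$ and $F(E_i^\Ical) \subseteq E_i^\Ical/\sim$ then follow by induction on $\bm m$ along the construction of $\psi_{\bm m}$: in the step from $\psi_{\bm m}$ to $\psi_{\bm m^\prime}$, $F$'s preservation of the relevant minor equations, together with the inductive hypothesis $F(f_{\bm m}) = \qclass{f_{\bm n}} \in U^\Ical/\sim$, forces any representative $y_0$ of $F(g_{E_i(\bm m, \bm m^\prime)})$ to satisfy $y_0(3) = 1$ and $y_0^{(1,2,3,3)} \in \qclass{f_{\bm n}} \subseteq U^\Ical$; hence $y_0(1), y_0(2) \in \NN$, $y_0 \in E_1^\Ical$, and $y_0^{(1,2,1,3)} = f_{(y_0(1)+1, y_0(2))} \in U^\Ical$.

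For (2), given a satisfying assignment $\sigma$ of $\Psi_{\bm G}$ in the quotient, the same induction---now with the existential witnesses provided by $\sigma$ rather than by a minion homomorphism---shows $\sigma(x_{\bm m}) = \qclass{f_{\bm n_{\bm m}}} \in U^\Ical/\sim$ for some $\bm n_{\bm m} \in \NN^2$ and $\sigma(y_{E_i(\bm m, \bm m^\prime)}) \in E_i^\Ical/\sim$, with $\bm n_{(1,1)} = (1,1)$ whenever $(1,1) \in G$. For each edge $(\bm m, \bm m^\prime) \in E_i^G$, choosing a representative $y_0 \in E_i^\Ical$ of $\sigma(y_{E_i(\bm m, \bm m^\prime)})$, the edge equations in $\Psi_{\bm G}$ give $y_0^{(1,2,3,3)} \in \qclass{f_{\bm n_{\bm m}}}$ and $y_0^{(1,2,i,3)} \in \qclass{f_{\bm n_{\bm m^\prime}}}$; since these two elements are the $\Ical$-images of a pair $(\bm v, \bm v^\prime) \in E_i^\Gamma$, we get $(\qclass{\bm n_{\bm m}}, \qclass{\bm n_{\bm m^\prime}}) = (\qclass{\bm v}, \qclass{\bm v^\prime}) \in E_i^{\Gamma/\sim}$. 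Hence $\bm m \mapsto \qclass{\bm n_{\bm m}}$ is a homomorphism $\bm G \to \bm \Gamma/\sim_\Dcal$. The main obstacle will be the bookkeeping of the induction: $\psi_{\bm m}$ contains nested existential witnesses at every step, and one must repeatedly apply the closure of $U^\Ical$ under $\sim_\Dcal$ without conflating a class with its representatives---this closure, guaranteed by the choice of $\Dcal$, is precisely what makes the induction propagate.
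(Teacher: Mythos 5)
Your proof is correct and takes essentially the same route the paper sketches. The paper states only the key lemma---that $\psi_{\bm m}$ pp-defines $\qclass{f_{\bm m}}$ in $\Mscr_\AIP/\Dcal$---and leaves the base case ($\phi_1$ defining $\qclass{(1,0)}$) and the derivation of conditions (1)--(2) as an exercise; you work through exactly this: the closure of $D_1$, $D_\NN$, $U^\Ical$, $E_i^\Ical$ under $\sim_\Dcal$ (the paper's later notion of $\Dcal$-stability), the observation that the $(1,1)$-minor pins down $\qclass{(1,0)}$ and hence $F((1,0))$, the singleton class $\qclass{f_{(1,1)}}$, and the induction along $\psi_{\bm m}$ propagating membership in $U^\Ical/\sim$ and $E_i^\Ical/\sim$. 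You are careful about the one point the paper glosses over: you do not claim a representative of $F(f_{\bm m})$ must equal $f_{\bm m}$ itself (indeed $\qclass{f_{\bm m}}$ is generally not a singleton for $\bm m\neq(1,1)$), only that it lies in $U^\Ical$, which is exactly what is needed---so your ``internal reference'' version of the lemma is weaker than the paper's stated pp-definition claim but suffices and is in fact what the paper's formal machinery (\Cref{le:internal_criterion}) uses.
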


The fact that $\sim_\Dcal$ meets condition \ref{ppty:warm-up_conditions2_item0} in \Cref{ppty:warm-up_conditions2} follows easily from the definition of this relation. 
Items \ref{ppty:warm-up_conditions2_item1} and \ref{ppty:warm-up_conditions2_item2}, hold roughly, because the formula
$\psi_{\bm m}(x)$ given in \eqref{eq:reference_warm-up}
is a pp-definition of $\qclass{f_{\bm m}}$ in $\Mscr_\AIP/\Dcal$ for each $\bm m\in\NN^2$
\footnote{This is a stronger statement than what we really need. What we will use is the existence of
a kind of formulas which we call ``internal references''. These are introduced in \Cref{sec:descriptions}.}, 
and there are analogous pp-definitions for the elements $\qclass{g_{E_i(\bm m_1, \bm m_2)}}$ representing each edge. Let us sketch this fact. First, we can see that $\phi_1(x^2)\equiv x= x^{(1,1)}$ defines the element $\qclass{(1,0)}$. Indeed,
if $\Mscr_\AIP/\Dcal \models \phi_1(\qclass{f})$ then $f\sim_\Dcal f^{(1,1)}$. However, $f^{(1,1)}\in D_1$ for all $f\in \Mscr_\AIP(2)$, so $f$ must belong to $D_1$ as well, meaning that $f=(1,0)$. Following this argument, 
we see that $\psi_{(1,1)}(x)$ defines $\qclass{f_{(1,1)}}$, and the other values of $\bm m\in \NN^2$ can be handled by induction. \par
The fact that $\psi_{\bm m}(x)$ is a pp-definition of $\qclass{f_{\bm m}}$ in $\Mscr_\AIP/\Dcal$ forces any partial homomorphism $H: \Mscr_{\AIP}\partialmap \Mscr_{\AIP}/\Dcal$ defined up to arity $4$ (which is the maximum arity of $\phi_{\bm m}(x)$) to satisfy $H(f_{\bm m})=\qclass{f_{\bm m}}$ for all $\bm m \in \NN^2$. Indeed, it holds that $\Mscr_\AIP \models \psi_{\bm m}(f_{\bm m})$, and this is witnessed by elements of arity at most $4$, so $\Mscr_\AIP/\Dcal \models \psi_{\bm m}(H(f_{\bm m})$, and it must be that $H(f_{\bm m})=\qclass{f_{\bm m}}$. This shows that condition \ref{ppty:warm-up_conditions2_item1} holds. Finally, one can also show that \ref{ppty:warm-up_conditions2_item2} holds via a similar argument using the fact that the minor conditions $\Psi_{\bG}$ are built-up from the formulas $\psi_{\bm m}(x)$. \par

Having obtained a good equivalence relation, we construct the locally finite minion $\Nscr^\prime_{\bT}$ as outlined previously. To reiterate, we consider the quotient interpretation $\Ical/\Dcal$ on the quotient minion $\Mscr_\AIP/\Dcal$ and define $\Nscr^\prime_{\bT}$ as the manifold minion given by $\Ical/\Dcal$ and the structure $\bT$. The fact that $\sim_\Dcal$ satisfies \Cref{ppty:warm-up_conditions2}, means that $\Nscr^\prime_{\bT}$
manages to preserve \Cref{ppty:warm-up_conditions}, which $\Nscr_{\bT}$ previously satisfied. However, we are still not done: The minion $\Nscr^\prime_{\bT}$ is locally finite, but it may not be isomorphic to a polymorphism minion. The reason is that we do not know whether it is finitizable in the sense of \cite{BG21:sicomp}. We circumvent this issue by proving the following.

\begin{proposition}
\label{prop:warm-up_minion_to_template}
    There is a finite template $(\A, \B)$ with $|A|=3$ and $|R^A|\leq 4$ for any relation symbol $R$ satisfying that (1) there is a partial minion isomorphism $F:\Nscr^\prime_{\bT} \partialmap \pol(\A, \B)$ defined up to arity $4$, and (2) for any minion $\Mscr$, it holds that $\Mscr \rightarrow \pol(\A, \B)$ if and only if 
    there is a partial homomorphism $H:\Mscr \partialmap \Nscr^\prime_{\bT}$ defined up to arity $4$.
\end{proposition}

This is a consequence of the more general result \Cref{th:minion_to_template} (\Cref{sec:minion_closures}). 
The template $(\A, \B)$ is constructed as follows. We consider a single relation symbol $R$ of arity $3^4$. Then $\A$ is \emph{the most general structure} on $3$ elements containing a single relation $R^A$ of size $4$, following a construction given in \cite{BG21:sicomp}. The fact that $|A|=3$ in this result has to do with the fact that any two elements $f,g\in \Nscr^\prime_{\bT}$ of the same arity are equal if and only if all their $3$-ary minors are equal. Hence, $\Nscr^\prime_{\bT}$ can be seen as a function minion on a domain of size $3$. The structure $\B$ is defined as the \emph{free-structure} \cite{BBKO21} of $\Nscr^\prime_{\bT}$ generated by $\A$. \par

The fact that we do not obtain a full isomorphism in \Cref{prop:warm-up_minion_to_template} may seem concerning, but we already accounted for this by considering partial homomorphisms in \Cref{ppty:warm-up_conditions2}. Crucially, the second item in this proposition, ensures that $\Mscr_\AIP \rightarrow \pol(\A, \B)$ exactly when there is a partial homomorphism $H: \Mscr_\AIP \partialmap \Nscr^\prime_{\bT}$ defined up to arity $4$. Putting together some of the implications we have shown so far shows that $\pol(\A, \B)$ satisfies condition \ref{ppty:warm-up_conditions_item1} in \Cref{ppty:warm-up_conditions}, and the rest of the conditions follow similarly.  
This way, the following holds and finishes the proof sketch of \Cref{th:warm-up}.

\begin{proposition}
    Let $(\A, \B)$ be a template satisfying the statement of \Cref{prop:warm-up_minion_to_template}. Then $\pol(\A, \B)$ satisfies \Cref{ppty:warm-up_conditions}.
\end{proposition}

\begin{figure*}[ht]
    \centering
    \bgroup
\def\arraystretch{2}
\begin{tabular}{ |l| c |c | c | } 

 \hline
 \diagbox[width=15em]{ Result }{Algorithm $\Qcal$} & $\Qcal=$ \AIP & $\Qcal=$ \BLP & $\Qcal=$ \BLP+\AIP \\
 \hline 
 \makecell{ Undecidability of $\Mscr_\Qcal\rightarrow \pol(\A, \B)$} & $|A|=2$ & $|A|=2$ & $|A|=5$ \\[5pt] \hline
\makecell{Non-computability of  homomorphisms \\ $F: \Mscr_\Qcal\rightarrow \pol(\A, \B)$} & $|A|=2$ & $|A|=2$ & $--$ \\[5pt]
 \hline
 \makecell{TFNP$_1$-hardness of $\spcsp_\Qcal(\A, \B)$} & $|A|=2$ & $|A|=2$ & $|A|=5$ \\[5pt]
 \hline
 \makecell{ TFNP-hardness of $\spcsp_\Qcal(\A, \B)$} & $|A|=4$ & $|A|=5$ & $|A|=5$ \\[5pt]
 \hline
\end{tabular}
\egroup
 \caption{Main algorithmic results.}
 \label{fig:results_algs}
\end{figure*}

\begin{figure}[ht]
    \centering
    \includegraphics[width=0.4\linewidth]{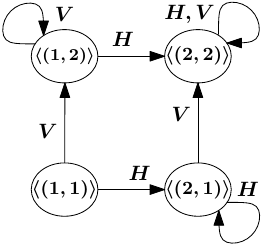}
    \caption{The quotient $\bm \Gamma/\sim_\Dcal$}
    \label{fig:quotient_warm-up}
\end{figure}

\subsection{Organization of the Paper}
The rest of the paper follows roughly the same structure as the proof sketch in this section.
We begin in \Cref{sec:sources} by introducing the sources of hardness, undecidability and non-computability that are the starting point of our reductions. 
In \Cref{sec:definitions} we properly introduce the notions required to prove the main results, including \emph{interpretations}, \emph{manifold minions}, \emph{descriptions}, and \emph{patterns}. Each subsection defines some concepts and includes proofs of related auxiliary results. In \Cref{sec:minion_closures} we describe how to obtain polymorphism minions that are partially isomorphic to a given locally-finite minion. In \Cref{sec:reductions} we show the reductions that allow us to transfer undecidability, non-computability, and hardness results from tiling problems to rounding problems. In \Cref{sec:main_proofs} we put everything together and prove our main results. Finally \Cref{sec:discussion} discusses in greater depth the link between our results and some open questions in the area, outlining some research directions.

\section{Sources of Undecidability, Non-Computability and Hardness}
\label{sec:sources}

Given a family $\mathbb{S}$ of similar structures, $\Hom(\mathbb{S}, \cdot)$ denotes the set of finite structures
satisfying $\bS \rightarrow \I$ for some $\bS\in \mathbb{S}$. When $\mathbb{S}$ is a singleton family $\{\bS \}$, we write $\Hom(\bS, \cdot)$ rather than $\Hom(\{ \bS \}, \cdot)$. We write $\Hom_{\eg}(\mathbb{S}, \cdot)$, where $\eg$ stands for \emph{eventually globally},
for the set of finite structures $\I$ satisfying $\bS\rightarrow \I$ for all but finitely many $\bS\in \mathbb{S}$. Finally, we define $\Hom_{\io}(\mathbb{S}, \cdot)$, where $\io$ stands for \emph{infinitely often}, for the set of finite structures $\I$ satisfying $\bS\rightarrow \I$ holds for infinitely many $\bS\in \mathbb{S}$.
\par
When $\mathbb{S}$ is a finite family of finite structures, or $\bS$ is a finite structure, all the sets described above can easily be recognized in polynomial time.
However, when $\mathbb{S}$ is an infinite family, or $\bS$ is an infinite structure, the previous problems can be undecidable. We deal mostly with these later cases.

Let $\bS$ and $\bT$ be two similar structures. We say that $\bS$ and $\bT$ are \emph{finitely equivalent} if
$\I \rightarrow \bS$ and $\I \rightarrow \bT$ are equivalent conditions 
for every finite structure $\I$. A standard argument shows that, given a finite structure $\I$, 
the fact that $\bS \rightarrow \I$ is equivalent to $\bm G \rightarrow \I$ 
for every finite substructure $\bm G \subseteq \bS$.  See \cite[Remark 7.13]{BBKO21} for a proof of this fact in the countable case, which is the only one we use. This yields the following result.

\begin{lemma}
\label{le:finitely_equivalent}
Let $\bS$, $\bT$ be other two
finitely equivalent
$\Sigma$-structures. Let $\I$ be a finite $\Sigma$-structure. Then $\bS \rightarrow \I$ if and only if $\bT \rightarrow \I$.
\end{lemma}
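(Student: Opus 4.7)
The plan is to reduce the lemma to the ``compactness'' fact cited immediately above its statement: for any finite $\Sigma$-structure $\I$ and any $\Sigma$-structure $\bS$, one has $\bS \rightarrow \I$ if and only if $\bm G \rightarrow \I$ holds for every finite substructure $\bm G \subseteq \bS$ (and analogously for $\bT$). Once this is in hand, the assertion becomes a purely ``finite'' statement about the finite substructures of $\bS$ and $\bT$, which can be extracted directly from the finite equivalence hypothesis.

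By symmetry, it suffices to prove the forward direction. So I would assume $\bS \rightarrow \I$ and, aiming to invoke the cited fact on $\bT$, try to show that every finite substructure $\bm H \subseteq \bT$ admits a homomorphism $\bm H \rightarrow \I$. Fix such an $\bm H$. The inclusion map yields $\bm H \rightarrow \bT$, and since $\bm H$ is finite, finite equivalence of $\bS$ and $\bT$ gives a homomorphism $h:\bm H \rightarrow \bS$. Let $\bm G \subseteq \bS$ denote the image of $h$, which is a finite substructure of $\bS$. Applying the cited fact in the other direction, the hypothesis $\bS \rightarrow \I$ implies $\bm G \rightarrow \I$. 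Composing yields $\bm H \rightarrow \bm G \rightarrow \I$, which is what was needed; invoking the cited fact one last time on $\bT$ then produces $\bT \rightarrow \I$.

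I do not anticipate any real obstacle. The argument is a two-hop composition, and the only thing to be careful about is invoking the compactness fact on both sides (once to decompose the hypothesis $\bS \rightarrow \I$ into its finite witnesses, and once to reassemble the finite witnesses of $\bT \rightarrow \I$ into a global homomorphism). A minor bookkeeping point is that the cited reference establishes the compactness fact in the countable setting, so strictly speaking one should verify that the structures in question are countable (which is the only regime used elsewhere in this paper); otherwise a standard compactness or Zorn-type argument replaces the appeal.
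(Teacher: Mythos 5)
Your proof is correct and is essentially the intended argument: the paper states the lemma with no written proof, indicating it follows directly from the compactness fact stated just above it, and your argument spells that out. One small simplification: there is no need to pass through the image $\bm G$ and invoke compactness on the $\bS$ side — once you have $\bm H \rightarrow \bS$ from finite equivalence, you can simply compose with the given $\bS \rightarrow \I$ to obtain $\bm H \rightarrow \I$, and then apply compactness once to conclude $\bT \rightarrow \I$.
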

\begin{proof}
    We show that $\bS \rightarrow \I$ implies that $\bT \rightarrow \I$. The reverse implication follows analogously. For any finite substructure $\bG \subseteq \bS$ it holds that $\bG \rightarrow \bS$. By the definition of finitely equivalent structures, it must hold that $\bG \rightarrow \bT$ as well. Given that we have assumed that $\bT \rightarrow \I$, composing homomorphisms we obtain that $\bG \rightarrow \I$. This way, any finite substructure of $\bS$ maps homomorphically to $\I$, meaning that $\bS \rightarrow \I$ as well.
\end{proof}

The proofs of the remaining results in this section can be found in~\Cref{ap:sources}. The techniques are standard, and essentially follow the idea from \cite{wang1990dominoes} that runs of a Turing machine $M$ can be encoded in tilings of the plane by using consecutive horizontal lines to describe consecutive configurations of the machine $M$. 
\par
Recall the definition of the grid structure $\bm{\Gamma}$ from \Cref{sec:warm-up}. We repeat here the main result about that structure for the sake of completeness.

\gridprop*

Items \ref{item:grid_undec} and \ref{item:grid_noncomp} were shown in \cite{wang1990dominoes} and \cite{hanf1974nonrecursive} respectively. In \ref{item:grid_noncomp} we consider the plain encoding of $\bm \Gamma$, which represents pairs $(m_1, m_2)\in \NN^2$ as comma-separated lists delimited by parentheses, where the integers are written in decimal notation. \par

It is worth remarking that the notion of finite equivalence introduced earlier preserves both the undecidability and the hardness parts in this result. Indeed, if $\bG$ is finitely equivalent to $\bm \Gamma$ then $\Hom(\bG, \cdot)$ and
$\Hom(\bm \Gamma, \cdot)$ are the same family, and the problems $\spcsp(\bG, \bT)$ and $\spcsp(\bm \Gamma, \bT)$
are the same whenever $\bG \rightarrow \bT$ for some finite structure $\bT$. However, the non-computability part of this result is not preserved by finite equivalence. To see this, define $\bG$ as a disjoint union of increasingly large grids. For instance, for each $n\in \NN$, let $\bG_n$ be the substructure of $\bm \Gamma$ induced on $[n]$, and let 
$\bG = \bigsqcup_{n=1}^\infty \bG_n$. It is not difficult to see that $\bG$ is finitely equivalent to $\bm \Gamma$. Even so, it also holds that
there exists a computable homomorphism $F:\bG \rightarrow \bT$ whenever $\bG \rightarrow \bT$ for a finite structure $\bT$. To see this, observe that when querying the value of a global homomorphism on an element $v$, an algorithm just needs to compute a homomorphism from $v$'s connected component, which is finite, to $\bT$, and store it in memory. This yields a global homomorphism that can be computed on the fly. \par

In order to obtain TFNP-hardness instead of TFNP$_1$-hardness we utilize a three-dimensional grid with extra constraints corresponding to ``doubling'' each coordinate. The \emph{super-grid} $\bm{\Gamma}^+$ has signature
$\Sigma_{\Gamma^+}=\{
O, E_1, E_2, E_3, \Ebb_1, \Ebb_2, \Ebb_3
\}$, where the symbol $O$ is unary and all other symbols are binary. The universe $\Gamma^+$ is the set of triples $\NN^3$. The relations of $\bm{\Gamma}^+$ are defined as follows. We have 
$O^{\Gamma^+}=\{(1,1,1)\}$. The relations $E_1^{\Gamma^+}, E_2^{\Gamma^+}, E_3^{\Gamma^+}$
describe unit increments in the first, second, and third coordinate respectively. I.e., $E_1^{\Gamma^+} = 
\{ ((m, n, o),(m+1, n, o)) \, \vert \, (m,n,o)\in \NN^3 \}$, and so on. Similarly, the relations $\Ebb_1^{\Gamma^+}, \Ebb_2^{\Gamma^+}, 
\Ebb_3^{\Gamma^+}$
describe doubling increments in the first, second, and third coordinate respectively. That is, 
$\Ebb_1^{\Gamma^+} = 
\{ ((m, n, o),(2m, n, o)) \, \vert \, (m,n,o)\in \NN^3 \}$, and so on. 

\begin{proposition}
    \label{prop:3-grid}
    The following hold: \begin{enumerate}[label=(\arabic*)]
        \item the family of problems $\spcsp(\bm \Gamma^+, \bT)$ where $\bT\in \Hom(\bm \Gamma^+, \cdot)$ is TFNP-hard,
        \label{item:3-grid_hardness}
        \item $\Hom(\bm \Gamma^+, \cdot)$ is undecidable, and \label{item:3-grid_undec}
        \item there exists $\bT\in \Hom(\bm \Gamma^+, \cdot)$ for which there is no computable homomorphism $F: \bm \Gamma^+ \rightarrow \bT$.
        \label{item:3-grid_noncomp}
    \end{enumerate}
\end{proposition}

Again, in the non-computability result we consider the plain encoding of $\bm \Gamma^+$. Here we point out that the undecidability and non-computability parts of this result just follow from \Cref{prop:grid}. Indeed, given a $\Sigma_\Gamma$-structure $\bT$ it is easy to construct a
$\Sigma_{\Gamma^+}$-structure $\bT^+$ such that
$\bm{\Gamma} \rightarrow \bT$ if and only if $\bm{\Gamma}^+ \rightarrow \bT^+$,
and where a homomorphism $F:\bm \Gamma \rightarrow \bT$ can be computed given oracle access to a homomorphism 
$H:\bm \Gamma^+ \rightarrow \bT^+$. Indeed, $\bT^+$ can be obtained extending $\bT$ by interpreting each symbol $R\in \Sigma_{\Gamma^+} \setminus \Sigma_{\Gamma}$ as the total relation of arity $\arty(R)$ over $T$. \par

Finally, we need one last source of undecidability results, which will be given by a family of growing triangular slices of the two-dimensional grid. Given $m\in \NN$, the structure $\bm{\nabla}_m$ has signature
$\Sigma_\nabla=\{O, W, E_1, E_2\}$, where
$O, W$ are unary symbols and $E_1, E_2$ are binary. The universe $\nabla_m$ consists of all pairs $(n,o) \in \NN^2$
satisfying $n+o \leq m$. The relations $O^{\nabla_m}, E_1^{\nabla_m}, E_2^{
\nabla_m}$ are defined as in $\bm{\Gamma}$. That is,
$O^{\nabla_m}=\{(1,1)\}$, $E_1^{\nabla_m}$ consists of all pairs of the form $((n, o),(n+1,o))$ and  
$E_2^{\nabla_m}$ contains the pairs
$
((m,o),(m,o+1))$. Finally, the relation $W^{\nabla_m}$ contains all pairs $(n,o)$ satisfying $n+o=m$ (i.e., the upper-right boundary of the triangle).

\begin{proposition}
    \label{prop:triangles_undecidability}
    Let $(a_n)_{n\in \NN}$ be a strictly increasing sequence of natural numbers. Then the following families are undecidable (1) $\Hom(\{ \nabla_{a_n} \,\vert\, n\in \NN\}, \cdot)$, (2) $\Hom_{\eg}(\{ \nabla_{a_n} \,\vert\, n\in \NN\}, \cdot)$, and (3) $\Hom_{\io}(\{ \nabla_{a_n} \,\vert\, n\in \NN\}, \cdot)$.
\end{proposition}

\section{Main Definitions}
\label{sec:definitions}

\subsection{Interpretations over Minions}
\label{sec:interpretations}

We begin by introducing formally the notion of interpretation that we outlined in the proof sketch.  Let $\Mscr$ be a minion. Given a number $n\in \NN$, a \emph{$n$-ary predicate} over $\Mscr$ 
is a subset $P\subseteq \Mscr(n)$. We write $n=\arty(P)$. We write $2^\Mscr$ for the set of predicates over $\Mscr$ of arbitrary arity.  Given a  relational signature $\Sigma$, a \emph{$\Sigma$-interpretation $\Ical$ over $\Mscr$} consists of (1) a predicate $U^\Ical\in 2^\Mscr$, (2) a predicate $R^\Ical\in 2^\Mscr$ for each symbol $R\in \Sigma$, and (3) a map $\Pi^{\Ical}_{R,i}:[\arty(R^\Ical)]\mapsto [\arty(U^\Ical)]$ for each symbol $R\in \Sigma$ and each index $i\in [\arty(R)]$. Interpretations over minions induce two kinds of structures, global and local, defined as follows.
\begin{itemize}
    \item The \emph{global structure induced by $\Ical$}, 
denoted $\bGS=\bGS_\Ical$,
is a $\Sigma$-structure with universe $\GS=U^\Ical$, where for each symbol $R\in \Sigma$, a tuple $(f_1,\dots, f_{\arty(R)})\in 
(U^\Ical)^{\arty(R)}$ belongs to $R^{\GS}$ if there is an element
$g\in R^\Ical$ satisfying that
$f_i = g^{\pi_i}$ for each $i\in [\arty(R)]$, where $\pi_i= \Pi^\Ical_{R,i}$.  \par
\item Given an element $f\in \Mscr$, we define the set $U^{\Ical,f}\subseteq [\arty(U^\Ical)]^{[\arty(f)]}$ 
as the subset of maps $\pi$ for which $f^\pi\in U^\Ical$. 
The \emph{local structure 
induced by $\Ical$ on $f$'s minors},
denoted $\bLS= \bLS_{\Ical, f}$,
is a $\Sigma$-structure whose universe is $U^{\Ical,f}$, 
where for each $R\in \Sigma$
the relation $R^{\LS}$ consists of all the tuples of the form $(\sigma\circ \pi_1, \dots, \sigma \circ \pi_{\arty(R)})\in (U^{\Ical,f})^{\arty(R)}$, where $f^\sigma \in R^\Ical$, and $\pi_i= \Pi^\Ical_{R,i}$ for each $i\in [\arty(R)]$.
\end{itemize}

It is important to remark that the definition of local structure appearing here is not exactly the same as the one given in the proof sketch. Indeed, the local structure $\bGS_{\Ical,f}$ is \emph{not} the induced substructure of $\bGS_\Ical$ on the minors of $f$. The reason is that it may be that $f^{\pi_1}= f^{\pi_2} \in U^{\Ical}$ for some suitable maps, whereas in the local structure $\bGS_{\Ical, f}$ the maps $\pi_1, \pi_2$ represent different elements. \par

Although this is not relevant in the proofs of our algorithmic results, when we analyse cyclic and WNU polymorphisms we will deal with minions that correspond to disjoint unions of simpler minions. For instance, the minion representing the existence of a $k$-ary WNU polymorphism for each arity $k\geq 3$ is the disjoint union of the minion generated by an abstract $3$-ary WNU element, with the one generated by a $4$-ary WNU element, and so on. Hence, we also specialise some of our auxiliary results to disjoint unions of minions. \par

Given a subminion $\Nscr \subseteq \Mscr$, the \emph{restricted interpretation} $\Jcal= \Ical\vert_\Nscr$ is defined by 
$U^\Jcal= U^\Ical \cap \Nscr$,  $R^\Jcal= R^\Ical\cap \Nscr$ for each $R\in \Sigma$, and $\Pi^\Jcal_{R,i}= \Pi^\Ical_{R,i}$ for each $R\in \Sigma$, $i\in [\arty(R)]$. 

\begin{observation}
\label{obs:restricted_interpretation}
    Let $J$ be a set, and $\Ical$ be a $\Sigma$ interpretation over a disjoint union of minions $\Mscr= \bigsqcup_{j\in J} \Mscr_j$, and let 
    $\Ical_j= \Ical\vert_{\Mscr_j}$ for each $j\in J$. Then the following hold: (1) $\bGS_\Ical =\bigsqcup_{j\in J} 
    \bGS_{\Ical_j}$, and (2)
    $\bLS_{\Ical,f}= \bLS_{\Ical_j, f}$
    for all $j\in J$, $f\in \Mscr_j$. 
\end{observation}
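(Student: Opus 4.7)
The plan is to unpack the definitions and exploit the single structural fact that in a disjoint union of minions $\Mscr=\bigsqcup_{j\in J}\Mscr_j$, every minor of an element $f\in\Mscr_j$ stays in $\Mscr_j$ (since minoring operations are given componentwise on a disjoint union). Both parts reduce to this observation plus intersecting sets with $\Mscr_j$.

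For part (1), I first show the universes agree: by definition $U^{\Ical_j}=U^\Ical\cap\Mscr_j$, and since the $\Mscr_j$ partition $\Mscr$ we get $U^\Ical=\bigsqcup_{j\in J}U^{\Ical_j}$, which is exactly the universe of $\bigsqcup_j\bGS_{\Ical_j}$. For the relations, fix $R\in\Sigma$ and a tuple $(f_1,\dots,f_{\arty(R)})\in(U^\Ical)^{\arty(R)}$. If this tuple lies in $R^{\bGS_\Ical}$, a witness $g\in R^\Ical$ with $f_i=g^{\Pi^\Ical_{R,i}}$ exists. Because $R^\Ical\subseteq\Mscr$ is disjointly partitioned by the sets $R^\Ical\cap\Mscr_j=R^{\Ical_j}$, there is a unique $j$ with $g\in\Mscr_j$, and then each $f_i=g^{\Pi^\Ical_{R,i}}$ also lies in $\Mscr_j$ (minoring preserves the component); hence all $f_i\in U^{\Ical_j}$ and $g\in R^{\Ical_j}$, so the tuple belongs to $R^{\bGS_{\Ical_j}}$. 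The converse direction is immediate from $U^{\Ical_j}\subseteq U^\Ical$ and $R^{\Ical_j}\subseteq R^\Ical$, so (1) follows.

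For part (2), fix $j\in J$ and $f\in\Mscr_j$. For any map $\pi\in[\arty(U^\Ical)]^{[\arty(f)]}$, the minor $f^\pi$ stays in $\Mscr_j$, so $f^\pi\in U^\Ical$ if and only if $f^\pi\in U^\Ical\cap\Mscr_j=U^{\Ical_j}$. Consequently $U^{\Ical,f}=U^{\Ical_j,f}$, giving equality of the universes of $\bLS_{\Ical,f}$ and $\bLS_{\Ical_j,f}$. For each $R\in\Sigma$, a tuple $(\sigma\circ\pi_1,\dots,\sigma\circ\pi_{\arty(R)})$ lies in $R^{\bLS_{\Ical,f}}$ exactly when $f^\sigma\in R^\Ical$; by the same component-preservation argument this is equivalent to $f^\sigma\in R^{\Ical_j}$, hence equivalent to membership in $R^{\bLS_{\Ical_j,f}}$. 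The maps $\Pi^\Ical_{R,i}$ and $\Pi^{\Ical_j}_{R,i}$ coincide by definition of the restricted interpretation, so no further bookkeeping is needed.

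The proof is essentially a routine verification; the only subtle point, and the one I would state explicitly as a preliminary lemma before both parts, is the component-preservation property of minoring in a disjoint union of minions, which guarantees that witnessing elements $g$ (in part 1) and $f^\sigma$ (in part 2) cannot ``cross'' between different $\Mscr_j$'s. Once this is made explicit, each equality reduces to intersecting the relevant property with $\Mscr_j$, and no further argument is required.
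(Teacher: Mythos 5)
Your proof is correct, and since the paper states this as an unproved \emph{Observation}, there is no written proof to compare against; what you have done is exactly the routine definition-unpacking the paper implicitly regards as evident. The one fact you rightly isolate as the crux---that minoring in a disjoint union $\bigsqcup_{j\in J}\Mscr_j$ preserves components, so the witness $g$ in part (1) and the minors $f^\pi$, $f^\sigma$ in part (2) cannot escape $\Mscr_j$---is indeed the entire content of the observation, and your case-by-case verification for universes and relations is complete.
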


\subsection{Manifold Minions}
\label{sec:manifold_minion}

Now we are in position to introduce manifold minions, which is arguably the main construction that enables our proofs. \par

Let $\Mscr$ be a minion, 
$\Ical$ be a $\Sigma$ interpretation over it, and $\C$ be a $\Sigma$-structure. The \emph{manifold minion}
$\Nscr = \C^{\Ical}$ is defined as follows. The elements in $\Nscr(n)$ are pairs $(f, \chi)$, where $f\in \Mscr(n)$, and $\chi: U^{\Ical,f} \rightarrow C$ is a homomorphism from 
$\bLS_{
\Ical,f}$ to $\C$, which we call a \emph{local homomorphism}. Given a map $\pi$, the minoring operation is given by $(f, \chi)^\pi = (f^\pi, \chi^\pi)$, where $\chi^\pi$ is defined by $\gamma \mapsto \chi(\gamma\circ \pi)$. Observe that if $g=f^\pi$, then $U^{\Ical,g} = \{  \sigma \, \vert \, 
\sigma \circ \pi \in U^{\Ical,f}  \}$, so the homomorphism $\chi^\pi$ is well-defined. The \emph{canonical projection} $\pj: \C^\Ical \partialmap \C$ is the partial map defined on pairs $(f, \chi)$ such that $f\in U^\Ical$, which maps each such pair to the element $\chi(\id)$, where $\id=\id_{[\arty(U^\Ical)]}$ refers to the identity map over $[\arty(U^\Ical)]$. Observe that $\chi(\id)$ is well-defined. Indeed, the fact that $f\in U^\Ical$ implies that $\id\in \bLS_{\Ical,f}$.

\subsection{Descriptions}
\label{sec:descriptions}
In this section we introduce a way to obtain quotient minions that preserve sufficient information about interpretations. Roughly, the main result here, \Cref{lem:hom_to_power_minion}, states that given a $\Sigma$-interpretation $\Ical$ over a minion $\Mscr$, a $\Sigma$-structure $\C$, and a nice quotient $\Mscr/\sim$ of $\Mscr$, then $\Mscr \rightarrow \C^{\Ical/\sim}$ if and only if the global structure $\bGS_\Ical$ maps homomorphically into $\C$. \par

A \emph{description} of $\Mscr$ is a set of predicates $\Dcal\subseteq \Mscr$. A description $\Dcal$ 
induces an equivalence relation $\sim_\Dcal$ on $\Mscr$ as follows. Let $f_1, f_2\in \Mcal(n)$ for some $n\in \NN$. Then $f_1 \sim_\Dcal f_2$ means that 
$f_1^\pi \in P$ if and only if $f_2^\pi \in P$ for every $P\in \Dcal$ and every map $\pi: [n] \rightarrow [\arty(P)]$.
We shorten $\sim_\Dcal$ to $\Dcal$ when writing quotients to keep the notation light (i.e., we write $\, \cdot \, / \Dcal$ instead of $\, \cdot \, / \sim_\Dcal$).
Given an element $f\in \Mscr$, we write $\qclass{f}$ to denote its equivalence class in $\Mscr/\Dcal$, and given a set $S\subseteq \Mscr$ we write $\qclass{S}$ for the set of equivalence classes $\qclass{f}$ of elements $f\in S$.  The quotient $\Mscr/\Dcal$ inherits a natural minion structure from $\Mscr$: for every $n\in \NN$, we define $(\Mscr/\Dcal)(n)=
\qclass{\Mscr(n)}$, and for every $\qclass{f}\in \Mscr/\Dcal$ and every suitable map $\pi$, we define the minoring operation as $(\qclass{f})^\pi = \qclass{f^\pi}$. Observe that this operation is well defined and does not depend on the chosen representative $f$. Moreover, if the description $\Dcal$ is finite, then $\Mscr/\Dcal$ is locally finite.  
\par
Given a description $\Dcal\subseteq 2^\Mscr$, and a $\Sigma$-interpretation $\Ical$, the \emph{quotient interpretation} $\Jcal=\Ical/\Dcal$ is the $\Sigma$ interpretation over $\Mscr/\Dcal$ defined by $U^{\Jcal}= \qclass{U^\Ical}$, 
$R^\Jcal= \qclass{R^\Ical}$ for each $R\in \Sigma$, and $\Pi^\Jcal_{R,i}= \Pi^\Ical_{R,i}$ for each $R\in \Sigma$, $i\in [\arty(R)]$. 
\par

We remark that by considering minion quotients corresponding to descriptions we are not loosing any generality. Indeed, if there is a surjective homomorphism $F: \Mscr \rightarrow \Nscr$ then $\Nscr$ is isomorphic to $\Mscr/\Dcal$, where $\Dcal$ is the description consisting of all the preimages $F^{-1}(g)$. So descriptions should be understood as a way to efficiently encode equivalence relations over minions. Indeed, given a description $\Dcal$ of $\Mscr$ consisting only of $k$-ary predicates, the equivalence relation $\sim_{\Dcal}$ over $\Mscr$ could have at most $k^n 2^{|\Dcal|} $
classes of $n$-ary elements. \par

\paragraph{Internal Predicates}

The next step now is to formally introduce what it means for a description $\Dcal$ of a minion $\Mscr$ to preserve the meaningful information about an interpretation $\Ical$. More precisely, what we want is that the interpretation $\Ical$ is preserved, in some way, by all homomorphisms from $\Mscr$ to the quotient $\Mscr/\Dcal$.
The key definition here is that of an \emph{internal predicate with respect to a description}. The description $\Dcal$ will define a good quotient for $\Ical$ if all the predicates that form $\Ical$ are internal with respect to $\Dcal$. \par

Let $\Mscr$ be a minion and let $\Dcal\subseteq 2^\Mscr$ be a description. A predicate $Q\in 2^\Mscr$ is called \emph{$\Dcal$-stable} if 
$f\sim_\Dcal g$ together with $f\in Q$ imply that $g\in Q$ for any $f,g\in \Mscr(\arty(Q))$. Given  $h\geq \arty(Q)$, we say that $Q$
is \emph{internal at arity $h$} with respect to $\Dcal$ if (1) $Q$ is $\Dcal$-stable, and
(2) all partial homomorphisms 
$F: \Mscr \overpartialmap{h} 
\Mscr/\Dcal$ satisfy $F(Q)\subseteq \qclass{Q}$. Similarly, a $\Sigma$-interpretation $\Ical$ over $\Mscr$ is $\Dcal$-stable if all the predicates $U^\Ical$,  
$R^\Ical$ for $R\in \Sigma$ are $\Dcal$-stable; and $\Ical$ is internal w.r.t. $\Dcal$ at arty $h$ if all the predicates $U^\Ical$, $R^\Ical$ for $R\in \Sigma$ are internal at arity $h$ w.r.t. $\Dcal$. In particular this requires that $h$ is at least as large as $\arty(U^\Ical)$ and $\arty(R^\Ical)$ for each $R\in \Sigma$.
\par

An \emph{internal reference to } $Q$ w.r.t. $\Dcal$ is a pp-formula $\Phi(x)\in \Lcal_{\MC}$ with one free variable $x$ of arity $\arty(Q)$ satisfying 
\[
\Mscr/\Dcal \models \Phi(\qclass{f}) \implies
\qclass{f} \in \qclass{Q} 
\]
for all $\qclass{f} \in \Mscr(\arty(Q))/\Dcal$. Our main tool for showing that a predicate is internal is the following result.

\begin{lemma}[Main criterion for internal predicates]
\label{le:internal_criterion}
Let $\Mscr$ be a minion $\Dcal\subseteq 2^\Mscr$ a description, $Q\in 2^\Mscr$ a predicate that is $\Dcal$-stable, and $h\in \NN$ a number. Suppose that for each $f \in Q$ there is an internal reference $\phi_f(x)$ to $Q$ w.r.t. $\Dcal$ whose maximum arity is at most $h$, and such that $\Mscr \models \phi_f(f)$. Then $Q$ is internal at arity $h$ w.r.t. $\Dcal$ 
\end{lemma}
\begin{proof}
    Let $F: \Mscr \overpartialmap{h} \Mscr/\Dcal$ be a partial homomorphism and $f\in Q$. Observe that 
    $\Mscr \models \phi_f(f)$ is witnessed by an assignment over elements of arity at most $h$, so this implies
    $\Mscr/\Dcal \models \phi_f(F(f))$. The fact that $\phi_f$ is an internal reference to $Q$ w.r.t. $\Dcal$ yields $F(f)\in \qclass{Q}$. This proves the result. 
\end{proof}

A situation in which this last criterion is especially easy to apply is that in which $\phi_f$ can be chosen to be the same for all $f\in Q$. This motivates the following notion. 
Let $\Mscr$ be a minion, $\Dcal\subseteq 2^\Mscr$ a description and $Q\in 2^\Mscr$ a predicate.
An \emph{internal definition of $Q$ with respect to $\Dcal$} is an internal reference $\Phi(x)$ to $Q$ w.r.t. $\Dcal$ that additionally satisfies 
\[
f\in Q \implies  \Mscr \models \Phi(f)
\]
for all $f\in \Mscr(\arty(Q))$.
If such definition $\Phi(x)$ exists, its arity is bounded by a number $h\in \NN$, and additionally $Q$ is $\Dcal$-stable, then $Q$ is said to be \emph{internally definable at arity $h$ w.r.t. $\Dcal$}. Observe that by~\Cref{le:internal_criterion}, this  implies that $Q$ is internal at arity $h$ w.r.t. $\Dcal$.

\paragraph{Main Results About Quotient Interpretations}

The next two lemmas show that, given a minion $\Mscr$, if $\Ical$ is an internal interpretation with respect to a description $\Dcal \subseteq 2^\Mscr$, then homomorphisms $\Mscr \rightarrow \Mscr/\Dcal$ preserve the interpretation, and homomorphisms to power minions $\Mscr \rightarrow \C^{\Ical/\Dcal}$ correspond to homomorphisms from the global induced structure $\bGS_\Ical$ to $\C$.

\begin{lemma}[Homomorphisms to good quotients preserve local structures]
\label{le:local_structure_quotient}
    Let $\Mscr$ be a minion, $\Ical$ a $\Sigma$-interpretation over $\Mscr$,
    and $\Dcal\subseteq 2^\Mscr$ a description. Define $\Jcal= \Ical/\Dcal$. The following hold.
    \begin{enumerate}[label=({\arabic*})]
        \item Suppose that $\Ical$ is $\Dcal$-stable. Then 
        $\bLS_{\Ical, f}= \bLS_{\Jcal, \qclass{f}}$ for all $f\in \Mscr$.
        \item Suppose that $\Ical$ is internal at arity $h\in \NN$ w.r.t. $\Dcal$, and that
        $F:\Mscr \overpartialmap{h} \Mscr/\Dcal$ is a partial homomorphism. Then for any
        $f\in\Mscr$ whose arity is at most $h$, the local structure $\bLS_{\Ical, f}$ is contained in $\bLS_{\Jcal, F(f)}$.
    \end{enumerate}
\end{lemma}
\begin{proof}
\emph{(1)} This is a direct consequence of the stability condition. For any $f\in \Mscr$ and any suitable map $\pi$ it holds that $f^\pi \in U^\Ical$ if and only if $\qclass{f}^\pi \in U^\Jcal$, and $f^\pi \in R^\Ical$ if and only if $\qclass{f}^\pi \in R^\Jcal$ for each $R\in \Sigma$. This proves the statement. \par
\emph{(2)} Let $\A=\bLS_{ \Ical, f}$, 
    $\B=\bLS_{\Jcal, F(f)}$, and $n_U=\arty(U^\Ical)$.
    First, we show that if a map $\sigma\in [n_U]^{[n]}$ belongs to $A$,
    then it also belongs to $B$. 
    Indeed, the first condition is equivalent to $f^\sigma\in U^\Ical$. Because $\Ical$ is internal, it follows that $F(f)^{\sigma}\in 
    \qclass{U^\Ical} = U^\Jcal$, which is equivalent to the second condition. Now, let $R\in \Sigma$, and 
    $n_R=\arty(R^\Ical)$. We show that if a tuple $\bm{\sigma}$ belongs to  $R^A$, then $\bm{\sigma}$ belongs to $R^B$ as well. The first condition means that there is a map $\gamma\in [n_R]^{[n]}$ such that $\sigma_i=\Pi_{R,i}^\Ical$ for each $i\in [\arty(R)]$, and $f^\gamma\in R^\Ical$. Because $\Ical$ is internal, it must hold that $F(f)^\gamma \in R^\Jcal$, which shows that $\bm{\sigma}$ also belongs to $R^B$.
\end{proof}

We gently remind the reader that, given a manifold minion $\C^{\Jcal}$ defined by a $\Sigma$-structure $\C$ and a $\Sigma$-interpretation $\Jcal$, we write $\pj$ for the canonical projection from $\C^{\Jcal}$ to $\C$, defined in \Cref{sec:manifold_minion}.

\begin{lemma}[Power minions on good quotients capture homomorphisms from the global structure]
\label{lem:hom_to_power_minion}
    Let $\Mscr$ be a minion, $\Ical$ a $\Sigma$-interpretation over $\Mscr$, $\Dcal\subseteq 2^\Mscr$
    a description, and $\C$ a $\Sigma$-structure. Define $\Jcal= \Ical/\Dcal$.  
    Then the following hold.
    \begin{enumerate}[label=({\arabic*})]
        \item Suppose that $\Ical$ is $\Dcal$-stable and $\bGS_\Ical \rightarrow \C$. Then 
        $\Mscr\rightarrow \C^\Jcal$. 
        \item Suppose that $\Ical$ is internal w.r.t. $\Dcal$ at arity $h\in \NN$, and 
        $F:\Mscr \overpartialmap{h} \C^\Jcal$ is a partial homomorphism. 
        Then $\pj \circ  F\vert_{U^\Ical}$ is a homomorphism from $\bGS$ to $\C$.
    \end{enumerate}    

\end{lemma}
\begin{proof}
\emph{(1)} Suppose there is a homomorphism $H: \bGS_\Ical \rightarrow \C$. Given an element $f\in \Mscr$, the local homomorphism $\chi_f: \bLS_{\Ical,f} \rightarrow \C$ is given by $\pi \mapsto H(f^\pi)$ for each $\pi\in U^{\Ical,f}$. Observe that if $g=f^\pi$, and $\sigma\in U^{\Ical,g}$, then $\chi_g(\sigma)= \chi_f(\sigma\circ\pi)$, so the local homomorphisms we have defined are compatible with minoring. Now, by item (1) of \Cref{le:local_structure_quotient}, 
$\bLS_{\Ical,f}= \bLS_{\Jcal, \qclass{f}}$ for each $f\in \Mscr$, so the map 
$f\mapsto (\qclass{f}, \chi_f)$ is a minion homomorphism from $\Mscr$ to $\C^\Jcal$. \par
\emph{(2)}  Suppose there is a partial homomorphism $F: \Mscr \overpartialmap{h} \C^\Jcal$,
    given by $f\mapsto (p_f, \chi_f)$, and $\rho$ be the canonical projection from $\C^\Jcal$ to $\C$. We show that
    $H=\pj \circ F\vert_{U^\Ical}$ is a homomorphism from 
    $\bGS_\Ical$ to $\C$. 
    The map $H$ sends every element $f\in U^\Ical$ to $\chi_f(\id)$, where $\id$ denotes the identity map over $[\arty(U^\Ical)]$. First, let us see that $H$ is a well-defined map. Observe that the map $F^\prime: \Mscr \overpartialmap{h} \Mscr/\Dcal$ given by $f\mapsto p_f$ is a partial minion homomorphism. By item (2) of \Cref{le:local_structure_quotient}, $\bLS_{\Ical,f}\subseteq \bLS_{\Jcal, p_f}$ for every $f\in \Mscr(n)$, $n\leq h$.
    In particular, if $f\in U^\Ical$, then $\id$ belongs to $\bLS_{\Jcal, p_f}$, so $H(f)= \chi_f(\id)$ is well-defined. Now let $R\in \Sigma$ and $(f_1,\dots, f_{\arty(R)})\in R^{\GS_\Ical}$. We prove that
    $(H(f_1),\dots, H(f_{\arty(R)}))\in R^C$. 
    Let $\pi_i=\Pi^\Ical_{R,i}$ for each $i\in [\arty(R)]$.
    By the definition of $\bGS_\Ical$, there must be an element $f_R\in R^\Ical$ such that $
    f_i= f_R^{\pi_i}$ for each $i\in [\arty(R)]$. 
    In particular, this means that $(\pi_1, \dots, \pi_{\arty(R)}) \in R^A$, where $\A=\bLS_{\Ical, f_R}$. Additionally,  given $i\in [\arty(R)]$,
    the following chain of identities holds:
    \[
    \chi_{f_R}(\pi_i)=
    \chi_{f_R}^{\pi_i}(\id) = \chi_{f_i}(\id)= H(f_i).
    \]  
    Hence, as $\chi_{f_R}$ is a homomorphism from
    $\bLS_{\Ical, f_R}$ to $\C$, the tuple
    $(H(f_1), \dots, H(f_{\arty(R)})=$ $
    (\chi_{f_R}(\pi_1),$ $ \dots, \chi_{f_R}(\pi_{\arty(R)}))$ belongs to $R^C$, as we wanted to prove. 
 \end{proof}

Finally, we give another version of this last lemma that handles minions $\Mscr$ that can be decomposed as disjoint unions of subminions. 

\begin{lemma}
\label{lem:hom_to_power_minion_disjoint_union}
     Let $K$ be a set, $\Mscr$ a disjoint union $\bigsqcup_{k\in K} \Mscr_k$ of subminions, $\Ical$ a $\Sigma$-interpretation over $\Mscr$, $\Dcal\subseteq 2^\Mscr$ a description, 
     and $\C$ a $\Sigma$-structure. Define
     $\Ical_k=  \Ical\vert_{\Mscr_k}$ for each $k\in K$, and 
     $\Jcal= \Ical/\Dcal$. The following hold.
     \begin{enumerate}[label=({\arabic*})]
         \item Suppose that $\Ical$ is $\Dcal$-stable, and $\bGS_{\Ical_k}\rightarrow \C$ for some $k\in K$. Then $\Mscr_k\rightarrow \C^\Jcal$.
         \item Suppose that $\Ical$ is internal at arity $h$ w.r.t. $\Dcal$, and $F:\Mscr_k \overpartialmap{h} \C^\Jcal$ is a partial homomorphism for some $k\in K$. Then
    $\pj\circ F \vert_{U^{\Ical_k}}$ is a homomorphism from $\bGS_{\Ical_k}$ to $\C$.
     \end{enumerate}
\end{lemma}
\begin{proof}
    The proof follows from the same arguments as the proof of~\Cref{lem:hom_to_power_minion}, using~\Cref{obs:restricted_interpretation}. The main insight is that, given a minion $\Nscr$, partial homomorphisms $\Mscr \overpartialmap{h} \Nscr$ correspond to families $(F_k)_{k\in K}$ of partial homomorphisms $F_k: \Mscr_k \overpartialmap{h} \Nscr$ in a one-to-one fashion. If we let $\Nscr= \Mscr/\Dcal$, the fact that $\Ical$ is internal at arity $h$ w.r.t. $\Dcal$, implies that for all $F_k: \Mscr_k \overpartialmap{h} \Nscr$ it must hold that $F_k(U^{\Ical_k})\subseteq \qclass{U^{\Ical}}$ and $F_k(R^{\Ical_k})\subseteq \qclass{R^{\Ical}}$ for all $R\in \Sigma$.
\end{proof}

\subsection{Patterns}
\label{sec:patterns}
The ideas introduced in the previous sections are (almost) enough to prove our main undecidability and non-computability results.
Indeed, consider a $\Sigma$-interpretation $\Ical$ over a minion $\Mscr$ which is internal at arity $k$ w.r.t. some description $\Dcal\subseteq 2^\Mscr$, and a $\Sigma$-structure $\bT$. Then,  \Cref{lem:hom_to_power_minion} shows that
there is a homomorphism from the global structure $\bGS_\Ical$ to $\bT$ if and only if $\Mscr \rightarrow \bT^{\Ical/\Dcal}$. Moreover, it is not difficult to show that it is possible to compute a homomorphism $F: \bGS_\Ical \rightarrow \bT$ when given oracle access to a homomorphism $H: \Mscr \rightarrow \bT^{\Ical/\Dcal}$. 
What is missing in this picture is a way to reduce, assuming that $ \bGS_\Ical \rightarrow \bT$, the problem
$\spcsp( \bGS_\Ical,\bT)$ to $\sPMC_h(\Mscr,\bT^{\Ical/\Dcal})$ for some $h\in \NN$. This reduction will be given by what we call a \emph{pattern}, which is a polynomial-time algorithm that constructs minor conditions related to input $\Sigma$-structures, in a way similar to the formulas $\Psi_{\bG}$ defined in \eqref{eq:pattern_warm-up} during our proof sketch. The formal definition is given below. \par

 Let $\Mscr$ a minion, $h\in \NN$ an integer,
 $\Dcal\subseteq 2^\Mscr$ a description, and $\Ical$ a $\Dcal$-stable $\Sigma$-interpretation over $\Mscr$.
    A \emph{pattern of $h$-ary internal references to $\Ical$ w.r.t. $\Dcal$} is a
     function $\Psi$ computable in polynomial time that sends every finite structure $\I$ satisfying
    $\I \rightarrow \bGS_{\Ical}$ 
    to a minor condition $\Psi_{\I}$ of arity at most $h$, 
    such that $\Mscr\models \Psi_\I$, and that can be written as    
    \begin{align*}
        \bigexists_{v\in I} x_{v}
\bigexists_{R\in \Sigma, r \in R^I} x_r 
\left(
        \bigwedge_{v\in I} \psi_{v}(x_v) 
\right)
\bigwedge  \left( 
\bigwedge_{R\in \Sigma, r \in R^I}        \psi_r(x_r) \bigwedge_{i\in [\arty(R)]}
        x_r^{\Pi^\Ical_{R,i}} = x_{r(i)}
        \right),
    \end{align*}
     where each $v\in I$ the formula $\psi_v(x)$ is an internal reference to $U^\Ical$ w.r.t. $\Dcal$, and for each $R\in \Sigma$ and each $r\in R^I$, the formula $\psi_r(x)$ is an internal reference to $R^\Ical$ w.r.t. $\Dcal$. To abbreviate, we will refer to the tuple $(\Ical, \Dcal, \Psi, h)$ as a $\Sigma$-\emph{pattern} over $\Mscr$.     
     \par

\begin{lemma}
    \label{le:patterns}
    Let $(\Ical, \Dcal, \Psi,h)$ be a pattern over a minion $\Mscr$, and 
    $\bT$ a finite structure satisfying $\bGS_\Ical\rightarrow \bT$. 
    Then there is a many-one reduction from 
    $\spcsp(\bGS_\Ical, \bT)$ to $\sPMC_h(\Mscr, \bT^{\Ical/\Dcal})$.
\end{lemma}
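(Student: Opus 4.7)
The reduction is the pair $(\alpha,\beta)$ where $\alpha(\I)=\Psi_{\I}$ and $\beta(\I,\bm z)$ is obtained by reading off the canonical projections of the sections provided by~$\bm z$. That $\alpha$ is polynomial-time follows immediately from the fact that $\Psi$ is a pattern. To see that $\Psi_{\I}$ is a valid input to $\sPMC_h(\Mscr, \bT^{\Ical/\Dcal})$, note that $\Psi_{\I}$ has arity at most $h$ by definition, $\Mscr\models \Psi_{\I}$ by definition of a pattern, and $\bT^{\Ical/\Dcal} \models \Psi_{\I}$ by item~(1) of~\Cref{lem:hom_to_power_minion} applied to the $\Dcal$-stable interpretation $\Ical$ and the homomorphism $\bGS_\Ical\rightarrow \bT$ (which gives $\Mscr \to \bT^{\Ical/\Dcal}$).

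Now suppose a satisfying assignment $\bm z$ of $\Psi_{\I}$ in $\bT^{\Ical/\Dcal}$ is given, mapping $x_v \mapsto (p_v,\chi_v)$ for each $v\in I$ and $x_r\mapsto (p_r,\chi_r)$ for each $R\in\Sigma$, $r\in R^I$. Define $\beta(\I,\bm z)$ to be the map $F:I\rightarrow T$ with $F(v)=\chi_v(\id)$, i.e., $F=\pj\circ \bm z\vert_{\{x_v\}_{v\in I}}$ where $\pj$ is the canonical projection $\bT^{\Ical/\Dcal} \partialmap \bT$. To see that this makes sense, I would use the projection onto the first component $\bT^{\Ical/\Dcal}\rightarrow \Mscr/\Dcal$, $(p,\chi)\mapsto p$, which is clearly a minion homomorphism and hence preserves pp-formulas; since $\bT^{\Ical/\Dcal}\models \psi_v((p_v,\chi_v))$ with $\psi_v$ an internal reference to $U^\Ical$ w.r.t.\ $\Dcal$, the image satisfies $\Mscr/\Dcal\models \psi_v(p_v)$, forcing $p_v\in U^\Ical/\Dcal=U^{\Ical/\Dcal}$. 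Thus $\id$ lies in the local structure $\bLS_{\Ical/\Dcal,p_v}$ and $\chi_v(\id)$ is well defined.

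To verify that $F$ is a homomorphism, fix $R\in\Sigma$ and $r\in R^I$ and set $\pi_i=\Pi^\Ical_{R,i}$. The same projection argument applied to $\psi_r$ shows $p_r\in R^\Ical/\Dcal=R^{\Ical/\Dcal}$, and taking $\sigma=\id$ in the definition of the local structure puts the tuple $(\pi_1,\ldots,\pi_{\arty(R)})$ inside $R^{\bLS_{\Ical/\Dcal,p_r}}$. Since $\chi_r$ is a homomorphism from that local structure into $\bT$, the tuple $(\chi_r(\pi_1),\ldots,\chi_r(\pi_{\arty(R)}))$ lies in $R^T$. Finally, the minor equalities $x_r^{\pi_i}=x_{r(i)}$ forced by the pattern give $\chi_r^{\pi_i}=\chi_{r(i)}$, so $\chi_r(\pi_i)=\chi_r^{\pi_i}(\id)=\chi_{r(i)}(\id)=F(r(i))$, yielding $(F(r(1)),\ldots,F(r(\arty(R))))\in R^T$ as required.

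The step that requires the most care is the bridge between internal references, which are statements about $\Mscr/\Dcal$, and the evaluations of those formulas inside $\bT^{\Ical/\Dcal}$; the neat fix is to observe that forgetting the section component is a minion homomorphism, so pp-satisfaction transfers correctly. Everything else reduces to unpacking definitions of exponential minion, local structure and pattern, and the construction of $\alpha$ and $\beta$ is syntactic and obviously polynomial-time.
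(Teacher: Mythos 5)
Your proof is correct and follows essentially the same route as the paper's: $\alpha(\I)=\Psi_\I$, $\beta$ reads off the canonical projections $\pj(F(x_v))=\chi_v(\id)$, and the homomorphism property comes from the minor equalities $x_r^{\pi_i}=x_{r(i)}$ together with the sections being local homomorphisms. Your one deviation is to make explicit (via the minion homomorphism $(p,\chi)\mapsto p$ preserving pp-formulas) why the satisfaction of the internal references $\psi_v,\psi_r$ transfers from $\bT^{\Ical/\Dcal}$ down to $\Mscr/\Dcal$, a step the paper leaves implicit when it writes $\Mscr/\Dcal\models \psi_v(\qclass{p_v})$; this is a welcome clarification, not a different argument.
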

\begin{proof}
    Let $\Jcal= \Ical/\Dcal$, and $\Nscr= \bT^\Jcal$. Observe that by \Cref{lem:hom_to_power_minion}, the fact that $\Ical$ is $\Dcal$-stable implies that
    $\Mscr\rightarrow \bT^{\Jcal}$, so 
    $\sPMC_h(\Mscr, \Nscr)$ is well-defined. \par
    We give a many-one reduction from $\spcsp(\bG, \bT)$ to 
    $\sPMC_h(\Mscr, \Nscr)$. This reduction consists of a pair of 
    polynomial-time computable functions
    $(\alpha, \beta)$, where (1) $\alpha$ maps 
structures $\I$ 
satisfying $\I\rightarrow \bG$ to minor conditions $\psi$ of arity at most $h$ that satisfy $\Mscr \models \psi$, and (2)
$\beta$ maps pairs $(\I, F)$ to homomorphisms $H:\I \rightarrow \bT$, where $\I$ is a structure satisfying $\I \rightarrow \bG$ and $F$
is an assignment satisfying $\alpha(\I)$ over $\Nscr$. We simply define the function $\alpha$ as the map $\I\mapsto \Psi_\I$. 
Now consider an assignment $F$ that satisfies $\Psi_\I$ over $\Nscr$, and let $\pj$ be the canonical projection form $\Nscr$ to $\bT$. Let $H_F: \I \rightarrow \bT$ be the map that sends each $v\in I$ to  
$\pj(F(x_v))$, which is clearly computable in polynomial time. We claim that $H_F$ is a homomorphism. This way, we can define $\beta$ as $(\I, F) \mapsto H_F$.
Let us show that $H_F$ is indeed a homomorphism. Recall the structure of the minor condition $\Psi_\I$. We can write 
\begin{align*}
\Psi_\I \equiv &
\bigexists_{v\in I} x_{v}
\bigexists_{R\in \Sigma, r \in R^I} x_{r}
\left(
        \bigwedge_{v\in I} \psi_{v}(x_v) 
\right)
\bigwedge \\ &
         \left( 
\bigwedge_{R\in \Sigma, r\in R^I}        \psi_r(x_r) \bigwedge_{i\in [\arty(R)]}
        x_r^{\Pi^\Ical_{R,i}} = x_{r(i)}
        \right),
\end{align*}
where 
$\psi_v(x)$ is an internal reference (w.r.t. $\Dcal$)
to $U^\Ical$ for each $v\in I$, and $\psi_r(x)$ 
is an internal reference to $R^\Ical$ for each
$R\in \Sigma$, $r\in R^I$. \par
We suppose that $F$ maps $v\mapsto (\qclass{p_v}, \chi_v)$ for each $v\in I$, and
$r \mapsto (\qclass{p_r}, \chi_r)$ for each $R\in \Sigma$, and tuple $r \in R^I$.
First we argue that $H_F$ is a well-defined map. Let $v\in I$. Then 
$\Mscr/\Dcal\models \psi_v(\qclass{p_v})$, which means that $\qclass{p_v}\in U^\Jcal$, by the definition of internal reference. Hence $H_F(v)=\pj(F(x_v))= \chi_v(\id)$ is well defined. Now we show that $H_F$ is a homomorphism. Let $R\in \Sigma$ and $r=(v_1,\dots, v_{\arty(R)})\in R^I$.
We have that $\Mscr/\Dcal \models \psi_r(\qclass{p_r})$, so $\qclass{p_r}\in R^\Jcal$. It also holds that $F(x_r)^{\pi_i}=F(x_{v_i})$ for each $i\in [\arty(R)]$, where $\pi_i= \Pi_{R,i}^\Jcal$. This means that $\chi_r(\pi_i)= \chi_{v_i}(\id)= 
H_F(v_i)$ for each $i\in [\arty(R)]$. Observe that $(\pi_1, \dots, \pi_{\arty(R)})\in R^{\A}$, where $\A= \LS_{\Jcal, \qclass{p_r}}$. Indeed, we have established that $\qclass{p_r}\in R^\Jcal$, and it holds that $\pi_i= \pi_i \circ \id$ for each $i\in [\arty(R)]$. Using that $\chi_r$ is a homomorphism from $\bLS_{\Jcal,\qclass{p_r}}$ to $\bT$, we obtain that 
 $(\chi_r(\pi_1), \dots, \chi_r(\pi_{\arty(R)}))
 =(H_F(v_1),\dots, H_F(v_{\arty(R)})
 \in R^{T}$. This proves that $H_F$ is a homomorphism. 
\end{proof}

\section{From Minions to Templates}
\label{sec:minion_closures}

In this section, given a locally finite minion $\Mscr$ and a number $h\in \NN$, we obtain a finite template $(\A, \B)$ satisfying that $\pol(\A, \B)$ has a partial isomorphism to $\Mscr$ up to arity $h$ and that
$\Nscr\overpartialmap{h}\Mscr$ if and only if $\Nscr \rightarrow \pol(\A, \B)$ for any other minion $\Nscr$. Furthermore, we want to do so while keeping $\A$ small, controlling both the size of its universe $A$ and the size of its relations. \par
In order to state the main result of this section we need to define one more notion. Let $\Mscr$ be a minion. 
The \emph{rank} of $\Mscr$ is the smallest number $r\in \NN$ satisfying that, for any $n\in \NN$, whenever two elements $f_1, f_2\in \Mscr(n)$ have the same $r$-ary minors (i.e., $f_1^\pi= f_2^\pi$ for all $\pi \in [r]^{[n]}$), then $f_1= f_2$. We say that $\Mscr$ has infinite rank if no such $r$ exists. Our goal is to prove the following.

\begin{restatable}{theorem}{miniontotemplate}
\label{th:minion_to_template}
        Let $\Mscr$ be a minion whose rank is at most $r$, and let $h\geq r$. Then
        there is a template $(\A, \B)$ where $|A|=r$ and $|R^A|\leq h$ for all relations in $\A$ such that
        \begin{enumerate}[label=({\arabic*})]
            \item There is a partial minion isomorphism $\Mscr \overpartialmap{h} \pol(\A, \B)$ defined up to arity $h$.
            \item Given a minion $\Mscr^\prime$, it holds that $\Mscr^\prime \rightarrow \pol(\A, \B)$ if and only if 
           $\Mscr^\prime \overpartialmap{h}\Mscr$.
        \end{enumerate}
        Moreover, if $\Mscr$ is locally finite, then $\B$ is a finite structure. 
\end{restatable}

Through similar ideas to the ones used to prove this result, we are also able to characterize abstract minions which are isomorphic to the polymorphism minion of some finite template. This characterization is not essential to show the main results of this paper, and we defer it to \Cref{app:pol_characterization}. \par
In our reductions we apply \Cref{th:minion_to_template} to manifold minions built using quotient interpretations. We will need the following bound on the rank of such minions.

\begin{lemma}
    \label{le:rank_manifold_minion}
    Let $\bT$ be a $\Sigma$-structure, $\Mscr$ a minion, $\Dcal\subseteq 2^\Mscr$, 
    $\Ical$ a $\Sigma$-interpretation over $\Mscr$, and $r\in \NN$. Suppose that
    \begin{enumerate}[label=({\arabic*})]
        \item $r\geq \arty(Q)$ for all $Q\in \Dcal$, and
        \item $r\geq \arty(U^\Ical)$. 
    \end{enumerate}
    Then the rank of $\bT^{\Ical/\Dcal}$ is at most $r$.
\end{lemma}
\begin{proof}
  Let $f,g\in \Mscr(n)$ for some $n\in \NN$, and suppose that $f^\pi\sim_{\Dcal} g^\pi$ for all $\pi \in {[r]}^{[n]}$. We show this implies $f\sim_\Dcal g$. By \Cref{le:rank} this implies that the rank of $\Mscr/\Dcal$ is at most $r$. Indeed, suppose that $f \not\sim_\Dcal g$. Then, without loss of generality we may assume there is some $Q\in \Dcal$ and some $\pi \in [\arty(Q)]^{[n]}$ such that $f^\pi \in Q$ but $g^\pi \not\in Q$. Let $\alpha:[\arty(Q)] \rightarrow [r]$ 
  and $\beta:[r]\rightarrow [\arty(Q)]$ be such that $\beta \circ \alpha=\id_{[\arty(Q)]}$. Such maps exist because $\arty(Q)\leq r$.
  By assumption,
  $f^{\alpha \circ \pi} = g^{\alpha \circ \pi}$. However
  $f^{\pi}= (f^{\alpha \circ \pi})^\beta$, and 
  $g^{\pi}= (g^{\alpha \circ \pi})^\beta$, a contradiction. \par
  Now define $\Jcal=\Ical/\Dcal$, and $\Nscr=\bT^{\Ical/\Dcal}$. Let $(\qclass{f}, \chi_f), (\qclass{g}, \chi_g)\in \Nscr(n)$ for some $n\in \NN$. Suppose that 
  \[
  (\qclass{f}, \chi_f)^{\pi} = (\qclass{g}, \chi_g)^\pi \quad \text{ for all $\pi\in [r]^{[n]}$}.
  \]
  Then by the previous arguments $\qclass{f}=\qclass{g}$, so the local structure
  $\bLS_{\Jcal, \qclass{f}}$ and $\bLS_{\Jcal, \qclass{g}}$ are the same. Now, suppose, for the sake of a contradiction, that there is a map $\sigma\in U^{\Jcal, \qclass{f}}$ such that
  $\chi_f(\sigma)\neq \chi_g(\sigma)$. Let $\alpha: [\arty(U^\Ical)] \rightarrow
  [r]$ and $\beta: [r] \rightarrow [\arty(U^\Ical)]$ be such that $\beta\circ \alpha =\id_{[\arty(U^\Ical)]}$ (observe that $\arty(U^\Ical)= \arty(U^\Jcal)$). Then it must hold that
  \[
  \chi_f^{\alpha \circ \sigma}(\beta)= \chi_f(\sigma) \neq \chi_g(\sigma) = \chi_g^{\alpha \circ \sigma}(\beta). 
  \]
  However, $\alpha\circ \sigma\in [r]^{[n]}$, yielding a contradiction. This completes the proof. 
\end{proof}

Finite rank minions are precisely those which are isomorphic to a function minion. Indeed, any function minion, as defined in \cite{BG21:sicomp}, on a domain $D$ must have rank at most $|D|$. In the other direction, if $\Mscr$ has rank $r$, and $p\in \Mscr$ is a $n$-ary element, then it can be seen as a function from $[r]^{n}$ to $\Mscr(r)$ by letting $p(\pi)=p^\pi$ for each $\pi\in [r]^{[n]}$ (recall that we identify tuples and maps). The following is a useful auxiliary fact.
\par
\begin{lemma}
    \label{le:rank}
    Let $\Mscr$ be a minion, $n\in \NN$, and $f, g\in \Mscr(n)$. Let $r \leq h$ be natural numbers. Suppose that $f^\pi= g^\pi$ for every $\pi \in [h]^{[n]}$. Then 
    $f^\pi= g^\pi$ for every $\pi\in [r]^{[n]}$.
\end{lemma}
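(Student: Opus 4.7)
The plan is to show that every $r$-ary minor of $f$ or $g$ can be recovered, through the minion axioms, from their $h$-ary minors, and therefore the hypothesis already implies the conclusion. The idea is to factor every map $\pi \in [r]^{[n]}$ through the inclusion $[r] \hookrightarrow [h]$, use the hypothesis at arity $h$, and then ``undo'' the inclusion using any retraction.

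More precisely, fix an arbitrary $\pi \in [r]^{[n]}$. Let $\iota : [r] \to [h]$ be the canonical inclusion. Then $\iota \circ \pi \in [h]^{[n]}$, so by hypothesis $f^{\iota \circ \pi} = g^{\iota \circ \pi}$. Applying minion axiom (2) to the decomposition $\iota \circ \pi = \iota \circ \pi$ yields $f^{\iota \circ \pi} = (f^\pi)^\iota$ and similarly for $g$, hence
\[
(f^\pi)^\iota = (g^\pi)^\iota.
\]
Since $r \geq 1$ (because $\NN$ starts at $1$), we may pick any retraction $\rho : [h] \to [r]$ of $\iota$, i.e., a map satisfying $\rho \circ \iota = \id_{[r]}$. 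Applying minion axiom (2) to this factorisation of the identity and then minion axiom (1) gives
\[
f^\pi \;=\; (f^\pi)^{\id_{[r]}} \;=\; \bigl((f^\pi)^\iota\bigr)^\rho \;=\; \bigl((g^\pi)^\iota\bigr)^\rho \;=\; (g^\pi)^{\id_{[r]}} \;=\; g^\pi,
\]
which is the desired equality. Since $\pi \in [r]^{[n]}$ was arbitrary, this completes the argument.

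There is no real obstacle here: the statement is essentially the observation that the collection of minors of an element is determined by its minors of any fixed sufficiently high arity, a consequence of the two minion axioms plus the trivial fact that an inclusion between finite sets admits a retraction. The only thing to be careful about is the edge case $r = 0$, which is excluded here by the convention $\NN = \{1, 2, \ldots\}$.
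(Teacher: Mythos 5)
Your proof is correct and takes essentially the same approach as the paper: both factor the identity on $[r]$ through $[h]$ via an injection and a retraction, apply the hypothesis at arity $h$, and use the two minion axioms to pull the conclusion back to arity $r$. The only cosmetic difference is that you fix the canonical inclusion $\iota$ while the paper works with an arbitrary pair $\alpha,\beta$ satisfying $\beta\circ\alpha=\id_{[r]}$.
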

\begin{proof}
    Let $\alpha: [r]\rightarrow [h]$ and $\beta:[h] \rightarrow [r]$ be such that
    $\beta \circ \alpha =\id_{[r]}$. Then for any $\pi \in [r]^{[n]}$ it holds that
    \[
    f^\pi = f^{\beta\circ (\alpha \circ \pi)} = g^{\beta\circ (\alpha \circ \pi)} = g^\pi,
    \]
    as we wanted to show. 
\end{proof}

\paragraph{Minion Closures}

Let $A$ and $B$ be finite sets, and consider an arbitrary minion $\Mscr$ whose $n$-ary elements are functions of the form $f:A^n \rightarrow B$ where minoring is defined as usual. One of the insights of \cite{BG21:sicomp} is that a minion of this kind is a polymorphism minion of a finite template if and only if it has bounded \emph{finitised arity}. In other words, there is a number $h\in \NN$ such that for any $n\in \NN$ and any function $f: A^n\rightarrow B$ it holds that $f\in \Mscr$ if all $h$-ary minors of $f$ belong to $\Mscr$. Roughly, this means that every consistent family of $h$-ary elements in $\Mscr$ occurs as the family of $h$-ary minors of a higher arity element. Conversely, in the case where $\Mscr$ is not isomorphic to the polymorphism minion of a finite template, it must be that some consistent family of $h$-ary elements in $\Mscr$ is not represented by a higher arity element. In a way, this means that $\Mscr$ is missing some elements. To prove \Cref{th:minion_to_template}, we first define the $h$-closure of the minion $\Mscr$ which results from adding to it an element representing each consistent family of $h$-ary minors. Then, we show how to construct a finite template $(\A, \B)$ such that $\pol(\A, \B)$ is isomorphic to this closure. \par


An \emph{$m$-ary system of $k$-ary minors} over $\Mscr$ is a map $\zeta:[k]^{[m]} \rightarrow \Mscr(k)$ satisfying that for any pair of maps $\pi_1,\pi_2\in [k]^{[m]}$ and any map $\sigma\in [k]^{[k]}$ for which $\pi_1 = \sigma \circ\pi_2$ it holds that $\zeta(\pi_2)^\sigma = \zeta(\pi_1)$. 
If $\zeta$ is an $m$-ary system, and $\sigma \in [n]^{[m]}$ is a map, we denote by $\zeta^\sigma$ the $n$-ary system corresponding to the map $\pi \mapsto \zeta(\pi\circ \sigma)$. The $h$-closure of a minion $\Mscr$ is another minion $\Mscr^{(h)}$ whose $n$-ary elements are the $n$-ary systems of $h$-ary minors over $\Mscr$, and where minoring is given by the operation $\zeta\mapsto \zeta^\pi$. \par
Any element $p\in \Mscr(n)$ defines a $n$-ary system $\zeta_p\in \Mscr^{(h)}(n)$ in a natural way. That is, for any $\pi:[n]\rightarrow [h]$ we set $\zeta_p(\pi)= p^\pi$. This mapping $p \mapsto \zeta_p$ is, in fact, a minion homomorphism  $\mathrm{Cl}_h: \Mscr \rightarrow \Mscr^{(h)}$, which we call the \emph{canonical homomorphism} from $\Mscr$ to $\Mscr^{(h)}$. \par

Now we prove that the $h$-closure of a minion $\Mscr$ satisfies both properties we require from $\pol(\A, \B)$ in \Cref{th:minion_to_template}.

\begin{proposition}
\label{th:closure_equivalent}
    Let $\Mscr$ be a minion and let $h\in \NN$. Then the 
    restriction of the canonical homomorphism $\mathrm{Cl}_h: \Mscr \rightarrow \Mscr^{(h)}$ to elements of arity at most $h$ is a partial isomorphism.    
\end{proposition}
\begin{proof}
Let $F$ be the restriction of $\mathrm{Cl}_h$ to elements of arity at most $h$. We define another partial homomorphism $H: \Mscr^{(h)}\overpartialmap{h} \Mscr$ and show that $F$ and $H$ are inverses. For each $k\leq h$ we fix maps $\sigma_k: [k]\rightarrow [h]$ and $\pi_k:[h] \rightarrow [k]$ satisfying
$\pi_k \circ \sigma_k = \id_{[k]}$. Then, given a $k$-ary element $\zeta\in \Mscr^{(h)}$, we define $H(\zeta)$ as $(\zeta(\sigma_k))^{\pi_k}$.

Given $n\leq h$ and an element $f\in \Mscr(n)$, we define $F(f)$ as the $n$-ary system $\zeta_f$ of $h$-ary minors that maps each function $\pi\in [h]^{[n]}$ to $f^\pi$. The map $F$
    defined this way is clearly a $h$-partial minion homomorphism. Now let us show that $F$ is injective. Given $n\leq h$, 
    it is possible to find maps $\pi: [n]\rightarrow [h]$ and $\sigma:[h]\rightarrow [n]$ such that $\sigma \circ \pi = \id_{[n]}$.
    Hence
    $f^\pi = g^\pi$ implies $f=g$
    for all pairs $f,g\in \Mscr(n)$. In particular, this means that $\zeta_f= \zeta_g$ if and only if $f=g$. Finally, let us prove that $F$ is surjective. Let $n\leq h$
    and let $\pi, \sigma$ be the same maps as before. Consider an arbitrary $n$-ary system $\zeta$ of $h$-ary minors over $\Mscr$. We claim that $\zeta= \zeta_f$, where $f= \zeta(\pi)^\sigma$. To prove this we need to show that $\zeta(\pi^\prime)= f^{\pi^\prime}$ for all $\pi:[n]\rightarrow [h]$. Observe that $\pi^\prime = \pi^\prime  \circ \sigma \circ \pi$. Thus, by the definition of system, $\zeta(\pi^\prime)= \zeta(\pi)^{\pi^\prime \circ \sigma}$. However, 
    $\zeta(\pi)^{\pi^\prime \circ \sigma}= f^{\pi^\prime}$, proving that $\zeta=\zeta_f$. 
\end{proof}

\begin{proposition}
\label{th:minion_closure_univ_property}
    Let $\Mscr$ and $\Nscr$ be minions, and $h\in \NN$ be a number. Then $\Mscr \overpartialmap{h} \Nscr$ 
    if and only if
    $\Mscr \rightarrow\Nscr^{(h)}$. 
\end{proposition}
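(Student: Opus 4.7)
The plan is to verify the universal property in both directions, using the partial isomorphism $\Nscr \overpartialmap{h} \Nscr^{(h)}$ established in \Cref{th:closure_equivalent} as the main tool.

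For the forward direction, I would assume a partial homomorphism $F: \Mscr \overpartialmap{h} \Nscr$ and construct a minion homomorphism $G: \Mscr \rightarrow \Nscr^{(h)}$. For each $f\in \Mscr(n)$, define $G(f)$ to be the map $\zeta_f : [h]^{[n]} \rightarrow \Nscr(h)$ given by $\zeta_f(\pi) = F(f^\pi)$; this makes sense because $f^\pi$ has arity $h$, where $F$ is defined. I would check three things: (i) $\zeta_f$ is a valid system of $h$-ary minors, which follows because $\pi_1 = \sigma \circ \pi_2$ implies $f^{\pi_1} = (f^{\pi_2})^\sigma$, and applying $F$ (which preserves minors within arity $h$) gives $\zeta_f(\pi_1) = F((f^{\pi_2})^\sigma) = F(f^{\pi_2})^\sigma = \zeta_f(\pi_2)^\sigma$; (ii) $G$ preserves arities by construction; and (iii) $G$ preserves minoring: for any $\tau: [n]\rightarrow [m]$, we have $G(f)^\tau(\pi) = \zeta_f(\pi \circ \tau) = F(f^{\pi \circ \tau}) = F((f^\tau)^\pi) = \zeta_{f^\tau}(\pi) = G(f^\tau)(\pi)$.

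For the backward direction, I would assume a minion homomorphism $G: \Mscr \rightarrow \Nscr^{(h)}$ and leverage \Cref{th:closure_equivalent} applied to $\Nscr$, which yields a partial isomorphism $E: \Nscr \overpartialmap{h} \Nscr^{(h)}$ with a partial inverse $E^{-1}: \Nscr^{(h)} \overpartialmap{h} \Nscr$ (defined on elements of $\Nscr^{(h)}$ of arity at most $h$, and mapping each such system $\zeta$ to $\zeta(\pi)^\sigma$ for any choice of $\pi: [n]\rightarrow [h]$, $\sigma: [h]\rightarrow [n]$ with $\sigma\circ \pi = \id_{[n]}$, using the argument in the proof of \Cref{th:closure_equivalent}). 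Now define $F: \Mscr \overpartialmap{h} \Nscr$ as the composition $F(f) = E^{-1}(G(f))$ for $f$ of arity at most $h$. Since the composition of a (total) minion homomorphism with a partial isomorphism defined up to arity $h$ is a partial homomorphism defined up to arity $h$, this produces the desired $F$.

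I do not expect any real obstacles: the forward construction is the one that is essentially forced by the form of elements of $\Nscr^{(h)}$, and the backward construction is a one-line composition once \Cref{th:closure_equivalent} is in hand. The only mild subtlety is keeping track that, in the backward direction, $G$ need not send arity-$n$ elements of $\Mscr$ into the image of $E$ unless $n \leq h$, which is precisely why $F$ is only partial and only defined up to arity $h$; this matches exactly the statement to be proved.
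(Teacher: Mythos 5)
Your proposal is correct and follows essentially the same route as the paper's proof: the same system construction $\zeta_f(\pi) = F(f^\pi)$ for one direction, and the same appeal to the partial isomorphism from \Cref{th:closure_equivalent} for the other. The only superficial difference is that you spell out the composition $E^{-1}\circ G$ explicitly where the paper simply invokes \Cref{th:closure_equivalent} and leaves the composition implicit.
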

\begin{proof}
  Suppose there is a minion homomorphism $F: \Mscr \rightarrow \Nscr^{(h)}$. Then the restriction of $F$ to elements of arity at most $h$ 
    yields a partial homomorphism from $\Mscr$ to  $\Nscr^{(h)}$ up to arity $h$. By~\Cref{th:closure_equivalent}, this implies there is a partial homomorphism from $\Mscr$ to $\Nscr$ up to arity $h$. \par 
    Now suppose there is a partial homomorphism $F: \Mscr \overpartialmap{h} \Nscr$. We use $F$ to define a minion homomorphism $F^\prime: \Mscr \rightarrow \Nscr^{(h)}$. Let $n\in \NN$, and let $f\in \Mscr(n)$. Then we define $F^\prime(f)$ to be the  system $\zeta_f$ that sends each map $\pi\in [h]^{[n]}$ to $F(f^\pi)$. The system $\zeta_f$ is well-defined:
    if $\pi_1=\sigma \circ \pi_2$ for 
    some maps $\pi_1, \pi_2 \in [h]^{[n]}$, $\sigma\in [h]^{[h]}$, then $\zeta_f(\pi_1)= F(f^{\sigma\circ \pi_2})= F(f^{\pi_2})^\sigma= \zeta_f(\pi_2)^\sigma$.
    Finally, the map $F^\prime$ is a minion homomorphism. Indeed, if $f=g^\pi$, then $\zeta_f=\zeta_g^\pi$ following the definition of minoring for systems. \par 
    \end{proof}

\paragraph{Finite Templates}

We introduce two kinds of structures that will be used in our templates. Let $h\geq r$ be two natural numbers. Let $m= r^h$, and let $\pi_1,\dots \pi_m$ be the lexicographical ordering of $[r]^{[h]}$. The \emph{complete structure} $\bK^h_r$ is the relational structure whose signature consists of a single  $m$-ary symbol $R$, whose universe is $[r]$, and where $R^{K_r^h}$ is defined as the set of tuples $(\pi_1(i),\dots, \pi_m(i))$, where $i\in [h]$. This construction was given without a name in \cite[Lemma 6.7]{BG21:sicomp}. This is the most general structure on $r$ elements with relations of size at most $h$, in the sense that any other such structure is pp-definable on $\bK^h_r$. \par
The second kind of structures we use are the so-called \emph{free structures}, introduced in \cite{BBKO21}. Let $\Mscr$ be a minion and let $\A$ be a $\Sigma$-structure. Let $n=|A|$ and identify $A=[n]$ in some fixed way. Similarly, for each $R\in \Sigma$, let $m_R=|R^A|$, and identify $R^A$ with $m_R$ in a fixed way. The \emph{free structure of $\Mscr$ generated by $\A$} is a $\Sigma$-structure, denoted $\bF= \bF_\Mscr(\A)$ has universe $F=\Mscr(n)$, and for each symbol $R\in \Sigma$ the relation $R^F$ is given by the set of tuples $(f_1,\dots, f_{\arty(R)})$ for which there is an element $g\in \Mscr(m)$ satisfying $g^{\pi_i}=f_i$ for each $i\in [\arty(R)]$, where $\pi_i: [m]\rightarrow [n]$ is the $i$-th projection $r\mapsto r(i)$ (recall that $r\in R^A$ is seen as an element of $m$, and $r(i)\in A$ as an element of $[n]$). \par

Having defined complete structures and free structures, we are in conditions to prove the main result of the section.

\begin{proof}[Proof of \Cref{th:minion_to_template}]
The template witnessing the statement is given by $\A= \bK^h_r$ and $\B= \bF_{\Mscr}(\A)$. 
Observe that if $\Mscr$ is locally finite, then $\B$ is a finite structure. 
We prove that $\pol(\A, \B)$ is isomorphic to the closure $\Mscr^{(h)}$. Observe that this proves the theorem: the first item follows then from \Cref{th:closure_equivalent}, and the second from \Cref{th:minion_closure_univ_property}. \par

Let $\Nscr= \pol(\A, \B)$, $m= r^h$, and $\pi_1, \dots, \pi_m$ be the lexicographical ordering of $[r]^{[h]}$. We see elements $p\in \Nscr(n)$
    as maps from $[r]^{[n]}$ to $\Mscr(r)$ by identifying $[r]^n$ with $[r]^{[n]}$ as usual. Given a map $\sigma \in [h]^{[n]}$, the elements $(p(\pi_1\circ\sigma),\dots, p(\pi_m\circ \sigma))$ must belong to the relation $R^B$. In other words, there must be some element $f\in \Mscr(h)$ such that $f^{\pi}= p(\pi\circ \sigma)$ for each $\pi \in [r]^{[h]}$. Moreover, as the rank of $\Mscr$ is bounded by $r$, there can be only one element $f$ with such property. We denote $f$ as $p^{*}(\sigma)$.  \par
 Now we are able to define an isomorphism $H: \Nscr \rightarrow \Mscr^{(h)}$.  Given $n\in \NN$ and a polymorphism $p\in \Nscr(n)$, we define $H(p)$ as the system $\zeta_p$ that sends each map $\sigma \in [h]^{[n]}$ to the element 
    $p^*(\sigma)\in \Mscr(h)$. Let us show that $\zeta_p$ is a well-defined system. Let $\sigma\in [h]^{[n]}$, $\gamma \in [h]^{[h]}$, $f=p^*(\sigma)= \zeta_p(\sigma)$, and $g= p^* 
    (\gamma \circ \sigma) = \zeta_p(\gamma \circ \sigma)
    $.
    We need to show that $f^\gamma=g$. 
    Given $\pi \in [r]^{[h]}$, using the definition of $p^*$ and the fact that $\pi\circ \gamma \in [r]^{[h]}$ we obtain 
    \[
    (f^{\gamma})^\pi = f^{\pi\circ \gamma} = p( (\pi \circ \gamma) \circ \sigma) = p(\pi \circ (\gamma \circ \sigma)) = g^{\pi}.
    \]
    As the rank of $\Mscr$ is at most $r$, \Cref{le:rank} implies $f^\gamma=g$, as we wanted. \par   
    Now let us prove that $H$ is a minion homomorphism. Let $p\in \Nscr(n_1)$ be a polymorphism, and let $q=p^\gamma$ where $\gamma\in [n_2]^{[n_1]}$. We need to show that $ \zeta_q = \zeta_p^\gamma$. Let $\sigma\in [h]^{[n_2]}$,
    be an arbitrary map,
     $g= \zeta_q(\sigma)$, and
    $f=\zeta_p^\gamma(\sigma)= \zeta_p(\sigma \circ \gamma)$. It is enough to prove that $f=g$. For each $\pi \in [r]^{[h]}$ we have that $f^{\pi}=p(\pi \circ \sigma \circ \gamma)= q(\pi\circ \sigma) = g^\pi$. Using that the rank of $\Mscr$ is at most $r$, this shows that $f=h$, completing the proof.  \par
    Now let us prove that $H$ is a bijective map. This is enough because the inverse map of a bijective minion homomorphism is also a minion homomorphism.  As $r\leq h$, there are two maps $\sigma\in [h]^{[r]}$ and $\gamma \in [r]^{[h]}$ such that $\id_{[r]}=\gamma \circ \sigma$. 
    To see that $H$ is injective,
    let $p, q\in \Nscr(n)$ be two different polymorphisms.
    Using that the rank of $\Nscr$ is at most $r$, \Cref{le:rank}
    yields some $\tau\in [r]^{[n]}$ such that $p(\tau)\neq q(\tau)$. But    
    $p(\tau)= \zeta_p(\sigma \circ \tau)^\gamma$, and
    $ q(\tau)= \zeta_q(\sigma \circ \tau)^\gamma$, so $ H(p) = \zeta_p\neq H(q) = \zeta_q$. Finally, we show that $H$ is surjective. 
    Let $\zeta\in \Mscr^{(h)}(n)$ be an arbitrary system. We need to prove that there is some polymorphism $p\in \Nscr(n)$ such that $\zeta= \zeta_p$. Define $p: [r]^{[n]}\rightarrow \Mscr(r)$ as the function that maps each $\tau\in [r]^{[n]}$ to $\zeta(\sigma\circ \tau)^\gamma$. We need to show that $p$ is a $n$-ary polymorphism in $\Nscr$.
    For this it is enough to show that
    $(p(\pi_1\circ \tau^\prime), \dots, p(\pi_m\circ \tau^\prime))\in R^B$ for each map $\tau^\prime \in [h]^{[n]}$. Indeed, 
    $p(\pi\circ \tau^\prime) =\zeta(\sigma\circ \pi\circ \tau^\prime)^\gamma$ for all $\pi\in [r]^{[h]}$. However, by the definition of system, 
    this last element equals 
    $\zeta(\tau^\prime)^{\gamma \circ (\sigma \circ \pi)}= \zeta(\tau^\prime)^{\pi}$, 
    where the last equality uses the fact that $\gamma \circ \sigma = \id_{[r]}$. The fact that 
     $p(\pi\circ \tau^\prime) = \zeta(\tau^\prime)^{\pi}$ 
     for all $\pi \in [r]^{[h]}$ implies
     that
    $(p(\pi_1\circ \tau^\prime), \dots, p(\pi_m\circ \tau^\prime))\in R^B$. To see that $\zeta= \zeta_p$, observe that $p^*(\tau^\prime)$ must equal $\zeta(\tau^\prime)$ for all $\tau^\prime \in [h]^{[n]}$. 
 \end{proof}

Rather than referencing this result, it will be more convenient to use the following more concrete corollary, which simply follows from the fact that the template constructed in the last proof is precisely $(\bK^h_r, \bF_\Mscr(\bK^h_r))$.

\begin{corollary}
    \label{cor:minion_to_template}
        Let $\Mscr$ be a minion whose rank is at most $r$, and let $h\geq r$. Then
        the template $(\A, \B)= (\bK_r^h, \bF_\Mscr(\bK^h_r))$ satisfies the following.
        \begin{enumerate}[label=({\arabic*})]
            \item There is a partial minion isomorphism $\Mscr \overpartialmap{h} \pol(\A, \B)$ defined up to arity $h$.
            \item Given a minion $\Mscr^\prime$, it holds that $\Mscr^\prime \rightarrow \pol(\A, \B)$ if and only if 
           $\Mscr^\prime \overpartialmap{h}\Mscr$.
        \end{enumerate}
        Moreover, if $\Mscr$ is locally finite, then $\B$ is a finite structure. 
\end{corollary}

\section{Main Reductions}
\label{sec:reductions}

In this section we show the reductions that enable our main results. Most proofs are short, as they are just a matter of putting together the pieces we have constructed up until now. As in the overview given in \Cref{sec:warm-up}, the general strategy can be described as follows. We start with an interpretation $\Ical$ over a minion $\Mscr$ which is internal at some arity $h$ w.r.t. a description $\Dcal$. Then we are able to reduce problems related to the global structure $\bGS_\Ical$ to problems related to $\Mscr$ by considering manifold minions of the form $\bT^{\Ical/\Dcal}$ for finite structures $\bT$ similar to $\bGS_\Ical$, and then obtaining finite templates from these manifold minions applying \Cref{cor:minion_to_template}.

\paragraph{Reductions for Undecidability}

\begin{theorem}
\label{th:reduction_undecidability}

Let $\bG$ be a $\Sigma$‑structure, let $\Mscr$ be a minion, let $\Dcal\subseteq 2^\Mscr$ be a finite description, and let $\Ical$ be a $\Sigma$‑interpretation over $\Mscr$.
Let $h,r\in\mathbb{N}$ with $h\ge r$.
Suppose that 
    \begin{enumerate}[label=({\arabic*})]
        \item $\Hom(\bG, \cdot)$ is undecidable,
        \item $\Ical$ is internal at arity $h$ w.r.t. $\Dcal$,
        \item $\bGS_\Ical$ is finitely equivalent to $\bG$, and
        \item $r\geq \arty(Q)$ for every $Q\in \Dcal$ and $r\geq \arty(U^\Ical)$.
    \end{enumerate}
    Then the family of finite templates of the form $(\bK^h_r, \B)$ 
    satisfying $\Mscr \rightarrow \pol(\A, \B)$
    is undecidable.
\end{theorem}
\begin{proof}[Proof of \Cref{th:reduction_undecidability}]
We show that, given a finite $\Sigma$-structure $\C$, there is an algorithm that computes a 
finite template of the form $(\bK^h_r, \B)$ such that $\Mscr \rightarrow \pol(\bK^h_r, \B)$ if and only if $\bG \rightarrow \C$.
Let $\Nscr$ be the manifold minion $\C^{\Ical/\Dcal}$. We define $\B= \bF_\Nscr(\bK^h_r)$. Let us reassure the reader that $\B$ is constructible from $\C$, even though this may not be immediately obvious. This follows from observing that $\B$ can be constructed when given access to the restriction of $\Mscr/\Dcal$ to elements of arity at most $h$
and to the interpretation $\Ical/\Dcal$, and these are finite objects that are independent from the input $\C$, so they can be precomputed. \par

Let us see that $\Mscr \rightarrow \pol(\bK^h_r, \B)$ if and only if $\bG \rightarrow \C$. This is a consequence of the following chain of double implications:

\begin{align*}
&
\bG \rightarrow \C \quad & \xLeftrightarrow{\text{\Cref{lem:hom_to_power_minion}}} \\
&
\Mscr \overpartialmap{h} \Nscr \quad & \xLeftrightarrow{\text{\Cref{cor:minion_to_template}}} \\
&
\Mscr \rightarrow \pol(\bK^h_r, \B). & 
\end{align*}

Let us spell the arguments out. As $\bGS_\Ical$ is finitely equivalent to $\bG$, 
$\bG \rightarrow \C$ holds if and only if $\bGS_\Ical \rightarrow \C$. 
By~\Cref{lem:hom_to_power_minion}, there is a homomorphism from $\bGS_\Ical$ to $\C$ if and only if there is a partial homomorphism $F:\Mscr \overpartialmap{h} \C^{\Ical/\Dcal}=\Nscr$ 
By \Cref{le:rank_manifold_minion}, the rank of $\Nscr$ is at most $r$, so 
we can apply~\Cref{cor:minion_to_template} to the minion $\Nscr$ and the template $(\bK^h_r, \B)=(\bK^h_r, \bF_\Nscr(\bK^h_r))$. In particular, $\Mscr \overpartialmap{h} \Nscr$ if and only if $\Mscr\rightarrow \pol(\bK^h_r, \B)$. This completes the proof.
\end{proof}

We also give an analogue of last result tailored for minions that can be expressed as disjoint unions of subminions. In this case, the global structure induced by an interpretation is a disjoint union of the global structures induced on each of the disjoint subminions (\Cref{obs:restricted_interpretation}), and we can apply the strategy from last proof to each of the parts. 

\begin{theorem}
\label{th:reduction_undecidability_disjoint_union}
    Let $\mathbb G$ be a family of $\Sigma$-structures. Let $\Mscr= \bigsqcup_{\bG \in \mathbb G} \Mscr_{\bG}$ be a minion, let $\Dcal\subseteq 2^\Mscr$ be a finite description, and
    let $\Ical$ be a $\Sigma$-interpretation over $\Mscr$. Let $h,r\in \NN$ with $h\geq r$. Suppose that 
    \begin{enumerate}[label=({\arabic*})]
        \item $\Ical$ is internal at arity $h$ w.r.t. $\Dcal$,
        \item $\bGS_{\Ical_{\bG}}$ is finitely equivalent to $\bG$ for each $\bG\in \mathbb{G}$,  
        where $\Ical_{\bG}= \Ical\vert_{\Mscr_{\bG}}$, and
        \item $r\geq \arty(Q)$ for every $Q\in \Dcal$ and $Q=U^\Ical$,  
    \end{enumerate}
    Then there is an algorithm that, given a finite $\Sigma$-structure $\C$, yields a 
    finite template $(\bK^h_r, \B)$ such that $\bG \rightarrow \bT$ if and only if 
    $\Mscr_{\bG} \rightarrow \pol(\bK^h_r, \B)$ for each $\bG\in \mathbb{G}$. In particular, the following hold:
    \begin{enumerate}[label=(\roman*)]
        \item The set $\Hom(\mathbb{G}, \cdot)$ is Turing-reducible to the family of finite templates $(\bK^h_r, \B)$ 
    satisfying $\Mscr_{\bG}\rightarrow \pol(\bK_r^h, \B)$ for some $\bG\in \mathbb{G}$.
    \item The set $\Hom_{\eg}(\mathbb{G}, \cdot)$ is Turing-reducible to the family of finite templates $(\bK^h_r, \B)$ 
    satisfying $\Mscr_{\bG}\rightarrow \pol(\bK_r^h, \B)$ for all but finitely many $\bG\in \mathbb{G}$.
    \item The set $\Hom_{\io}(\mathbb{G}, \cdot)$ is Turing-reducible to the family of finite templates $(\bK^h_r, \B)$ 
    satisfying $\Mscr_{\bG}\rightarrow \pol(\bK_r^h, \B)$ for infinitely many $\bG\in \mathbb{G}$.
    \end{enumerate}
\end{theorem}
\begin{proof}[Proof of~\Cref{th:reduction_undecidability_disjoint_union}]
Given a finite $\Sigma$-structure $\C$, define $\Nscr=\C^{\Ical/\Dcal}$ and $\B=\bF_\Nscr(\bK^h_r)$.
As in the proof of~\Cref{th:reduction_undecidability}, there is an algorithm that
computes the template $(\bK^h_r, \B)$ in response to the input $\C$.
In this case, \Cref{lem:hom_to_power_minion_disjoint_union} tells us that $\Mscr_{\bG} \overpartialmap{h} \Nscr$ if and only if $\bG \rightarrow \C$ for a given $\bG\in \mathbb{G}$. By the same arguments as in the proof of~\Cref{th:reduction_undecidability}, this is the case if and only if $\Mscr_{\bG} \rightarrow \pol(\bK^h_r,\B)$, as we wanted to show. \par
Observe that the last three items in the statement of the theorem are a direct consequence of the facts that $\bG \rightarrow \bT$ if and only if $\Mscr_{\bG} \rightarrow \pol(\bK^h_r, \B)$ for each $\bG\in \mathbb{G}$, and that 
$(\bK^h_r, \B)$ can be computed in response to $\C$.
\end{proof}

\paragraph{Reduction for Non-Computability}

\begin{theorem}
\label{th:reduction_non_computability}
 Let $\bG$ be a $\Sigma$-structure, $\Mscr$ a minion, $\Dcal\subseteq 2^\Mscr$ a finite description, 
    $\Ical$ a $\Sigma$-interpretation over $\Mscr$, and $h\geq r$ natural numbers.
    Suppose that 
    \begin{enumerate}[label=({\arabic*})]
        \item there is $\bT\in \Hom(\bG, \cdot)$ for which no homomorphism $H:\bG \rightarrow \bT$
        is computable,
        \item $\Ical$ is internal at arity $h$ w.r.t. $\Dcal$,
        \item $\bGS_\Ical$ is finitely equivalent to $\bG$, 
        \item there is a computable homomorphism $F:\bG \rightarrow \bGS_\Ical$, and
        \item $r\geq \arty(Q)$ for every $Q\in \Dcal$ and $Q=U^\Ical$. 
    \end{enumerate}
    Then there exists a finite template of the form $(\bK_r^h, \B)$
    satisfying that $\Mscr \rightarrow \pol(\bK^h_r, \B)$, but there is no computable minion homomorphism $H: \Mscr \rightarrow \pol(\bK^h_r, \B)$.
    \end{theorem}
\begin{proof}
Define $\Jcal=\Ical/\Dcal$, and $\Nscr= \bT^{\Jcal}$. We claim that the finite template satisfying the theorem's statement is $(\bK_r^h,\B)$, 
where $\B=\bF_\Nscr(\bK_r^h)$. By \Cref{le:rank_manifold_minion}, we have that $\Nscr$ has rank at most $r$, so $\Nscr$ and the template $(\bK_r^h, \B)$ witness \Cref{cor:minion_to_template}.
Hence $\bG\rightarrow \bT$ implies $\Mscr \rightarrow \pol(\bK_r^h,\B)$. We show that there is no computable minion homomorphism from
$\Mscr$ to $\pol(\bK_r^h, \B)$. \par
We proceed by contradiction. Suppose there is a computable minion homomorphism $H: \Mscr \rightarrow \pol(\bK_r^h,\B)$. We give a composition of computable partial maps that yields a homomorphism from $\bG$ to $\bT$. By \Cref{cor:minion_to_template}, there is a partial homomorphism $H^\prime: \pol(\bK_r^h, \B) \overpartialmap{h} \Nscr$. Observe that $H^\prime$ is given by a finite table (i.e, it is defined on a finite set, and its co-domain, consisting of the elements $f\in \Nscr$ of arity bounded by $h$, is finite). Hence, $H^\prime$ is computable. 
This way, $H^\prime \circ H$ yields a computable partial homomorphism from $\Mscr$ to $\Nscr$ defined up to arity $h$. Let $\pj: \Nscr \partialmap \bT$ be the canonical projection (recall the definition from \Cref{sec:manifold_minion}). This, again, is a computable partial map. By~\Cref{lem:hom_to_power_minion}, the map $\pj \circ H^\prime \circ H$ restricted to $U^\Ical$ is a homomorphism from $\bGS_\Ical$ to $\bT$ (we do not require the restriction to be computable; we just use that $\pj \circ H^\prime \circ H$ is computable). By hypothesis, there exists a computable homomorphism $F: \bG \rightarrow \bGS_\Ical$. Then $\pj \circ H^\prime \circ H \circ F$ is a computable homomorphism from $\bG$ to $\bT$, yielding a contradiction. Thus, there cannot be a computable homomorphism from $\Mscr$ to $\pol(\bK_r^h,\B)$, as we wanted to prove.
\end{proof}

\paragraph{Reduction for Hardness} ~\\ 

\vspace{-0.1em}
\noindent
Let us take another look at the algorithms $\Qcal \in \{\AIP, \BLP, \BLP+\AIP\}$. Unrolling the definitions, it turns turns out that for a given instance $\I$ the following are equivalent: (1) $\I$ is accepted by $\Qcal$, and (2) $\I\rightarrow \bF_{\Mscr_\Qcal}(\A)$. This is shown in \cite{BBKO21} for $\BLP$ and $\AIP$, and further discussed in \cite{BGWZ20} for the case of $\BLP+\AIP$. From this we obtain the following alternative formulation of rounding problems as left-infinite PCSPs.
\begin{fact}
    \label{fact:rounding_to_pcsp}
   Let $\Qcal \in \{\AIP, \BLP, \BLP+\AIP\}$, and  $(\A, \B)$ be a finite template. 
   Suppose that $\Qcal$ solves $\pcsp(\A, \B)$. Then the problems $\spcsp_\Qcal(\A, \B)$ and
   $\spcsp(\bF_{\Mscr_\Qcal}(\A), \B)$ are the same.
\end{fact}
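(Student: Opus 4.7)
The plan is to verify that the two problems coincide on the nose as search relations: they admit the same promise-satisfying inputs, and on each such input the set of valid outputs is literally the same. So the proof reduces to two observations, both essentially bookkeeping.

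First, I would check that $(\bF_{\Mscr_\Qcal}(\A), \B)$ is a legitimate template, i.e., that $\bF_{\Mscr_\Qcal}(\A) \rightarrow \B$. Because $\Qcal$ solves $\pcsp(\A, \B)$, \Cref{th:algorithm_characterization} provides a minion homomorphism $\Mscr_\Qcal \rightarrow \pol(\A, \B)$; the universal property of the free structure established in \cite{BBKO21} converts any such minion homomorphism into a structure homomorphism $\bF_{\Mscr_\Qcal}(\A) \rightarrow \B$.

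The core step is the equivalence, for every finite instance $\I$, between ``$\Qcal$ accepts $\I$'' and ``$\I \rightarrow \bF_{\Mscr_\Qcal}(\A)$''. For $\Qcal \in \{\AIP, \BLP\}$, this is a direct unrolling of the system $\Ip_\A(\I)$ in \eqref{eq:IP} against the definition of $\Mscr_\Qcal$: a solution $(x_{v,a}, x_{r_I, r_A})$ in the correct numerical domain determines, for each $v \in I$, an element $f_v \in \Mscr_\Qcal(|A|)$ via $f_v(a) = x_{v,a}$ (the constraint $\sum_a x_{v,a}=1$ is exactly what defines $\Mscr_\Qcal$), while the variables $(x_{r_I, r_A})_{r_A}$ furnish a witness in $\Mscr_\Qcal(|R^A|)$ placing $(f_{r(1)}, \ldots, f_{r(\arty(R))})$ in $R^{\bF_{\Mscr_\Qcal}(\A)}$; the linking equations of $\Ip_\A(\I)$ are precisely the minoring identities that the free structure requires. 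Running the same construction backwards produces a solution from a homomorphism $\I \rightarrow \bF_{\Mscr_\Qcal}(\A)$. The case $\Qcal = \BLP+\AIP$ follows by applying the two previous cases componentwise and observing that the compatibility condition in the algorithm matches the componentwise definition of $\Mscr_{\BLP+\AIP}$; this is the formulation used in \cite{BGWZ20}.

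Given these equivalences, both problems take as input exactly the finite structures $\I$ with $\I \rightarrow \bF_{\Mscr_\Qcal}(\A)$, and in both the task is to output a homomorphism $H: \I \rightarrow \B$. I expect no genuine obstacle: the substantive content of the equivalence between each relaxation and the corresponding free structure is already recorded in the cited references, and this is precisely why the statement is labelled a ``fact''.
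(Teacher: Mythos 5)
Your proposal is correct and follows the same route the paper takes: the substantive content is the equivalence, for each $\Qcal$, between acceptance of $\I$ by $\Qcal$ and $\I \rightarrow \bF_{\Mscr_\Qcal}(\A)$, which the paper delegates to \cite{BBKO21,BGWZ20} and you unroll by matching the constraints of $\Ip_\A(\I)$ against the definition of $\Mscr_\Qcal$ and the free structure. The remaining bookkeeping (well-formedness of the template via \Cref{th:algorithm_characterization} and the universal property of the free structure, and the componentwise treatment of $\BLP+\AIP$) is exactly what is needed.
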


The \emph{projection minion} $\mathscr{P}$ is defined as the minion where $\mathscr{P}(n)= [n]$ and where 
$f^\pi= \pi(f)$ for each $f\in \mathscr{P}(n)$ and each $\pi\in [m]^{[n]}$. If P$\neq$ NP, the minion $\mathscr{P}$ is homomorphically equivalent to $\pol(\A)=\pol(\A,\A)$ for each finite structure $\A$
for which $\csp(\A)$ is NP-hard.  It is also not difficult to show that $\mathscr{P}\rightarrow \Mscr$ for every non-empty minion $\Mscr$.  \par
One of the cornerstones of the algebraic approach to PCSPs is the result \cite[Theorem 3.12]{BBKO21} that, for a finite template $(\A, \B)$ and $h$ at least as large as $|A|$ and each relation of $\A$, the problems $\spcsp(\A, \B)$
and $\sPMC_h(\mathscr{P},\pol(\A, \B))$ are log-space equivalent, and similarly for their decision variants. The same proof actually shows the following more general result \footnote{To see that this indeed generalizes \cite[Theorem 3.12]{BBKO21}, choose $\Mscr$ to be the projection minion $\mathscr{P}$, and use the fact that $\bF_{\mathscr{P}}(\A)$ is isomorphic to $\A$.}. 

\begin{theorem}
\label{th:spcsp_to_spmc}
    Let $\Mscr$ be a minion, $N\in \NN$ a number, and $(\A,\B)$ a finite template satisfying that 
    $\Mscr \rightarrow \pol(\A, \B)$. Then there is a log-space reduction from $\sPMC_N(\Mscr, \pol(\A, \B))$ 
    to $\spcsp(\bF_\Mscr(\A), \B)$. Further suppose that $N$ is at least as large as $|A|$ and $|R^A|$ for any relation $R$ in the signature of $\A$. Then there is a log-space reduction from $\spcsp(\bF_\Mscr(\A), \B)$ to  $\sPMC_N(\Mscr, \pol(\A, \B))$.
\end{theorem}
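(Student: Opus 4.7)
The plan is to follow the proof of \cite[Theorem 3.12]{BBKO21} essentially verbatim, replacing the projection minion $\mathscr{P}$ by an arbitrary $\Mscr$ throughout and exploiting the universal property of the free structure: for any $\Sigma$-structure $\C$, homomorphisms $\bF_\Mscr(\A) \to \C$ correspond bijectively to minion homomorphisms $\Mscr \to \pol(\A, \C)$. In particular, satisfying assignments of any minor condition in $\Mscr$ match up with homomorphisms into $\bF_\Mscr(\A)$, while satisfying assignments in $\pol(\A, \B)$ match up with homomorphisms into $\B$. This is the dictionary that translates minor conditions into PCSP instances and back.

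For the first direction, given a minor condition $\phi$ of arity at most $N$ with variables $x_1, \dots, x_k$ of arities $n_1, \dots, n_k$ and atoms $x_i^{\pi} = x_j$, I construct $\I_\phi$ by taking the disjoint union $\bigsqcup_i \A^{n_i}$ of the powers of $\A$ viewed as $\Sigma$-structures and, for every atom $x_i^\pi = x_j$ and every $\bm{a} \in A^{n_j}$, identifying the vertex $(i, \bm{a} \circ \pi)$ with $(j, \bm{a})$. A homomorphism from $\I_\phi$ to any $\Sigma$-structure $\C$ is then the same data as a family of polymorphisms $f_i \in \pol(\A, \C)(n_i)$ satisfying $f_i^\pi = f_j$ for each atom, i.e., a satisfying assignment of $\phi$ in $\pol(\A, \C)$. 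Specialising to $\C = \bF_\Mscr(\A)$ gives $\Mscr \models \phi$ iff $\I_\phi \to \bF_\Mscr(\A)$, and specialising to $\C = \B$ lets us read off an assignment in $\pol(\A, \B)$ from any homomorphism $H : \I_\phi \to \B$. The pair $\alpha : \phi \mapsto \I_\phi$ and $\beta : (\phi, H) \mapsto (\text{the extracted assignment})$ is log-space computable by direct inspection of the construction.

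For the second direction, under the additional arity assumption, given an instance $\I$ I build $\phi_\I$ with variables $x_v$ of arity $|A|$ for each $v \in I$ and $y_{\bm{r}}$ of arity $|R^A|$ for each $R \in \Sigma$ and each $\bm{r} \in R^I$. For every such $\bm{r}$ and every $i \in [\arty(R)]$ I add the atom $y_{\bm{r}}^{\pi_i} = x_{r(i)}$, where $\pi_i : [|R^A|] \to [|A|]$ is the $i$-th projection under fixed identifications of $A$ with $[|A|]$ and $R^A$ with $[|R^A|]$. The arity of $\phi_\I$ is $\max(|A|, \max_R |R^A|) \leq N$. Unfolding the definition of the free structure, satisfying assignments of $\phi_\I$ in $\Mscr$ correspond exactly to homomorphisms $\I \to \bF_\Mscr(\A)$, and satisfying assignments in $\pol(\A, \B)$ correspond to homomorphisms $\I \to \B$; both correspondences are natural and invertible in log-space, giving the required $\alpha, \beta$.

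The main point requiring care is verifying that the quotient in the first direction preserves exactly the intended relational structure on $\I_\phi$, with no spurious identifications or extra tuples in any $R^{I_\phi}$: this amounts to checking that the relations of $\I_\phi$ are precisely those inherited from $\bigsqcup_i \A^{n_i}$ modulo the equalities dictated by the atoms. A secondary concern is bookkeeping — matching variable arities, projections $\pi_i$, and identifications of $A$, $R^A$ with initial segments of $\NN$ — but these are routine and are exactly the checks carried out in \cite[Theorem 3.12]{BBKO21}; they go through unchanged in the more general setting because $\bF_\Mscr(\A)$ plays the same universal role with respect to $\Mscr$ that $\A$ plays with respect to $\mathscr{P}$.
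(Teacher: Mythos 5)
Your proposal is correct and follows essentially the same route as the paper's own proof, which — like yours — adapts the argument of \cite[Theorem 3.12]{BBKO21} verbatim with $\Mscr$ in place of the projection minion. The only cosmetic difference is that you describe $\I_\phi$ as the quotient of the disjoint union of power structures $\bigsqcup_i \A^{n_i}$, whereas the paper first builds the universe as a quotient set and then specifies each $R^{I_\Phi}$ explicitly; since the relations of a disjoint union of structures lie within single summands, the two constructions yield the same structure.
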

\begin{proof}
    The reductions are essentially the ones given in the proof of \cite[Theorem 3.12]{BBKO21}. We sketch them below. 
    For the rest of the proof we identify $A$ with $[|A|]$, and $R^A$ with $[|R^A|]$ for each relation symbol $R$ in some fixed way. Given a relation symbol $R$ and an index $i\in [\arty(R)]$ we write $\Pi_{R,i}: R^A \rightarrow A$ for the map $\bm a\mapsto a_i$.\par
    \emph{Reducing from $\sPMC$ to $\spcsp$.} Let $\Phi$ be a minor condition of arity at most $N$. Without loss of generality, we may assume that $\Phi$ is of the form 
    $\exists x_1, \dots, x_k \varphi(x_1,\dots, x_k)$, where $\varphi$ is a conjunction of formulas of the form $x_i = x_j^\pi$. For each $i\in [k]$ we let $n_i$ be the arity of $x_i$. The first part of the reduction is the construction of an structure $\I_\Phi$ satisfying $\I_\Phi \rightarrow \bF_\Mscr(\A)$ if $\Mscr \models \Phi$. First, define the set $S=\{ x_i(\pi) \, \vert i\in [k], \pi \in A^{[n_i]} \}$. Now, we identify two elements $x_i(\pi)$ and $x_j(\sigma)$ if there is an identity in $\varphi$ of
    the form $x_i = x_j^\gamma$, where $\sigma = \pi\circ \gamma$. We define the universe $I_\Phi$ as
    the result of performing all these identifications in $S$, and write $\qclass{x_i(\pi)}$ to denote the equivalence class of $x_i(\pi)$ in $I_\Phi$. Given a relation symbol $R$ of arity $m$, we define $R^{I_\Phi}$ as the set of tuples of the form $(\qclass{x_{i}(\pi_1)},\dots,\qclass{x_{i}(\pi_m)})$ for which there is a map $\sigma\in [R^A]^{[n_i]}$ satisfying 
    $\Pi_{R,j}\circ \sigma = \pi_j$ for each $j\in [m]$. The structure $\I_\Phi$ can be built in log-space. 
    Now, if $x_i \mapsto p_i$ is a satisfying assignment of $\Phi$ in $\Mscr$, then the map
    $\qclass{x_i(\pi)}\mapsto p_i^\pi$ is a well-defined homomorphism from $\I_\Phi$ to $\bF_\Mscr(\A)$. Now, let $F: \I_\Phi \rightarrow \B$ be a homomorphism. The second part of the reduction computes an assignment of $\Phi$ over $\pol(\A, \B)$ from $F$. For each $i\in [k]$ we define $f_i: \A^{n_i} \rightarrow \B$ as the map 
    $\bm{a} \mapsto F(\qclass{x_i(\bm a)}$, where we recall that $\bm a$ can be seen as a map from $[n_i]$ to $A$. It is routine to verify that $f_i$ is indeed a homomorphism and that the map $x_i \mapsto f_i$ is a satisfying assignment of $\Phi$ in $\pol(\A, \B)$ that can be obtained in log-space. \par
    \emph{Reducing from $\spcsp$ to $\sPMC$.} Let $\I$ be an instance of $\spcsp(\bF_\Mscr(\A), \B)$. The first map of our reduction constructs in log-space a minor condition $\Phi_\I$ of arity at most $N$ that satisfies $\Mscr\models \Phi_\I$ if $\I\rightarrow \bF_\Mscr(\A)$. Let $\Sigma$ be the relational signature of all structures under consideration. Then the condition $\Phi_\I$ is defined as 
    \[
    \bigexists_{v\in I}^{|A|} x_v \bigexists_{R \in \Sigma, r \in R^I}^{|R^A|} y_{r} 
    \left(
    \bigwedge_{R \in \Sigma,  r \in R^I, i\in [\arty(R)]} 
    x_{r(i)} = y_{r}^{\Pi_{R,i}}
    \right).
    \]
    The arity of this minor condition is the maximum of $|A|$ and $|R^A|$ for all $R\in \Sigma$, so it is bounded by $N$ by assumption. 
    Now suppose that $F:\I \rightarrow \bF_\Mscr(\A)$ is a homomorphism. 
    Given $R\in \Sigma$, and $r\in R^I$
    we write $\widehat{F}(r)$ for the element $p\in \Mscr(R^A)$
    witnessing that $(F(r(1)),\dots, F(r(\arty(R)))\in R^{F_\Mscr(A)}$. Then the map $x_v\mapsto F(v)$ 
    together with $y_{r} \mapsto \widehat{F}(r)$ is a satisfying assignment of $\Phi_\I$ in $\Mscr$. 
    Now, let $H$ be a satisfying assignment of $\Phi_\I$ in $\pol(\A, \B)$. The second map of the reduction needs to construct a homomorphism $F:\I\rightarrow \B$ in log-space using $H$. Given $v\in I$, we define $F(v)$ as
    $H(x_v)(\bm a)$, where $\bm a\in A^{|A|}$ is our fixed enumeration of $A$. Now it is routine to check that 
    $F$ is indeed a homomorphism.
\end{proof}

By this point showing our main hardness reduction is just a matter of composing the reductions we have shown up until now.

\begin{theorem}\label{th:reduction_hardness}

 Let $\bG$ be a $\Sigma$-structure, $\Mscr$ a minion, 
 $(\Ical, \Dcal, \Psi, h)$ a pattern over $\Mscr$
 and $r\leq h$ a natural number. Suppose that 
    \begin{enumerate}[label=({\arabic*})]
        \item $\bGS_\Ical$ is finitely equivalent to $\bG$, and
        \item $r\geq \arty(Q)$ for every $Q\in \Dcal$ and $Q=U^\Ical$.
    \end{enumerate}
Then for each $\C\in \Hom(\bG, \cdot)$ there is a finite template $(\bK_r^h, \B)$
such that $\spcsp(\bG, C)$ is many-one reducible to 
$\spcsp(\bm{F}_\Mscr(\bK^h_r), \B)$.
\end{theorem}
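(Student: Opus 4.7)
\medskip

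\noindent\textbf{Proof plan.} The plan is to build $\B$ as a free structure and then chain together the transfer results already established in the paper. Given $\bT\in \Hom(\bG,\cdot)$, I would first use condition~(1) together with \Cref{le:finitely_equivalent} to conclude that $\bGS_\Ical \rightarrow \bT$ (so that the exponential construction makes sense and \Cref{le:patterns} applies), and to identify the problems $\spcsp(\bG,\bT)$ and $\spcsp(\bGS_\Ical,\bT)$: they are defined on the same set of finite instances because $\I \rightarrow \bG$ iff $\I \rightarrow \bGS_\Ical$, and they ask for exactly the same output. After this, $\bG$ can be forgotten.

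Next I would set $\Jcal=\Ical/\Dcal$, $\Nscr=\bT^\Jcal$, and apply \Cref{le:rank_exponential_minion}: condition~(2) gives $r\geq \arty(Q)$ for all $Q\in \Dcal$ and $r\geq \arty(U^\Ical)$, so $\Nscr$ has rank at most $r$. Taking $\B:= \bF_\Nscr(\bK_r^h)$, \Cref{th:minion_to_template} (which requires exactly $h\geq r$) yields an isomorphism
\[
 \pol(\bK_r^h, \B)\;\cong\; \Nscr^{(h)},
\]
and \Cref{th:closure_equivalent} yields a partial isomorphism $\Nscr \overpartialmap{h}\Nscr^{(h)}$. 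Composing gives a partial isomorphism $\Theta\colon \Nscr \overpartialmap{h} \pol(\bK_r^h,\B)$ defined on every element of arity $\leq h$.

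With $\B$ in hand, I would assemble the reduction as the composition of three many-one reductions. First, \Cref{le:patterns} applied to the pattern $(\Ical,\Dcal,\Psi,h)$ gives a many-one reduction $\spcsp(\bGS_\Ical,\bT)\to \sPMC_h(\Mscr,\Nscr)$. Second, since the minor condition $\Psi_\I$ produced by the pattern has arity $\leq h$, a satisfying assignment in either $\Nscr$ or $\pol(\bK_r^h,\B)$ uses only elements of arity $\leq h$; the partial isomorphism $\Theta$ therefore transports assignments in both directions, giving a many-one reduction $\sPMC_h(\Mscr,\Nscr)\to \sPMC_h(\Mscr,\pol(\bK_r^h,\B))$ that is essentially the identity on minor conditions and applies $\Theta$ or $\Theta^{-1}$ pointwise on assignments. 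Third, \Cref{th:spcsp_to_spmc} gives a log-space reduction $\sPMC_h(\Mscr,\pol(\bK_r^h,\B))\to \spcsp(\bF_\Mscr(\bK_r^h),\B)$; its hypothesis $N\geq |A|, |R^A|$ is met with $N=h$ because $|K_r^h|=r\leq h$ and the single relation $R^{K_r^h}$ of $\bK_r^h$ has at most $h$ tuples by construction. Composing the three reductions and prepending the identification $\spcsp(\bG,\bT)=\spcsp(\bGS_\Ical,\bT)$ yields the desired many-one reduction.

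The only delicate step is the middle one: to carry out the translation between $\sPMC_h(\Mscr,\Nscr)$ and $\sPMC_h(\Mscr,\pol(\bK_r^h,\B))$ in polynomial time I need to represent the partial isomorphism $\Theta$ effectively. This reduces to computing, for a given $f\in \Mscr$ of arity $\leq h$, the corresponding system of $h$-ary minors over $\Nscr$ (which is then re-read as a polymorphism through the isomorphism of \Cref{th:minion_to_template}); since $\Nscr$ is locally finite at arity $\leq h$ (its $r$-ary part is finite by the rank bound, and \Cref{le:rank} then bounds all arities up to $h$), this is a bounded-size bookkeeping step and does not affect polynomial runtime. Routine verifications that each of the three composed reductions is correct on both the accept and reject branches complete the argument.
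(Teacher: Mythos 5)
Your proof is correct and follows essentially the same route as the paper's: define $\Nscr=\bT^{\Ical/\Dcal}$, chain the reduction $\spcsp(\bG,\bT)\to\sPMC_h(\Mscr,\Nscr)$ from \Cref{le:patterns}, transport assignments through the partial isomorphism coming from \Cref{le:rank_exponential_minion}, \Cref{th:minion_to_template}, and \Cref{th:closure_equivalent} with $\B=\bF_\Nscr(\bK_r^h)$, and finish with \Cref{th:spcsp_to_spmc}. You are, if anything, a bit more careful than the paper at two points that it leaves implicit: the paper invokes \Cref{le:patterns} directly on $\spcsp(\bG,\bT)$ although the lemma as stated produces a reduction from $\spcsp(\bGS_\Ical,\bT)$, so the identification of the two problems (and the derivation $\bGS_\Ical\rightarrow\bT$) via finite equivalence and \Cref{le:finitely_equivalent}, which you spell out, is genuinely needed; and you explicitly justify that the middle translation step is polynomial-time because the restriction of the partial isomorphism to arities $\leq h$ is a finite table, and that the arity bound in \Cref{th:spcsp_to_spmc} is met with $N=h$ since $|K_r^h|=r\leq h$ and $|R^{K_r^h}|\leq h$.
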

\begin{proof}
Let $\C \in \Hom(\bG, \cdot)$. Let $\Nscr= \C^{\Ical/\Dcal}$. By \Cref{le:patterns} there is a many-one reduction from $\spcsp(\bG, \bT)$ to
$\sPMC_h(\Mscr,\Nscr)$. 
By  \Cref{le:rank_manifold_minion} the rank of $\Nscr$ is at most $r$, 
so the minion $\Nscr$ and the template $(\bK_r^h, \B)= (\bK_r^h, \bF_{\Nscr}(\bK_r^h))$
satisfy \Cref{cor:minion_to_template}. Hence, there is a partial isomorphism $F: \pol(\bK_r^h, \B) \overpartialmap{h}  \Nscr$. This yields a many-one reduction from 
$\sPMC_h(\Mscr, \Nscr)$ to $\sPMC_h(\Mscr, \pol(\bK_r^h, $ $\B))$. 
Finally, by \Cref{th:spcsp_to_spmc},
the problem $\sPMC_h(\Mscr, \pol(\bK_r^h, \B))$ has a many-one reduction to
$\spcsp(\bF_\Mscr(\bK_r^h), \B)$. 
\end{proof}

\section{Proof of the Main Results}
\label{sec:main_proofs}

In this section we present the proofs of \Cref{th:AIP_main,th:BLP_main,th:BLP+AIP_main}, and 
\Cref{th:minor_identities_main}. Each of these proofs consists of reducing one of the source problems described in \Cref{sec:sources} to our target problems through the reductions shown in \Cref{sec:reductions}. The lengthy lists of requirements for our reductions indicate that applying them requires some work: even after coming up with a suitable interpretation $\Ical$ and a suitable description $\Dcal$, proving that $\Ical$ is internal w.r.t. $\Dcal$ requires multiple applications of 
\Cref{le:internal_criterion}, which means constructing multiple logical formulas, and analysing their satisfying assignments on the original minion $\Mscr$ and the quotient $\Mscr/\Dcal$.

To make matters worse, this section makes painfully apparent that there is, in principle, no way of transferring our results between the minions $\Mscr_{\AIP}, \Mscr_{\BLP}, \Mscr_{\BLP+\AIP}$ despite their similarities. This means that we need to write very similar proofs (with small, but significant changes) over and over. We remark that minion homomorphisms do not yield reductions for any of our results. Indeed, suppose that $\Mscr \rightarrow \Nscr$. Then, in general there is no relation between the decidability of the finite templates $(\A, \B)$ such that $\Mscr \rightarrow \pol(\A, \B)$, and the decidability of those satisfying $\Nscr \rightarrow \pol(\A, \B)$. Similarly, 
in the absence of additional restrictions, it is possible that all homomorphisms $F: \Nscr \rightarrow \pol(\A, \B)$ are non-computable, but there is a computable homomorphism $H: \Mscr \rightarrow \pol(\A, \B)$, or vice versa. As for hardness, it is true that
$\spcsp(\bF_{\Mscr}(\A), \B)$ has a straight-forward reduction to $\spcsp(\bF_\Nscr(\A), \B)$, so it may seem that the family of rounding problems arising from $\Mscr$ is easier than the one resulting from $\Nscr$. However, this is not true in general: There may be templates $(\A, \B)$ for which $\Mscr\rightarrow \pol(\A, \B)$ but $\Nscr \not \rightarrow \pol(\A, \B)$, implying that there are \emph{more} rounding problems for $\Mscr$ than for $\Nscr$, and these may be hard. \par

\paragraph{How to read this section} ~\\
\par
\vspace{-0.1em}
\noindent
We try to avoid being meticulous and redundant to the point of exhausting the reader, and being so careless as to skip relevant details. The bulk of the following subsections is spent proving that several interpretations are internal with respect to corresponding descriptions. This is done through several claims. We believe that after getting acquainted with the typical arguments, the proofs of many of these claims are routine. \par

\Cref{sec:AIP,sec:BLP,sec:BLP+AIP} are conveniently ordered from less to more difficult, and they all employ different versions of the arguments in \Cref{sec:AIP}. These subsections are further divided to show interpretations of $\bm{\Gamma}^+$ and interpretations of $\bm\Gamma$. Our interpretations of $\bm{\Gamma}^+$ are relatively simpler than the ones of $\bm\Gamma$, because the interpretations in this second group are meant to produce Boolean templates. Hence, a good overview of the following subsections is the following.
\begin{itemize}
    \item \textbf{Interpretations of the super-grid $\bm{\Gamma}^+$} (ordered from simpler to more complex):
    \begin{center}
        \Cref{sec:3-dim_grid_AIP} (\AIP), \, \, \, \,
        \Cref{sec:3-dim_grid_BLP} (\BLP), \, \, \, \, and \, \,
        \Cref{sec:3-grid_BLP+AIP} (\BLP+\AIP).
    \end{center}
        \item \textbf{Interpretations of the grid $\bm{\Gamma}$} (ordered from simpler to more complex):
    \begin{center}
        \Cref{sec:grid_AIP} (\AIP), \, \, \, \, and \, \,
        \Cref{sec:grid_BLP} (\BLP).
    \end{center}
\end{itemize}

The proofs in \Cref{sec:WNU} and \Cref{sec:cyclic} handle WNU and cyclic polymorphisms respectively, and stand on their own. In both these cases we define minions characterizing the presence of those polymorphisms, and then find ways to interpret useful structures (e.g., grids) on them. However, these minions lack the arithmetical structure of $\Mscr_{\AIP}, \Mscr_{\BLP}$, and $\Mscr_{\BLP+\AIP}$, so we need to use different arguments. There is ample difference between \Cref{sec:WNU} and \Cref{sec:cyclic}, but they handle minions constructed in a similar way, and some arguments are analogous. They are, again, ordered from simpler to more complex, so we recommend the reader to tackle 
\Cref{sec:WNU} first. 

\subsection{The AIP Algorithm}
\label{sec:AIP}

We prove \Cref{th:AIP_main} in this section. 
To prove item (1) we give an interpretation of the super-grid $\bm{\Gamma}^+$ over $\Mscr_\AIP$, shown in \Cref{sec:3-dim_grid_AIP}. 
Items (2),(3), and (4) are proven similarly, by showing in \Cref{sec:grid_AIP} a suitable interpretation of the grid $\bm{\Gamma}$ over $\Mscr_\AIP$.

\subsubsection{AIP: Interpreting the Super-Grid}
\label{sec:3-dim_grid_AIP}
The following interpretation $\Ical$ 
induces a global structure $\bGS_\Ical$ that is finitely equivalent to $\bm{\Gamma}^+$. \par

\begin{interpret} The $\Sigma_{\Gamma^+}$-interpretation $\Ical$ over $\Mscr_\AIP$ is given by
    \begin{flalign*} 
 U^\Ical = 
\Bigl\{ (m_1, m_2, m_3, n) & \in \Mscr_\AIP(4) \, \Big\vert \, m_i> 0 \text{ for all } i=1,2,3\Bigr\}, &&
\end{flalign*}

\begin{flalign*}
& O^\Ical   = \Big\{ (1,1,1,-4) \Big\}, \quad \text{ and } \quad \Pi^\Ical_{O,1} =\id,  &&
\end{flalign*}
\begin{flalign*}
& E_i^\Ical  =
\Big\{
(m_1, m_2, m_3, 1, n) \in \Mscr_\AIP(5) \Big\}, \quad \text{ and }   && \\
&
\Pi^\Ical_{E_i, j} = 
\begin{cases}
    (1,2,3,4,4), \text{ when $j=1$},\\
    (1,2,3,i,4) \text{ when $j=2$},
\end{cases}
\text{ for each $i\in [3]$,}  &&
\end{flalign*}

\begin{flalign*}
& \Ebb_i^\Ical =
\Big\{ 
(m_1, m_2, m_3, m_i, n) \in \Mscr_\AIP(5) \Big\}, \quad \text{ and }  && \\ & \Pi^\Ical_{\Ebb_i,j} =
 \begin{cases}
     (1,2,3,4,4) \text{ for $j=1$}, \\
     (1,2,3,i,4) \text{ for $j=2$}
 \end{cases} \text{ for each $i\in [3]$.}  &&
\end{flalign*}
\end{interpret}

\vspace{1em}
\noindent 
This way, $\bm{\Gamma}^+$ is easily seen to be isomorphic to $\bGS_\Ical$ via the homomorphism $(m_1, m_2, m_3)\mapsto (m_1, m_2, m_3, 1- m_1 - m_2- m_3)$. Now we define a description $\Dcal\subseteq 2^{\Mscr_\AIP}$ so that $\Ical$ is internal at arity $5$ w.r.t. $\Dcal$. 
\begin{desc} The description $\Dcal \subseteq 2^{\Mscr_{\AIP}}$ consists of the predicates
\begin{flalign*}
&D_<= \Big\{ (m_1, m_2, m_3)\in \Mscr_\AIP \, \Big\vert \, m_1 < m_2 \Big\}, &&  \\
&D_1=\Big\{ (1,0) \Big\}. &&
\end{flalign*}
\end{desc}
\vspace{1em}

The interpretation $\Ical$ and the description $\Dcal$ are the simplest in \Cref{sec:main_proofs}. The fact that we are not aiming to obtain Boolean templates means that we have virtually no restrictions on $\Ical$ and $\Dcal$, and the minion $\Mscr_{\AIP}$ already has the arithmetical structure required to express unit increments and multiplication times two.  The following claims show that $\Ical$ is internal at arity $5$ (w.r.t. $\Dcal$): \\~\\
\noindent
\textbf{Claim 1: $R^\Ical$ is $\Dcal$-stable for each $R\in \Sigma_{\Gamma^+}$.} This follows by a direct application of the definitions. \\~\\
\noindent
    \textbf{Claim 2: $D_1$ is internal at arity $5$.}
    Indeed, the minor condition $\phi_1(x)\equiv
    x= x^{(1,1)}$ is an internal definition of $D_1$.\\~\\
    \noindent
    \textbf{Claim 3: } \textbf{$O^\Ical$ is internal at arity $5$.} The following is an internal definition of $O^\Ical$:
    \[
    \phi_O(x)\equiv  
    \phi_1(x^{(1,2,2,2)}) \wedge
    \phi_1(x^{(2,1,2,2)}) \wedge
    \phi_1(x^{(2,2,1,2)}).
    \] 
    ~\\
    \noindent
    \textbf{Claim 4:} \textbf{$E_i^\Ical$ is internal at arity $5$ for all $i\in [3]$.} The following is an internal definition of $E_i^\Ical$:
    \[
    \phi_{E_i}(x) \equiv \phi_1(x^{(2,2,2,1,2)}).
    \] ~\\
    \noindent
    \textbf{Claim 5: }  \textbf{$\Ebb_i^\Ical$ is internal at arity $5$ for all $i\in [3]$.}
    The following is an internal definition of $\Ebb_i^\Ical$:
    \[
    \phi_{E_i}(x) \equiv x^{\sigma}= x^{\tau},
    \]
    where $\sigma\in [3]^{[5]}$ is the map sending $i$ to $1$, $4$ to $2$, and the other elements to $3$,
    and $\tau\in [3]^{[5]}$ sends $i$ to $2$, $4$ to $1$, and the other elements to $3$. The key insight is that if $f^\sigma \sim_\Dcal f^\tau$ for some element $f\in \Mscr_\AIP(5)$, then it cannot be that $f(i)< f(4)$ or $f(i)> f(4)$, so it must be that $f(i) = f(4)$. \\~\\ \noindent
    \textbf{Claim 6: } \textbf{Let $i\in [3]$. Suppose that $\phi(x)$ is an internal reference to $U^\Ical$. Then the following formula is also an internal reference to $U^\Ical:$}
    \begin{align*}
   \phi^\prime(x) \, \equiv \, &
    \exists y \exists z \Bigl(
    \phi_{E_i}(z) \wedge \phi(y) \Bigr. \Bigl.
    \wedge \, 
    y = z^{(1,2,3,4,4)} \, \wedge \,
    x = z^{(1,2,3,i,4)}
    \Bigr).
    \end{align*}
    \textbf{Moreover, if $f$ satisfies $\phi$ on $\Mscr_\AIP$, then
    $g$ satisfies $\phi^\prime$ on $\Mscr_\AIP$,
    where $g(i)= f(i)+1$, $g(4)= f(4) - 1$, and
    $g(j)=f(j)$ for $j\neq i,4$.}
    Let us show that $\phi(x)$ is indeed an internal reference to $U^\Ical$.
    Suppose that $\Mscr_\AIP/\Dcal \models \phi(\qclass{f_x})$  with
    $\qclass{f_y}, 
    \qclass{f_z}$ as existential witnesses for $y,z$. The formula $\phi$ is an internal reference to $U^\Ical$, so
    $f_y\in U^\Ical$.
    By a similar reasoning we also obtain that $f_z\in E_i^\Ical$. It must hold that $f_z^{(1,2,3,4,4)}\sim_\Dcal f_y$, so $f_z^{(1,2,3,4,4)}\in U^\Ical$ (by Claim 1).
    Let $f_z=(m_1, m_2, m_3, 1, m_4)$. The fact that $f^{(1,2,3,4,4)}\in U^\Ical$ means that $m_1,m_2, m_3>0$. Hence, the fist three elements in $f^{(1,2,3,i,4)}$ must also be positive, so this tuple belongs to $U^\Ical$ as well. Using the fact that $f_z^{(1,2,3,i,4)}\sim_\Dcal f_x$ we conclude that $f_x\in U^\Ical$, proving that
    $\phi$ is an internal reference to $U^\Ical$. \par
    Finally, suppose that $\Mscr_\AIP\models\phi(f)$ for some element $f$, and let $g$ be defined as in the statement. Then $g$ satisfies $\phi^\prime(x)$
    on $\Mscr_\AIP$ with $f$ as an existential witness for $y$ and the tuple
    $(f(1), f(2), f(3), 1, f(4)-1)$ as 
    an existential witness for $z$. \\~\\
    \noindent
   \textbf{Claim 7: } \textbf{Let $i\in [3]$. Suppose that $\phi(x)$ is an internal reference to $U^\Ical$. Then the following formula is also an internal reference to $U^\Ical:$}
    \begin{align*}
    \phi^\prime(x) \equiv 
    \exists y \exists z \Bigl( 
    \phi_{\Ebb_i}(z) \wedge \phi(y)  \wedge 
    y = z^{(1,2,3,4,4)} \wedge 
    x = z^{(1,2,3,i,4)}
    \Bigr).
    \end{align*}
    \textbf{Moreover, if $f$ satisfies $\phi$ on $\Mscr_\AIP$, then
    $g$ satisfies $\phi^\prime$ on $\Mscr_\AIP$,
    where $g(i)= 2f(i)$, $g(4)= f(4) - f(i)$, and
    $g(j)=f(j)$ for $j\neq i,4$.} This follows analogously to the previous claim. \\~\\
    \noindent
    \textbf{Claim 8: } \textbf{$U^\Ical$ is internal at arity $5$.} We prove this claim using \Cref{le:internal_criterion}. We define an internal reference $\phi_{n_1,n_2,n_3}(x)$ to $U^\Ical$ inductively for each $(n_1,n_2,n_3)\in \NN^3$ following the lexicographical order, satisfying that $\Mscr_\AIP \models \phi_{n_1,n_2,n_3}( (n_1,n_2,n_3, 1- n_1 - n_2 - n_3))$. We define $\phi_{1,1,1}(x) \equiv \phi_O(x)$. Now let $(n_1, n_2, n_3)\in \NN^3$ be different from $(1,1,1)$, $i\in [3]$ be the largest index for which $n_i>1$. We have two cases. Suppose that $n_i$ is odd. Then we let $m_i=n_i-1$, and $m_j=n_j$ for $j\in [3]$, $j\neq i$, and define
    \begin{align*}
    \phi_{n_1, n_2, n_3}(x) \equiv 
    \exists y \exists z \Bigl(
    \phi_{m_1, m_2, m_3}(y) \wedge
    \phi_{E_i}(z) \wedge 
    z^{(1,2,3,4,4)}= y \wedge 
    z^{(1,2,3,i,4)}= x
    \Bigr).
    \end{align*}
    Otherwise, if $n_i$ is even, we 
    let $m_i=n_i/2$, and $m_j=n_j$ for $j\in [3]$, $j\neq i$, and define
    \begin{align*}
    \phi_{n_1, n_2, n_3}(x) \equiv 
    \exists y \exists z \Bigl( 
    \phi_{m_1, m_2, m_3}(y) \wedge
    \phi_{\Ebb_i}(z) 
     \wedge 
    z^{(1,2,3,4,4)}= y \wedge 
    z^{(1,2,3,i,4)}= x
    \Bigr).
    \end{align*}
    Now Claims 6 and 7 prove the statement. \\~\\

\begin{proof}[Proof of item (1) in \Cref{th:AIP_main}]
The claims in this section show that $\Ical$ is internal w.r.t. $\Dcal$ at arity $5$. We also have that $\bGS_\Ical$ is isomorphic to $\bm\Gamma^+$. Hence, the result will follow from applying \Cref{th:reduction_hardness}
after defining a $5$-ary pattern $\Psi$ of internal references to $\Ical$ w.r.t. $\Dcal$.
Given a structure $\I$ satisfying $\I\rightarrow \bm{
\Gamma}^+$, we find a homomorphism $v\mapsto (m_v, n_v, o_v)$ from $\I$ to  $\bm{
\Gamma}^+$ in polynomial time. Moreover, we can assume that $\max_v \log_2(m_v n_v o_v) \leq |I|$.
Observe that the map 
$v\mapsto (m_v, n_v, o_v, 
 1- m_v - n_v - o_v)$ is a homomorphism from $\I$ to $\bGS_\Ical$. We define
\begin{align*}
\Psi_\I\equiv 
\bigexists_{v\in I}
x_v 
\bigexists_{R\in \Sigma_{\Gamma^+
}, r\in R^I} x_r 
\left(
\bigwedge_{v\in I}
\phi_{m_v,n_v,o_v}(x_v) \right) \bigwedge 
\left(
\bigwedge_{R\in \Sigma_\Gamma, r\in R^I}
\phi_R(x_r)  \bigwedge_{i\in [\arty(R)]}
x_r^{\Pi^\Ical_{R,i}} = x_{r(i)}
\right).
\end{align*}
Here, the minor conditions $\phi_{m,n,o}(x)$, and $\phi_R(x)$ for $R\in \Sigma_{\Gamma^+}$, are the internal references defined in the previous claims.
To see that $\Psi_\I$ can be computed in polynomial time, observe that the formula $\phi_{m, n, o}(x_v)$ can be constructed inductively in time $O(\log_2(mno))$, and
the maximum of $\log_2(m_v,n_v, o_v)$  is at most $|I|$ for $v\in I$. Now we only need to show that
$\Mscr_\AIP \models \Psi_\I$ in order to prove that
$(\Ical, \Dcal, \Psi)$ is a valid pattern.
We give existential witnesses for the variables in $\Psi_\I$ to show that $\Mscr_\AIP\models \Psi_\I$.
For each $v\in I$, we choose \[
f_v=(m_v, n_v, o_v, 1 - m_v - n_v - o_v)
\] 
as the existential witness for $x_v$. 
By Claim 6, we know that $f_v$ satisfies $\phi_{m_v, n_v, o_v}(x)$ on $\Mscr_\AIP$. The fact that the map $v\mapsto f_v$ is a homomorphism from $\I$ to $\bGS_\Ical$ means that for each $R\in \Sigma_{\Gamma^+}, r\in R^I$ there is some $f_r\in R^\Ical$ satisfying that 
\[f_r^{\Pi_{R,i}^\Ical}= f_{r(i)} \quad \quad \text{for each $i\in \arty(R)$}. \] 
The element $f_r$ must also satisfy $\phi_R(x)$ on $\Mscr_\AIP$, because $\phi_R(x)$ is an internal definition of $R$. Hence, $f_r$ is a valid existential witness for $x_r$. This shows that $\Mscr_\AIP\models \Psi_\I$, and completes the proof that $(\Ical, \Dcal, \Psi, 5)$ is a pattern over $\Mscr_\AIP$. \par
Now \Cref{th:AIP_main}-(1) follows from \Cref{th:reduction_hardness} together with \Cref{prop:3-grid}-(1). Observe that $3\geq \arty(U^\Ical)$ and $3\geq \arty(P)$ for all $P\in \Dcal$, so it is enough to consider templates of the form $(\bK_3^5, \B)$ in 
\Cref{th:AIP_main}-(1).
\end{proof}

\subsubsection{AIP: Interpreting the Grid}
\label{sec:grid_AIP}

The following interpretation $\Ical$
induces a global structure which is finitely equivalent to $\bm{\Gamma}$. 

\begin{interpret} The $\Sigma_\Gamma$-interpretation $\Ical$ over $\Mscr_{\AIP}$ is given by
\begin{flalign*}
U^\Ical=  \Bigl\{
\left( 2^m3^n, y\right) \in  \Mscr_\AIP(2) \, \Big\vert \Bigr.
\Bigl.
\, m, n \text{ non-negative integers }\Bigr\} &&
\end{flalign*}
\begin{flalign*}
   O^\Ical= \Bigl\{(1,0)\Bigr\}, \quad \text{and} \quad \Pi^\Ical_{O,1}=\id, &&
\end{flalign*}
\begin{flalign*}
E_1^\Ical= \Bigl\{ 
(m, m, n, o) \in \Mscr_\AIP(4)
\, \Big\vert \, m+n=1 \Bigr\}, \quad \text{and} \quad 
\Pi^\Ical_{E_1,j} =
\begin{cases}
   (1,2,2,2) \text{ for $j=1$, and } \\
   (1,1,2,2) \text{ for  $j=2$.} 
\end{cases}
&&
\end{flalign*}

\begin{flalign*}
E_2^\Ical = \Bigr\{
(m, m, m, n, o) \in \Mscr_\AIP(5) \, \Big\vert
\, m + n= 1\Bigl\},  \quad \text{and} \quad 
\Pi^\Ical_{E_2,j}=
\begin{cases}
    (1,2,2,2,2) \text{ for $j=1$, and } \\
    (1,1,1,2,2) \text{ for $j=2$.}
\end{cases} &&
\end{flalign*}

\end{interpret}

\noindent 
Observe the grid structure 
$\bm{\Gamma}$ is isomorphic to 
$\bGS_\Ical$ via the bijection 
\[
(m,n) \mapsto (2^{m-1}3^{n-1}, 1 - 2^{m-1}3^{n-1} ). \]
Moreover, this map is computable under the plain encoding for $\bm{\Gamma}$ and $\Mscr_\AIP$. Next, we define a description $\Dcal\subseteq 2^{\Mscr_\AIP}$
so that
$\Ical$ is internal w.r.t. $\Dcal$ at arity $5$. 
\begin{desc}
The description $\Dcal\subseteq 2^{
\Mscr_{\AIP}}$ consists of the predicates $U^\Ical, O^\Ical$ defined in the previous interpretation.
\end{desc}

We have defined the interpretation $\Ical$ and description $\Dcal$ with the goal to obtain templates $(\A, \B)$ where $\A$ is Boolean after applying \Cref{cor:minion_to_template}. By \Cref{le:rank_manifold_minion}, this requires $\arty(U^\Ical)\leq 2$ and $\arty(D)\leq 2$ for all predicates $D\in \Dcal$. This is the reason behind the more awkward encoding of $\Gamma$ compared to the previous subsection. An added difficulty is that because $\Dcal$ cannot contain ternary predicates, we cannot speak about equality between coordinates in a straight-forward way. In the previous subsection, $\Dcal$
contained the predicate $D_<= \{ (m_1, m_2, m_3)\in \Mscr_\AIP \, \vert \, m_1 < m_2 \}$, so given an element $f=(n_1, n_2, n_3, n_4)\in \Mscr_\AIP(4)$ we could tell whether its first two coordinates were equal by checking whether $\qclass{f}=\qclass{f}^{(2,1,3,4)}$ in the quotient $\Mscr_{\AIP}/\Dcal$
\footnote{We do this when proving Claim 5 in \Cref{sec:3-dim_grid_AIP}.}. In this section we cannot compare coordinates in this direct way. But instead, we can still verify that the first two coordinates of $f$ are equal as long as we have, for example, that $n_1+n_3=1$ and $n_2 + n_3=1$. This motivates the definition of the predicates $E_1^\Ical$ and $E_2^\Ical$ in this subsection. \par

Let us show that $\Ical$ is indeed internal at arity $5$ (w.r.t. $\Dcal$). This is a consequence of the following claims. \\~\\
\noindent
\textbf{Claim 1: The predicates $U^\Ical$, 
$O^\Ical$, $E_1^\Ical$, and $E_2^\Ical$ are all $\Dcal$-stable.} This follows directly from the definitions.
\\~\\ \noindent
\textbf{Claim 2: The predicate $O^\Ical$ is internal at arity $5$.} Indeed, the minor condition $\phi_O(x)\equiv x = x^{(1,1)}$ is an internal definition of $O^\Ical$. \\~\\
\noindent
 \textbf{Claim 3: The predicates $E_1^\Ical$ and $E_2^\Ical$ are internal at arity $5$.} 
 Consider the following internal definitions for $E_1^\Ical$ and $E_2^\Ical$:
\[
\phi_{E_1}(x) \equiv \phi_{O}(x^{(1,2,1,2)}) \wedge
\phi_{O}(x^{(2,1,1,2)})
, 
\]
\begin{align*} 
\phi_{E_2}(x) \equiv \phi_{O}(x^{(1,2,2,1,2)}) \wedge  
\phi_{O}(x^{(2,1,2,1,2)}) \wedge \phi_{O}(x^{(2,2,1,1,2)}).   
\end{align*}
~\\
\noindent
\textbf{Claim 4: The following implications hold.}
\begin{align*}
&
f^{(1,2,2,2)}\in U^\Ical \implies
f^{(1,1,2,2)} \in U^\Ical \textbf{ for all } f\in E_1^\Ical \\
&
f^{(1,2,2,2,2)}\in U^\Ical \implies
f^{(1,1,1,2,2)} \in U^\Ical 
\textbf{ for all } f\in E_2^\Ical.
\end{align*}
We prove the statement for $f\in E_1^\Ical$. The case where $f\in E_2^\Ical$ follows analogously. Let $(m,n)= f^{(1,2,2,2)}$. By the definition of $E_1^\Ical$, it must hold that
$f^{(1,1,2,2)}= (2m, n-m)$. Now, observe that $(m,n)\in U^\Ical$ implies that $(2m, m - n)\in U^\Ical$ as well, following the definition of $U^\Ical$. \\~\\
\noindent
\textbf{Claim 5: } \textbf{Let $\phi(x)$ be an internal reference to $U^\Ical$ then the following formulas are also internal references to $U^\Ical$:}
\begin{align*}
\phi_1(x) =  \exists y \exists z \Bigl( 
\phi(y) \wedge \phi_{E_1}(z) \wedge 
y = z^{(1,2,2,2)} \wedge x = z^{(1,1,2,2)}
\Bigr),
\end{align*}
\textbf{ and }
\begin{align*}
\phi_2(x) = \exists y \exists z \Bigl( 
\phi(y) \wedge \phi_{E_2}(z) \wedge
y = z^{(1,2,2,2,2)} \wedge x = z^{(1,1,1,2,2)}
\Bigr).
\end{align*}

Indeed, let us argue the statement for $\phi_1$. The case of $\phi_2$ follows similarly. Suppose that 
$\Mscr_\AIP/\Dcal \models \phi_1(\qclass{f_x})$ for some $f\in \Qcal(2)$, and let $\qclass{f_y}$ and $\qclass{f_z}$ be existential witnesses for $y$ and $z$ respectively. In particular, 
\[
\Mscr_\AIP/\Dcal \models \phi(\qclass{f_y}), \quad \text{ and }
\Mscr_\AIP/\Dcal \models \phi_{E_1}(\qclass{f_z}).\]
The fact that $U^\Ical$, and $E_1^\Ical$ are $\Dcal$-stable and $\phi(x)$, $\phi_{E_1}(x)$ are internal references to those predicates implies that $f_y\in U^\Ical$ and $f_z\in E_1^\Ical$. It must also hold that $f_y \sim_\Dcal f_z^{(1,2,2,2)}$, so we can conclude that $f_z^{(1,2,2,2)}\in U^\Ical$. By Claim 5 this implies that $f_z^{(1,1,2,2)}\in U^\Ical$ as well.
Using that $f_z^{(1,1,2,2)}\sim_\Dcal f_x$ we finally obtain that $f_x\in U^\Ical$, proving that $\phi_1(x)$ is an internal reference to $U^\Ical$. \\~\\
\noindent
\textbf{Claim 7: The predicate $U^\Ical$ is internal at arity $5$.} We use~\Cref{le:internal_criterion} to prove the claim. 
We define an internal reference  $\phi_{m,n}(x)$ to $U^\Ical$ inductively for each $m,n\in \NN$. Additionally, we keep the invariant that 
$\Mscr_\AIP\models \phi_{m,n}(f)$, 
for each $m,n \in \NN$
where $f= ({2^{m-1}2^{n-1}}, 1-{2^{m-1}3^{n-1}})$. 
We define $\phi_{1,1}$ as the minor condition $\phi_O$. Now, given $m>1$, we define

\begin{align*}
\phi_{m,1}(x) \equiv
\exists y \exists z \Bigl( 
\phi_{m-1,1}(y) \wedge \phi_{E_1}(z) \wedge
y = z^{(1,2,2,2)} \wedge x = z^{(1,1,2,2)}
\Bigr).
\end{align*}

By the previous Claim, the fact that 
$\phi_{m-1,1}$ is an internal reference to $U^\Ical$, means that so is $\phi_{m,1}$ as well. Moreover, the fact that $({2^{m-2}}, 1-{2^{m-2}})$ satisfies $\phi_{m-1,1}(x)$ on $\Mscr_\AIP$, means that
$({2^{m-1}}, 1-{2^{m-1}})$ satisfies $\phi_{m,1}(x)$ on $\Mscr_\AIP$ by taking
$({2^{m-2}}, 1-{2^{m-2}})$ as an existential witness for $y$
and 
\[
\left({2^{m-2}}, 
{2^{m-2}},
1 - {2^{m-2}},
 -{2^{m-2}}\right)
\]
as an existential witness for $z$. Arguing in a similar way, if $n>1$, we define 
\begin{align*}
\phi_{m,n}(x) = \exists y \exists z \Bigl( 
\phi_{m,n-1}(y) \wedge \phi_{E_2}(z) \wedge 
y = z^{(1,2,2,2,2)} \wedge x = z^{(1,1,1,2,2)}
\Bigr). 
\end{align*}
Again, using the previous claim we obtain that $\phi_{m,n}$ is an internal reference to $U^\Ical$. One can also see that
$({2^{m-1}3^{n-1}}, 1-{2^{m-1}3^{n-1}})$
satisfies $\phi_{m,n}(x)$ on $\Mscr_\AIP$. This shows the claim.

\begin{proof}[Proof of items (2), (3), (4) of \Cref{th:AIP_main}]
In this section we have shown that
$\bGS_\Ical$ is isomorphic to $\bm \Gamma$ through a computable homomorphism, and that $\Ical$ is
internal at arity $5$ w.r.t. $\Dcal$. Additionally, $\arty(U^\Ical)=2$, and $\arty(Q)=2$ for all $Q\in \Dcal$. Then items (3), (4) of \Cref{th:AIP_main} follow from
 \Cref{prop:grid} together via \Cref{th:reduction_undecidability} and \Cref{th:reduction_non_computability}. \par
Finally, to prove \Cref{th:AIP_main}-(2) we define a $5$-ary 
pattern $\Psi$ of internal references to $\Ical$ w.r.t. $\Dcal$.
We map each $\I\in \Hom(\bm \Gamma, \cdot)$ to a minor condition $\Psi_\I$ defined as follows. First, we find in polynomial time a homomorphism $v\mapsto (m_v, n_v)$ from $\I$ to $\bm{\Gamma}$. Moreover,
this can be done in such a way that $\max_{v\in I} m_v + n_v \leq |I| + 1$.
Then
\begin{align*}
\Psi_\I \equiv 
\bigexists_{v\in I} x_v
\bigexists_{R\in \Sigma_\Gamma, r\in R^{I}} x_r  \left(
\bigwedge_{v\in I}
\phi_{m_v,n_v}(x_v) \right)  \bigwedge
\left(
\bigwedge_{R\in \Sigma_\Gamma, r\in R^I}
\phi_R(x_r)  \bigwedge_{i\in [\arty(R)]}
x_r^{\Pi^\Ical_{R,i}} = x_{r(i)}
\right).
\end{align*}
Here, for each $R\in \Sigma_R$, $\phi_R(x)$ is the internal definition of $R^\Ical$ given in the previous claims, and $\phi_{m,n}(x)$ is the internal reference to $U^\Ical$ defined in Claim 7. To see that $\Psi_\I$ can be constructed in polynomial time, observe that $\phi_{m,n}$ takes $O(n+m)$ time to construct inductively, and we have that $\max_{v\in I} n_v + m_v \leq |I| +1$.
To prove that $(\Ical, \Dcal, \Psi, 5)$ is a valid pattern we need to show that $\Psi_\I$ is satisfiable over $\Mscr_\AIP$. Recall that $\Mscr_\AIP \models \phi_{m,n}(f_{m,n})$, where $f_{m,n}=({2^{m-1} 3^{n-1}}, 1 - {2^{m-1} 3^{n-1}})$, and observe that the map $v \mapsto f_{m_v, n_v}$ is a homomorphism from $\I$ to $\bGS_\Ical$. Using this fact, $\Mscr_\AIP \models \Psi_\I$ can be proven following the same reasoning as in the proof of \Cref{th:AIP_main}-(1), in \Cref{sec:3-dim_grid_AIP}.
\end{proof}

\subsection{The BLP Algorithm}
\label{sec:BLP}

We prove \Cref{th:BLP_main} in this section. 
To prove item (1) we give an interpretation of the super-grid $\bm{\Gamma}^+$ over $\Mscr_\BLP$, shown in \Cref{sec:3-dim_grid_BLP}. 
Items (2),(3), and (4) are proven similarly, by showing in  \Cref{sec:grid_BLP} a suitable interpretation of the grid $\bm{\Gamma}$ over $\Mscr_\BLP$. 

\subsubsection{BLP: Interpreting the Super-Grid}
\label{sec:3-dim_grid_BLP}

The following interpretation $\Ical$ induces a global structure which is finitely equivalent to $\bm{\Gamma}^+$.

\begin{interpret} The $\Sigma_{\Gamma^+}$-interpretation $\Ical$ over $\Mscr_{\BLP}$ is given by
  \begin{flalign*}
U^\Ical =
\Bigl\{ (m s, n s, o s, s, t) \in \Mscr_\BLP(5) \, \Big\vert 
\, s= \frac{1}{2^j} \text{ for some } j\in \NN, \, 
\text{and } m,n,o\in \NN
\Bigr\}, &&
  \end{flalign*}
\begin{flalign*}
O^\Ical = \Bigl\{ (s,s,s,s,t) \in \Mscr_\BLP(5) \Big\vert 
s= 2^{-i} \text{ for some } i\in\NN \Bigr\}, \quad \text{and} \quad \Pi^\Ical_{O,1} = \id, &&
\end{flalign*}
\begin{flalign*}
E_i = \Bigl\{
(s_1, s_2, s_3, s, s, t) \in \Mscr_\BLP(6) \Bigr\} \quad \text{and} \quad
\Pi^\Ical_{E_i, j} = \begin{cases}
    (1,2,3,4,5,5) \text{ if $j=1$, } \\
    (1,2,3,i,4,5) \text{ if $j=2$ }
\end{cases} \text{ for all $i\in [3]$,} &&
\end{flalign*} 

\begin{flalign*}
& \Ebb_i^\Ical =
\{
(s_1, s_2, s_3, s_i, s, t) \in \Mscr_\BLP(6) \}  \quad \text{and} \quad
\Pi^\Ical_{\Ebb_i,j} =
 \begin{cases}
     (1,2,3,4,5,5) \text{ if $j=1$,}\\
     (1,2,3,i,4,5) \text{ if $j=2$}
 \end{cases}\text{ for each $i\in [3]$. }   &&
\end{flalign*}
\end{interpret}

\noindent
Here it is worth pointing out the main difference with respect to the interpretation of $\bm \Gamma^+$ over $\Mscr_\AIP$ given in \Cref{sec:3-dim_grid_AIP}. Addition and, by extension, multiplication times two are easy to perform in the minion $\Mscr_{\BLP}$, but we are missing a natural notion of ``unit increment''. To remediate this, we represent a triple $(m,n,o)\in \NN^3$ in the $3$-dimensional grid with elements of the form $(ms, ns, os, s, t)$ in $\Mscr_{\BLP}$, where the fourth coordinate represents the increment that we have chosen in our encoding. Of course, for a given increment $s$, the resulting grid in $\Mscr_{\BLP}$ is finite, but we consider the disjoint union of these grids when $s$ ranges over all values of the form $2^{-\ell}$ for $\ell\in \NN$. The resulting global structure is not isomorphic to $\bm \Gamma^+$, but it is finitely equivalent to it, as indicated below. \\~\\

\noindent
\textbf{Claim 1: The structure $\bGS=\bGS_\Ical$ induced by $\Ical$ is finitely equivalent to $\bm{\Gamma}^+$.}
    Let us describe  $\bGS$. 
    For each integer $i\geq 2$ we define $\bGS$ as the connected component of $\bGS_i$ containing the element 
    \[
\left(\frac{1}{2^i},\frac{1}{2^i},
\frac{1}{2^i},\frac{1}{2^i}
1 - \frac{1}{2^{i-2}}\right) \in U_\Ical.
    \]
    Observe that $\bGS$ is the disjoint union $\bigsqcup_{i\geq 2} \bGS_i$.
Given $i \geq 2$, we define $\bm{\Gamma}^+_i$
as the substructure of 
$\bm{\Gamma}^+$ induced on the set of elements $(m,n,o)\in \NN^3$ with $m+n+o+1\leq 2^i$. The structure $\bm{\Gamma}_i^+$ is isomorphic to $\bGS_i$
through the bijection
\[
(m,n,o) \mapsto \left(
\frac{m}{2^i}, \frac{n}{2^i},
\frac{o}{2^i}, \frac{1}{2^i},
1- \frac{m+n+o+1}{2^i}
\right).
\]
Finally, observe that $\bm{\Gamma}^+$ is finitely equivalent to the disjoint union $\bigsqcup_{i>2} 
\bm{\Gamma}_i^+$. Indeed, if $\I\rightarrow \bm{\Gamma}^+$,
for some finite $\I$, then it must be that $\I\rightarrow \bm{\Gamma}^+_i$ for some $i>2$, and we also have that $\bm{\Gamma}^+_i \rightarrow \bm{\Gamma}^+$ for all $i$.  This proves the claim.  \par

Now let us define a description $\Dcal\subseteq 2^{\Mscr_\BLP}$ so that $\Ical$ is internal at arity $6$ w.r.t. $\Dcal$. 
\begin{desc} 
The description $\Dcal\subseteq 2^{\Mscr_\BLP}$ consists of the predicate $U_\Ical$ defined in the previous interpretation, and the predicates 
\begin{flalign*}
& D_p=\Bigl\{ (\frac{1}{2^i}, 1 - \frac{1}{2^i})
\, \vert  \, i\geq 0 \Bigr\}, \text{ and }  && \\
& D_< = \Bigr\{ (s_1, s_2, s_3)\in \Mscr_\BLP \, \vert \, s_1 < s_2 \Bigl\}. &&
\end{flalign*}
\end{desc}

\noindent The following claims establish that $\Ical$ is internal at arity $6$ (w.r.t. $\Dcal$). \\~\\
\noindent
\textbf{Claim 2: The predicates $U^\Ical$, and $R^\Ical$
    for $R\in \Sigma_{\Gamma^+}$ are all $\Dcal$-stable.} This follows from the definitions. \\~\\
    \noindent
    \textbf{Claim 3: The predicates $E_i^\Ical$ are internal at arity $6$.} The following is an internal definition of $E_i^\Ical$:
    \[
    \phi_{E_i}(x) \equiv
    x^{(3,3,3,1,2,3)}= 
    x^{(3,3,3,2,1,3)}.
    \] 
    ~\\
    \noindent
    \textbf{Claim 4: The predicates $\Ebb_i^\Ical$ are internal at arity $6$.} The following is an internal definition of $E_i^\Ical$:
    \[
    \phi_{\Ebb_i}(x) \equiv
    x^{\sigma}= 
    x^{\tau},
    \]
    where $\sigma: [6]\rightarrow [3]$ maps $i$ to $1$, 
    $4$ to $2$ and the other elements to $3$, and $\tau: [6] \rightarrow [3]$ maps $i$ to $2$, $4$ to $1$ and
    the other elements to $3$. \\~\\
    \noindent
    \textbf{ Claim 5: If $\phi(x)$ is an internal reference to $D_p$, then the following formula is also an internal reference to $D_p$:}
    \begin{align*}
    \phi^\prime(x) = 
    \exists y \exists^3 z 
    \Bigl(
    \phi(y)  \wedge 
    z = z^{(2,1,3)} \wedge
    y = z^{(1,1,2)}
    \wedge 
    x = z^{(1,2,2)}
    \Bigr) .
    \end{align*}
    \textbf{Moreover, if $(1/2^i, - 1/2^i)$ satisfies $\phi(x)$ (over $\Mscr_\BLP$), then
    $(1/2^{i+1}, 1 - 1/2^{i+1})$ 
    satisfies $\phi^\prime(x)$.} Let us begin with the second part of the statement. To see that $(1/2^{i+1}, 1- 1/2^{i+1})$ satisfies $\phi^\prime(x)$, observe that $(1/2^i, 1 - 1/2^i)$ and
    $(1/2^{i+1}, 1/2^{i+1}, 1- 1/2^i)$
    are valid existential witnesses for $y$ and $z$.\par
    Now  let us show that $\phi^\prime$ is an internal reference to $\Dcal_{1/2}$. Suppose $\Mscr_\BLP/\Dcal
    \models \phi^\prime(\qclass{f_x})$
    and $\qclass{f_y},\qclass{f_z}$
    are existential witnesses for $y$ and $z$. As $\phi$ is an internal reference to $D_p$, we have $f_y\in D_p$. Hence, the fact that
    $f_y \sim_\Dcal f_z^{(1,1,2)}$
    means that $f_z^{(1,1,2)}\in D_p$,
    so $f_z(1) + f_z(2) = 1/2^i$ for some $i\geq 0$. 
    Also, observe that the fact that
    $f_z \sim_\Dcal f_z^{(2,1,3)}$
    implies $f_z(1) = f_z(2)$. Indeed, otherwise exactly one of $f_z$
    or $f_z^{(2,3,1)}$ would belong to 
    $D_<$. Hence $f_z(1)= 1/2^{i+1}$, and $f_z^{(1,2,2)}\in D_p$. Finally, because $f_x \sim_\Dcal f_z^{(1,2,2)}$, we must have $f_x\in D_p$, proving the claim. \\~\\
\noindent
\textbf{ Claim 6:  The predicate $D_p$ is internal at arity $6$.} 
We use \Cref{le:internal_criterion} to prove the claim. We define an internal reference $\phi_i(x)$ to $D_p$ inductively for each $i\geq 0$ in such a way that $(1/2^i, 1- 1/2^i)$ satisfies $\phi_i(x)$ on $\Mscr_\BLP$.
We define
\[
\phi_0(x) \equiv x = x^{(1,1)}.
\]
Given $i>0$, we define
\begin{align*}
   \phi_i(x) \equiv \exists y \exists^3 z
 \Bigl( 
    \phi_{i-1}(y)  \wedge 
    z = z^{(2,1,3)} 
    \wedge
    y = z^{(1,1,2)}
    \wedge 
    x = z^{(1,2,2)}
    \Bigr). 
\end{align*}

Now the previous claim proves that
$\phi_i(x)$ is an internal reference to $D_p$ and $(1/2^i,1- 1/2^i)$ satisfies it. This proves the statement. \\~\\
\noindent
\textbf{ Claim 7: The predicate $O^\Ical$ is internal at arity $6$.}
Let $i\geq 2$. Then, by last claim, the following formula is an internal reference to $O^\Ical$:
\begin{align*}
\phi_{O,i}(x) \equiv
\phi_i(x^{(1,2,2,2,2)}) &\wedge 
x^{(1,2,3,3,3)} = x^{(2,1,3,3,3)} \\ &\wedge
x^{(1,3,2,3,3)} = x^{(2,3,1,3,3)} \\
&\wedge
x^{(1,3,3,2,3)} = x^{(2,3,3,1,3)}.
\end{align*}
Additionally, $\phi_{O,i}(x)$ is satisfied on $\Mscr_\BLP$ by the element
\[
\left(
\frac{1}{2^i},
\frac{1}{2^i},
\frac{1}{2^i},
\frac{1}{2^i},
1- \frac{1}{2^{i-2}}
\right).
\]
Now the claim follows from \Cref{le:internal_criterion}. \\~\\
\noindent
\textbf{Claim 8: Let $i\in [3]$. Suppose $\phi(x)$ is an internal reference to $U^\Ical$. Then the following formula is also an internal reference to $U^\Ical$: }
\begin{align*}
\phi^\prime(x) \equiv
\exists y \exists^6 z
\Bigl( 
\phi(y)
\wedge \phi_{E_i}(z) 
\wedge y = z^{(1,2,3,4,5,5)}
\wedge x = z^{(1,2,3,4,i,5)}
\Bigr).
\end{align*}
\textbf{Moreover, if $f$ satisfies $\phi(x)$ (on $\Mscr_\BLP$), then the tuple $g$
satisfies $\phi^\prime(x)$, where $g$ is
defined by
$g(i)= f(i)+ f(4)$, 
$g(5)= f(5) - f(4)$ and $g(j)=f(j)$,
and $g(j) = f(j)$ for $j\neq i, 5$.
}
We begin with the second part of the statement. To see that $g$ satisfies $\phi^\prime(x)$, observe that $f$ and $f_z=(f(1), f(2), f(3), f(4), f(4), f(5) - f(4))$ are existential witnesses for $y$ and $z$. \par
Now let us show that $\phi^\prime$ is an internal reference to $U^\Ical$.
Suppose $\Mscr_\BLP/\Dcal \models \phi^\prime(\qclass{f_x})$ with
$\qclass{f_y}, \qclass{f_z}$ as existential witnesses for $y,z$. As $\phi$ is an internal reference to $U^\Ical$, it must hold that $f_y\in U^\Ical$. Moreover, $f_y\sim_\Dcal f_z^{(1,2,3,4,5,5)}$, so 
$f_z^{(1,2,3,4,5,5)}\in U^\Ical$, meaning that $f_z(4)= \frac{1}{2^i}$ for some $i\geq 2$, and $f_z(j)= \frac{m_j}{2^i}$ for some $m_j\in \NN$ for each $j\in [3]$. Now, the fact that
$f_z\in E_i^\Ical$
means that
$f_z(4)= f_z(5)$. Hence, we conclude that $f_z(5)= \frac{1}{2^i}$ as well. This implies that $f_z^{(1,2,3,4,i,5)}$ belongs to $U^\Ical$. Finally, using that $f_x\sim_\Dcal f_z^{(1,2,3,4,i,5)}$, we obtain $f_x\in U^\Ical$, as we wanted. \\~\\
\noindent 
\textbf{ Claim 9: Let $i\in [3]$. Suppose $\phi(x)$ is an internal reference to $U^\Ical$. Then the following formula is also an internal reference to $U^\Ical$: }
\begin{align*}
\phi^\prime(x) \equiv 
\exists y \exists^6 z
\Bigl( 
\phi(y) \wedge  \phi_{\Ebb_i}(z)
\wedge y = z^{(1,2,3,4,5,5)}
\wedge x = z^{(1,2,3,4,i,5)}
\Bigr)
\end{align*}
\textbf{Moreover, if $f$ satisfies $\phi(x)$ (on $\Mscr_\BLP$), then the tuple $g$
satisfies $\phi^\prime(x)$, where $g$ is
defined by
$g(i)= 2f(i)$, 
$g(5)= f(5) - f(i)$ and $g(j)=f(j)$,
and $g(j) = f(j)$ for $j\neq i, 5$.
} This can be shown analogously to the previous claim. \\~\\
\noindent
\textbf{Claim 10: The predicate $U^\Ical$ is internal at arity $6$.}
We prove the claim using~\Cref{le:internal_criterion}, as usual. We define an internal reference $\phi_{m_1,m_2,m_3,m_4}(x)$ to $U^\Ical$ inductively, following the lexicographical order, for each $(m_1,m_2,m_3,m_4)\in \NN^4$
such that $m_2+m_3+m_4+1 \leq 2^{m_1}$, in such a way that 
\[
\left(
\frac{m_2}{2^{m_1}},
\frac{m_3}{2^{m_1}},
\frac{m_4}{2^{m_1}},
\frac{1}{2^{m_1}},
1- \frac{m_2+m_3+m_4+1}{2^{m_1}}
\right)
\]
satisfies $\phi_{m_1,m_2,m_3,m_4}(x)$ on $\Mscr_\BLP$. For each $m\geq 2$ we define
\[
\phi_{m,1,1,1}(x)\equiv \phi_{O,m}(x),
\]
where $\phi_{O,m}(x)$ is the internal reference to $O^\Ical$ defined in 
Claim 7. Now suppose that $i>1$
is the maximum index such that $m_i>1$. We have two cases. Suppose that $m_i$ is odd. Then we let
$n_i= m_i-1$, $n_j= m_j$ for all $j\neq i$, and define $\phi_{m_1,m_2,m_3,m_4}(x)$
as 
\begin{align*}
\exists y \exists^6 z
\Bigl( & 
\phi_{n_1,n_2, n_3, n_4}(y)
\wedge  
 \phi_{E_i}(z) 
\wedge  y = z^{(1,2,3,4,5,5)}
\wedge x = z^{(1,2,3,4,i,5)}
\Bigr).
\end{align*}
Otherwise, suppose $m_i$ is even. Then
 Then we let
$n_i= m_i/2$, $n_j= m_j$ for all $j\neq i$, and define $\phi_{m_1,m_2,m_3,m_4}(x)$
as
\begin{align*}
\exists y \exists^6 z
\Bigl( 
\phi_{n_1,n_2, n_3, n_4}(y)
\wedge 
  \phi_{\Ebb_i}(z) 
\wedge y = z^{(1,2,3,4,5,5)}
\wedge x = z^{(1,2,3,4,i,5)}
\Bigr). 
\end{align*}
Now the statement follows from Claims 8 and 9.

\begin{proof}[Proof of item (1) of \Cref{th:AIP_main}.]
We define a $6$-ary pattern $\Psi$ of internal references to $\Ical$ w.r.t. $\Dcal$. Then the result will frollow from \Cref{th:reduction_hardness}. 
Given $\I\in \Hom(\bm \Gamma^+, \cdot)$, we first compute a homomorphism 
$v\mapsto (m_v, n_v, o_v)$ from $\I$ to
$\bm{\Gamma}^+$ in polynomial time in such a way that $\max_{v\in I} \log_2( m_v n_v o_v) \leq |I|$. Let $M= \max_{v\in I} m_v + n_v + o_v + 1 $, and let $j= \lceil \log_2 M \rceil$. Then $F$ is actually a homomorphism from $\I$ to $\bm{\Gamma}_j^+$, where $\bm{\Gamma}_j^+\subseteq \bm{\Gamma}^+$ is the substructure defined in Claim 1. Then the minor condition $\Psi_\I$ is defined as
\begin{align*}
\bigexists_{v\in I}
x_v
\bigexists_{
R \in \Sigma_{\Gamma^+},
r\in R^I} x_r
\left(
\bigwedge_{v\in I}
\phi_{j,m_v, n_v, o_v}(x_v)
\right) 
&
\bigwedge
\left(
\bigwedge_{r\in O^I}
\phi_{O,j}(x_r) \wedge
x_r = x_{r(1)}
\right) \\ 
&\bigwedge 
\left(
\bigwedge_{
R\in \Sigma_{\Gamma^+}, R\neq O,
r\in R^I}
\phi_{R}(x_r) \bigwedge_{i\in \arty(R)}
x_r^{\Pi^\Ical_{R,i}}
= x_{r(i)}
\right).
\end{align*}
To see that $\Psi_\I$ can be computed in polynomial time, observe that
$\phi_{j,m_v, n_v, o_v}$ can be constructed inductively in time $O(j + \log_2(m_v n_v o_v))$,
$j\leq 1 + \log_2 4|I|$, and 
$\max_{v\in I} \log_2(m_v n_v o_v) \leq |I|$.
In order to prove that $(\Ical,\Dcal, \Psi, 6)$ is a valid pattern, we only need to show that $\Mscr_\BLP \models \Psi_\I$. To do this, observe that the map
\[
v\mapsto
f_v = \left(
\frac{m_v}{2^j},
\frac{n_v}{2^j},
\frac{o_v}{2^j},
\frac{1}{2^j},
1- \frac{m_v + n_v + o_v +1}{2^j}
\right),
\]
is a homomorphism from $\I$ to the connected component of
\[\left(
\frac{1}{2^j},
\frac{1}{2^j},
\frac{1}{2^j},
\frac{1}{2^j},
1- \frac{1}{2^{j-2}}
\right) \]
in $\bGS_\Ical$ (recall the isomorphisms from Claim 1). Now $\Mscr\models \Psi_\I$ can be proven analogously to \Cref{th:AIP_main}-(1) in 
\Cref{sec:3-dim_grid_AIP}.
Observe that $5\geq \arty(U^\Ical)$ and $5\geq \arty(P)$ for all $P\in \Dcal$. Hence, we only need to consider templates of the form $(\bK_5^6, \B)$ to achieve the TFNP-hardness result. 
\end{proof}

\subsubsection{BLP: Interpreting the Grid}
\label{sec:grid_BLP}

The following interpretation $\Ical$ induces a global structure that is finitely equivalent to $\bm{\Gamma}$. 

\begin{interpret} The $\Sigma_\Gamma$-interpretation $\Ical$ over $\Mscr_\BLP$ is given by
\begin{flalign*}
&
U^\Ical= \Bigl\{ 
\left(\frac{1}{2^m3^n},  y\right)\in \Mscr_\BLP(2) \, \vert \, m, n \text{ non-negative integers}\Bigr\}, &&
\end{flalign*}
\begin{flalign*}
   O^\Ical= \Bigl\{(1,0)\Bigr\}, \quad \text{and} \quad \Pi^\Ical_{O,1}=\id, &&
\end{flalign*}
\begin{flalign*}
& E_1^\Ical= \Bigl\{ 
(x, x, y, y) \in \Mscr_\BLP(4)
\Bigl\}, \quad \text{and} \quad \Pi^\Ical_{E_1,i}=
\begin{cases}
    (1,1,2,2) \text{ for $i=1$, } \\
    (1,2,2,2) \text{ for $i=2$, }
\end{cases} &&
\end{flalign*}
\begin{flalign*}
& E_2^\Ical = \Bigl\{
(x,x,x, y, z) \in \Mscr_\BLP(4) \, \vert
\, x+ y= 1/3\Bigr\},  \quad \text{and} \quad 
\Pi^\Ical_{E_2,i}=
\begin{cases}
    (1,1,1,2,2) \text{ for $i=1$,} \\
    (1,2,2,2,2) \text{ for $i=2$.}
\end{cases} &&
\end{flalign*}
\end{interpret}

\noindent
This way, the grid structure 
$\bm{\Gamma}$ is isomorphic to 
$\bS$ via the bijection 
\[
(m,n) \mapsto \left(\frac{1}{2^{m-1}3^{n-1}}, 1 -\frac{1}{2^{m-1}3^{n-1}}\right). \] 
Moreover, this map is computable under the plain encoding for $\bm{\Gamma}$ and $\Mscr_\BLP$. Next, we define a description $\Dcal\subseteq 2^{\Mscr_\BLP}$
so that
$\Ical$ is internal w.r.t. $\Dcal$ at arity $5$. 
\begin{desc}
    The description $\Dcal\subseteq 2^{\Mscr_\BLP}$ consists of the binary predicates $U^\Ical$ and $\quad O^\Ical$ defined in the previous interpretation $\Ical$, and the new predicates 
    \begin{flalign*}
    & D_{1/2}=\Bigl \{ (1/2,1/2) \Bigl\}, \quad \text{ and } \quad D_{1/3}=\Bigl\{(1/3, 2/3) \Bigr\}. &&
    \end{flalign*}
\end{desc}
It is worth to compare the interpretation $\Ical$ and description $\Dcal$ with the ones from \Cref{sec:grid_AIP}, used to encode $\bm \Gamma$ on $\Mscr_\AIP$.
Both constructions are completely analogous, with the main difference that in that subsection we considered positive powers $2^m3^n$ and here we consider negative ones $2^{-m}3^{-n}$. Let us show that $\Ical$ is indeed internal at arity $5$ w.r.t. $\Dcal$. This is a consequence of the following claims.
\\~\\
\noindent
    \textbf{Claim 1: The predicates $U^\Ical$, 
$O^\Ical$, $E_1^\Ical$, and $E_2^\Ical$ are all $\Dcal$-stable.} This is a routine check. \\~\\
\noindent
    \textbf{Claim 2: The predicate $O^\Ical$ is internal at arity $5$ (w.r.t. $\Dcal$).}
    Indeed, the minor condition $\phi_O(x)\equiv x = x^{(1,1)}$ is an internal definition of $O^\Ical$. \\~\\
    \noindent
    \textbf{Claim 3: The predicates $D_{1/2}$ and $D_{1/3}$ are internal at arity $5$.}
    The following are internal definitions for $D_{1/2}$ and $D_{1/3}$ w.r.t. $\Dcal$:
\begin{align*}
& \phi_{1/2}(x) \equiv x = x^{(2,1)}, \\
& \phi_{1/3}(x) \equiv 
\exists^3y \Bigl(
y
= y^{(3,1,2)} \wedge y = y^{(2,3,1)} \wedge x = y^{(1,2,2)} \Bigr). 
\end{align*} ~\\
\noindent
\textbf{Claim 4: The predicates $E_1^\Ical$ and $E_2^\Ical$ are internal at arity $5$.}
 Consider the following internal definitions for $E_1^\Ical$ and $E_2^\Ical$:
\begin{align*}
& \phi_{E_1}(x) \equiv \phi_{1/2}(x^{(1,2,1,2)}) \wedge
\phi_{1/2}(x^{(2,1,1,2)}), \quad \text{and} \\
&
\phi_{E_2}(x) \equiv \phi_{1/3}(x^{(1,2,2,1,2)}) \wedge 
\phi_{1/3}(x^{(2,1,2,1,2)})
\wedge \phi_{1/3}(x^{(2,2,1,1,2)}).
\end{align*}
~\\
\noindent
\textbf{ Claim 5: The following implications hold. }
\begin{align*}
&
f^{(1,1,2,2)}\in U^\Ical \implies
f^{(1,2,2,2)} \in U^\Ical 
\textbf{ for all } f\in E_1^\Ical,  \quad \textbf{and} \\
&
f^{(1,1,1,2,2)}\in U^\Ical \implies
f^{(1,2,2,2,2)} \in U^\Ical
\textbf{ for all } f\in E_2^\Ical.
\end{align*}
We prove the statement for $f\in E_1^\Ical$. The case where $f\in E_2^\Ical$ follows analogously. Let $(m,n)= f^{(1,2,2,2)}$. By the definition of $E_1^\Ical$, it must hold that
$f^{(1,1,2,2)}= (2m, n - m)$. Now, observe that $(m, n)\in U^\Ical$ implies that $(2m, n - m)\in U^\Ical$ as well, following the definition of $U^\Ical$. \\~\\
\noindent
\textbf{ Claim 6: Let $\phi(x)$ be an internal reference to $U^\Ical$ then the following formulas are also internal references to $U^\Ical$:}
\begin{align*}
&
\phi_1(x) =  \exists y \exists z \Bigl( 
\phi(y) \wedge \phi_{E_1}(z) \wedge \Bigl.
y = z^{(1,1,2,2)} \wedge x = z^{(1,2,2,2)}
\Bigr), \quad \textbf{and}  \\
&
\phi_2(x) = \exists y \exists z \Bigl( 
\phi(y) \wedge \phi_{E_2}(z) \wedge
y = z^{(1,1,1,2,2)} \wedge x = z^{(1,2,2,2,2)}
\Bigr).
\end{align*}
Indeed, let us argue the statement for $\phi_1$. The case of $\phi_2$ follows similarly. Suppose that 
$\Mscr_\BLP/\Dcal \models \phi_1(\qclass{f_x})$ for some $f\in \Qcal(2)$, and let $\qclass{f_y}$ and $\qclass{f_z}$ be existential witnesses for $y$ and $z$ respectively. In particular, 
\[
\Mscr_\BLP/\Dcal \models \phi(\qclass{f_y}), \quad \text{ and }
\Mscr_\BLP/\Dcal \models \phi_{E_1}(\qclass{f_z}).\]
The fact that $U^\Ical$, and $E_1^\Ical$ are $\Dcal$-stable and $\phi(x)$, $\phi_{E_1}(x)$ are internal references to those predicates implies that $f_y\in U^\Ical$ and $f_z\in E_1^\Ical$. It must also hold that $f_y \sim_\Dcal f_z^{(1,1,2,2)}$, so we can conclude that $f_z^{(1,1,2,2)}\in U^\Ical$. By Claim 5 this implies that $f_z^{(1,2,2,2)}\in U^\Ical$ as well.
Using that $f_z^{(1,2,2,2)}\sim_\Dcal f_x$ we finally obtain that $f_x\in U^\Ical$, proving that $\phi_1(x)$ is an internal reference to $U^\Ical$. \\~\\
\noindent 
\textbf{Claim 7: The predicate $U^\Ical$ is internal at arity $5$.} We use~\Cref{le:internal_criterion} to prove the claim. 
We define an internal reference  $\phi_{m,n}(x)$ to $U^\Ical$ inductively for each $m,n\in \NN$. Additionally, we keep the invariant that 
$\Mscr_\BLP\models \phi_{m,n}(f)$, 
for each $m,n \in \NN$
where $f= (\frac{1}{2^{m-1}2^{n-1}}, 1-\frac{1}{2^{m-1}3^{n-1}})$. 
We define $\phi_{1,1}$ as the minor condition $\phi_O$. Now, given $m>1$, we define as
\begin{align*}
\phi_{m,1}(x) \equiv
\exists y \exists z \Bigl( 
\phi_{m-1,1}(y) \wedge \phi_{E_1}(z) \wedge 
y = z^{(1,1,2,2)} \wedge x = z^{(1,2,2,2)}
\Bigr).   
\end{align*}
By the previous Claim, the fact that 
$\phi_{m-1,1}$ is an internal reference to $U^\Ical$, means that so is $\phi_{m,1}$ as well. Moreover, the fact that $(\frac{1}{2^{m-2}}, 1-\frac{1}{2^{m-2}})$ satisfies $\phi_{m-1,1}(x)$ on $\Mscr_\BLP$, means that
$(\frac{1}{2^{m-1}}, 1-\frac{1}{2^{m-1}})$ satisfies $\phi_{m,1}(x)$ on $\Mscr_\BLP$ by taking
$(\frac{1}{2^{m-2}}, 1-\frac{1}{2^{m-2}})$ as an existential witness for $y$
and 
\[
\left(\frac{1}{2^{m-2}}, 
\frac{1}{2^{m-2}},
\frac{1}{2} - \frac{1}{2^{m-2}},
\frac{1}{2} -\frac{1}{2^{m-2}}\right)
\]
as an existential witness for $z$. Arguing in a similar way, if $n>1$, we define 
\begin{align*}
\phi_{m,n}(x) = \exists y \exists z \Bigl( &
\phi_{m,n-1}(y) \wedge \phi_{E_2}(z) \wedge \Bigr. \\ & \Bigl.
y = z^{(1,1,1,2,2)} \wedge x = z^{(1,2,2,2,2)}
\Bigr).   
\end{align*}
Again, using the previous claim we obtain that $\phi_{m,n}$ is an internal reference to $U^\Ical$. One can also see that
$(\frac{1}{2^{m-1}3^{n-1}}, 1-\frac{1}{2^{m-1}3^{n-1}})$
satisfies $\phi_{m,n}(x)$ on $\Mscr_\BLP$. This shows the claim.

\begin{proof}[Proof of 
items (2),(3), and (4) of 
\Cref{th:BLP_main}]
Observe that $\arty(U^\Ical)=2$, and $\arty(Q)=2$ for all $Q\in \Dcal$. 
Thus items (3) and (4) of \Cref{th:BLP_main} follow from
\Cref{prop:grid} together with \Cref{th:reduction_undecidability} and \Cref{th:reduction_non_computability} using the interpretation $\Ical$ and the description $\Dcal$ in this section. \par
Finally, to prove item (2) of \Cref{th:BLP_main} we define a $5$-ary pattern
$\Psi$ of internal references to $\Ical$ w.r.t. $\Dcal$.
Given  $\I\in \Hom(\bm \Gamma, \cdot)$ we construct the minor condition $\Psi_\I$ as follows. First, we find in polynomial time a homomorphism $v\mapsto (m_v, n_v)$ from $\I$ to $\bm{\Gamma}$ such that
$\max_{v\in I} m_v + n_v \leq |I| +1 $. Then
\begin{align*}
\Psi_\I \equiv 
\bigexists_{v\in I} x_v
\bigexists_{R\in \Sigma_\Gamma, r\in R^{I}} x_r  \left(
\bigwedge_{v\in I}
\phi_{m_v,n_v}(x_v) \right) \bigwedge 
\left(
\bigwedge_{R\in \Sigma_\Gamma, r\in R^I}
\phi_R(x_r)  \bigwedge_{i\in [\arty(R)]}
x_r^{\Pi^\Ical_{R,i}} = x_{r(i)}
\right).
\end{align*}
Here, for each $R\in \Sigma_R$, $\phi_R(x)$ is the internal definition of $R^\Ical$ given in the previous claims, and $\phi_{m,n}(x)$ is the internal reference to $U^\Ical$ defined in Claim 7. To see that $\Psi_\I$ can be computed in polynomial time, observe that $\phi_{m,n}(x)$ takes $O(m+n)$ time to be constructed inductively, and $\max_{v\in I} m_v + n_v \leq |I| +1$. 
The fact that $\Mscr_\BLP \models \Psi_\I$
can be proven similarly to \Cref{th:AIP_main}-(1) in \Cref{sec:3-dim_grid_AIP}, using that the map $v \mapsto (\frac{1}{2^{m-1} 3^{n-1}}, 1 - \frac{1}{2^{m-1} 3^{n-1}})$ is a homomorphism from $\Ical$ to $\bm{\Gamma}$.
\end{proof}

\subsection{The BLP + AIP Algorithm}
\label{sec:BLP+AIP}

We prove \Cref{th:BLP+AIP_main} in this section. This result follows from a suitable interpretation of the super-grid $\bm{\Gamma}^+$ over $\Mscr_{\BLP+\AIP}$, shown in \Cref{sec:3-grid_BLP+AIP}.

\subsubsection{BLP + AIP: Interpreting the Super-Grid}
\label{sec:3-grid_BLP+AIP}

Let $\Mscr=\Mscr_{\BLP+\AIP}$ for the rest of this section. Recall that elements in $\Mscr(n)$ consists of pairs $(f,g)$ where $f\in [0,1]^n$ and $g\in \ZZ^n$. We consider the lexicographical order on pairs $(s,m)\in [0,1]\times \ZZ$.
We perform arithmetic on tuples coordinate-wise. We write $m \Div n$ for $n,m\in \ZZ$
to denote that there exists an integer $o\in \ZZ$ such that $o n = m$, and 
$m\nDiv n$ for the negation of this statement \footnote{The more standard notation $n \mid m$ would quickly lead to readability issues.}. Similarly, given $(s,m), (t,n)\in [0,1]\times \ZZ$,
we write $(s,m) \Div (t,n)$ if there is some integer $o\geq 0$ such that
$o(t,n)=(s,m)$.
\par

The constructions and arguments in this section are very similar to those in \Cref{sec:3-dim_grid_BLP}, so we will omit some details. The following interpretation $\Ical$ induces a global structure
that is finitely equivalent to $\bm{\Gamma}^+$. 

\begin{interpret} The $\Sigma_{\Gamma^+}$-interpretation $\Ical$ over $\Mscr_{\BLP+\AIP}$ is given by
\begin{flalign*}
& U^\Ical= 
\Bigl\{ 
(f,g)\in \Mscr(5) \, \Big\vert \, f(4) < 1/3,
 g(4) \nDiv 2, \text{ and } 
(f(i),g(i)) \Div (f(4), g(4)) \text{ for each $i\in [3]$} \,
\Bigr\} &&
\end{flalign*}
\begin{flalign*}
& O^\Ical = \Bigl\{ 
(f,g)\in \Mscr(5) \, \Big\vert
\, f(1)< 1/3, \,  
g(1) \nDiv 2, \text{ and }
 (f(i), g(i))=(f(4), g(4)) \, \text{
for all $i\in [3]$} \Bigr\}, \text{ and } &&  \\
& \Pi^\Ical_{O,1} =\id. &&
\end{flalign*}
\noindent
And, for all $i\in [3]$
\begin{flalign*}
& E_i^\Ical  =
\Bigl \{
(f,g)\in \Mscr(6) \, \Big\vert
\, (f(4), g(4))=(f(5), g(5))
\Bigr \}, \quad \text{ and } \quad \Pi^\Ical_{E_i, j} =
\begin{cases}
    (1,2,3,4,5,5) \text{ for $j=1$,}\\
    (1,2,3,i,4,5) \text{ for $j=2$,}
\end{cases} &&
\end{flalign*}
\begin{flalign*}
    & \Ebb_i^\Ical =
\Bigl\{
(f,g)\in \Mscr(6) \, \Big\vert 
\, (f(4), g(4))=(f(i), g(i))
\Bigr\},  \quad \text{and} \quad
 \Pi^\Ical_{\Ebb_i,j} =
 \begin{cases}
     (1,2,3,4,5,5) \text{ for $j=1$, }\\
     (1,2,3,i,4,5) \text{ for $j=2$.}
 \end{cases} &&
\end{flalign*}

\end{interpret}
~\\
\noindent
As in \Cref{sec:3-dim_grid_BLP}, we follow the idea of interpreting increasingly large grids and keeping track of the increments in each of these grids. Similarly to before, if we represent a pair $(f,g)\in U^\Ical$ as $(f,g)=((f(1),g(1)), \dots, (f(5),g(5)))$,
it holds that the first three coordinates of this vector (which are two-dimensional vectors themselves) are integer multiples of 
$(f(4), g(4))$, which takes the role of the unit increment. Again, the fact that we consider tuples $(f,g)$ where $f(4)$ can be arbitrarily small but positive means that the global structure given by $\Ical$ contains arbitrarily large grids. This yields the following claim. \\~\\
\noindent
    \textbf{ Claim 1: The structure $\bGS=\bGS_\Ical$ induced by $\Ical$ is finitely equivalent to $\bm{\Gamma}^+$.} Given $j\in \NN$, we define $\bm{\Gamma}^+_j$ as the substructure of $\bm{\Gamma}^+$ induced on the elements $(m,n,o)\in \NN^3$ satisfying $m+n+o< j$. Consider an element $(f,g)\in U^\Ical$. Let $j= \lceil \frac{1}{f(4)} \rceil$. Then
    $\bm{\Gamma}_j^+$ is isomorphic to
    the connected component of $(f,g)$ in $\bGS$ via the bijection $(m_1, m_2, m_3) \mapsto ( f^\prime, g^\prime)$, where
    $(f^\prime(i), g^\prime(i))= m_i(f(4), g(4))$
    for $i\in [3]$,
    $(f^\prime(4), g^\prime(4))=(f(4), g(4))$,
    and $(f^\prime(5), g^\prime(5))$ is defined so that both the elements in $f^\prime$
    and the elements in $g^\prime$
    add up to $1$. We have that $\bm{\Gamma}^+$
    is finitely equivalent to the disjoint union 
    $\bigsqcup_{j\in \NN} \bm{\Gamma}_j^+$, so this proves the result.
\\~\par
Let us define a description $\Dcal$ 
so that $\Ical$ is internal at arity $6$ w.r.t. $\Dcal$. 
\begin{desc}
The description $\Dcal \subseteq 2^{\Mscr_{\BLP +\AIP}}$ is given by the predicate
$U^\Ical$ defined in the previous interpretation, and
\begin{flalign*}
& D_<
=\Bigl\{ 
(f,g)\in \Mscr(3) \, \Big\vert
\,
(f(1), g(1)) < (f(2), g(2))
\Bigr \}, && \\
& D_{\nDiv 2}= 
\Bigl\{
(f,g) \in \Mscr(2) \, \Big\vert \, f(1) < 1/3, \,
 g(1) \nDiv 2 
\Bigr\}. &&
\end{flalign*}    
\end{desc}

Let us show that $\Ical$ is internal at arity $6$ (w.r.t. $\Dcal$). This is a consequence of the following claims. \\~\\
\noindent
\textbf{Claim 2: The predicates $U^\Ical, O^\Ical$, and $E_i^\Ical$, $\Ebb_i^\Ical$
for $i\in [3]$ are all $\Dcal$-stable.} \\~\\ \noindent
\textbf{Claim 3: The predicate $E_i^\Ical$ is internal at arity $6$ for each $i\in [3]$}.
    This is shown in a similar way to Claim 3 in \Cref{sec:3-dim_grid_BLP}. The following is an internal definition of $E_i^\Ical$:
    \[
    \phi_{E_i}(x) \equiv x^{(3,3,3,1,2,3)} = x^{(3,3,3,2,1,3)}.
    \] ~\\
\noindent
\textbf{Claim 4: The predicate $\Ebb_i^\Ical$ is internal at arity $6$ for each $i\in [3]$}.
    This is shown in a similar way to Claim 4 in \Cref{sec:3-dim_grid_BLP}. The following is an internal definition of $E_i^\Ical$:
    \[
    \phi_{E_i}(x) \equiv x^\sigma = x^\tau,
    \] 
    where $\sigma: [6]\rightarrow [3]$ 
    sends $i$ to $1$, $4$ to $2$ and the other elements to $3$, and 
    $\tau:[6] \rightarrow [3]$
    sends $i$ to $2$, $4$ to $1$ and the other elements to $3$. \\~\\
    \noindent
\textbf{Claim 5: The predicate $D_{\nDiv 2}$
    is internal at arity $6$.} Indeed, the following is an internal definition of $D_{\nDiv 2}$:
    \begin{align*}
    \phi_{\nDiv 2}(x) \equiv
    \exists^5 y \Big(
    y^{(1,2,3,3,3)} = y^{(2,1,3,3,3)} \wedge   y^{(1,3,2,3,3)} = y^{(2,3,1,3,3)} \wedge 
    y^{(3,3,3,1,2)} = y^{(3,3,3,2,1)} \wedge  x = y^{(1,2,2,2,2)}
    \Big).
    \end{align*}
    We need to show both that
    $(f,g)$ satisfy $\phi_{\nDiv 2}(x)$ on $\Mscr$
    for all $(f,g)\in D_{\nDiv 2}$ and that
    $\phi_{\nDiv 2}$ is an internal reference to
    $D_{\nDiv 2}$. Let us begin with the first statement. Let $(f,g)\in D_{\nDiv 2}$. Then let $(f_y, g_y)\in \Mscr(5)$ be defined by
    $(f_y(i),g_y(i)) = (f(1),g(1))$, for all $i\in [3]$,
    and $(f_y(i),g_y(i))=\frac{1}{2}(1-f(1), 1-g(1))$
    for $i=4,5$. The fact that $(f_y, g_y)\in \Mscr$
    follows from the definition of $D_{\nDiv 2}$. Observe that $(f,g)$ satisfies $\phi_{\nDiv 2}(x)$ on $\Mscr$ with
    $(f_y, g_y)$ as an existential witness for $y$. \par
    Now let us show that $\phi_{\nDiv 2}$ is an internal reference to $D_{\nDiv 2}$. Suppose that
    $\Mscr/\Dcal \models \phi_{\nDiv 2}(\qclass{f,g})$
    with $\qclass{f_y,g_y}$ as an existential witness for $y$. We claim that
    $(f_y(i),g_y(i))= (f_y(1), g_y(1))$ for $i\in [3]$,
    and $(f_y(4),g_y(4))=(f_y(5),g_y(5))$. Indeed, 
    suppose that $(f_y(1),g_y(1))= (f_y(2), g_y(2))$
    Then exactly one of 
    $(f_y,g_y)^{(1,2,3,3,3)}$ or 
    $(f_y, g_y)^{(2,1,3,3,3)}$ would belong to $D_<$, contradicting $ (f_y,g_y)^{(1,2,3,3,3)} \sim_\Dcal (f_y, g_y)^{(2,1,3,3,3)}$. The other identities can be argued similarly. Hence, we have that
    $3g_y(1) + 2g_y(4) = 1$, where $g_y(1), g_y(4)\in \ZZ$. This forces  $g_y(1)\nDiv 2$ and $g_y(4) \nDiv 3$. In particular $g_y(1), g_y(4)\neq 0$, 
    so $f_y(1), f_y(4) > 0$ by the definition of $\Mscr$. We also have that $3 f_y(1) + 2f_y(4)$, so
    we obtain $f_y(1)< 1/3$ using that $f_y(4)>0$. All of this together implies that
    $(f_y,g_y)^{(1,2,2,2,2)}\in D_{\nDiv 2}$. Finally, the fact that $(f,g)\sim_\Dcal (f_y,g_y)^{(1,2,2,2,2)}$ shows that 
    $(f,g)\in D_{\nDiv 2}$, as we wanted to prove. \\~\\
    \noindent
    \textbf{Claim 6: The predicate $O^\Ical$ is internal at arity $6$.} The formula $\phi_O(x)$
    defined below is an internal definition of $O^\Ical$. 
    \begin{align*}
    & \exists y \Big( 
    \phi_{\nDiv 2}(y) \wedge
    y = x^{(1,2,2,2,2)} \wedge 
    x^{(1,2,3,3,3)} = x^{(2,1,3,3,3)}  \\ & \wedge
    x^{(1,3,2,3,3)} = x^{(2,3,1,3,3)} \wedge
    x^{(1,3,3,2,3)} = x^{(2,3,3,1,3)}
    \Big).
    \end{align*} 
    ~\\
    \noindent
    \textbf{Claim 7: Let $i\in [3]$. Suppose $\phi(x)$ is an internal reference to $U^\Ical$. Then the following formula is also an internal reference to $U^\Ical$: }
\begin{align*}
\phi^\prime(x) \equiv
\exists y \exists^6 z
\Bigl( 
\phi(y)
\wedge \phi_{E_i}(z) 
\wedge  y = z^{(1,2,3,4,5,5)}
\wedge x = z^{(1,2,3,4,i,5)}
\Bigr).
\end{align*}
\textbf{Moreover, if $(f,g)$ satisfies $\phi(x)$ (on $\Mscr_\BLP$), then $(f^\prime,g^\prime)$
satisfies $\phi^\prime(x)$, where 
\begin{itemize}
    \item $(f^\prime(i), g^\prime(i))= (f(i),g(i))+ (f(4),g(4))$
    \item $(f^\prime(5), g^\prime(5))= (f(5),g(5)) - (f(4),g(4))$,  and
    \item $(f^\prime(j), g^\prime(j))=(f(j), g(j))$,   for $j\neq i, 5$. 
\end{itemize}} 
This is shown exactly as Claim 8 in \Cref{sec:3-dim_grid_BLP}. \\~\\
\noindent
    \textbf{Claim 8: Let $i\in [3]$. Suppose $\phi(x)$ is an internal reference to $U^\Ical$. Then the following formula is also an internal reference to $U^\Ical$: }
\begin{align*}
\phi^\prime(x) \equiv
\exists y \exists^6 z
\Bigl( 
\phi(y)
\wedge \phi_{\Ebb_i}(z)
\wedge y = z^{(1,2,3,4,5,5)}
\wedge x = z^{(1,2,3,4,i,5)}
\Bigr).
\end{align*}
\textbf{Moreover, if $(f,g)$ satisfies $\phi(x)$ (on $\Mscr_\BLP$), then $(f^\prime,g^\prime)$
satisfies $\phi^\prime(x)$, where
\begin{flalign*}
& (f^\prime(i), g^\prime(i))= (2f(i),2g(i)), &\\&
(f^\prime(5), g^\prime(5))= (f(5),g(5)) - (f(i),g(i)), \text{ and } &\\&
 (f^\prime(j), g^\prime(j))=(f(j), g(j)), \text{ for $j\neq i, 5$.} &
\end{flalign*}
}
Again, this follows similarly to Claim 8 in \Cref{sec:3-dim_grid_BLP}.\\~\\
\noindent
\textbf{Claim 9: The predicate $U^\Ical$ is internal at arity $6$.} We prove the claim using~\Cref{le:internal_criterion}. 
For each 
$(m_1,m_2,m_3)\in \NN^3$ we define an internal reference $\phi_{m_1, m_2,m_3}(x)$ to $U^\Ical$ inductively following the lexicographical order on $\NN^3$ in such a way that $\Mscr \models \phi_{m_1, m_2, m_3}((f,g))$ for all
elements $(f,g) \in U^\Ical$ satisfying
that $(f(i),g(i))=m_i(f(4),g(4))$ for each $i\in [3]$.
We define
\[
\phi_{1,1,1}(x) \equiv 
\phi_O(x).
\]
Now, let $(m_1, m_2, m_3)\in \NN^3$ and
suppose $i\in [3]$ is the greatest index for which $m_i>1$. We have two cases. Suppose $m_i$ is odd. Then we let $n_j=m_j$ for $j\neq i$, and 
$n_i= m_i-1$, and define
\begin{align*}
\phi_{m_1,m_2, m_3}(x) \equiv 
\exists y \exists^6 z
\Bigg(  
\phi_{n_1,n_2,n_3}(y)
\wedge \phi_{E_i}(z) 
\wedge y = z^{(1,2,3,4,5,5)}
\wedge x = z^{(1,2,3,4,i,5)}
\Bigg).
\end{align*}
Otherwise, suppose $m_i$ is even. Then we let $n_j=m_j$ for $j\neq i$, and 
$n_i= m_i/2$, and define
\begin{align*}
\phi_{m_1,m_2, m_3}(x) \equiv 
\exists y \exists^6 z
\Bigg( 
\phi_{n_1,n_2,n_3}(y)
\wedge \phi_{\Ebb_i}(z) 
\wedge y = z^{(1,2,3,4,5,5)}
\wedge x = z^{(1,2,3,4,i,5)}
\Bigg).
\end{align*}
Now the statement follows from Claims 7 and 8.
\\~\\

\begin{proof}[Proof of \Cref{th:BLP+AIP_main}]
  Observe that $5\geq \arty(U^\Ical)$ and $5\geq \arty(P)$ for all $P\in \Dcal$. 
    \Cref{th:BLP+AIP_main}-(2) follows from \Cref{th:reduction_non_computability} and
    \Cref{prop:3-grid} using
    that $\bGS_\Ical$ is finitely equivalent to $\bm{\Gamma}^+$, and
    $\Ical$ is internal at arity $6$ w.r.t. $\Dcal$. \par
    We prove \Cref{th:BLP+AIP_main}-(1) using
    \Cref{th:reduction_hardness} and \Cref{prop:3-grid}. 
    To do this it is enough to define a $6$-ary pattern $\Psi$ 
    of internal references to $\Ical$ w.r.t. $\Dcal$.
    We map each finite structure $\I\in \Hom(\bm{\Gamma}^+, \cdot)$ to the minor condition
    $\Psi_\I$ defined as follows. First, we compute a homomorphism $v\mapsto (m_v, n_v, o_v)$
    from $\I$ to $\bm{\Gamma}^+$ in polynomial time, in such a way that
    $\max_{v\in I} \log_2(m_v n_v o_v) \leq |I|$. Then we set
    \begin{align*}
    \Psi_I \equiv 
    \bigexists_{v\in I}
    x_v \bigexists_{R\in \Sigma_{\Gamma^+}, r\in R^I}
    x_r 
    \left(
    \bigwedge_{v\in I}
    \phi_{m_v, n_v, o_v}(x_v)
    \right) \bigwedge
    \left(
    \bigwedge_{R\in \Sigma_{\Gamma^+}, r\in R^I} 
    \phi_R(x_r) 
    \bigwedge_{i\in \arty(R)}
    x_r^{\Pi^\Ical_{R,i}} = x_{r(i)}
    \right).
    \end{align*}
    
    To see that $\Psi_\I$ can be computed in polynomial time, observe that $\phi_{m,n,o}$
    takes $O(\log_2(mno))$ time to be constructed inductively, and $\max_{v\in I} \log_2(m_v n_v o_v) \leq |I|$.
    In order to prove that $(\Ical, \Dcal, \Psi, 6)$ is a valid pattern we just need to show that
    $\Mscr \models \Psi_\I$.
    We give an explicit homomorphism from $\I$ to $\bGS_\Ical$. This map is defined by
    $v \mapsto (f_v, g_v)$, where
     \[
    f_v = \left( 
    \frac{m_v}{j}, \frac{n_v}{j},
    \frac{o_v}{j}, \frac{1}{j}, 
    1- \frac{m_v + n_v + o_v + 1}{j}
    \right), \]
    where $j= \max_{v\in I} m_v + n_v + o_v +1$, and    
    \[
    g_v = \left( 
    m_v, n_v,
    o_v, 1, 
    1- (m_v + n_v + o_v + 1)
    \right).
    \]
    Now $\Mscr \models \Psi_\I$ can be shown the same way as in the proof of \Cref{th:AIP_main}-(1) in \Cref{sec:3-dim_grid_AIP}.    
\end{proof}

\subsection{Weak Near-Unanimity Polymorphisms}
\label{sec:WNU}

In this section we prove items (2)(i-iv) of \Cref{th:minor_identities_main}. We begin by introducing a minion $\overline \Wscr$ that characterizes the existence of WNUs suitably. Then, item (1)(i) will follow from interpreting the grid $\bm \Gamma$ over $\overline \Wscr$ (\Cref{sec:all_wnus}), and items (1)(ii-iv) from interpreting growing triangular slices $\bm{\nabla}_n$ over $\overline \Wscr$ (\Cref{sec:wnu_exists}). \par

Given $k\geq 2$, we introduce a minion $\Wscr_k$ that characterizes the existence of a $n$-ary w.n.u. We define an auxiliary minion first. Let $\Wscr_k$ be the minion whose $n$-ary elements are the ordered partitions of $[2]$. That is,
\begin{align*}
\Wscr_k(n) = 
\Biggl\{ 
\gamma \in \left( 2^{[k]} \right)^n \, \Bigg\vert 
\, \bigcup_{i\in [n]} \gamma(i) = [k], \, 
\text{ and }  \gamma(i)\cap \gamma(j) = \emptyset \, \text{for all } i\neq j
\Biggr\}. 
\end{align*}
Given a map $\pi:[n]\rightarrow [m]$ and elements $\omega\in \Wscr_k(n)$ and
$\gamma\in \Wscr_k(m)$, the identity $\gamma = \omega^\pi$ holds if
$\gamma(i) = \bigcup_{j\in \pi^{-1}(i)} \omega(j)$. We write $\gamma\sim_k \omega$ for two elements $\gamma,\omega\in  \Wscr_k(n)$ if
for any $\pi: [n] \rightarrow [2]$
we have that $\gamma^\pi = \omega^\pi$, that $|\gamma^\pi(1)|=|\omega^\pi(1)|= 1$, or that
$|\gamma^\pi(1)|=|\omega^\pi(1)|= k-1$. Clearly the equivalence relation $\sim_k$ is compatible with minoring (i.e. $\gamma \sim_k \omega$ implies $\gamma^\pi \sim_k \omega^\pi$ for all suitable maps $\pi$), so we can define $\overline{\Wscr}_k= \Wscr_k/\sim_k$. Given $\gamma \in \Wscr_k$, we write $\overline{\gamma}$ to denote its $\sim_k$-class.

\begin{lemma}
\label{le:wnu}
    Let $\Mscr$ be a minion and $p\in \NN$ be a prime number. Then $\Mscr$ contains a w.n.u. element of arity $k$ if and only if $\overline{\Wscr}_k \rightarrow\Mscr$.
\end{lemma}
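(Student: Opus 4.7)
The plan is to follow exactly the template of Lemma \ref{le:cyclic}, with the generator $\gamma_0 = (\{1\}, \{2\}, \ldots, \{k\}) \in \Wscr_k(k)$ playing the role that $(\{0\}, \{1\}, \ldots, \{p-1\})$ played for $\overline{\Cscr}_p$. Both directions of the equivalence will be established by carefully exploiting the fact that $\gamma_0$ generates all of $\Wscr_k$ under minoring: every $\omega \in \Wscr_k(n)$ is of the form $\gamma_0^{\pi_\omega}$ for the characteristic map $\pi_\omega: [k] \rightarrow [n]$ sending each $i \in [k]$ to the unique $j \in [n]$ with $i \in \omega(j)$.

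For the ``only if'' direction, I would first verify that $\overline{\gamma_0} \in \overline{\Wscr}_k(k)$ is itself a WNU element. Computing the minors explicitly, $\gamma_0^{\sigma_i} = (\{i\}, [k]\setminus\{i\})$, so the required identities $\overline{\gamma_0}^{\sigma_i} = \overline{\gamma_0}^{\sigma_j}$ reduce to checking $(\{i\}, [k]\setminus\{i\}) \sim_k (\{j\}, [k]\setminus\{j\})$ in $\Wscr_k(2)$, which is a direct verification from the definition of $\sim_k$: for every $\pi \in [2]^{[2]}$, either the two minors coincide (cases $\pi = (1,1)$ and $\pi = (2,2)$) or the singleton condition in the definition applies. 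Hence, for any homomorphism $\alpha: \overline{\Wscr}_k \rightarrow \Mscr$, the image $\alpha(\overline{\gamma_0})$ is a WNU of arity $k$ in $\Mscr$.

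For the ``if'' direction, suppose $f \in \Mscr(k)$ is a WNU. I would define $\alpha: \overline{\Wscr}_k \rightarrow \Mscr$ by $\alpha(\overline{\omega}) = f^{\pi_\omega}$, with $\pi_\omega$ the characteristic map described above. The crux is to show well-definedness on $\sim_k$-classes: if $\omega_1 \sim_k \omega_2$, then $f^{\pi_{\omega_1}} = f^{\pi_{\omega_2}}$. This will be the main obstacle of the proof, and should be deduced by unpacking the definition of $\sim_k$ and repeatedly applying the WNU identities of $f$: the singleton conditions in the definition of $\sim_k$ are precisely calibrated so that, whenever $\omega_1 \neq \omega_2$, the two characteristic maps $\pi_{\omega_1}, \pi_{\omega_2}$ differ only by a permutation that, restricted to the partition structure of the maps, boils down to the WNU swap $f^{\sigma_i} = f^{\sigma_j}$.

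Once well-definedness is established, the remaining checks are routine. Compatibility with minoring follows from the identity $\pi_{\omega^\pi} = \pi \circ \pi_\omega$, which gives $\alpha(\overline{\omega^\pi}) = f^{\pi \circ \pi_\omega} = (f^{\pi_\omega})^\pi = \alpha(\overline{\omega})^\pi$. Preservation of arities is immediate from the construction. Thus $\alpha$ is a well-defined minion homomorphism from $\overline{\Wscr}_k$ to $\Mscr$, completing the equivalence.
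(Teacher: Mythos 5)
Your proof is structurally identical to the paper's: same generator $\gamma_0$, same characteristic map $\pi_\omega$, same reduction of well-definedness to the WNU identity. However, the explicit verification you offer for the forward direction does not actually go through under the paper's literal definition of $\sim_k$. Take $\pi = (2,1) \in [2]^{[2]}$: then $(\{i\}, [k]\setminus\{i\})^{(2,1)} = ([k]\setminus\{i\}, \{i\})$, whose first coordinate has size $k-1$, not $1$. Since the definition of $\sim_k$ references only $|\gamma^\pi(1)|$, the disjunct you invoke is false for $k \geq 3$, and the two minors are unequal, so the condition for $\pi = (2,1)$ fails and $\overline{\gamma_0}$ is not a WNU under the literal definition. (Worse, the literal definition makes $\sim_k$ trivial for $k \geq 3$: whenever $|\gamma^\pi(1)| = 1$ and $\gamma^\pi \neq \omega^\pi$, the swap $\pi'$ gives $|\gamma^{\pi'}(1)| = k-1 \neq 1$ and still $\gamma^{\pi'} \neq \omega^{\pi'}$.) The definition is evidently missing a symmetrized disjunct, e.g.\ ``$\gamma^\pi = \omega^\pi$ or $|\gamma^\pi(1)|=|\omega^\pi(1)|= 1$ or $|\gamma^\pi(2)|=|\omega^\pi(2)|= 1$,'' under which your check does succeed. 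The point is that your ``direct verification'' as stated is false, and the right conclusion from doing it carefully is that the definition of $\sim_k$ needs repair; asserting that it ``applies'' papers over exactly the case that breaks.

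Separately, you punt on the well-definedness argument in the ``if'' direction, which is the only substantive step there. The paper spells it out: if $\gamma_1 \sim_k \gamma_2$ and $\gamma_1 \neq \gamma_2$, then (under the corrected definition) the nonempty parts of both sit at two common positions $i,j$ with $|\gamma_s(i)| = 1$ and $|\gamma_s(j)| = k-1$, so $\pi_{\gamma_s} = \tau \circ \sigma_{a_s}$ for a single $\tau: [2]\to [n]$ and maps $\sigma_{a_s}: [k]\to [2]$ with one-element preimage of $1$; then $f^{\pi_{\gamma_1}} = (f^{\sigma_{a_1}})^\tau = (f^{\sigma_{a_2}})^\tau = f^{\pi_{\gamma_2}}$ by the WNU identity. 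With that step written out and the definition of $\sim_k$ fixed, your proposal matches the paper.
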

\begin{proof}
    Let $\omega\in \Wscr_k(k)$ be the element defined by $\omega(i)=\{i\}$ for each $i\in [k]$. Suppose there is a homomorphism $\alpha: \overline{\Wscr_k} \rightarrow \Mscr$. Then $\alpha(\overline{\omega})$ must be a $k$-ary w.n.u. \par
    In the other direction, suppose that $f\in \Mscr(k)$ is a w.n.u. Then we define a homomorphism $\alpha: \overline{\Wscr_k}  \rightarrow \Mscr$ by setting 
    $\alpha(\overline{\omega})= f$. This defines the homomorphism completely. Indeed, for any $\gamma\in \Wscr_k(n)$ we have that $\gamma = \omega^{\pi_\gamma}$, where for each $i\in [k]$ we have $\pi_\gamma(i)=j$ if and only if $i\in \gamma(j)$.
    Hence, we define $\alpha(\overline{\gamma})= f^{\pi_\gamma}$.
    In order to prove that $\alpha$ is well-defined we need to prove that $f^{\pi_{\gamma_1}}=f^{\pi_{\gamma_2}}$ for any $\gamma_1, \gamma_2\in \Wscr_k$ satisfying $\gamma_1\sim \gamma_2$. Suppose that $\gamma_1,\gamma_2\in \Wscr_k(n)$ are elements satisfying    
    $\gamma_1\sim \gamma_2$ but $\gamma_1\neq \gamma_2$. Then there must be indices $i,j\in [n]$ satisfying $|\gamma_1(i)|=|\gamma_2(i)|=1$ and
    $|\gamma_1(j)|=|\gamma_2(j)|=k-1$. Let $\tau:[2] \rightarrow [n]$ be the map $1\mapsto i$, $2\mapsto j$. Then for $s=1,2$ we have $\pi^{\gamma_s}= \tau \circ \sigma_s$, where $\sigma_s:[k]\rightarrow  [2]$ satisfies $|\sigma_s^{-1}(1)|=1$ and $|\sigma_s^{-1}(2)|=k-1$. The fact that $f$ is a w.n.u. implies that $f^{\sigma_1}=f^{\sigma_2}$, so $f^{\pi_{\gamma_1}}=f^{\pi_{\gamma_2}}$, as we wanted to prove. 
\end{proof}

We define $\overline{\Wscr}$ as the disjoint union
$\bigsqcup_{k \geq 3} \overline{\Wscr}_k$. A straight-forward corollary of last lemma is the following.

\begin{corollary}
    \label{le:minion_all_wnus}
      Let $\Mscr$ be a minion. Then $\Mscr$ contains a w.n.u. of each arity $k\geq 3$ if and only if $\overline{\Wscr} \rightarrow\Mscr$.
\end{corollary}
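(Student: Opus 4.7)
The plan is to derive this corollary directly from \Cref{le:wnu} by exploiting the fact that $\overline{\Wscr}$ is, by construction, the disjoint union $\bigsqcup_{k\geq 3}\overline{\Wscr}_k$ of the per-arity minions that already characterize the existence of an individual w.n.u. The key structural observation is that minion homomorphisms out of a disjoint union of subminions are in one-to-one correspondence with families of minion homomorphisms out of each component: since the minoring operations of $\overline{\Wscr}$ act component-wise (a minor of an element of $\overline{\Wscr}_k$ stays in $\overline{\Wscr}_k$), any assignment of each component separately is automatically compatible with the minoring structure of the union. This is the same observation used implicitly in the proof of \Cref{lem:hom_to_power_minion_disjoint_union}.

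For the forward direction, I would assume $\Mscr$ contains a w.n.u.\ of every arity $k\geq 3$. Then by \Cref{le:wnu}, for each such $k$ there exists a minion homomorphism $\alpha_k:\overline{\Wscr}_k\rightarrow\Mscr$. Define $\alpha:\overline{\Wscr}\rightarrow\Mscr$ by setting $\alpha(\overline{\gamma})=\alpha_k(\overline{\gamma})$ whenever $\overline{\gamma}\in\overline{\Wscr}_k$. This is well-defined since the components $\overline{\Wscr}_k$ are pairwise disjoint, and it preserves minors because each $\alpha_k$ does and minoring inside $\overline{\Wscr}$ never crosses components.

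For the backward direction, I would assume $\alpha:\overline{\Wscr}\rightarrow\Mscr$ is a minion homomorphism and fix an arbitrary $k\geq 3$. The restriction $\alpha\vert_{\overline{\Wscr}_k}$ is itself a minion homomorphism from $\overline{\Wscr}_k$ to $\Mscr$, so by \Cref{le:wnu} the minion $\Mscr$ contains a w.n.u.\ of arity $k$. Since $k$ was arbitrary, $\Mscr$ contains w.n.u.'s of all arities $k\geq 3$.

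There is no real obstacle here; the corollary is essentially bookkeeping on top of \Cref{le:wnu}. The only point that deserves a brief mention in the written proof is why combining the per-component homomorphisms yields a genuine minion homomorphism on the union, which amounts to noting that minoring in $\overline{\Wscr}$ preserves the component an element lives in.
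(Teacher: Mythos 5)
Your argument is correct and is precisely the intended one: the paper explicitly states that $\overline{\Wscr}=\bigsqcup_{k\geq 3}\overline{\Wscr}_k$ and labels the corollary "a straight-forward corollary of last lemma," leaving the bookkeeping about disjoint unions implicit. Your explicit justification that minion homomorphisms out of a disjoint union decompose componentwise (and conversely) is exactly the right supporting observation, matching the remark used in the proof of \Cref{lem:hom_to_power_minion_disjoint_union}.
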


\subsubsection{WNUs: Interpreting the Grid}
\label{sec:all_wnus}

The following interpretation $\Ical$ induces a global structure that is finitely equivalent to $\bm{\Gamma}$

\begin{interpret}
\label{interpret:WNUs}
    The $\Sigma_\Gamma$-interpretation $\Ical$ over $\overline{\Wscr}$ is given by
\begin{flalign*}
   U^\Ical= \overline{\Wscr}(3), && 
\end{flalign*}
\begin{flalign*}
    & O^\Ical = \Bigl\{
\overline{\omega} \in \overline{\Wscr}(3) \, \Big\vert 
\, \omega(1)= \omega(2) = \emptyset
\Bigr\}, \quad \text{and} \quad \Pi^\Ical_{O,1}=\id, &&
\end{flalign*}
and, for each $i\in [2]$,
\begin{flalign*}
  & E_i^\Ical = \Bigl\{
\overline{\omega} \in \overline{\Wscr}(4) \, \Big\vert 
\, |\omega(3)|= 1
\Bigr\}, \quad \text{and} \quad 
\Pi^\Ical_{E_i,j} = 
\begin{cases}
    (1,2,3,3) \text{ for $j=1$,}\\
    (1,2,i,3) \text{ for $j=2$}.
\end{cases} &&
\end{flalign*}
\end{interpret}
\noindent
Given an integer $k\geq 3$ we define $\Ical_k$ as the restriction $\Ical\vert_{\overline{\Wscr}_k}$.
Then we have that $\bGS_\Ical= \bigsqcup_{k\geq 3} \bGS_{\Ical_k}$. \par
The idea behind this interpretation relatively simple (the difficult part will be to show that its predicates are internal w.r.t. some suitable description). We represent grid elements $(n,m)\in \NN$ as elements $\overline{\omega}\in \overline{\Wscr}(3)$ where
the partition $\omega$ satisfies $|\omega(1)|=n-1$ and $|\omega(2)|=m-1$. This is correspondence is far from a bijection, but we only need to show finite equivalence. \\~\\ \noindent
    \textbf{ Claim 1: The structure 
    $\bGS_\Ical$ induced by $\Ical$ is finitely equivalent to $\bm{\Gamma}$.}
    Given a number $m\in \NN$, we define $\bm{\Gamma}_m$ to be the substructure of $\bm{\Gamma}$ induced on the elements $(n,o)\in \NN^2$ satisfying $n+o \leq m$.
    Observe that $\bm{\Gamma}$ is finitely equivalent to the disjoint union $\bigsqcup_{i\in \NN} \bm{\Gamma}_i$, and that $\bm{\Gamma}_i\rightarrow \bm{\Gamma}_j$ for each pair $i\leq j$. We prove that
    $\bm{\Gamma}_{k+2}$ is finitely equivalent to $\bGS_{\Ical_k}$
    for each $k\geq 3$. 
    Observe that this proves the claim. We define suitable homomorphisms. Let $F: \bGS_{\Ical_k}
    \rightarrow \bm{\Gamma}_{k+2}$ be the map 
    $\overline{\omega}\mapsto (|\omega(1)|+1,|\omega(2)|+1)$. To see that $F$ is well defined, observe that the relation $\sim_k$ preserves the size of sets. That is,
    if $\omega_1 \sim_k \omega_2$ for some
    $\omega_1, \omega_2\in \Wscr_k$, then
    $|\omega_1(i)|=|\omega_2(i)|$ for all $i$.
    The fact that $F$ is indeed a homomorphism follows from the definition of $\bGS_{\Ical_i}$. \par
    Now let $H: \bm{\Gamma}_{k+2}\rightarrow 
    \bGS_{\Ical_k}$ be the map given by
    $(m,n)\mapsto \overline{\omega_{m,n}}$,
    where $\omega_{m,n}\in \Cscr_p(3)$ is defined as $( X_m, Y_n, [k] \setminus (X_m \cup Y_n))$, where $X_m= [m-1]$ and
    $Y_n= \{k-n + 2, \dots
    k \}$. Observe that for $m=n=1$ we have $X_m=Y_n= \emptyset$. Hence,
    $H(1,1)\in O^{\GS_{\Ical_k}}$.    
    To see that $H$ is a homomorphism we need to prove that 
    $H$ preserves $E_1$ and $E_2$. We show the statement for $E_1$, the other case is analogous. In other words, we need to prove that
    $(\overline{\omega_{m,n}},
    \overline{\omega_{m+1,n}})\in E_1^{\GS_{\Ical_k}}$ for all $((m,n),(m+1,n))\in E_1^{\Gamma_{k+2}}$. Consider the element $\omega= ( X_m, Y_n, \{m\}, [k]\setminus (X_m \cup Y_n \cup \{m\}) )$. Then we have that $\overline{\omega}\in E_1^\Ical$, and 
    \[
    \overline{\omega}_{m,n}=
    \overline{\omega}^{\Pi^\Ical_{E_1,1}},\quad
    \overline{\omega}_{m+1,n}=
    \overline{\omega}^{\Pi^\Ical_{E_1,2}},
    \]
    as we wanted to prove. This completes the proof of the claim.
    \\~\par

  We define a description $\Dcal$ so that $\Ical$ is internal at arity $4$ (w.r.t. $\Dcal$). 
    \begin{desc}
    \label{desc:WNUs}
    The description $\Dcal\subseteq 2^{\overline{\Wscr}}$ consists of the predicates
    \begin{flalign*}
    & D_0= \{
    \overline{\omega}\in \overline{\Cscr}(2) \, \vert \, \omega(1)= \emptyset
    \}, && \\&
    D_1= \{
    \overline{\omega}\in \overline{\Cscr}(2) \, \vert \, |\omega(1)|= 1
    \},  && \\
    &
    D_{*}= \{
    \overline{\omega}\in \overline{\Cscr}(2) \, \vert \, |\omega(1)|, |\omega(2)|\neq 1, \text{ and } 1 \in \omega(1) \}. &&
    \end{flalign*}
    \end{desc}
  
Given $\square\in \{0,1,*\}$ we also define
the auxiliary binary predicate
$C_\square \in 2^{\Wscr}$ which contains all elements $\overline{\omega}$ such that $\overline{\omega}^{(2,1)}\in D_\square$. This way,
\[ 
    \left(\bigsqcup_{\square\in \{0,1,*\} } D_\square \right)\bigsqcup 
    \left(\bigsqcup_{\square\in \{0,1,*\} } C_\square \right)\]
    is a partition of $\Wscr(2)$. \par
We warn the reader that we deal with two nested equivalence relations from now on: An element $\qclass{\overline{\omega}}\in \overline{\Wscr}/\Dcal$ is a $\sim_\Dcal$-class of some $\overline \omega \in \overline \Wscr$, which is in turn a $\sim_k$-class of an element $\omega \in \Wscr_k$ for some integer $k\geq 3$. \par

Now the task at hand is showing that $\Ical$ is internal with respect to $\Dcal$. The crux is proving that the predicate $D_1$ is internal w.r.t.  $\Dcal$, as that predicate is what enables us to define unit increments. This seems intuitive: after all, the elements $\overline{\omega}\in \overline{\Wscr}(2)$ with $|\omega(1)|=1$ are the ones that explain the existence of WNU elements in $\overline{\Wscr}$. The intuition behind the internal reference to $D_1$ shown in Claim 4 is that, starting from an element of the form $\overline{\omega}=\overline{([k], \emptyset`)}$, the elements of $D_1$ are the only ones which allow us to transfer all the weight of $\overline{\omega}$ from its first coordinate to its second coordinate by taking $\Dcal$-equivalent ``scoops''. We show that $\Ical$ is internal at arity $4$ (w.r.t. $\Dcal$) through the following claims. \\~\\

\noindent
\textbf{Claim 2: The predicates $U^\Ical$, $O^\Ical$,
        $E_1^\Ical$ and $E_2^\Ical$ are all $\Dcal$-stable.} This follows from the definitions. \\~\\
\noindent
\textbf{Claim 3: The predicate $O^\Ical$ is internal at arity $4$.} Indeed, the following formula is an internal definition of $O^\Ical$:
        \[
        \phi_O(x) \equiv x = x^{(3,3,3)}.
        \]
        ~\\
        \noindent
\textbf{Claim 4: Let $m\geq 1$ be an integer. Then the following is an internal reference to $D_1$:}
        \begin{align*}
    \phi_{1,m}(x^2) \equiv & \bigexists_{i\in [m+2]}^2 y_i
    \bigexists_{j \in [m]}^4 z_j \Biggl( 
    y_1=y_1^{(1,1)} \,  \wedge  \, y_{m+1}= y_{m+1}^{(2,2)} 
    \Biggr)
    \bigwedge \\&
    \Biggl( \bigwedge_{i\in [m]}  x =z_i^{(2,1,2,2)} 
    \, \wedge x = z_i^{(2,2,1,2)}  \,  
    \wedge \, y_i =  z_i^{(1,1,1,2)} \,  \wedge   \,  y_{i+1}= z_i^{(1,1,2,2)}
    \, \wedge \,  y_{i+2}=z_i^{(1,2,2,2)}
    \Biggr).
    \end{align*}
    We recall the intuition given at the start of this series of claims. Consider an
    element $\overline{\omega_x}\in \overline{\Wscr}$ such that
    $\overline{\Wscr}/ \Dcal \models \phi_{1,m}(\qclass{\overline{\omega_x}})$ and consider a satisfying assignment witnessing this, given by $y_i \mapsto \qclass{\overline{\omega_{y_i}}}$ and $z_i\mapsto \qclass{\overline{\omega_{z_i}}}$ for all suitable indices $i$. What this formula indicates is that all the mass of $\omega_{y_1}$ is in its first coordinate, and all the mass of $\omega_{y_{m+1}}$ is in its second one. The sequence
    $\omega_{y_1},\dots, \omega_{y_{m+1}}$ is obtained by successively taking $\Dcal$-equivalent scoops out of the first coordinate of $\omega_{y_i}$ and placing them into its second coordinate, obtaining $\omega_{y_{i+1}}$. The key idea is that this is only possible if the scoops are of size one. Otherwise, only one of them would contain the element $1$ and would not be $\Dcal$-equivalent to the rest. This motivates the predicate $D_*\in \Dcal$. Now we give the formal argument. \par    
   Suppose that $\overline{\Cscr}/\Dcal \models 
    \phi_{1,m}(\qclass{\overline{\omega}})$,
    with $\qclass{\overline{\omega_{y_i}}}$ and $\qclass{\overline{\omega_{z_i}}}$ as witnesses for each variable $y_i$, $z_i$. We prove that $\overline{\omega}\in D_1$. 
    Suppose that $\overline{\omega}\notin D_1$ for the sake of contradiction. Then $\overline{\omega}$
    must belong to either $D_0,D_*, C_0, C_1, C_*$. We rule out each of the possibilities by case analysis. \par
    \textbf{Suppose that $\overline{\omega}\in D_0$.} Then, $\overline{\omega}_{z_i}^{(2,1,2,2)},
    \overline{\omega}_{z_i}^{(2,2,1,2)}\in D_0$ for all $i\in [m]$. This means that for all $i\in [m]$ we have 
    $\omega_{z_i}(2)=\omega_{z_i}(3)=\emptyset$, and $\overline{\omega}_{z_i}^{(1,1,1,2)}=\overline{\omega}_{z_i}^{(1,1,2,2)}= \overline{\omega}_{z_i}^{(1,2,2,2)}$. This yields that
    $\overline{\omega}_{y_i}\sim_\Dcal \overline{\omega}_{y_{i+1}}\sim_\Dcal \overline{\omega}_{y_{i+2}}$ for all $i\in [m]$, so in particular $\overline{\omega}_{y_{1}} \sim_\Dcal  \overline{\omega}_{y_{m+2}}$. This implies $\overline{\omega}_{y_1}\in D_0 \cap C_0$, a contradiction. \par
    \textbf{Suppose that $\overline{\omega}\in D_*$.} This implies that
     $\overline{\omega_{z_i}}^{(2,1,2,2)}, \overline{\omega_{z_i}}^{(2,2,1,2)} \in D_*$, for all $i\in [m]$. However, this means that
     $1\in \omega_{z_i}(2)$, and $1\in \omega_{z_i}(3)$, contradicting the fact that $\omega_{z_i}$ is an ordered partition. \par 
    \textbf{Suppose that $\overline{\omega}\in C_1$.}
      Then, $\overline{\omega}_{z_i}^{(2,1,2,2)}, \overline{\omega}_{z_i}^{(2,2,1,2)}\in D_*$ for all $i\in [m]$.
    This means that 
    \begin{align*}
    |\omega_{z_i}(1)| + |\omega_{z_i}(3)| + |\omega_{z_i}(4)| = |\omega_{z_i}(1)| + |\omega_{z_i}(2)| + |\omega_{z_i}(4)| = 1,
    \end{align*}
    This implies that $|\omega_{z_i}(2)|= |\omega_{z_i}(3)|\leq 1$. However, these identities together with
    \[
    |\omega_{z_i}(1)| + |\omega_{z_i}(2)| +  |\omega_{z_i}(3)| +  |\omega_{z_i}(4)|  \geq 3
    \]
    also imply that $|\omega_{z_i}(2)|,|\omega_{z_i}(3)|\geq 2$. This yields a contradiction. \par

    \textbf{Suppose that $\overline{\omega}\in C_0$.} Then, $\overline{\omega_{z_i}}^{(2,1,2,2)},
    \overline{\omega_{z_i}}^{(2,2,1,2)}
    \in C_0$
    for all $i\in [m]$. The first inclusion means that
    $\omega_{z_i}(1) =\omega_{z_i}(3) = \omega_{z_i}(4) = \emptyset$, and the second means that
    $\omega_{z_i}(1) =\omega_{z_i}(2) = \omega_{z_i}(4) = \emptyset$. This means that
    all entries of $\omega_{z_i}$ contain the empty set, contradicting the fact that
    $\omega_{z_i}$ is an ordered partition of $[k]$ for some $k\geq 3$. \par    
    \textbf{Suppose that $\overline{\omega}\in C_*$.} 
    This is the hardest case. We have that $\overline{\omega}_{z_i}^{(2,1,2,2)},
    \overline{\omega}_{z_i}^{(2,2,1,2)}
    \in C_*$
    for all $i\in [m]$.
    Using the facts that $\overline{\omega}_{y_1}\in C_0$ and
    $\overline{\omega}_{y_1}\sim_\Dcal \overline{\omega}_{z_1}^{(1,1,1,2)}$ we obtain that
    $\omega_{z_1}(4)=\emptyset$. This way, 
    $\overline{\omega}_{y_2} \sim_\Dcal \overline{\omega}_{z_1}^{(1,1,2,2)}= \overline{\omega}_{z_1}^{(1,1,2,1)}$, and in particular
    $\overline{\omega}_{y_2}\in D_*$. Let $2 < j \leq m+2$ be the smallest index satisfying $\overline{\omega_{y_j}}\notin D_*$. Such index must exist because 
    $D_0$ and $D_*$ are disjoint, and
    $\overline{\omega}_{y_{m+2}}\in D_0$. We prove that $\overline{\omega{y_j}}\in D_1$.
    Indeed, we have both 
    \[
    \overline{\omega}_{y_{j-1}} \sim_\Dcal \overline{\omega}_{z_{j-2}}^{(1,1,2,2)}, \text{ and }
        \overline{\omega}_{y_j} \sim_\Dcal \overline{\omega}_{z_{j-2}}^{(1,2,2,2)}.
    \]
    By our choice of $j$, it must be that
     $\overline{\omega}_{y_{j-1}},  \overline{\omega}_{z_{j-2}}^{(1,1,2,2)} \in D_*$,
     meaning that $1\notin \omega_{z_{j-1}}(3) \sqcup \notin \omega_{z_{j-2}}(4)$. Additionally, the fact that 
     $\overline{\omega}_{z_{j-2}}^{(2,1,2,2)}\in C_*$ implies that
     $1\notin \omega_{z_{j-2}}(2)$ as well. Hence, $1 \in 
     \omega_{z_{j-2}}(1)$. This implies that $\overline{\omega}_{z_{j-2}}^{(1,2,2,2)}$ belongs to either $D_1$ or $D_*$. By our choice of $j$ the first case must hold, yielding $\overline{\omega}_{y_j}\in D_1$. Thus $j <  m+2$, because $\overline{\omega}_{y_{m+2}}\in D_0$.
     Now consider the element $\omega_{z_{j-1}}$. The fact that
     $\overline{\omega}_{y_{j}} \sim_{\Dcal} \overline{\omega}_{z_{j-1}}^{(1,1,2,2)}$ implies that
     $|\omega_{z_{j-1}}(1)| + |\omega_{z_{j-1}}(2)| = 1$. However, the fact that
     $\overline{\omega}_{z_{j-1}}^{(2,1,2,2)}\in D_*$ implies that 
     $|\omega_{z_{j-1}}(2)| > 1$, a contradiction. \\~\\
     \noindent
     \textbf{Claim 5: Let $m\geq 1$ be an integer and let $\overline{\omega} \in \overline \Wscr_{m+2}(2) \cap D_1$. Then $\overline \Wscr \models \phi_{1,m}(\overline{\omega})$.} We find witnesses for each variable in $\phi_{1,m}$. For each $i\in [m+2]$, let $\omega_{y_i}\in \Wscr_{m+2}(2)$ be defined as
    \[
    \omega_{y_i}(1)= [m+2-i], \]
     and
    \[ \omega_{y_i}(2) \{m+3-i,\dots , m+2\}.
    \]
    Similarly, given $i\in [m]$, we define 
    $\omega_{z_i}\in \Wscr_{m+2}(4)$ as
    \begin{align*}
    & \omega_{z_i}(1)= [m-i], \quad 
    \omega_{z_i}(2)= \{ m + 1 -i \}, \\ & 
    \omega_{z_i}(3)= \{ m+ 2 -i \}, \quad \text{ and } \\ &
    \omega_{z_i}(4)= \{m+3-i,\dots , m+2\}.
    \end{align*}
    Now it is routine to check that $\overline \Wscr \models \phi_{1,m}(\overline{\omega})$ with 
    $\overline{\omega}_{y_i}$ as a witness for $y_i$ for each $i\in [m+2]$, and 
    $\overline{\omega}_{z_i}$ as a witness for $z_i$ for each $i\in [m]$. \\~\\ \noindent
    \textbf{Claim 6: Let $i\in [2]$.
    Then $E^\Ical_i$ is internal at arity $4$.}
    By Claim 4, given an integer $m\geq 1$, the following is an internal reference to $E^\Ical_i$: 
    \[
    \phi_{E,m}(x)\equiv \phi_{1,m}(x^{(2,2,1,2)}).    
    \]
    Moreover, by Claim 5, if
    $\overline{\omega}\in \overline\Wscr_{m+2} \cap E^\Ical_i$, then 
    $\overline{\Wscr} \models \phi_{E,m}(\overline\omega)$. Hence the claim follows from \Cref{le:internal_criterion}. \\~\\

    \begin{proof}[Proof of item (2)(i) of \Cref{th:minor_identities_main}]
    Observe that $3\geq$ $\arty(U^\Ical)$ and $3\geq \arty(P)$ for all $P\in \Dcal$.
    The claims in this section show that $\bGS_\Ical$ is finitely equivalent to $\bm{\Gamma}$, and 
    $\Ical$ is internal at arity $4$ w.r.t. $\Dcal$. Hence, the result follows from 
    \Cref{th:reduction_undecidability} together with
    \Cref{prop:grid}. It is enough to consider templates of the form $(\bK^4_3, \B)$ in the statement. 
    \end{proof}

\subsubsection{WNUs: Interpreting Triangular Slices}
\label{sec:wnu_exists}
The following interpretation $\Ical$ over
$\overline{\Wscr}$ induces structures $\bGS_{\Ical_m}$
that are homomorphically equivalent to $\bm{\nabla}_m$ for each integer $m\geq 3$, where $\Ical_m= \Ical\vert_{\overline{\Wscr}_m}$.

\begin{interpret} The $\Sigma_\nabla$-interpretation $\Ical$ over $\overline{\Wscr}$ is defined as in
\Cref{interpret:WNUs}, by adding 
\begin{flalign*}
& W^\Ical= \{ \overline{\omega}\in \overline{\Wscr}(3) \, \vert \, \omega(3) = \emptyset \}, \quad 
\text{and} \quad \Pi^\Ical_{W,1}= \id. &&
\end{flalign*}
\end{interpret}
~\\
\noindent

\textbf{Claim 1: For each integer $k$, the structures $\bm{\nabla}_k$ and $\bGS_{\Ical_k}$ are homomorphically equivalent.} The maps defined in Claim 1 in \Cref{sec:all_wnus} are homomorphisms in both directions between $\bm{\nabla}_k$ and $\bGS_{\Ical_k}$. \\~\par

We define the description $\Dcal\subseteq 2^{\overline{\Wscr}}$ in the same way as in \Cref{desc:WNUs}. We claim $\Ical$ is internal at arity $4$ (w.r.t. $\Dcal$). The following, together with the claims from \Cref{sec:all_wnus}, proves the statement. \\ ~ \\
\noindent
    \textbf{Claim 2: The predicate $W^\Ical$ is internal at arity $4$.} Clearly $W^\Ical$ is $\Dcal$-stable. Additionally, the following is an internal definition of $W^\Ical$:
    \[
    \phi_W(x) \equiv x^{(1,1,2)} = x^{(1,1,1)}.
    \]
~\\

\begin{proof}[Proof of items (2)(ii-iv) of \Cref{th:minor_identities_main}]
Observe that $3\geq \arty(U^\Ical)$ and $3\geq \arty(P)$ for all $P\in \Dcal$.
Let $\Ical$ and $\Dcal$ be the $\Sigma_{\nabla}$-interpretation over
$\overline{\Cscr}$ and the description over $\overline{\Wscr}$ given in this section. 
The claims in this section and the previous one show that for each integer $k\geq 3$, the structure $\bGS_{\Ical_k}$ is homomorphically equivalent to $\bm{\nabla}_{k+2}$, where $\Ical_k= \Ical\vert_{\overline{\Wscr}_k}$, and $\Ical$ is internal at arity $4$ w.r.t. the description $\Dcal$. Then the result follows from \Cref{prop:triangles_undecidability} together with \Cref{th:reduction_undecidability_disjoint_union}. It is enough to consider templates of the form $(\bK^4_3, \B)$ in the statement. 
\end{proof}

\subsection{Cyclic Polymorphisms}
\label{sec:cyclic}

In this section we prove items (1)(i-iv) of \Cref{th:minor_identities_main}. We begin by introducing a minion $\overline{\Cscr}$ that characterizes the existence of cyclic polymorphisms suitably. Then, item (1)(i) will follow from interpreting the grid $\bm \Gamma$ over $\overline{\Cscr}$ (\Cref{sec:all_cyclic}), and items (1)(ii-iv) from interpreting growing triangular slices $\bm{\nabla}_n$ over $\overline \Cscr$ (\Cref{sec:cyclic_exists}). \par
Given prime arity. Given a prime number $p\in \NN$, we define the minion $\mathscr{C}_p$
as follows. We let
\begin{align*}
\Cscr(n) = 
\Biggl\{ 
\gamma \in \left( 2^{\ZZ_p} \right)^n \, \Bigg \vert
\, \bigcup_{i\in [n]} \gamma(i) = \ZZ_p, \, 
\text{ and } \gamma(i)\cap \gamma(j) = \emptyset \, \text{for all } i\neq j
\Biggr\}.    
\end{align*}

The minoring operation is defined as follows. Let $\gamma\in \Cscr(n)$, $\pi:[n]\rightarrow [m]$. Then $\gamma^\pi = \omega$, where $\omega(j)=
\bigcup_{i\in \pi^{-1}(j)} \gamma(i)$ for each $j\in [m]$, and empty unions yield the empty set.\par

Given a set $S\subseteq \ZZ_p$ and an element $m \in \ZZ_p$ we write $S + m$ for the set
$\{ n+m \, \vert \, n\in S
\}$. Given two elements $\gamma,\omega \in \Cscr_p(n)$, we write 
$\gamma \sim_p \omega$ if there is some element $m\in \ZZ_p$ such that $\gamma(i) = \omega(i) + m$ for all $i\in [n]$. Observe that $\sim_p$ is an equivalence relation and it is compatible with minoring, in the sense that
$\gamma \sim_p \omega$ implies $\gamma^\pi \sim_p \omega^\pi$. We write $\overline{\gamma}$ to denote the $\sim_p$-class of an element $\gamma$, 
and define $\overline{\Cscr_p}$ as the quotient minion $\Cscr_p / \sim_p$.

\begin{lemma}
\label{le:cyclic}
    Let $\Mscr$ be a minion and $p\in \NN$ be a prime number. Then $\Mscr$ contains a cyclic element of arity $p$ if and only if $\overline{\Cscr}_p \rightarrow\Mscr$.
\end{lemma}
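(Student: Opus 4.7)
The plan is to exhibit explicitly a ``universal'' cyclic element in $\overline{\Cscr_p}$ and then set up a bijective-flavored correspondence between (i) minion homomorphisms $\overline{\Cscr_p} \to \Mscr$ and (ii) cyclic $p$-ary elements of $\Mscr$. The key observation is that each $\gamma \in \Cscr_p(n)$ is determined by the map $\tau_\gamma \colon \ZZ_p \to [n]$ sending $k$ to the unique $i$ with $k \in \gamma(i)$, and minoring is compatible with this representation: a direct check shows $\tau_{\gamma^\pi} = \pi \circ \tau_\gamma$. Similarly, $\gamma \sim_p \omega$ exactly when $\tau_\gamma = \tau_\omega \circ \rho_m$ for some shift $\rho_m \colon \ZZ_p \to \ZZ_p$, $k \mapsto k - m$.

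For the forward direction I will introduce the canonical element $\gamma_\star \in \Cscr_p(p)$ defined by $\gamma_\star(i) = \{i-1\}$ (identifying $\ZZ_p = \{0,1,\dots,p-1\}$). A direct computation of $\gamma_\star^\pi$ for the cyclic map $\pi = (p,1,2,\dots,p-1)$ yields $\gamma_\star^\pi(i) = \gamma_\star(i) + 1$ for all $i$, so $\gamma_\star^\pi \sim_p \gamma_\star$. Therefore $\overline{\gamma_\star}$ is cyclic in $\overline{\Cscr_p}$, and for any minion homomorphism $F \colon \overline{\Cscr_p} \to \Mscr$ the image $F(\overline{\gamma_\star})$ is a $p$-ary cyclic element of $\Mscr$.

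For the converse, suppose $c \in \Mscr(p)$ is cyclic. I plan to define $F \colon \overline{\Cscr_p} \to \Mscr$ by $F(\overline{\gamma}) = c^{\tau_\gamma}$. The only non-routine step is well-definedness: if $\gamma \sim_p \omega$, then $\tau_\gamma = \tau_\omega \circ \rho_m$, so $c^{\tau_\gamma} = (c^{\rho_m})^{\tau_\omega}$; here I use that the cyclic identity $c = c^{(p,1,2,\dots,p-1)}$ iterated generates invariance under every power of the cyclic shift, and since $p$ is prime these powers exhaust all translations of $\ZZ_p$. Hence $c^{\rho_m} = c$ and $F$ is well-defined. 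Minion-homomorphism compatibility then follows from the identity $\tau_{\gamma^\pi} = \pi \circ \tau_\gamma$, giving $F(\overline{\gamma^\pi}) = c^{\pi \circ \tau_\gamma} = (c^{\tau_\gamma})^\pi = F(\overline{\gamma})^\pi$.

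The main obstacle is precisely the well-definedness argument and the careful bookkeeping of the order of composition in the minoring convention $(f^{\pi_2})^{\pi_1} = f^{\pi_1 \circ \pi_2}$; without primality of $p$ one would only get invariance under the subgroup generated by $\rho_1$, which might be a proper subgroup of $\ZZ_p$ and break the argument. Once this is in place, both directions fit on a few lines and the theorem follows.
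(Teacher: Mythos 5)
Your proof takes essentially the same route as the paper's. Both arguments hinge on the observation that each $\gamma \in \Cscr_p(n)$ is determined by the map $\ZZ_p \to [n]$ recording which block contains each element (your $\tau_\gamma$, the paper's $\pi_\omega$), that this correspondence intertwines minoring via $\tau_{\gamma^\pi} = \pi \circ \tau_\gamma$, and that $\sim_p$ corresponds to precomposition with a cyclic shift of $\ZZ_p$. The forward direction via the image of $\overline{\gamma_\star}$ and the converse via $F(\overline{\gamma}) = c^{\tau_\gamma}$, with well-definedness reduced to $c^{\rho_m}=c$, match the paper step for step.

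One remark in your ``main obstacle'' paragraph is incorrect, though it does not create a gap: you claim that primality of $p$ is needed so that the powers of the cyclic shift exhaust all translations of $\ZZ_p$, and that without it the subgroup generated by $\rho_1$ might be proper. In fact the translation $k \mapsto k-1$ has order $n$ in $\ZZ_n$ and therefore generates the entire group of translations for every $n$, prime or not. So the well-definedness argument works unchanged for arbitrary arity $n$, and the restriction to prime $p$ in the lemma is not what makes this step go through; it is there because the surrounding development only needs $\overline{\Cscr}_p$ at prime arities. This is a small misattribution of where primality matters rather than an error in the proof itself.
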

\begin{proof}
    We show both directions. Suppose there is a homomorphism $\alpha: \overline{\Cscr}_p \rightarrow \Mscr$. Let $\gamma=( \{0\}, \{1\}, \dots, \{p-1\})\in \Cscr_p(p)$. Then the element $\overline{\gamma}\in \overline{\Cscr}_p(p)$ is cyclic, so $\alpha(\overline{\gamma})$ must be cyclic as well. \par
    In the other direction, suppose that $f\in \Mscr(p)$ is a cyclic element. Then we define a homomorphism $\alpha: \overline{\Cscr}_p \rightarrow \Mscr$
    by setting $\alpha(\overline{\gamma}) = f$. This defines the image of any element $\overline{\omega}\in \Cscr_p(n)$. Indeed,
    we have that $\omega=\gamma^{\pi_\omega}$,
    where $\pi_{\omega}: [p] \rightarrow [n]$ is the map that sends $i\in [p]$ to $j\in [n]$ if the element $i-1\in \ZZ_p$
    belongs to $\omega(j)$. Hence, we can define $\alpha(\overline{\omega}) = f^{\pi_\omega}$.
    To see that this is well defined, we need to prove that whenever $\omega_1 \sim_p \omega_2$
    then $f^{\pi_{\omega_1}}=f^{\pi_{\omega_2}}$.
    However, if $\omega_1 \sim_p \omega_2$, then there is some $m\in \ZZ_p$ such that
    $\omega_1(i) = \omega_2(i) + m$ for all $i$.
    This means that $\pi_{\omega_2}= 
    \pi_{\omega_1} \circ \sigma^{m}$, where
    $\sigma = (p-1, 1,\dots, p-2)$ is the cyclic shift. Hence, as $f$ is cyclic,
    we obtain that $f^{\pi_{\omega_2}}=
    (f^{\sigma^m})^{\pi_{\omega_1}}= f^{\omega_1}$, as we wanted.
    We have shown that $\alpha$ is a well-defined map. The fact that $\alpha$ is a minion homomorphism now follows from the fact that
    if $\omega_1 = \omega_2^{\pi}$,
    for some elements $\omega_1, \omega_2\in \Cscr_p$
    then 
    $\pi_{\omega_1} = \pi \circ \pi_{\omega_2}$.
\end{proof}

We define $\overline{\Cscr}$ as the disjoint union
$\bigsqcup_{p \text{ prime }} \overline{\Cscr}_p$. A straight-forward corollary of last lemma is the following.

\begin{corollary}
    \label{le:minion_all_cyclic}
      Let $\Mscr$ be a minion. Then $\Mscr$ contains a cyclic element of each prime arity $p$ if and only if $\overline{\Cscr} \rightarrow\Mscr$.
\end{corollary}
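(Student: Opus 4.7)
The plan is to deduce this corollary directly from \Cref{le:cyclic} by exploiting the disjoint-union structure of $\overline{\Cscr}$. The key observation is that if a minion $\Nscr$ decomposes as a disjoint union $\bigsqcup_{j\in J} \Nscr_j$ of subminions, then giving a minion homomorphism $F:\Nscr \rightarrow \Mscr$ is equivalent to giving a family $(F_j)_{j\in J}$ of minion homomorphisms $F_j:\Nscr_j \rightarrow \Mscr$, one for each piece. This is because minoring in a disjoint union of subminions never mixes pieces (every minor of an element of $\Nscr_j$ lies in $\Nscr_j$), so $F$ can be defined independently on each $\Nscr_j$ and conversely any family $(F_j)$ glues to a well-defined minion homomorphism.

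First I would apply this observation to $\overline{\Cscr}=\bigsqcup_{p\text{ prime}} \overline{\Cscr}_p$ to conclude that $\overline{\Cscr}\rightarrow \Mscr$ holds if and only if $\overline{\Cscr}_p\rightarrow \Mscr$ holds for every prime $p$. Next, I would invoke \Cref{le:cyclic} for each prime $p$ separately: the existence of a homomorphism $\overline{\Cscr}_p\rightarrow \Mscr$ is equivalent to $\Mscr$ containing a cyclic element of arity $p$. Combining the two equivalences yields that $\overline{\Cscr}\rightarrow \Mscr$ if and only if $\Mscr$ contains a cyclic element of every prime arity $p$, as required.

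There is no real obstacle here; the only thing to check carefully is that the disjoint-union decomposition is indeed closed under minoring, which is immediate from the definition of $\Cscr_p$ (the arity $p$ is recoverable from any element as $|\bigcup_i \gamma(i)|=p$, and minoring preserves this union, so no minor of an element of $\overline{\Cscr}_p$ can land in $\overline{\Cscr}_{p'}$ for $p'\neq p$). Given this, the corollary follows in one line from \Cref{le:cyclic}, justifying the label ``straight-forward''.
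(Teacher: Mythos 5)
Your proof is correct and is precisely the one-line argument the paper intends when it calls the statement a ``straight-forward corollary'' of \Cref{le:cyclic}: decompose $\overline{\Cscr}$ as the disjoint union of the $\overline{\Cscr}_p$, note that homomorphisms out of a disjoint union of subminions correspond bijectively to families of homomorphisms out of each piece, and apply \Cref{le:cyclic} prime by prime. The sanity check that each $\overline{\Cscr}_p$ is closed under minoring is the right thing to observe, and your justification via $|\bigcup_i\gamma(i)|=p$ is sound.
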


\subsubsection{Cyclic Polymorphisms: Interpreting the Grid}
\label{sec:all_cyclic}

The following interpretation $\Ical$ that induces a global structure which is finitely equivalent to $\bm{\Gamma}$. We note that this interpretation is completely analogous to \Cref{interpret:cyclic}, defined over $\overline{\Cscr}$. 

\begin{interpret}
\label{interpret:cyclic}
The  $\Sigma_\Gamma$-interpretation $\Ical$ over $\overline{\Cscr}$ is given by
\begin{flalign*}
& U^\Ical= \overline{\Cscr}(3), &&    \\
    & O^\Ical = \Bigl\{
\overline{\omega} \in \overline{\Cscr}(3) \, \Big\vert 
\, \omega(1)= \omega(2) = \emptyset
\Bigr\}, \quad \text{ and }  \quad \Pi^\Ical_{O,1}=\id, &&
\end{flalign*}
and, for each $i\in [2]$,
\begin{flalign*}
& E_i^\Ical = \Bigl\{
\overline{\omega} \in \overline{\Cscr}(4) \, \Big\vert 
\, |\omega(3)|= 1
\Bigr\}, \quad \text{ and } \quad \Pi^\Ical_{E_i,j} = \begin{cases}
   (1,2,3,3) \text{ for $j=1$,} \\
   (1,2,i,3) \text{ for $j=2$.} 
\end{cases} &&
\end{flalign*}
\end{interpret}

\noindent
Given a prime number $p$ we define $\Ical_p$ as the restriction $\Ical\vert_{\overline{\Cscr}_p}$.
Then we have that $\bGS_\Ical= \bigsqcup_{p \text{ prime}} \bGS_{\Ical_p}$. \\~\\ \noindent
\textbf{Claim 1: The structure 
    $\bGS_\Ical$ induced by $\Ical$ is finitely equivalent to $\bm{\Gamma}$.}
    This follows similarly to Claim 1 in \Cref{sec:all_wnus}.
    Given a number $m\in \NN$, we define $\bm{\Gamma}_m$ to be the substructure of $\bm{\Gamma}$ induced on the elements $(n,o)\in \NN^2$ satisfying $n+o \leq m$.
    Observe that $\bm{\Gamma}$ is finitely equivalent to the disjoint union $\bigsqcup_{i\in \NN} \bm{\Gamma}_i$, and that $\bm{\Gamma}_i\rightarrow \bm{\Gamma}_j$ for each pair $i\leq j$. We prove that
    $\bm{\Gamma}_{p+2}$ is finitely equivalent to $\bGS_{\Ical_p}$
    for each prime number $p$. 
    Observe that this proves the claim. We define suitable homomorphisms. Let $F: \bGS_{\Ical_p}
    \rightarrow \bm{\Gamma}_{p+2}$ be the map 
    $\overline{\omega}\mapsto (|\omega(1)|+1,|\omega(2)|+1)$. To see that $F$ is well defined, observe that the relation $\sim_p$ preserves the size of sets. That is,
    if $\omega_1 \sim_p \omega_2$ for some
    $\omega_1, \omega_2\in \Cscr_p$, then
    $|\omega_1(i)|=|\omega_2(i)|$ for all $i$.
    The fact that $F$ is indeed a homomorphism follows from the definition of $\bGS_{\Ical_i}$. \par
    Now let $H: \bm{\Gamma}_{p+2}\rightarrow 
    \bGS_{\Ical_p}$ be the map given by
    $(m,n)\mapsto \overline{\omega_{m,n}}$,
    where $\omega_{m,n}\in \Cscr_p(3)$ is defined as $( X_m, Y_n, \ZZ_p \setminus (X_m \cup Y_n))$, where $X_m= \{ 0,\dots, m-1\}$ and
    $Y_n= \{p-n + 1, \dots
    p -1 \}$. Observe that for $m=n=1$ we have $X_m=Y_n= \emptyset$. Hence,
    $H(1,1)\in O^{\GS_{\Ical_p}}$.    
    To see that $H$ is a homomorphism we need to prove that 
    $H$ preserves $E_1$ and $E_2$. We show the statement for $E_1$, the other case is analogous. In other words, we need to prove that
    $(\overline{\omega_{m,n}},
    \overline{\omega_{m+1,n}})\in E_1^{\GS_{\Ical_p}}$ for all $((m,n),(m+1,n))\in E_1^{\Gamma_{p+2}}$. Consider the element $\omega= ( X_m, Y_n, \{m\}, \ZZ_p\setminus (X_m \cup Y_n \cup \{m\}) )$. Then we have that $\overline{\omega}\in E_1^\Ical$, and 
    \[
    \overline{\omega}_{m,n}=
    \overline{\omega}^{\Pi^\Ical_{E_1,1}},\quad
    \overline{\omega}_{m+1,n}=
    \overline{\omega}^{\Pi^\Ical_{E_1,2}},
    \]
    as we wanted to prove. This completes the proof of the claim.  \\~\par
\noindent We define a description $\Dcal$ so that $\Ical$ is internal at arity $5$ w.r.t. $\Dcal$. 
    \begin{desc}
    \label{desc:cyclic}
        The description $\Dcal\subseteq 2^{\overline{\Cscr}}$ consists of the predicates
        \begin{flalign*}
    & D_0= \Bigl\{
     \overline{\omega}\in \overline{\Cscr}(2) \, \Big\vert \, \omega(1)= \emptyset
     \Bigr\}, && \\
    & D_1= \Bigl\{
    \overline{\omega}\in \overline{\Cscr}(2) \, \Big\vert \, |\omega(1)|= 1
    \Bigr\},  && \\
    & D_<= \Bigl\{
    \overline{\omega}\in \overline{\Cscr}(3) \, \Big\vert \, |\omega(1)| < |\omega(2)|
    \Bigr\}, && \\
    & D_\div= \Bigl\{ 
    \overline{\omega}\in \overline{\Cscr}(3) \, \Big\vert \,   |\omega(2)| = n|\omega(1)| \text{ for some integer } n\geq 0
    \Bigr\}.
        \end{flalign*}
    \end{desc}
\noindent
We also consider the auxiliary predicate
\begin{flalign*}
    & D_=  = \Bigl\{
    \overline{\omega}\in \overline{\Cscr}(3) \, \Big\vert \, |\omega(1)| = |\omega(2)|
    \Bigr\}. &&
\end{flalign*}
    
As in \Cref{sec:WNU}, we warn the reader again that we deal with two nested equivalence relations from now on: An element $\qclass{\overline{\omega}}\in \overline{\Cscr}/\Dcal$ is a $\sim_\Dcal$-class of some $\overline \omega \in \overline \Cscr$, which is in turn a $\sim_p$-class of an element $\omega \in \Cscr_p$ for some prime $p$. \\~\\
\noindent
As in the previous section about WNU polymorphisms, here the more involved part is showing that $\Ical$ is internal with respect to $\Dcal$. Here the important step is showing that the predicate $D_1$ is internal, because it allows us to define unit increments. 
We can show that $D_1$ is internal if $D_{\div}$ is internal, because the elements in $D_1$ are precisely the classes of tuples $\omega$ where $\omega(1)$ is a set whose size divides the size of $\omega(2)$ (here it is crucial that we are dealing with cyclic elements of prime arity). To speak about division, we need to speak about equality, so the most important part is proving that $D_=$ is internal, as shown in Claim 4. The intuition is that, if we represent $\ZZ_p$ in a circle in the usual way, an element
$\omega$ with $\overline{\omega}\in D_=$ selects two points in the circle, given by $\omega(1)=\{a\}$ and 
$\omega(2)= \{ b \}$, and we can think of the pair $(a,b)$ as a ``step''. Then the idea is that in $\overline{\Cscr}$ one can return to $a$ from $b$ taking $\Dcal$-equivalent steps. We show that $\Ical$ is internal at arity $5$ (w.r.t. $\Dcal$) through the following claims. \\~\\
\noindent
\textbf{Claim 2: The predicates $U^\Ical$, $O^\Ical$,
        $E_1^\Ical$ and $E_2^\Ical$ are all $\Dcal$-stable.} This is a routine check. \\~\\
        \noindent
\textbf{Claim 3: The predicate $O^\Ical$ is internal at arity $5$.} Indeed, the following formula is an internal definition of $O^\Ical$:
        \[
        \phi_O(x) \equiv x = x^{(3,3,3)}.
        \] ~\\
        \noindent
        \textbf{Claim 4: Let $p$ be a prime number. 
        Then the following predicate is an internal reference to $D_=$:} 
        \[
        \phi_{=,p}(x) \equiv x = x^{(2,1,3)} 
        \]
        \textbf{if $p=2$, and }
        \begin{align*}
        \phi_{=,p}(x) \equiv & 
        \bigexists_{i\in [p-2]}^4 y_i
        \bigexists_{i\in [p-3]}^5 z_i \Biggl(
        y_1^{(1,2,3,3)} = x
        \wedge y_1^{(3,1,2,3)}= x
        \wedge y_{p-2}^{(2,3,1,3)}= x
        \Biggr) \bigwedge \\
        & \Biggl(
        \bigwedge_{i\in [p-3]}
         z_i^{(1,2,3,3,3)} =
         z_i^{(3,3,1,2,3)} \wedge 
         y_i = z_i^{(1,2,3,4,4)} \wedge
         y_{i+1} = z_i^{(1,2,4,3,4)}        
        \Biggr)
        \end{align*}
        \textbf{if $p>2$. }
       Let us show that $\phi_{=,p}(x)$ is an internal reference. The case $p=2$ is straightforward. We assume that $p\geq 3$.
        Suppose that
        $\overline{\Cscr}/\Dcal \models
        \phi_{=,p}(\qclass{\overline{\omega_x}})$
        with $\qclass{\overline{\omega_{y_i}}}$
        as the witness for $y_i$ for each $i\in [p-2]$ and 
        $\qclass{\overline{\omega_{z_i}}}$
        as the witness for $z_i$ for each $i\in [p-3]$. 
        We repeat the intuition given before this chain of claims. The idea is that $\omega_x$ fixes a step inside some cyclic group. I.e., $\omega_x=(\{a\}, \{b\}, \ZZ_p \setminus \{a, b\})$. Then, up to equivalence in $\overline{\Cscr}/\Dcal$,
        each element $\omega_{y_i}$ is of the form $(\{a\}, \{b\}, \{c_i\}, \ZZ_q\setminus \{a, b, c\})$, where $c_{i+1}=c_i + b - a$, and $c_1= 2b - a$. Then the idea is that at the end we have come full circle, so to say, and obtain $c_{p-2} = 2a - b$. Now we proceed with the formal argument. \par
        First, we show that $|\omega_x(1)| = |\omega_x(2)|$. Suppose that
        $|\omega_x(1)|<|\omega_x(2)|$ for a contradiction (the reverse inequality can be dealt with analogously). We prove that
        $|\omega_{y_i}(1)| < |\omega_{y_i}(2)|
        < |\omega_{y_i}(3)|$
        for all $i\in [p-2]$ by induction on $i$. For $i=1$, we have that $\overline{\omega_{y_1}}^{(1,2,3,3)} \sim_\Dcal
        \overline{\omega_{x}}
        $, and
        $\overline{\omega_{y_1}}^{(3,1,2,3)} \sim_\Dcal
        \overline{\omega_{x}}$,
        so necessarily 
        $|\omega_{y_1}(1)| < |\omega_{y_1}(2)|
        < |\omega_{y_1}(3)|$. Now let $i>1$ and suppose that
        \[
         |\omega_{y_{i-1}}(1)| < |\omega_{y_{i-1}}(2)|
        < |\omega_{y_{i-1}}(3)|.
        \]
        We have that
        $\overline{\omega_{y_{i-1}}}\sim_\Dcal
        \overline{\omega_{z_{i-1}}}^{(1,2,3,4,4)}$, so 
        \[
         |\omega_{z_{i-1}}(1)| < |\omega_{z_{i-1}}(2)|
        < |\omega_{z_{i-1}}(3)|.
        \]
        Additionally, $\overline{\omega_{z_{i-1}}}^{(1,2,3,3,3)}\sim_\Dcal
        \overline{\omega_{z_{i-1}}}^{(3,3,1,2,3)}$,
        so 
        \[
         |\omega_{z_{i-1}}(3)|
        < |\omega_{z_{i-1}}(4)|.
        \]
        Finally, $\overline{\omega_{y_{i}}}\sim_\Dcal
        \overline{\omega_{z_{i-1}}}^{(1,2,4,3,4)}$,
        so we can conclude that
        \[
         |\omega_{y_{i}}(1)| < |\omega_{y_{i}}(2)|
        < |\omega_{y_{i}}(3)|,
        \]
        as we wanted to show. We have shown that
        \[
         |\omega_{y_{p-2}}(1)| < |\omega_{y_{p-2}}(2)|
        < |\omega_{y_{p-2}}(3)|.
        \]
        However, $\overline{\omega_{x}}\sim_\Dcal
        \overline{\omega_{y_{p-2}}}^{(2,3,1,3)}$
        implies that
        \[
         |\omega_{y_{p-2}}(3)| < |\omega_{y_{p-2}}(1)|,
        \]
        a contradiction. Hence, it must be that
        $|\omega_x(1)| = |\omega_x(2)|$ to begin with. \\~\\
        \noindent
     \textbf{Claim 5: Let $p$ be a prime, and let $\omega\in \Cscr_p(3)$ be such that
    $|\omega(1)|=|\omega(2)|=1$. Then
    $\overline{\Cscr} \models \phi_{=,p}(\overline{\omega})$.
    } Suppose that $p=2$. Then $\omega \sim_p (
    \{0\},\{1\}, \emptyset)
    \sim_p (
    \{1\},\{0\}, \emptyset)$, proves the statement. Suppose that $p\geq 3$.
    Without loss of generality we can assume that $\omega= ( \{0\}, \{m\}, \ZZ_p \setminus \{0,m\})$ for some $m\in \ZZ_p$. We find witnesses $\overline{\omega_{y_i}}$,
    $\overline{\omega_{z_i}}$, for every variable $y_i, z_i$. Given $i\in [p-2]$ we define
    \begin{align*}
    \omega_{y_i}=\bigl( \{0\}, \, \{m\},  \, \{(i+1)m\}, \, \ZZ_p \setminus \{ 0, m, (i+1)m \} \bigr).
    \end{align*}
    Given $i\in [p-3]$, we define 
    \begin{align*}
    \omega_{z_i}=\bigl(\{0\}, \, \{m\}, \, \{(i+1)m\}, \,
    \{(i+2)m\}, \, \ZZ_p \setminus \{ 0, m, (i+1)m, (i+2)m \} \bigr).
    \end{align*}
    Now it is routine to check that our choice of representatives satisfies $\phi_{=,p}(\overline{\omega})$. \\~\\ \noindent
    \textbf{Claim 6: Let $p$ be a prime, and let $m\geq 0$ be an integer. Then the following is an internal reference to $D_\div$:}
    \begin{align*}
    \phi_{\div,p,m}(x)\equiv &
    \bigexists_{i\in [m+1]}^3 y_i
    \bigexists_{j \in [m]}^4 z_j \Biggl(
    y_1^{(2,2,2)} = y_1^{(2,1,2)} \wedge
    x = y_{m+1}
    \Biggr) \wedge \\ & 
    \Biggl(
    \bigwedge_{i\in [m]}
    z_i^{(1,2,3,3)}= y_i \wedge
    z_i^{(1,2,2,3)} = y_{i+1} \wedge
    \phi_{=,p}(z_i^{(1,3,2,3)})
    \Biggl).
    \end{align*}
    Suppose that $\overline{\Cscr}/\Dcal \models 
    \phi_{\div,p,m}(\qclass{\overline{\omega}})$,
    with $\qclass{\overline{\omega_{y_i}}}$ and $\qclass{\overline{\omega_{z_i}}}$ as witnesses for each variable $y_i$, $z_i$. We prove that $\overline{\omega}\in D_\div$. In order to show this, we prove by induction on $i$ that $\overline{\omega_{y_i}}\in D_\div$ for each $i\in [m+1]$. This proves the result because $\overline{\omega} \sim_\Dcal \overline{\omega_{y_{m+1}}}$. For $i=1$, we have that
    $\overline{\omega_{y_1}}^{(2,2,2)} \sim_\Dcal
    \overline{\omega_{y_1}}^{(2,1,2)}$. Given that
    the first element must belong to $D_0$, so
    does the second one, and we obtain that $\omega_{y_1}(2)=\emptyset$. Hence $\overline{\omega_{y_1}}\in D_\div$ vacuously.
    Now let $i> 1$ and    
    assume that $|\omega_{y_i}(1)|$ divides
    $|\omega_{y_i}(2)|$. Using that
    $\overline{\omega_{y_i}} \sim_\Dcal
    \overline{\omega_{z_i}}^{(1,2,3,3)}$, we obtain that $|\omega_{z_i}(1)|$ divides $|\omega_{z_i}(2)|$. We also have that $|\omega_{z_i}(1)|=|\omega_{z_i}(3)|$ using that $\phi_{=,p}(x)$ is an internal reference to $D_=$. Hence
    $|\omega_{z_i}(1)|$ divides
    $|\omega_{z_i}(2)| + |\omega_{z_i}(3)|$.
    Finally, using that  $\overline{\omega_{y_{i+1}}} \sim_\Dcal
    \overline{\omega_{z_i}}^{(1,2,2,3)}$, we obtain that $\overline{\omega_{y_{i+1}}}$  belongs to $D_\div$, as we wanted to prove. This shows the claim. \\~\\ \noindent
    \textbf{Claim 7: Let $p$ be a prime and let $0 \leq m \leq p-1$ be an integer. Let $\omega_{p,m}\in \Cscr_p(3)$ be defined as $(\{0\}, \{1,\dots, m\}, \{m+1,\dots, p-1\})$. Then $\overline{\Cscr} \models \phi_{\div, p, m}(\overline{\omega_{p,m}})$.} We show the claim by defining suitable witnesses for each variable in $\phi_{\div,p,m}$. For each $i\in [m+1]$ we define
    $\omega_{y_i} = \omega_{p,i-1}$, and for each $i\in [m]$ we define
    \[
    \omega_{z_i} = (\{0\},
    \{1,\dots, i-1\}, \{i \}, \{i+1, \dots, p-1\}).\]
    Now it is routine to check that
    the elements
    $\overline{\omega_{y_i}}$, $\overline{\omega_{z_{i}}}$ are witnesses for the variables $y_i, z_i$. The key observation is that $\overline{\omega_{z_{i}}}^{(1,3,2,3)}$ satisfies $\phi_{=,p}(x)$ on   
    $\overline{\Cscr}$ for all $i \in [m]$
    by Claim 5. \\~\\ \noindent    
    \textbf{Claim 8: The predicate $D_1$ is internal at arity $5$.} We apply \Cref{le:internal_criterion}. Let $\omega_p\in \Cscr_p$ be the element $(\{0 \}, \ZZ_p \setminus \{0\})$. We define an internal reference to $D_1$ that is satisfied by $\overline{\omega_p}$. This interpretation is the following.
    \[
    \phi_{1, p}(x) \equiv
    \exists^3 y \left( 
    y= x^{(1,2)} \wedge 
    \phi_{\div, p, p-1}(y)
    \right).
    \]
    Let us show that $\phi_{1,p}$ is an internal reference to $D_1$. Suppose that
    $\overline{\Cscr}/\Dcal \models 
    \phi_{1,p}(\qclass{\overline{\omega_x}})$ with
    $\qclass{\overline{\omega_y}}$ as a witness for $y$. Observe that $\omega_y(3)$ must be the empty set. Indeed, we have that $\overline{\omega_y}\sim_\Dcal 
    \overline{\omega_x}^{(1,2)}$, so 
    $\overline{\omega_y}^{(2,2,1)}\sim_\Dcal 
    \overline{\omega_x}^{(2,2)}$, and 
     $\overline{\omega_x}^{(2,2)}\in D_0$, 
     so $\overline{\omega_y}^{(2,2,1)} \in D_0$ as well. Additionally, the formula $\phi_{\div,p,p-1}(x)$ is an internal reference to $D_\div$, so $|\omega_{y}(1)|$ must divide $|\omega_{y}(2)|$. Both these numbers must add up to some prime $q$, so the only possibility is that $|\omega_y(1)| = 1$ and $|\omega_y(2)| = q-1$. This means that $\overline{y}^{(1,2,2)}$ belongs to $D_1$.
     Finally, using again that $\overline{\omega_y}\sim_\Dcal 
    \overline{\omega_x}^{(1,2)}$, we obtain again that $\overline{x}$ belongs to $D_1$, as we wanted to prove. \par
    Now, in order to see that $\overline{\omega_p}$
    satisfies $\phi_{1,p}(x)$ on $\overline{\Cscr}$, just consider $\overline{\omega_y}$ as a witness for $y$, where $\omega_y$ is the tuple
    $(\{0 \}, \ZZ_p \setminus \{0\}, \emptyset ) \in \Cscr_p$, and apply the previous claim. \\~\\
    \noindent
    \textbf{Claim 9: Let $i\in [2]$. Then the predicate $E^\Ical_i$ is internal at arity $5$.} We apply \Cref{le:internal_criterion}, as usual. Let $\overline{\omega}\in E_i^\Ical$ be such that
    $\omega \in \Cscr_p$ for a prime $p$. Then, using the previous claim we obtain that
    \[
    \phi_{E_i, p}(x)  \equiv \phi_{1, p}(x^{(2,2,1,2)})
    \]
    is an internal reference to $E_i^\Ical$ and is satisfied by $\overline{\omega}$ on $\overline{\Cscr}$. \\~\\ 
    \noindent
    \textbf{Claim 10: The predicate $U^\Ical$ is internal at arity $5$.} Trivially, the following is an internal definition of $U^\Ical$:
    \[
    \phi_{U}(x) \equiv x=x.
    \]
~\\
    \begin{proof}[Proof of \Cref{th:minor_identities_main}-(1)(i)]
     Observe that $3\geq \arty(U^\Ical)$ and $3\geq \arty(P)$ for all $P\in \Dcal$.
    The claims in this section show that $\bGS_\Ical$ is finitely equivalent to $\bm{\Gamma}$, and 
    $\Ical$ is internal at arity $5$ w.r.t. $\Dcal$. Hence, the result follows from 
    \Cref{th:reduction_undecidability} together with
    \Cref{prop:grid}.
    \end{proof}
    
\subsubsection{Cyclic Polymorphisms: Interpreting Triangular Slices}
\label{sec:cyclic_exists}
The following interpretation $\Ical$ over $\overline{\Cscr}$ induces structures $\bGS_{\Ical_p}$
that are homomorphically equivalent to $\bm{\nabla}_{p+2}$ for each prime number $p$, where $\Ical_p= \Ical\vert_{\overline{\Cscr_p}}$. 

\begin{interpret}
    The $\Sigma_\nabla$-interpretation $\Ical$ over $\overline{\Cscr}$ is defined as in \Cref{interpret:cyclic}, by adding
    \begin{flalign*}
        & W^\Ical= \Bigl\{ \overline{\omega}\in \overline{\Cscr}(3) \, \Big\vert \, \omega(3) = \emptyset \Bigr\}, \quad \text{and} \quad  \Pi^\Ical_{W,1}= \id. &&
    \end{flalign*}
\end{interpret} 

~\\
\noindent
    \textbf{Claim 1: For each prime $p$, the structures $\bm{\nabla}_{p+2}$ and $\bGS_{\Ical_p}$ are homomorphically equivalent.} The maps defined in Claim 1 in \Cref{sec:all_cyclic} are homomorphisms in both directions between $\bm{\nabla}_{p+2}$ and $\bGS_{\Ical_p}$.  \\~\par

We define the description $\Dcal\subseteq 2^{\overline{\Cscr}}$ as in \Cref{desc:cyclic}. The interpretation $\Ical$ is internal at arity $5$ (w.r.t. $\Dcal$). The following claim together with the claims from \Cref{sec:all_cyclic} prove the statement.\\~\\
\noindent
\textbf{Claim 2: The predicate $W^\Ical$ is internal at arity $5$.} Clearly $W^\Ical$ is $\Dcal$-stable. Additionally, the following is an internal definition of $W^\Ical$:
    \[
    \phi_W(x) \equiv x^{(1,1,2)} = x^{(1,1,1)}.
    \]
~\\
\begin{proof}[Proof of items (1)(ii-iv) of \cref{th:minor_identities_main}]
Observe that $3\geq \arty(U^\Ical)$ and $3\geq \arty(P)$ for all $P\in \Dcal$.
The claims in this section and the previous one show that for each prime number $p$, the structure $\bGS_{\Ical_p}$ is homomorphically equivalent to $\bm{\nabla}_{p+2}$, where $\Ical_p= \Ical\vert_{\overline{\Cscr_p}}$, and $\Ical$ is internal at arity $5$ w.r.t. the description $\Dcal$. Then the result follows from \Cref{prop:triangles_undecidability} together with \Cref{th:reduction_undecidability_disjoint_union}. It is enough to consider templates of the form $(\bK^5_3, \B)$ in the statement. 
\end{proof}

\section{Discussion}
\label{sec:discussion}

This work represents a step towards understanding the relationship between search and decision in promise constraint satisfaction. It is important to insist that the relationship between our results, the Search vs Decision question for PCSPs, and other related questions in the literature \cite{brakensiek2019algorithmic,BGWZ20} is nuanced. While we prove that rounding the output of the algorithms $\BLP$, $\AIP$, and $\BLP+\AIP$ is hard (in the TFNP sense), it is still possible that, for instance, whenever $\BLP$ solves the decision version of $\pcsp(\A, \B)$, another linear programming relaxation, such as the ones introduced in \cite{brakensiek2019algorithmic}, can be used to solve the search variant of $\pcsp(\A, \B)$. A counterintuitive possibility is that even if both algorithms $\BLP$ and $\BLP+\AIP$ cannot be used for search separately, it could happen that $\BLP+\AIP$ could be adapted for search in the templates where decision is solvable via $\BLP$. In other words, it could be the case that $\spcsp_{\BLP+\AIP}(\A, \B)$ is tractable whenever $\BLP$ solves $\pcsp(\A, \B)$.

\paragraph{The Search vs Decision Question} 

We have shown that, conditional to TFNP $\not\subseteq$ FP, not every efficient decision algorithm for $\pcsp(\A, \B)$ can be turned into an efficient decision algorithm for $\spcsp(\A, \B)$ that accepts the same instances. So in this particular sense search PCSPs are more difficult to solve than decision PCSPs, although this is not a complexity-theoretic separation. We remark that TFNP $\not \subseteq $ FP is the weakest assumption under which these questions make sense. If $\Qcal$ is a polynomial-time algorithm solving $\pcsp(\A, \B)$, then $\spcsp_\Qcal(\A, \B)$ can be seen as a problem in TFNP if one considers rejections by $\Qcal$ as proper search certificates. Hence, $\spcsp_\Qcal(\A, \B)$ can be solved in polynomial time if TFNP $\subseteq$ FP. \par

We have considered the problem of obtaining search algorithms from efficient decision algorithms, but standing above is the open question of whether, in the finite-template setting, $\spcsp(\A, \B)$ has a polynomial-time solution whenever $\pcsp(\A, \B)$ does. We do not consider our results strong evidence to the contrary: There is some reason to believe that, say, the third level of the $\BLP+\AIP$ hierarchy \cite{cz23soda:minions} could be used to solve $\spcsp(\A, \B)$ for all templates used to prove \Cref{th:AIP_main,th:BLP_main,th:BLP+AIP_main}. We sketch the argument here. 
Given $\Qcal\in \{\AIP, \BLP, \BLP+\AIP\}$, those results show hardness of $\spcsp_\Qcal(\A, \B)$ by proving that $\sPMC_N(\Mscr_\Qcal, \Nscr)$ is hard, where $\Nscr$ is some manifold minion
built on top of a quotient of $\Mscr_\Qcal$. If, instead, we want to prove the stronger result that $\spcsp(\A, \B)$ is hard, then we need to show that $\sPMC_N(\mathscr P, \Nscr)$ is hard, where $\mathscr P$ denotes the minion of projections. The reason our proof fails to show this result is that the minor conditions $\Phi_{\bm G}$ that arise in our patterns do not hold in $\mathscr P$. In fact, they can be ruled out by a $3$-consistency check. Hence, there could be an efficient method of obtaining search certificates for minor conditions that are both accepted by $\Qcal$ and $3$-consistency in a way that does not involve solving the tiling problem encoded in $\Nscr$. \par

We further observe that a separation of search and decision PCSPs implies a non-dichotomy for search PCSPs unless the polynomial hierarchy collapses to its first level. 

\begin{theorem}
    Let $(\A, \B)$ be a finite template for which $\pcsp(\A, \B)$ has a polynomial-time solution. Suppose that NP$\neq$ coNP. Then $\spcsp(\A, \B)$ is not FNP-hard. In particular, if $\spcsp(\A, \B)$ has no polynomial-time solution, then it must be FNP-intermediate. 
\end{theorem}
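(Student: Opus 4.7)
The plan is to show the contrapositive: FNP-hardness of $\spcsp(\A,\B)$ combined with $\pcsp(\A,\B)\in\text{P}$ forces $\text{NP}=\text{coNP}$. Assume $\spcsp(\A,\B)$ is FNP-hard and fix a generalized many-one reduction $(\alpha,\beta)$ from the FNP-complete problem search SAT to $\spcsp(\A,\B)$. Given a propositional formula $\phi$, write $\I_\phi=\alpha(\phi)$. The key observation I would exploit is that the polynomial-time $\pcsp(\A,\B)$ algorithm acts as a case splitter: if it rejects $\I_\phi$ then $\I_\phi\not\to\A$, so $\bot$ is a valid answer to $\I_\phi$ in $\spcsp(\A,\B)$ (forced when $\I_\phi\not\to\B$, vacuously allowed when $\I_\phi$ is in the promise gap); if it accepts $\I_\phi$ then $\I_\phi\to\B$, so a homomorphism $F\colon\I_\phi\to\B$ exists and every such $F$ is a valid answer (a required answer in the ``yes'' case, vacuously allowed in the gap).

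Building on this, I would place UNSAT in NP by the following procedure: on input $\phi$, compute $\I_\phi$ and run the $\pcsp(\A,\B)$ decider; if it rejects, deterministically accept iff $\beta(\phi,\bot)=\bot$; if it accepts, nondeterministically guess a map $F\colon I_\phi\to B$, verify that it is a homomorphism $\I_\phi\to\B$, and accept iff $\beta(\phi,F)=\bot$. Correctness is then a direct unpacking of the definition of generalized many-one reduction. If $\phi$ is unsatisfiable then $\beta(\phi,\cdot)$ returns $\bot$ on every valid answer, so the rejecting branch accepts and the accepting branch admits a witnessing homomorphism (which exists because $\I_\phi\to\B$). If $\phi$ is satisfiable then $\beta(\phi,\cdot)$ returns a satisfying assignment on every valid answer, never $\bot$, so both branches reject. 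Hence the procedure decides UNSAT in NP, which combined with $\mathrm{SAT}\in\mathrm{NP}$ yields $\text{NP}=\text{coNP}$, contradicting the hypothesis and establishing the main statement. The ``in particular'' clause is then immediate from the definition of FNP-intermediate: a search problem that is neither in FP nor FNP-hard is by fiat FNP-intermediate.

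The step I expect to require the most care is the bookkeeping around the promise gap for generalized many-one reductions. On gap instances of $\spcsp(\A,\B)$ every string is vacuously a valid answer, so $\beta(\phi,\cdot)$ must produce a correct answer for $\phi$ when fed \emph{any} such string; this is precisely what makes both the $\bot$-branch and the homomorphism-branch above uniformly correct across the ``yes'', ``no'', and gap cases for $\I_\phi$. Once this uniformity is verified against the exact definitions of promise search problems and generalized many-one reductions from Section~\ref{sec:prelims}, no further ingredients are needed.
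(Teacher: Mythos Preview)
Your argument is correct, but it takes a different route from the paper's. The paper's proof is more modular: it introduces the rounding problem $\spcsp_\Qcal(\A,\B)$ associated to the polynomial-time decider $\Qcal$, observes that this is a problem in TFNP, notes that $\spcsp(\A,\B)$ reduces to $\spcsp_\Qcal(\A,\B)$ via a generalized many-one reduction, and then invokes the classical result of Megiddo and Papadimitriou (\cite[Theorem~2.1]{megiddo1991total}) that no TFNP problem is FNP-hard under generalized many-one reductions unless $\text{NP}=\text{coNP}$. Your proof, by contrast, unwinds the Megiddo--Papadimitriou argument inline: the NP procedure for UNSAT you construct is exactly the standard proof of that theorem, specialized to the situation where the intermediate TFNP problem is $\spcsp_\Qcal(\A,\B)$ and the polynomial-time decider $\Qcal$ plays the role of the totality witness. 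What you gain is a self-contained argument that does not cite an external black box; what the paper gains is brevity and an explicit identification of the rounding problem as the relevant TFNP object, which ties the theorem back into the paper's broader narrative about rounding. One small point worth tightening: the paper defines FNP-hardness via (non-generalized) many-one reductions, so strictly speaking you should start from a many-one reduction and observe that your case analysis goes through for those as well (or that many-one reductions are in particular generalized many-one reductions in the relevant sense); this is the same bookkeeping the paper sweeps under the rug when composing with the generalized reduction to $\spcsp_\Qcal(\A,\B)$.
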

\begin{proof}
    Let $\Qcal$ be a polynomial-time algorithm solving $\pcsp(\A, \B)$. Then, by our previous reasoning, $\spcsp_\Qcal(\A, \B)\in \text{TFNP}$. The problem $\spcsp(\A, \B)$ is trivially reducible to $\spcsp_\Qcal(\A, \B)$. We note that if we see $\spcsp_\Qcal(\A, \B)$ as a total problem, then this reduction must be seen as a generalized many-one reduction, as it may need to map some answers in  $\spcsp_\Qcal(\A, \B)$ to rejections in $\spcsp(\A, \B)$. It is known \cite[Theorem 2.1]{megiddo1991total} that, unless $\text{NP}=\text{coNP}$, there is not a problem in TFNP that is FNP-hard under generalized many-one reductions \footnote{This result is often misquoted as referring only to many-one reductions, but notice that in that case the statement is vacuous: a non-total search problem cannot be reduced to a total problem via many-one reductions.}. Hence $\spcsp(\A, \B)$ cannot be FNP-hard.
\end{proof}

\paragraph{Small Templates}

Our techniques produce templates $(\A, \B)$ where $\B$ can grow quite large. To construct $\B$ we start with a set of tiles $\bT$ that exhibits some interesting behavior. For instance, if we drop the origin constraint, it is known that the smallest \emph{aperiodic} tile set has $11$ elements \cite{jeandel2015aperiodic}. Or alternatively, we start with a problem $\Pi\in \text{TFNP}$, translate it into a problem about a particular Turing machine, and encode this machine in a tile set $\bT$. Then, construct an manifold minion $\Nscr$ using $\bT$ and a suitable quotient of an interesting minion $\Mscr$. Finally, $\B$ is obtained as some free structure of $\Nscr$, although this last step does not have a significant impact on $\B$'s size. Therefore it seems safe to assume that our techniques have little to say about templates where the domains of both $\A$ and $\B$ are small. For instance, it is still possible that $\AIP, \BLP,$ and $\BLP+\AIP$ can always be adapted to solve search in Boolean PCSPs. \par
A somewhat unsatisfactory aspect of our reductions is that they are oblivious to the structure inside TFNP. For example, can we obtain an explicit and relatively small template $(\A, \B)$ such that $\BLP$ solves $\pcsp(\A, \B)$ and the rounding problem $\spcsp_\BLP(\A, \B)$ is PPAD-hard (i.e., as hard as the problem of computing Nash equilibria \cite{daskalakis2009complexity})? What can we say about the templates $(\A, \B)$ for which $\spcsp_\BLP(\A, \B)$ is PPAD-hard? 

\paragraph{Infinitely Many Symmetric Polymorphisms}
It is known that the algorithm $\BLP+ \AIP$ solves (the decision variant of) $\pcsp(\A, \B)$ whenever
$\pol(\A, \B)$ contains symmetric polymorphisms of infinitely many arities. It is not known, however, whether $\spcsp(\A, \B)$ is also tractable for these templates \cite{krokhin2022invitation, brakensiek2019algorithmic, BGWZ20}. Of course, a negative answer would imply a separation between search and decision for PCSPs. One open possibility is that running the BLP algorithm on a different ring other than $\QQ$, such as $\ZZ[\sqrt{2}]$, is enough to obtain an output that can be rounded in polynomial time and to solve $\spcsp(\A, \B)$. The intuitive reason for this is that those rings avoid ``rounding boundaries'', which seemed to be the obstacle to widely-applicable rounding procedures. An insight from our results is that this difficulty is intrinsic: when we prove TFNP-hardness, the tiling problems are precisely encoded on the rounding boundaries of the corresponding PCSPs (for instance, on the values of the form $\frac{1}{2^n3^m}$ in \Cref{sec:grid_BLP}). If these boundaries are the root reason the rounding problems are difficult, then avoiding them may be key for obtaining efficient search algorithms. \par
The work \cite{brakensiek2019algorithmic} poses a question, later repeated in \cite{BGWZ20} for the Boolean setting, whose affirmative answer would yield efficient search algorithms for all templates with infinitely many symmetric polymorphisms. This question is whether these templates must also contain an infinite consistent family of so-called ``regional-periodic'' polymorphisms. We remark that our results do not rule out this possibility, but they restrict the ways in which it might hold. For example, let $(\A, \B)$ be a finite template satisfying that $\BLP$ solves $\pcsp(\A, \B)$ but all homomorphisms $\Mscr_\BLP \rightarrow \pol(\A, \B)$ are non-computable, as in \Cref{th:BLP_main}. Then the minion $\pol(\A, \B)$ cannot contain a consistent family of nicely-behaved symmetric polymorphisms of all (emphasis on the word \textit{all}) arities, because they would entail a computable homomorphism $\Mscr_\BLP\rightarrow \pol(\A, \B)$. Still, it could be the case that one of these families exists, but only containing polymorphisms of \emph{infinitely} many arities. \par

\paragraph{Other Algorithms} A natural next step in light of our results is to study other PCSP algorithms from the same perspective. An obstacle is that we do not know of explicit minion characterizations for $k$-consistency for $k\geq 3$ and other algorithmic hierarchies \cite{cz23soda:minions} for any level after the second. Two cases where relatively tame minion characterizations can be obtained are the SDP algorithm \cite{bgs_robust23stoc,cz23soda:minions}, named after the \emph{semi-definite programming relaxation} \cite{raghavendra2009approximating}, and extensions of singleton arc-consistency \cite{DebruyneB97} such as the CLAP algorithm, introduced in \cite{cz23sicomp:clap}. Out of these, SDP is the case most similar to the relaxations studied in this paper. In the minion $\Mscr_{\text{SDP}}$, introduced independently in \cite{bgs_robust23stoc,cz23soda:minions}, $n$-ary elements are $n$-tuples of finitely-supported orthogonal vectors in $\mathbb R^{\NN}$ that add-up to $e_1=(1,0,\dots,0,\dots)$. Minoring is defined by means of addition, as in the case of $\Mscr_\AIP$ and $\Mscr_\BLP$. We outline some intuition indicating that our methods may be difficult to apply to $\Mscr_{\text{SDP}}$. An observation is that our reductions exploit the \emph{lack of symmetry} of the minions $\Mscr_\Qcal$ for $\Qcal\in \{\AIP,\BLP, \BLP+\AIP\}$: an interpretation $\Ical$ over $\Mscr_\Qcal$ that is internal (w.r.t. some description) must, in particular, be invariant under the endomorphisms of $\Mscr_{\Qcal}$. This lack of symmetry also seems necessary to obtain internal references to interesting predicates. In the cases of $\Mscr_\AIP$ and $\Mscr_{\BLP}$ the only endomorphisms are the identity maps. However, any isometry of $\mathbb R^\NN$ that fixes the origin and $e_1$ induces an endomorphism of $\Mscr_{\text{SDP}}$, making this minion extremely symmetric compared to the ones studied in this paper. This theme of lack of symmetry leading to hardness is recurrent in the theory of constraint satisfaction.

\paragraph{Other Meta-Problems} We point out several meta-problems whose decidability is still open. 
All of these are well known in the area, but are most often posed as quests for characterizations. Maybe an equally promising direction would be to consider them as invitations to prove the \emph{absence of effective characterizations}. We believe that resolving those questions could shed light on the ways in which PCSPs may be too expressive, or pinpoint some structure that aids in the further development of the theory. \par
The meta-problems for virtually all PCSP algorithms referred to in this paper other than $\AIP, \BLP, \BLP+\AIP$ remain open. The exception is \emph{arc-consistency}, but tractability through this algorithm is equivalent to pp-constructability from a fixed tractable finite-template CSP (Horn $3$-SAT) which is decidable \cite{BBKO21}. An important case is the one of $k$-consistency. Both \cite{Atserias22:soda} and \cite{ciardo2024periodic} have found some sufficient conditions implying that $\pcsp(\A, \B)$ has linear width (and hence is not solved by any fixed level of the local consistency algorithm), but we do not know of any non-trivial necessary conditions. In this context, is also worth going back to the CSP setting. There, the meta-problems for BLP, $k$-consistency, $k$-Sherali Adams, are decidable (\cite{brady2022notes}, \cite{barto2014collapse}, and
\cite{atserias2009affine,atserias2012sherali} respectively) and the relationship between those algorithms is well understood. However, for finite-template CSPs we do not have a good understanding of $\AIP$ and its derived algorithms (e.g., $\BLP+\AIP$, cohomological $k$-consistency \cite{conghaile2022cohomology}, the $\BLP+\AIP$ hierarchy \cite{cz23soda:minions}, CLAP \cite{cz23sicomp:clap}), and the related meta-problems are open. 
\par
Other than meta-problems related to algorithms, an important open question is whether we can recognize the cases where $\pcsp(\A, \B)$ is 
\emph{finitely tractable} \cite{barto2019promises,asimi2021finitely}, 
meaning that there is a finite structure $\C$ such that $\A \rightarrow \C \rightarrow \B$
and $\csp(\C)$ can be solved in polynomial time. In \cite{kazda2022small} it has been shown that the size of the smallest witness $\C$ of finite tractability can grow quite large compared to $\A, \B$ (if P$\neq$ NP), suggesting that characterizing this phenomenon may be difficult. In the same direction, another open-problem is the one of recognizing the cases where $\pcsp(\A, \B)$ is solvable in first-order logic. Recently \cite{mottet2024promise} showed that this occurs precisely when 
$\A \rightarrow \C \rightarrow \B$ for some finite $\C$ such that $\csp(\C)$ is definable in first-order logic. The cases where $\csp(\C)$ is definable in first-order logic are decidable \cite{larose2007characterisation}, but in \cite{mottet2024promise} there is no obvious bound on the size of $\C$ in terms of $(\A, \B)$, suggesting that this also may be a difficult meta-question. \par
Finally, the different notions of reductions between PCSPs are another source of interesting problems. Reductions between finite-template PCSPs by means of pp-constructions are characterized by the existence of a homomorphism between the corresponding polymorphism minions \cite{BBKO21}, which can be shown to be a decidable condition. However, these reductions are provably not enough to obtain all NP-hard finite-template PCSPs from, say, $3$-SAT (see the discussion in \cite{krokhin2022invitation}). Other proposed reductions are the ones given by so-called \emph{$(d,r)$-homomorphisms} between minions \cite{Barto22:soda}, and the more general \emph{local consistency} reductions \cite{dalmau2024local}. These give rise to natural meta-problems: Given finite templates $(\A, \B)$, $(\A^\prime, \B^\prime)$ can we decide (1) whether there is a $(d,r)$-homomorphism from $\pol(\A^\prime, \B^\prime)$
to $\pol(\A, \B)$ for any $d,r\in \NN$? (2) whether $\pcsp(\A, \B)$ reduces to $\pcsp(\A^\prime, \B^\prime)$ via the $k$-consistency reduction for any $k\in \NN$? Another related, and perhaps more accessible question is: given a finite template $(\A, \B)$, can we decide whether $\pol(\A, \B)$ has \emph{bounded essential arity}? By this we mean that there is some $d\in \NN$ such that every polymorphism $f\in \pol(\A, \B)$ depends on at most $d$ variables \cite{BBKO21}. This implies the existence of a (strong version of a) $(d,1)$-homomorphism to the minion of projections $\mathscr{P}$, and is a condition that has been used to prove hardness of some PCSPs (e.g., \cite{austrin2017epsilon_sat}). 

\subsection*{Acknowledgements}
I am especially grateful to Lorenzo Ciardo for his comments on the manuscript, and our discussions on the minions characterizing CLAP and singleton arc-consistency. When I was trying to encode tiling problems in $\Mscr_\BLP$, he also suggested that $\Mscr_\AIP$ could possibly be an easier target, and that approach ultimately worked. I thank Andrei Krokhin for our later discussions on the literature surrounding the search-vs-decision question for PCSPs. I am also thankful to Silvia Butti,
Lorenzo Ciardo, Tamio-Vesa Nakajima, and Standa \v{Z}ivn\'{y} for our early discussions on rounding algorithms. 
I wish to thank Paul Goldberg for discussing with me the status of the class TFNP$_1$, and other notions related to TFNP. Thanks should also go to Antoine Mottet
for pointing out to me that the class of templates admitting symmetric polymorphisms of all arities was not known to be decidable, and discussing this and other meta-questions with me. Finally, I wish to thank the CWC 2024 attendees for their feedback on a talk presenting this work. 

{\small
\bibliographystyle{plainurl}
\bibliography{main}
}

\newpage

\appendix
\section{Proofs of \Cref{sec:sources}}
\label{ap:sources}
We assume familiarity with Turing machines and computability-related notions during this section.
We refer to e.g., \cite{arora2009computational} or \cite{cutland1980computability}. We introduce some related notation below. \par
A non-deterministic Turing machine $M$ is a tuple $(Q, \Omega, \Delta)$, where (1) 
$Q$ is a finite set of states, containing an initial state $q_0$, an accepting state $q_+$ and a rejecting state $q_-$, where $q_+\neq q_-$, (2) $\Omega$ is a finite alphabet containing a distinguished ``blank'' symbol $\square$, and (3) $\Delta \subseteq (Q\times \Omega)  \times (Q\times \Omega \times \{L, R\})$ is the transition relation. The machine $M$ is deterministic if the transition function $\Delta$ contains at most one pair whose first element is $(q,a)$ for each $(q,a)\in Q\times \Omega$. We adopt the convention that a Turing machine $M$ is executed on a single right-infinite tape, and the initial position of the head is always the start of the tape. We say a Turing machine is \emph{immortal} if it has an infinite run starting from any finite input word. 
\par
Let $M=(Q, \Omega, \Delta)$ be a (non-deterministic) Turing machine and $\#$ a fresh symbol. A \emph{configuration} of $M$ is given by a pair  $(\bm{\omega}, \bm{q})\in (\Omega \cup \{ \# \})^{\ZZ_{\geq 0}}   \times  (Q\cup \{ \# \})^{\ZZ_{\geq 0}}$. The $0$-th position of the configuration symbolizes a blank spot to the left of the tape, which starts at position $1$. Formally, (1) $\bm{\omega}$ describes the contents of $M$'s tape from left to right, starting with the symbol $\#$ which marks the extra position before the leftmost end of the tape, and (2) $\bm{q}$ contains a state $q\in Q$ in a single entry and the $\#$ symbol in all the others, denoting that the head of $M$ is at the given position in state $q$. 
Given $i\in \NN$ (observe here indices start at $1$), the $i$-th local description of $(\bm{\omega}, \bm{q})$ is the pair $((\omega_{i-1}, \omega_{i}, \omega_{i+1}), (q_{i-1}, q_{i}, q_{i+1}))$. We denote by $L_M \subseteq (\Omega \cup \{ \# \})^3   \times  (Q\cup \{ \# \})^3$ the set of local descriptions of (configurations of) $M$. If $(\bm{\omega}, \bm{q})$ is a local restriction, we will often use the indices $-1,0,1$ to access its elements, rather than $1,2,3$.
Hence, the sensible restrictions apply: $\#$ can only occupy the first position of $\bm{\omega}_t$, and at most one element in $\bm{q}_t$ can be different from $\#$.

\subsection{Proof of \Cref{prop:grid}-(2)}
\label{sec:grid_undecidability}
In order to prove the results from \Cref{sec:sources} we need some insights from the proof of \Cref{prop:grid}-(2) given in \cite{wang1990dominoes}. This work is in the context of the domino problem. In this problem we are given a finite set of square tiles $T$ of the same size where each side of each tile is colored, together with an initial tile $t_0$. Our task is to decide whether it is possible to tile the infinite upper-right quadrant of the plane using the tiles from $T$ (each tile can be use infinitely many times) in such a way that $t_0$ is placed at the origin and any adjoining edges have the same color. Observe this problem can be seen as a restriction of $\Hom(\Gamma, \cdot)$ where we only consider instances $\bm{T}$ where each element in $T$ is a tuple $t=(t_U, t_D, t_L, t_R)$, corresponding to the edge colors of the tile $t$, the relation $E_1^T$ consists of the pairs $(t,t^\prime)$ such that  $t_R=t^\prime_L$, and the relation $E_2^T$ consists of the pairs $(t,t^\prime)$ such that  $t_U=t^\prime_D$. In \cite{wang1990dominoes} this problem is shown to be undecidable, so by extension $\Hom(\Gamma, \cdot)$ is undecidable as well. \par
The proof in \cite{wang1990dominoes} is a reduction from the Halting Problem. The idea
is that any (non-deterministic) Turing Machine $M$ can be encoded into a finite set of tiles $T_M$ in such a way that a tiling of the upper-right quadrant of the plane corresponds to a non-halting execution of $M$ starting from an empty tape. The intuition behind this encoding is that the $i$-th row of a tiling should represent a configuration of $M$ (i.e., tape contents, head position, and machine state) at the $i$-th time step. This is fairly straightforward, but we sketch a construction that is slightly simpler from the one in \cite{wang1990dominoes}, since we do not need to consider only ``domino'' tiles. \par

Fix a Turing machine $M=(Q, \Omega, \Delta)$. We a $\Sigma_\Gamma$-structure $\bm{T}_M$ as follows. We allow ourselves to be slightly informal in the description of $\bm{T}_M$ and leave some small gaps.  Elements from $T_M$ are tuples $t=( \bm{\omega}_t, \bm{q}_t, b_t, c_t)$ where $(\bm{\omega}_t, \bm{q}_t)\in L_M$ is a local description of $M$,
$b_t\in \{0,1\}$ is a bit that equals $1$
when $t$ describes a local configuration to the right of $M$'s head, and $c_t\in \{0,1\}$ is a bit that equals $1$ when $t$ is describes part of $M$'s initial configuration. Hence, $\bm{q}_t=(\#,\#, \#)$, when $b_t=1$. Similarly, when $c_t=1$, only blank symbols $\square$ and end-of-tape symbols $\#$ are allowed in $\bm{\omega}_t$, and the head of $M$ can only be at the beginning of the tape. The relation $O^{T_M}$ contains only the initial tile $( (\#, \square, \square), (\#, q_0, \#), 0, 1)$. Two elements $(t,t^\prime)$ belong to $E_1^{T_M}$ if they are two 
consecutive local descriptions of some configuration of $M$. More explicitly, the last two elements of $\bm{\omega}_t$ must equal the first two elements of $\bm{\omega}_{t^\prime}$, the last two elements of $\bm{q}_t$ must equal the first two elements of $\bm{q}_{t^\prime}$, and $c_t= c_{t^\prime}$. If $\bm{q}_t=(q,\#,\#)$, for some $q\in Q$, then 
$\bm{q}_{t^\prime}=(\#,\#,\#)$ and $b_{t^\prime}=1$. The relation $E_2^{T_M}$ consists of the pairs $(t,t^\prime)$ that represent consistent local descriptions of two successive configurations of $M$. We apply the natural rules: (1) $c_{t^\prime}=0$, (2) if $\bm{q}_t=(\#,\#,\#)$ then $\bm{\omega}_t=\bm{\omega}_{t^\prime}$, and
$\bm{q}_{t^\prime}$ contains $\#$ in the second position, and (3) if $\bm{q}_t=(\#, q, \#)$ for some $q\in Q$, then $t^\prime$ describes the evolution of $t$ according to some transition in $\Delta$. \par
Given our definition of $\bm{T}_M$ and a homomorphism $F: \bm{\Gamma} \rightarrow \bm{T}_M$, it is easy to see that there is an infinite run of $M$ starting from an empty tape which is given by a sequence of configurations $(\bm{\omega}_1,\bm{q}_1),(\bm{\omega}_2,\bm{q}_2),\dots$ such that the element $F(i,j)$ contains the $j$-th local description of $(\bm{\omega}_1,\bm{q}_1)$. Conversely, given an infinite run of $M$ starting from an empty tape it is straight-forward to describe a homomorphism $F: \bm{\Gamma} \rightarrow \bm{T}_M$. This shows \Cref{prop:grid}-(2). \par

\subsection{Proof of \Cref{prop:grid}-(3)}

The same proof from \cite{hanf1974nonrecursive} essentially shows our result. This work is also in the context of the domino problem. Their main result states that there are sets of domino tiles that can tile the whole plane starting with a fixed tile at the origin, but satisfying that any such tiling must be non-computable. In our setting, we only need to tile the upper-right quadrant of the plane and we consider more general sets of tiles, other than domino tiles, but the proof from \cite{hanf1974nonrecursive} can be easily adapted. The starting point is the result that there exists a Turing machine $M$ that does not halt for some input words, but all those input words are non-computable. This is shown in \cite{hanf1974nonrecursive} for Turing machines on a two-way infinite tape, rather than just a semi-infinite tape as in our setting, but it is known that both models are equivalent. This can be seen, for example, by ``folding'' a two-way infinite tape into a semi-infinite tape as in \cite{hopcroft1969formal}. Then
\Cref{prop:grid}-(3) follows from considering a $\Sigma_\Gamma$-structure $\bm{T}_M$ derived from $M$ as in \Cref{sec:grid_undecidability}, modified suitably so that arbitrary input words are allowed in the initial configuration.

\subsection{Proof of \Cref{prop:triangles_undecidability}}

We build on the ideas from \Cref{sec:grid_undecidability} again. The idea is to build a $\Sigma_\nabla$-structure $\bm{S}_M$ starting from a Turing machine $M$ in such a way that
homomorphisms $F: \bm{\nabla}_n \rightarrow \bm{S}_M$ precisely encode accepting runs of $M$ which start from the empty input, take at most $n$-steps, and where the head of $M$ 
is placed left to the $(n-i+1)$-th position at the $i$-th step. This way, $\bm{\nabla}_n\rightarrow \bm{S}_M$ for any $n\in \NN$, if and only if $\bm{\nabla}_n\rightarrow \bm{S}_M$ for all but finitely many $n\in \NN$, if and only if $\bm{\nabla}_n \rightarrow \bm{S}_M$ for infinitely many $n\in \NN$, if and only if $M$ has a halting run starting from the empty word. Then \Cref{prop:triangles_undecidability} follows from the fact that the problem of determining whether an input Turing machine accepts the empty word is undecidable. \par

To construct $\bm{S}_M$ we start from the tile set $\bm{T}_M$ described in \Cref{sec:grid_undecidability}, and add an additional element $\blacksquare$. The new relation $W^{S_M}$ consists only of this element $\{ \blacksquare \}$. We also add the pair $(\blacksquare, \blacksquare)$ to both $E_1^{S_M}, E_2^{S_M}$. Given a tile $t= ( \bm{\omega}_t, \bm{q}_t, b_t, c_t)\in T_M$, we add
$(t,\blacksquare)$ to $E_2^{S_M}$ if $\bm{q}_t=(\#, \#, \#)$, or if the only state showing in $\bm{q}_t$ is the accepting state $q_+$. We also add $(t,\blacksquare)$ to $E_1^{S_M}$ if 
the last position of $\bm{q}_t$ contains $\#$. Let us give some intuition for this construction. When searching for a homomorphism $F:\bm{\nabla}_n\rightarrow \bm{S}_M$, we proceed as in \Cref{sec:grid_undecidability}, by tiling the plane in a way that describes a run of $M$ on the empty word. Now, at any point we may chose to stop describing this run and start placing $\blacksquare$ tiles instead, with the conditions that (1) $\blacksquare$ tiles must propagate right and up, and (2)
$\blacksquare$ tiles cannot replace local descriptions that contain the head of $M$, as long as $M$ is not in the accepting state. It is routine to check that $\bm{S}_M$ has the desired properties described in the previous paragraphs.

\subsection{Hardness Proofs}

Before moving on to showing \Cref{prop:grid}-(1) and \Cref{prop:3-grid}-(1) we describe some TFNP$_1$-hard and TFNP-hard families. Given a immortal Turing machine $M=(Q,\Omega, \Delta)$, in the problem $R_M$ we are given a pair $(\bm{x}, \bm{n})$ as an input, where $\bm{x}$
is an input word for $M$, and $\bm{n}\in \{1\}^*$ is a number in unary representation, and the task is to output a run of length $|\bm{n}|$ of $M$ on the input $\bm{x}$. Here a run is given as a sequence $\delta_1,\dots,\delta_n$ of transitions in $\Delta$.
Observe the problem $R_M$ belongs to TFNP. We also consider the ``tally'' version of the problem $R_M$. Let $M$ be a (non-deterministic) Turing machine that has an infinite run on the empty input. In the problem $R_M^1$ we are given a number $\bm{n}\in \{1\}^*$ in unary representation, and the task is to output a run of $M$ on the empty input of length $|\bm{n}|$. 

\begin{lemma} 
\label{le:TFNP_hard_Turing}
Let $\mathcal{F}$ be the the family consisting of the problems $R_M$
for each immortal Turing machine $M$ on the binary alphabet $\{0,1,\square\}$. Then
$\mathcal{F}$ is TFNP-hard.
\end{lemma}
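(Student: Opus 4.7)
The plan is to reduce every problem $\Lambda_\Rfrak \in \text{TFNP}$ to $R_M$ for a carefully constructed immortal non-deterministic machine $M$. Without loss of generality one may assume $\Rfrak \subseteq \{0,1\}^* \times \{0,1\}^*$, so that both inputs and witnesses are binary strings, and fix a polynomial $p$ bounding both the witness length and the running time of a deterministic verifier $V$ encoded over the alphabet $\{0,1,\square\}$. Totality of $\Rfrak$ guarantees that for every $\bm x$ there exists some $\bm y$ with $(\bm x, \bm y) \in \Rfrak$.

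The machine $M$ would operate on input $\bm x$ in three syntactically distinguishable phases of deterministic duration. In the \emph{guess phase}, lasting exactly $p(|\bm x|)$ steps, $M$ non-deterministically writes a string $\bm y \in \{0,1\}^{p(|\bm x|)}$ to a dedicated region of the tape. In the \emph{verify phase}, lasting at most $p(|\bm x|)$ steps, $M$ simulates $V$ on $(\bm x, \bm y)$; if verification fails, $M$ enters the rejecting state $q_-$ and halts. In the \emph{loop phase}, $M$ enters a fixed two-state infinite cycle. Because $\Rfrak$ is total, every $\bm x$ admits a valid witness, so the non-deterministic branch that correctly guesses such a witness reaches the loop phase and runs forever; hence $M$ is immortal.

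The reduction would be $\alpha(\bm x) = (\bm x, 1^{q(|\bm x|)})$ with $q(n) := 2p(n) + 2$, together with a $\beta$ that first checks in polynomial time whether $\bm z$ encodes a valid $q(|\bm x|)$-step run of $M$ on $\bm x$: if so, $\beta$ reads off the witness $\bm y$ written during the guess phase and returns it; otherwise, $\beta$ returns the fixed string $0^{p(|\bm x|)+1}$, whose length exceeds the witness-length bound for $\Rfrak$ and hence cannot belong to $\Rfrak$. The choice of $q$ ensures that any $q(|\bm x|)$-step run must have already exited the verify phase, and since the only exit other than the loop is a halt at $q_-$, the run must in fact have verified a valid witness, which $\beta$ then recovers.

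The main obstacle is establishing the biconditional demanded by the definition of many-one reduction rather than merely the forward implication: one must ensure that whenever $\beta(\bm x, \bm z)$ is a valid witness for $\bm x$ in $\Rfrak$, $\bm z$ is a valid run of $M$ on $\bm x$. This is precisely what the ``length-inflation'' fallback $0^{p(|\bm x|)+1}$ buys us, since a $\beta$-output of length at most $p(|\bm x|)$ can only arise when $\bm z$ was recognized as a valid run, while any other output is syntactically excluded from $\Rfrak$. Secondary points that will require routine but careful bookkeeping are (i) fixing a tape layout so that the guessed witness always occupies the same predictable cells (so that $\beta$ can extract it from the transition sequence alone), (ii) ensuring the three phases have deterministic durations via a padded simulation of $V$, and (iii) verifying that $M$ remains immortal for every $\{0,1,\square\}$-input, not only those that encode a well-formed $\Rfrak$-instance, which is automatic once $\Rfrak$ is assumed to be total on all of $\{0,1\}^*$.
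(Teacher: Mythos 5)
Your construction follows essentially the same route as the paper's: build an immortal non-deterministic machine that guesses a polynomially-bounded witness, runs a polynomial-time verifier, rejects on failure and loops on success, then reduce by asking for a run of polynomial length and reading the guessed witness off the accepted run. The one place your write-up is \emph{more} careful than the paper's is the fallback output $0^{p(|\bm x|)+1}$, which guarantees the reverse direction of the biconditional in the paper's definition of many-one reduction (the paper's proof leaves this implicit). One small slip in point (iii): immortality for inputs containing the symbol $\square$ is \emph{not} automatic from totality of $\Rfrak$ on $\{0,1\}^*$, since such inputs lie outside the domain of $\Rfrak$; you need to build $M$ so that it explicitly enters the loop phase on any tape content that does not parse as a binary word, which is what the paper's construction does by stipulating that $M$ loops forever on words not in $U^*$.
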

\begin{proof}
    We can ignore the restriction to the alphabet $\{0,1,\square\}$ by noting that larger alphabets can be suitably encoded in binary by paying some small overhead \cite{arora2009computational}. Hence, it is enough to prove that the larger family consisting of the problems $R_M$ for each immortal Turing machine $M$ is TFNP-hard. \par
    Let $\Lambda_\Rfrak$ be a problem in TFNP, where $\Rfrak \subseteq U^* \times V^*$. Let $p$ be a polynomial, and let $N$ be a polynomial-time Turing machine satisfying that
    for each $\bm{x}\in U^*$ there is some $\bm{y}\in V^*$ such that $(\bm{x},\bm{y})\in \Rfrak$ and $|\bm{y}|\leq p(|\bm{x}|)$, and $N$ decides $\Rfrak$. We build a immortal Turing machine $M$, informally described as follows. The alphabet of $M$ contains both $U$ and $V$. The machine $M$ loops forever on input words $\bm{x}$ that are not in $U^*$. Given an input $\bm{x}\in U^*$, the machine $M$ guesses a word $\bm{y}\in V^*$ of length at most $p(|\bm{x}|)$ and then runs $N$ on $(\bm{x}, \bm{y})$. If $N$ rejects this pair, then $M$ also rejects, and if $N$ accepts the pair, then $M$ loops forever. Observe that $M$ must be immortal. Let $p^\prime$ be a polynomial such that, given $\bm{x}\in U^*$, it takes $M$ at most $p^\prime(|\bm{x}|)$ time steps to guess $\bm{y}$, simulate $N$, and continue its execution for one more step. 
    \par
    Now we claim that there is a polynomial-time many-one reduction from $\Lambda_\Rfrak$ 
    to $R_M$. The first part of this reduction sends the input $\bm{x}\in U^*$ to $\Lambda_\Rfrak$
    to the input $(\bm{x}, \bm{n})$ to $R_M$, where $\bm{n}$ is a unary representation of $p^\prime(|\bm{x}|)$. Observe that a valid answer to $(\bm{x}, \bm{n})$ in $R_M$ is a run of $M$ on the input $\bm{x}$ that lasts for $p^\prime(|\bm{x}|)$ steps. During such a run, $M$ must guess correctly a word $\bm{y}\in V^*$ satisfying $(\bm{x}, \bm{y})\in \Rfrak$. Indeed, otherwise, $N$ would reject, leading to $M$ halting prematurely. The second part of the reduction extracts the word $\bm{y}$ from the description of the run. This completes the proof.
\end{proof}

\begin{lemma}
    \label{le:TFNP1_hard_Turing} Let $\mathcal{F}$ be the the family consisting of the problems $R_M^1$
for each Turing machine $M$ on the binary alphabet $\{0,1,\square\}$ that has an infinite run on the empty input. Then $\mathcal{F}$ is TFNP$_1$-hard.
\end{lemma}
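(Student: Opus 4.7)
The plan is to adapt the construction in the proof of \Cref{le:TFNP_hard_Turing} to the tally setting, where the machine $M$ can no longer consume a non-trivial input and instead must have the ``input'' $1^n$ encoded into the \emph{run length} $T(n)$. Fix $\Lambda_\Rfrak \in \text{TFNP}_1$ with $\Rfrak \subseteq \{1\}^* \times V^*$, a polynomial $p$ bounding witness lengths, and a polynomial-time verifier $N$ deciding $\Rfrak$. The key idea is to design $M$ so that, starting from the empty tape, it \emph{sequentially} produces witnesses $\bm y_1, \bm y_2, \bm y_3, \ldots$ for the instances $1^1, 1^2, 1^3, \ldots$ of $\Lambda_\Rfrak$ and never halts; then an input $1^n$ to $\Lambda_\Rfrak$ can be reduced to a run of $M$ of length just long enough to guarantee that $\bm y_n$ has been produced and recorded.

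Concretely, $M$ on empty input enters a loop indexed by a counter $k=1,2,\ldots$. At iteration $k$ the machine deterministically writes $1^k$ in a fresh region of the tape, nondeterministically guesses a string $\bm y \in V^{\leq p(k)}$, simulates $N$ on $(1^k, \bm y)$, and halts if $N$ rejects. Otherwise it copies $\bm y$ into an ever-growing ``solution log'' on the tape and performs a padding routine so that the total number of steps spent on iteration $k$ equals some fixed polynomial $r(k)$, chosen to upper-bound the time taken by the slowest non-halting branch. Since $\Rfrak$ is total, for every $k$ there exists at least one valid $\bm y$, so $M$ admits an infinite run on the empty input; hence $R_M^1$ is well defined.

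The many-one reduction is then $(\alpha, \beta)$, where $\alpha(1^n) = 1^{T(n)}$ with $T(n) = \sum_{k=1}^n r(k)$, and $\beta(1^n, \rho)$ parses the trace $\rho$ of length $T(n)$ to extract the witness $\bm y_n$ guessed during iteration $n$. Both maps are polynomial-time since $r$ is a polynomial. Crucially, any valid run of length $T(n)$ cannot have halted before step $T(n)$, so by our choice of $T(n)$ the iterations $1,\dots,n$ have each been completed with $N$ accepting; in particular the string extracted by $\beta$ is a valid witness for $(1^n, \cdot) \in \Rfrak$, and $\beta(1^n, \bot)$ is never invoked (totality of $\Rfrak$ and the existence of the log ensures $\bot$ is impossible here). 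This establishes the required reduction from $\Lambda_\Rfrak$ to $R_M^1$.

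The only mild obstacles I foresee are (i) the bookkeeping needed to force every non-halting branch of iteration $k$ to take exactly $r(k)$ steps, so that run length and iteration count are tightly coupled, and (ii) confirming that $M$ can be implemented over the binary alphabet $\{0,1,\square\}$, which requires encoding symbols of $V$ and of $N$'s workspace as fixed-length binary blocks. Both issues are handled by standard padding and encoding tricks and do not affect the polynomial bounds on $T(n)$ or on the extraction performed by $\beta$.
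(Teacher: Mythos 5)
Your proof is correct and takes essentially the same approach as the paper's: build a nondeterministic $M$ that iterates $k=1,2,\ldots$, guesses a witness for $1^k$, verifies it with $N$ and halts on rejection, then reduce $1^n$ to a unary run-length long enough that any surviving run of that length must have passed iteration $n$ with a verified witness recorded. The paper is slightly terser (it only asks for a polynomial upper bound on the time to advance the counter past $n$, rather than your exact padding to $r(k)$ steps per iteration), but that is a stylistic difference, not a substantive one.
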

\begin{proof}
    We ignore the restriction on the alphabet, as in the previous lemma. Let $\Lambda_\Rfrak$ be a problem in TFNP$_1$, where $\Rfrak \subseteq \{1\}^* \times V^*$. Let $p$ be a polynomial
    satisfying that for any $n\in \NN$  there is some $\bm{y}\in V^*$ satisfying $(1^n,\bm{y})\in \Rfrak$ and $|\bm{y}|\leq p(n)$, and let $N$ be polynomial-time deterministic Turing machine deciding $\Rfrak$. \par
    We construct a Turing machine $M$ that has an infinite run on the empty input which satisfies that $\Lambda_\Rfrak$ has a many-one polynomial-time reduction to $R_M^1$. The machine $M$ is informally described as follows. On the empty input, $M$ keeps track of an integer counter $k$, whose value starts at one. Then, $M$ guesses a word $\bm{y}\in V^*$ with $|\bm{y}|\leq p(k)$, and then runs $N$ on the input $(1^k, \bm{y})$. If $N$ rejects, then $M$ rejects and halts. Otherwise, $M$ increases the value of $k$ by one and repeats the process again. Let $q$ be a polynomial such that it takes $M$ at most $p^\prime(m)$ steps to set its counter $k$ to the value $m+1$. 
    \par
    We describe a suitable reduction from $\Lambda_\Rfrak$ to $R_M^1$. The first part of the reduction takes an input $1^n$ to $\Lambda_\Rfrak$ and constructs the input $1^{p^\prime(n)}$ to $R_M^1$. A valid response to $1^{p^\prime(n)}$ in $R_M^1$ is a run of $M$ on the empty input that lasts for $p^\prime(n)$ steps. During such a run $M$ must guess correctly a word $\bm{y}\in V^*$ satisfying $(1^n, \bm{y})\in \Rfrak$ (otherwise $M$ would halt prematurely). The second part of the reduction extracts the word $\bm{y}$ from the description of this run.
\end{proof}

\subsubsection{Proof of \Cref{prop:grid}-(1)}
Let $M$ be a Turing machine that has an infinite run on the empty word, and let $\bm{T}_M$ be the $\Sigma_\Gamma$-structure constructed in \Cref{sec:grid_undecidability}. We show that the problem $R_M^1$ has a polynomial-time many-one reduction to $\spcsp(\bm{\Gamma}, \bm{T}_M)$. Observe that this reduction together with \Cref{le:TFNP1_hard_Turing} prove the result. \par
The first part of the reduction takes an input $1^n$ to $R_M^1$ and produces an input $\bm{\Gamma}_n$ to 
$\spcsp(\bm{\Gamma}, \bm{T}_M)$. The structure  $\bm{\Gamma}_n$ is the substructure of $\bm{\Gamma}$ induced on $[n]\times [n]$, i.e., the $n\times n$ grid.
A valid output to $\bm{\Gamma}_n$ in $\spcsp(\bm{\Gamma}, \bm{T}_M)$ is a homomorphism 
$F: \bm{\Gamma}_n\rightarrow \bm{T}_M$. The horizontal lines $F(i,j)$ for each $i\in [n]$ describe successive configurations (truncated to the first $n$ tape spaces) of $M$ starting from the empty input. Therefore the homomorphism $F$ contains the description of a $n$-step run of $M$ on the empty input. The second part of the reduction simply extracts this run from $F$.

\subsubsection{Proof of \Cref{prop:3-grid}-(1)}
This result can be shown similarly to \Cref{prop:grid}-(1), with some work. Let $M$ be an immortal Turing machine. We begin by modifying the construction from \Cref{sec:grid_undecidability} to obtain a $\Sigma_{\Gamma^+}$-structure $\bm{T}_M$ satisfying that any homomorphism $F:\bm{\Gamma}^+ \rightarrow \bm{T}_M$ admits the following description. We view the super-grid $\bm{\Gamma}^+$ as the positive quadrant of the $3$-dimensional grid, where we call the dimensions vertical, horizontal, and normal respectively. Each normal slice (i.e., each set consisting all tuples $(i,j,k)$ where $k$ is fixed) is called a \emph{floor}. Then the tiling $F(i,j,k)$ of the $k$-th floor describes an infinite run of $M$ on the input given by the binary representation $\lfloor i \rfloor$ of $i$. We adopt the convention that binary representations start from the least-significant bit. I.e., $\lfloor 6 \rfloor = 011$.

\par
We construct $\bm{T}_M$ as follows. Let $M=(Q, \Omega, \Delta)$ be a Turing machine on the binary alphabet $\Omega= \{ 0,1, \square\}$. Similarly to the $\Sigma_\Gamma$-structure constructed in \Cref{sec:grid_undecidability}, the elements in $T_M$ 
are tuples $t=(\bm{\omega}_t, \bm{q}_t, b_t, c_t, d_t, e_t)\in \Omega^3 \times Q^3 \times \{0,1\}^4$, where $(\bm{\omega}_t, \bm{q}_t)\in L_M$ is a local description of $M$. 
The bit $b_t$ keeps track of whether the tile $t$ describes a local configuration to the right of $M$'s head, bit $c_t$ keeps track of whether the tile $t$ is in the initial horizontal line of some floor (e.g., is in a position $(i,j,k)$ with $j=1$), and the bit $d_t$
keeps track of whether $t$ is in the first floor (e.g., its position is $(i,j,k)$ with $k=1$). Finally, the bit $e_t$ is a \emph{carry-over} bit, used during addition operations described below. The origin relation $O^{T_M}$ is the singleton containing the tile
\[
((\#, 1, \square), (\#, q_0, \#), 0,1,1,0). 
\]
The vertical and horizontal relations $E_1^{T_M}$, $E_2^{T_M}$ are described analogously to the ones in \Cref{sec:grid_undecidability}. The main difference is that we force the tape described by the first horizontal line in the first floor to be $(\#,1,\square, \square, \dots)$ (i.e., the input $1$), rather than the blank tape. We also need to describe how $E_1^{T_M}$ and $E_2^{T_M}$ interact with the carry-over bit $e_t$. The vertical relation $E_2^{T_M}$ ``forgets'' about the carryover, meaning that if $(t,t^\prime)\in E_2^{T_M}$, then $e_{t^\prime}=0$. However, the horizontal relation $E_1^{T_M}$ propagates the carryover. This means that if $(t,t^\prime)\in E_2^{T_M}$, and $e_t=0$, then $e_{t^\prime}=0$. However if $e_t=1$ then we have two options. If $e_{t^\prime}=0$ then the rightmost element of $\bm{\omega}_{t^\prime}$ must be $1$, or, otherwise, if $e_{t^\prime}=1$ then 
the rightmost element of $\omega_{t^\prime}$ is $0$. The doubling relations $\Ebb_1^{T_M}$, $\Ebb_2^{T_M}$ do not impose any constraint. The normal relations $E_3^{T_M}, \Ebb_3^{T_M}$ only constrain the tiles that lie in the first horizontal line of each floor, i.e., those $t\in T_M$ with $c_t=1$. The relation $\Ebb_3^{T_M}$ ensures that if the tape contains a binary representation $\lfloor n \rfloor$ in the horizontal line $(i,1,1)$, then the horizontal line at $(2i,1,1)$ displays $\lfloor 2n \rfloor$. This is achieved by 
by shifting right $\bm{\omega}_t$ in each local description $(\bm{\omega_t}, \bm{q}_t)$ and by adding a leading $0$. In a similar way, the relation $E_3^{T_M}$ ensures that if the horizontal line
$(i,1,1)$ displays $\lfloor n \rfloor$, then $(i+1,1,1)$ displays $\lfloor n +1 \rfloor$. This is requires addition to be performed on $\bm{\omega}_t$ from left to right in each local description 
$(\bm{\omega_t}, \bm{q}_t)$ following the standard \emph{column addition} algorithm, and using the carry-over bit $e_t$ as needed. This way, we obtain a structure $\bm{T}_M$ satisfying the high-level description given at the start of the section. \par
In order to complete the proof of \Cref{prop:3-grid}-(1), we describe a reduction from $R_M$
to $\spcsp(\bm \Gamma^+, \bm{T}_M)$. Let $(\lfloor n \rfloor, 1^m)$ be an input to $R_M$, where $\lfloor n \rfloor$ corresponds to the binary representation of $n\in \NN$, and $m\in \NN$. Without loss of generality we may assume that $m\geq n$: if $m<n$ we can act as if $m=n$.
The first map of our reduction constructs in polynomial time a structure $\I_{n,m}$ satisfying $\I_{n,m}\rightarrow \bm \Gamma^+$. First, observe that there is a sequence $\spadesuit_1,\dots, \spadesuit_\ell$ of length $\ell \in O(\log n)$ consisting of the operations $+1$ and $\times 2$ that yields $n$ starting from $1$. Such a sequence $\bm{L}$ can be obtained in polynomial time from $\lfloor n \rfloor$. For each $i\in [\ell+1]$ we define the number $n_i$ inductively as follows. We let $n_1=1$, and $n_i= n_{i-1} \spadesuit_{i-1}$ for $i>1$, where we recall that $\spadesuit_{i-1}\in \{ +1, \times 2\}$. This way, $n_{\ell+1}=n$. The universe $I_{n,m}$ is defined as
\[
\{ (i,j,n) \, \vert \, i,j\in [m] \} \bigcup \{ (i,1,n_k) \, \vert \, i\in [m], k\in [\ell+1] \}. \]
In other words, this includes the initial $m\times m$ quadrant in the $n$-th floor of $\NN^3$, plus
several horizontal segments of length $m$ that ``climb'' up to the $n$-th floor. The relations are defined so that $\I_{n,m}$ is the substructure of $\bm{\Gamma}^+$ induced on $I_{n,m}$. By construction $\I_{n,m}\rightarrow \bm \Gamma^+$. Now, for the second map of our reduction we need a polynomial-time procedure that computes a $m$-step run of $M$ starting from the input $\lfloor n \rfloor$ by accessing a homomorphism $F: \I_{n,m} \rightarrow \bT_{M}$. This is done in the intuitive way: By construction, image of the horizontal line $(i,1,n)$ through $F$ must describe the initial configuration of $M$ on the input $\lfloor n \rfloor$, and the image through $F$
of the quadrant $\{ (i,j,n) \, \vert \, i,j\in [m] \}$ must describe a $m$-step run of $M$
starting from this input. Hence, by accessing $F$ one can compute the desired run in polynomial time. This completes the proof. 

\section{Characterizing Polymorphism Minions up to Isomorphism}
\label{app:pol_characterization}
We extend the ideas of \cite[Section 6.2]{BG21:sicomp}, where function minions corresponding to polymorphisms are characterized. The main result there is that a function minion is a polymorphism minion if and only if it has bounded \emph{finitized arity}, as we described in \Cref{sec:minion_closures}. The following characterizes abstract minions that are isomorphic to the polymorphism minion of some finite template.

\begin{theorem}
\label{th:characterization_pol}
Let $\Mscr$ be a minion. Then there exists a finite template $(\A, \B)$ such that $\pol(\A, \B)$ is isomorphic to $\Mscr$ if and only if $\Mscr$ is locally finite and $\Mscr$ is isomorphic to $\Mscr^{(h)}$ for some $h\in \NN$.    
\end{theorem}

We need the following short auxiliary results.

\begin{lemma}
 Let $\Mscr$ and $\Nscr$ be minions, let $h\in \NN$, and let $F:\Mscr \overpartialmap{h} \Nscr$ be a partial minion isomorphism defined up to arity $h$. Then $\Mscr^{(h)}$ is isomorphic to $\Nscr^{(h)}$.
\end{lemma}
\begin{proof}
 We define an isomorphism $H: \Mscr^{(h)} \rightarrow \Nscr^{(h)}$. Given a system $\xi \in \Mscr^{(h)}(n)$, its image
 $H(\xi)$ is the system defined by $H(\xi)(\pi)= F(\xi(\pi))$ for all $\pi \in [h]^{[n]}$. Checking that this is a minion homomorphism is routine, and its inverse is given by $H^{-1}(\rho)(\pi)=F^{-1}(\rho(\pi))$ for all $n\in \NN$ all $\rho \in \Nscr^{(h)}(n)$ and all $\pi\in [h]^{[n]}$
\end{proof}

As a corollary we obtain that the $h$-closure of a $h$-closure is isomorphic to the original $h$-closure.

\begin{corollary}
    \label{cor:closure_of_closure}
    Let $\Mscr$ be a minion, let $h\in \NN$ and let $\Nscr=\Mscr^{(h)}$. Then, $\Nscr$ is isomorphic to $\Nscr^{(h)}$.
\end{corollary}
\begin{proof}
      By \Cref{th:closure_equivalent}, $\Mscr$ and $\Mscr^{(h)}=\Nscr$ are isomorphic up to arity $h$, so the statement follows from last lemma. 
\end{proof}

Now we proceed with our main proof.

\begin{proof}[Proof of \Cref{th:characterization_pol}]
We show both directions. Let $(\A, \B)$ be an arbitrary finite template. Let $r=|A|$, and let $h\in \NN$ be at least as large as $r$ and as large as the number of tuples in any relation of $\A$. We identify $A$ with $[r]$. A consequence of the definition of polymorphism is that a function of the form $f:A^n \rightarrow B$
belongs to $\pol(\A, \B)$ if and only if $f^{\pi}\in \pol(\A, \B)$ for every function $\pi\in [h]^{[n]}$. Define $\Mscr=\pol(\A, \B)$. Clearly, $\Mscr$ is locally finite. We show that $\Mscr$ is isomorphic to $\Mscr^{(h)}$. The isomorphism is the canonical map $\mathrm{Cl}_h:\Mscr \rightarrow \Mscr^{(h)}$. We already know that $\mathrm{Cl}$ is a minion homomorphism, so all that is left is to show that it is bijective. An argument that is convenient to remember is that, under the usual identification of tuples from $[r]^n$ with maps in $[r]^{[n]}$, the evaluation of a function $f: [r]^n\rightarrow B$ on a tuple $\bm a \in [r]^n$ can be seen as the evaluation $f^{\bm a}(\id_{[r]})$.
\par
\emph{$\mathrm{Cl}_h$ is injective}. Let $n\in \NN$, and let $f, g\in \Mscr(n)$. Suppose that $f\neq g$. Then by the remark above, it holds that $f^{\pi}\neq g^\pi$ for some $\pi \in [r]^{[n]}$. Composing $\pi$ with the inclusion 
$[r]\hookrightarrow [n]$ we obtain $\pi^\prime \in [h]^{[n]}$ such that $f^{\pi^\prime} \neq g^{\pi^\prime}$. Thus, by the definition of the canonical map, it holds that
\[
\mathrm{Cl}_h(f)(\pi^\prime)= f^{\pi^\prime} \neq g^{\pi^\prime} = \mathrm{Cl}_h(g)(\pi^\prime),
\]
meaning that $\mathrm{Cl}_h(f)\neq \mathrm{Cl}_h(h)$.\par
\emph{$\mathrm{Cl}_h$ is surjective}. Let $n\in \NN$ and let $\xi\in \Mscr^{(h)}(n)$ be a system. 
The idea is that $\xi$ is given by a family of consistent $h$-ary polymorphisms in $\Mscr$, 
so there is a $n$-ary function $f_{\xi}: [r]^n \rightarrow B$ whose $h$-ary minors are given by $f_{\xi}^\pi=\xi(\pi)$. Constructing such $f_\xi$ would show that $\mathrm{Cl}_h$ is surjective: Indeed, under this assumption all $h$-ary minors of $f_\xi$ belong to $\Mscr$ forcing $f_\xi\in \Mscr$ as well. Finally, by definition, $\mathrm{Cl}_h(f_\xi)=\xi$, so this shows that $\mathrm{Cl}_h$ is surjective. We define the desired function $f_\xi: [r]^n \rightarrow B$ as follows. Fix two maps $\alpha: [r] \rightarrow [h]$ and $\beta: [h]\rightarrow [r]$ such that $\beta \circ \alpha = \id_{[r]}$. Given a tuple $\bm a\in [r]^n$, we define
\[
f_\xi(\bm a)= \xi(\alpha \circ \bm a)(\beta).\]
To parse this equality, recall that $\xi(\alpha \circ \bm a)$ is a  $h$-ary polymorphism in $\Mscr$, which can be evaluated at a tuple in $[r]^h$. All that is left to show is that $f_\xi^\pi=\xi(\pi)$ for all $\pi \in [h]^{[n]}$. Indeed, in this case, given a tuple $\bm a\in {[r]}^{[h]}$ it holds that
\[
f_\xi^\pi(\bm a)= f_\xi(\bm a \circ \pi)= \xi(\alpha\circ \bm a \circ \pi)(\beta)= 
(\xi(\pi))^{\alpha \circ \bm a}(\beta) = 
\xi(\pi)( \beta \circ \alpha \circ \bm a) = \xi(\pi)(\bm a). 
\]
This proves that $\mathrm{Cl}_h$ is a surjective homomorphism, and together with the fact that it is also injective, completes the proof that it is an isomorphism. \par
We have shown that if $\Mscr$ is the polymorphism minion of some finite template, then it is locally finite and is isomorphic to $\Mscr^{(h)}$ for some $h\in \NN$. Now we prove the other direction. Let $\Mscr$ be a locally finite minion, and suppose that it is isomorphic to $\Nscr=\Mscr^{(h)}$ for some $h\in \NN$. An observation is that the rank of $\Nscr$ is at most $h$. Indeed, let $n\in \NN$ and let $\xi_1, \xi_2\in \Nscr(n)$ be two systems of $h$-ary minors. Suppose that $\xi_1^\pi= \xi_2^\pi$ for all $\pi\in [h]^{[n]}$. We also have that $\xi_1(\pi)= \xi_1^\pi(\id_{[h]})$
and analogously with $\xi_2$. This means that $\xi_1(\pi)=\xi_2(\pi)$ for all $\pi\in [h]^{[n]}$ and $\xi_1=\xi_2$. 
Then, by \Cref{cor:minion_to_template}, it holds that $\Nscr^{(h)}$ is isomorphic to $\pol(\bK_h^h, \bF_{\Nscr}(\bK_h^h))$. Also, $\Nscr^{(h)}$ is isomorphic to $\Nscr=\Mscr^{(h)}$ by \Cref{cor:closure_of_closure}, and $\Mscr^{(h)}$ is, by assumption, isomorphic to $\Mscr$. Putting together all these isomorphisms we obtain that 
$\Mscr$ is isomorphic to  $\pol(\bK_h^h, \bF_{\Nscr}(\bK_h^h))$, completing the proof. 
\end{proof}

An alternative, and perhaps better way to present the second part of this proof could have been to show, using essentially the same reasoning as in the proof of \Cref{th:minion_to_template}, that given a minion $\Mscr$ and a number $h\in \NN$, it holds in general that $\Mscr^{(h)}$ is isomorphic to $\pol(\bK_h^h, \bF_{\Mscr}(\bK_h^h))$.

\end{document}